\newtheorem{theorem}{Theorem}
\newtheorem{corollary}[theorem]{Corollary}
\newtheorem{definition}[theorem]{Definition}
\newtheorem{example}[theorem]{Example}
\newtheorem{lemma}[theorem]{Lemma}
\newtheorem{proposition}[theorem]{Proposition}
\newtheorem{remark}[theorem]{Remark}
\newtheorem{solution}[theorem]{Solution}
\newenvironment{proof}[1][Proof]{\noindent\textbf{#1.} }{\ \rule{0.5em}{0.5em}}
\begin{document}

\title{Theory of transformation for the diagonalization of quadratic
Hamiltonians}
\author{Ming-wen Xiao}
\affiliation{Department of Physics, Nanjing University, Nanjing 210093, People's Republic
of China}
\keywords{Linear algebra, Ordinary differential equations, Quantum mechanics}
\pacs{02.10.UD, 02.30.Hq, 03.65.--w}

\begin{abstract}
A theory of transformation is presented for the diagonalization of a
Hamiltonian that is quadratic in creation and annihilation operators or in
coordinates and momenta. It is the systemization and theorization of Dirac
and Bogoliubov-Valatin transformations, and thus provides us an operational
procedure to answer, in a direct manner, the questions as to whether a
quadratic Hamiltonian is diagonalizable, whether the diagonalization is
unique, and how the transformation can be constructed if the diagonalization
exists. The underlying idea is to consider the dynamic matrix. Each
quadratic Hamiltonian has a dynamic matrix of its own. The eigenvalue
problem of the dynamic matrix determines the diagonalizability of the
quadratic Hamiltonian completely. In brief, the theory ascribes the
diagonalization of a quadratic Hamiltonian to the eigenvalue problem of its
dynamic matrix, which is familiar to all of us. That makes it much easy to
use. Applications to various physical systems are discussed, with especial
emphasis on the quantum fields, such as Klein-Gordon field, phonon field,
etc..
\end{abstract}

\email{xmw@netra.nju.edu.cn}
\maketitle
\tableofcontents

\section{Introduction}

In the preface to the first edition of \textquotedblleft The Principles of
Quantum Mechanics\textquotedblright\ \cite{Dirac}, Dirac said:
\textquotedblleft The growth of the use of transformation theory, as applied
first to relativity and later to the quantum theory, is the essence of the
new method in theoretical physics.\textquotedblright

In this review, we shall present a theory of transformation, which can
perform diagonalization to the Hamiltonian that is quadratic in creation and
annihilation operators or in coordinates and momenta. Such transformations
appear most frequently in classical and quantum mechanics, statistical
mechanics, condensed-matter physics, nuclear physics, and quantum field
theory.

For the sake of simplicity, let us begin with the so-called
Bogoliubov-Valatin transformation.

\subsection{Bogoliubov-Valatin transformation \label{BVTS}}

In 1947, Bogoliubov \cite{Bogoliubov1} introduced a novel linear
transformation to diagonalize the quantum quadratic Hamiltonian present in
superfluidity. This method was later extended by Bogoliubov himself \cite%
{Bogoliubov2,Bogoliubov3,Bogoliubov4} and also by Valatin \cite%
{Valatin1,Valatin2} to the Fermi case in the theory of superconductivity. It
has ever since got widely used in different fields \cite{Fetter,Wagner,Nambu}%
, and known as Bogoliubov-Valatin (BV) transformation, including both the
bosonic and fermionic versions.

To show the underlying idea of the method due to Bogoliubov and Valatin, let
us consider the quadratic Hamiltonian,
\begin{equation}
H=\sum_{i,j=1}^{n}(\alpha _{ij}c_{i}^{\dag }c_{j}+\frac{1}{2}\gamma
_{ij}c_{i}^{\dag }c_{j}^{\dag }+\frac{1}{2}\gamma _{ji}^{\ast }c_{i}c_{j}),
\label{Ham1}
\end{equation}%
where $n\geq 1$ is a natural number, and $c_{i}$ and $c_{i}^{\dag }$ are,
respectively, the annihilation and creation operators for bosons or
fermions. They satisfy the standard commutation or anticommution relations,
\begin{align}
\lbrack c_{i},c_{j}^{\dag }]& =c_{i}c_{j}^{\dag }\pm c_{j}^{\dag
}c_{i}=\delta _{ij},  \label{Com1} \\
\lbrack c_{i},c_{j}]& =c_{i}c_{j}\pm c_{j}c_{i}=0,  \label{Com2} \\
\lbrack c_{i}^{\dag },c_{j}^{\dag }]& =c_{i}^{\dag }c_{j}^{\dag }\pm
c_{j}^{\dag }c_{i}^{\dag }=0,  \label{Com3}
\end{align}%
where $\delta _{ij}$ is the Kronecker delta function. The coefficients $%
\alpha _{ij}\in
\mathbb{C}
$ and $\gamma _{ij}\in
\mathbb{C}
$ have the following symmetries,
\begin{equation}
\alpha _{ij}=\alpha _{ji}^{\ast },\text{ \ }\gamma _{ij}=\mp \gamma _{ji},
\end{equation}%
where $z^{\ast }$ denotes the complex conjugate of $z$. Throughout this
review, the complex field $%
\mathbb{C}
$ will be used as the base field of the Hamiltonian $H$.

Using the form of matrix, Eq. (\ref{Ham1}) can be written as%
\begin{equation}
H=\frac{1}{2}\psi ^{\dag }M\psi \pm \frac{1}{2}\mathrm{tr}(\alpha ),
\label{Ham2}
\end{equation}%
where $\mathrm{tr}(A)$ denotes the trace of the matrix $A$. The $\psi $ is a
column vector and $\psi ^{\dag }$ its Hermitian conjugate,
\begin{equation}
\psi =\left[
\begin{array}{c}
c \\
\widetilde{c^{\dag }}%
\end{array}%
\right] ,\text{ \ }\psi ^{\dag }=\left[
\begin{array}{cc}
c^{\dag }, & \widetilde{c}%
\end{array}%
\right] ,  \label{Psi}
\end{equation}%
where $c$ and $c^{\dag }$ are the subvectors of size $n$,
\begin{equation}
c=\left[
\begin{array}{c}
c_{1} \\
c_{2} \\
\vdots \\
c_{n}%
\end{array}%
\right] ,\text{ \ }c^{\dag }=\left[
\begin{array}{cccc}
c_{1}^{\dag }, & c_{2}^{\dag }, & \cdots , & c_{n}^{\dag }%
\end{array}%
\right] .
\end{equation}%
Here $\widetilde{A}$ denotes the transpose of the matrix $A$. The
coefficient matrix $M$ has the form,%
\begin{equation}
M=\left[
\begin{array}{cc}
\alpha & \gamma \\
\gamma ^{\dag } & \mp \widetilde{\alpha }%
\end{array}%
\right] ,  \label{CMH}
\end{equation}%
where $\alpha $ and $\gamma $ are the submatrices with $\alpha _{ij}$ and $%
\gamma _{ij}$ as their their entries, respectively. Obviously,%
\begin{equation}
\alpha ^{\dag }=\alpha ,\text{ \ }\widetilde{\gamma }=\mp \gamma ,\text{ \ }%
M^{\dag }=M.  \label{AGM}
\end{equation}%
That is to say, $\alpha $ and $M$ are both Hermitian matrices whereas $%
\gamma $ is a symmetric or antisymmetric matrix, which is determined by
whether the system is bosonic or fermionic. Besides, the matrices $\alpha $
and $\gamma $ will not vanish simultaneously; otherwise, the Hamiltonian $H$
is zero trivially.

If we define a new product between the two operators $c_{i}$ (or $%
c_{i}^{\dag }$) and $c_{j}$ (or $c_{j}^{\dag}$) as%
\begin{equation}
c_{i}\cdot c_{j}=[c_{i},c_{j}],
\end{equation}
then Eqs. (\ref{Com1})--(\ref{Com3}) can be expressed compactly as%
\begin{equation}
\psi\cdot\psi^{\dag}=I_{\pm},  \label{Com4}
\end{equation}
where%
\begin{equation}
I_{\pm}=\left[
\begin{array}{cc}
I & 0 \\
0 & \pm I%
\end{array}
\right] ,
\end{equation}
with $I$ being the identity matrix of size $n$.

To diagonalize the Hamiltonian of Eq. (\ref{Ham2}), Bogoliubov and Valatin
introduced a linear transformation,
\begin{equation}
c=Ad+B\widetilde{d^{\dag }},  \label{Transf}
\end{equation}%
where $A$ and $B$ are two square matrices of size $n$, and $d$ and $d^{\dag
} $ are the vectors as follows,\
\begin{equation}
d=\left[
\begin{array}{c}
d_{1} \\
d_{2} \\
\vdots \\
d_{n}%
\end{array}%
\right] ,\ d^{\dag }=\left[
\begin{array}{cccc}
d_{1}^{\dag }, & d_{2}^{\dag }, & \cdots , & d_{n}^{\dag }%
\end{array}%
\right] .
\end{equation}%
Here $d_{i}$ and $d_{j}^{\dag }$ are the new annihilation and creation
operators respectively, they satisfy the standard commutation or
anticommution relations as in Eqs. (\ref{Com1})--(\ref{Com3}), which means,%
\begin{equation}
\varphi \cdot \varphi ^{\dag }=I_{\pm },  \label{Com5}
\end{equation}%
where
\begin{equation}
\varphi =\left[
\begin{array}{c}
d \\
\widetilde{d^{\dag }}%
\end{array}%
\right] ,\text{ \ }\varphi ^{\dag }=\left[
\begin{array}{cc}
d^{\dag }, & \widetilde{d}%
\end{array}%
\right] .  \label{GM}
\end{equation}%
From Eqs. (\ref{Psi}), (\ref{GM}) and (\ref{Transf}), it follows that
\begin{equation}
\psi =T\varphi ,  \label{Transf1}
\end{equation}%
where
\begin{equation}
T=\left[
\begin{array}{cc}
A & B \\
B^{\ast } & A^{\ast }%
\end{array}%
\right] .  \label{TM}
\end{equation}%
Here $A^{\ast }$ denotes the complex conjugate of the matrix $A$. By the
way, we note that such a form of $T$ originates from the requirement that $c$
and $c^{\dag }$ must be Hermitian conjugates of each other. For convenience,
we shall call the operator vector such as $\psi $ and $\varphi $ the field
operator.

Under the transformation of Eq. (\ref{Transf1}), the Hamiltonian of Eq. (\ref%
{Ham2}) becomes%
\begin{equation}
H=\frac{1}{2}\varphi^{\dag}T^{\dag}MT\varphi\pm\frac{1}{2}\mathrm{tr}%
(\alpha),  \label{Ham3d}
\end{equation}
where $T^{\dag}MT$ is the new coefficient matrix. Meanwhile, Eq. (\ref{Com4}%
) turns into
\begin{equation}
TI_{\pm}T^{\dag}=I_{\pm},  \label{Cond1}
\end{equation}
where Eq. (\ref{Com5}) has been used. Obviously, this is a condition for the
transformation of Eq. (\ref{Transf1}).

For the Hamiltonian $H$ to be diagonalized with respect to the new
annihilation and creation operators, it is necessary that the new
coefficient matrix $T^{\dag}MT$ is diagonal, i.e.,
\begin{equation}
T^{\dag}MT=\left[
\begin{array}{cc}
\begin{array}{cc}
\omega_{1} &  \\
& \omega_{2}%
\end{array}
& \text{{\LARGE 0}} \\
\text{{\LARGE 0}} &
\begin{array}{cc}
\ddots &  \\
& \omega_{2n}%
\end{array}%
\end{array}
\right] ,  \label{Cond2}
\end{equation}
where $\omega_{i}$ for $i=1,2,\cdots,2n$ are the diagonal entries, they are
real: $\omega_{i}\in%
\mathbb{R}
$. Equation (\ref{Cond2}) means that all the off-diagonal entries of the
matrix $T^{\dag}MT$ must vanish identically. Under this condition, we have
\begin{equation}
H=\frac{1}{2}\sum_{i=1}^{n}(\omega_{i}\mp\omega_{n+i})d_{i}^{\dag}d_{i}+%
\frac{1}{2}\sum_{i=1}^{n}\omega_{n+i}\pm\frac{1}{2}\mathrm{tr}(\alpha).
\label{Ham3d1}
\end{equation}
This is the so-called diagonalized form for the Hamiltonian $H$.

To sum up, Eqs. (\ref{Cond1}) and (\ref{Cond2}) are the two conditions that
must be fulfilled by the transformation matrix $T$. The former ensures the
statistics of the system, i.e., the system will remains bosonic or fermionic
after the transformation if it is bosonic or fermionic before the
transformation, that is a physical requirement. The latter ensures the
diagonalization of the Hamiltonian, it is just a mathematical requirement.
According to Bogoliubov and Valatin, the transformation matrix $T$ can be
determined from Eqs. (\ref{Cond1}) and (\ref{Cond2}). After the
determination of $T$, the diagonal entries $\omega_{i}$ for $i=1,2,\cdots,2n$
will be obtained, which accomplishes the diagonalization procedure. That is
the main idea of the Bogoliubov-Valatin transformation.

As indicated by Eq. (\ref{AGM}), the matrix $M$ is Hermitian. So it can
always be diagonalized by a unitary transformation. At first glance, it
seems as if the Hamiltonian of Eq. (\ref{Ham3d}) could be brought into
diagonalization by the same unitary transformation. However, a close
observation shows that such a unitary transformation can, in general,
neither take the form of Eq. (\ref{TM}) nor meet the requirements of Eq. (%
\ref{Cond1}) although it always satisfies the condition of Eq. (\ref{Cond2}%
). Therefore, the unitary transformation for the diagonalization of the
coefficient matrix $M$ can not generally diagonalize the Hamiltonian of Eq. (%
\ref{Ham3d}). That is because both the field $\psi$ and the field $\varphi$
are now the vectors of operators (quantum numbers) rather than the usual
simple vectors of complex variables (classical numbers). For the latter, it
is well known that a Hermitian quadratic form can always be diagonalized by
the unitary transformation for the diagonalization of its coefficient
matrix. In short, the BV diagonalization for a quantum quadratic Hamiltonian
is much more complicated than the unitary diagonalization for the usual
Hermitian quadratic form of complex variables.

Finally, let us analyze the BV method in more detail. It can easily be seen
from Eq. (\ref{TM}) that the transformation matrix $T$ has $4n^{2}$
independent unknown entries. However, Eqs. (\ref{Cond1}) and (\ref{Cond2})
contain $4n^{2}$ and $4n^{2}-2n$ constraints on $T$, respectively. That is
to say, the constraints are much more than the total number of the free
unknown entries of $T$. Therefore, there are two possibilities: (1) Those
constraints are consistent with the requirement of $T$, and thus $T$ has
solutions. (2) The constraints are inconsistent with the requirement of $T$,
and $T$ has no solution. Theoretically, it is very difficult to judge which
case will happen because, as indicated by Eqs. (\ref{Cond1}) and (\ref{Cond2}%
), the constraints constitute $8n^{2}-2n$ coupled quadratic equations for $%
4n^{2}$ free unknowns. Furthermore, it will still be hard to solve for the
multiple unknowns from the multiple equations of second degree even if there
exist solutions for the matrix $T$. Mathematically, these difficulties arise
from the well-known fact that there is no much knowledge about the multiple
equations of second degree with multiple unknowns at present. In practice,
one often has to rely on experience and tricks when he uses the BV method to
resolve practical problems.

To overcome those difficulties, we intend to develop a new theory for BV
transformation. We expect that this theory can not only judge
straightforwardly whether a quantum quadratic Hamiltonian is BV
diagonalizable but also yield the required transformation by a simple
procedure if the Hamiltonian is BV diagonalizable. That is the main
objective of this review.

\subsection{Equation of motion \label{Sec1B}}

As shown in the preceding subsection, the diagonalization scheme adopted by
Bogoliubov and Valatin is merely algebraic. That is to say, the scheme
treats the diagonalization just as a pure algebraic problem, it does not
consider the physics in diagonalization at all. We would like to complement
it with physical contents so as to find the necessary and sufficient
conditions for the diagonalization of a quantum quadratic Hamiltonian.
Simply speaking, we shall take into account the equation of motion of the
system, i.e., the Heisenberg equation.

To show the idea, let us consider the classical system of harmonic
oscillators---the counterpart of the Bose system with a quadratic
Hamiltonian \cite{Goldstein},%
\begin{eqnarray}
H &=&\frac{1}{2}\sum_{i,j=1}^{n}K_{ij}p_{i}p_{j}+\frac{1}{2}%
\sum_{i,j=1}^{n}V_{ij}q_{i}q_{j}  \notag \\
&=&\frac{1}{2}\widetilde{p}Kp+\frac{1}{2}\widetilde{q}Vq,  \label{Ham3}
\end{eqnarray}%
where $q_{i}$ and $p_{i}$ ($i=1,2,\cdots ,n$) are, respectively, the
generalized coordinates and momenta, with $q$ and $p$ being the
corresponding column vectors,%
\begin{equation}
q=\left[
\begin{array}{c}
q_{1} \\
q_{2} \\
\vdots \\
q_{n}%
\end{array}%
\right] ,\text{ \ }p=\left[
\begin{array}{c}
p_{1} \\
p_{2} \\
\vdots \\
p_{n}%
\end{array}%
\right] .
\end{equation}%
The $K$ and $V$ are the kinetic and potential matrices with $K_{ij}$ and $%
V_{ij}$ as their entries, respectively. They are both real and symmetric,
\begin{eqnarray}
\widetilde{K} &=&K>0, \\
\widetilde{V} &=&V\geq 0.
\end{eqnarray}%
It is worthy to emphasize that $K$ is a positive definite matrix, that is
because the kinetic energy is always positive definite. In addition, the
matrix $V$ is only positive semidefinite, the bottom of potential being
chosen as zero.

As well known, $q_{i}$ and $p_{i}$ ($i=1,2,\cdots ,n$) satisfy the following
canonical relations,
\begin{eqnarray}
\{q_{i},q_{j}\} &=&0,  \label{PB1} \\
\{p_{i},p_{j}\} &=&0,  \label{PB2} \\
\{q_{i},p_{j}\} &=&\delta _{ij},  \label{PB3}
\end{eqnarray}%
or equivalently,
\begin{eqnarray}
q\cdot \widetilde{q} &=&0,  \label{Dual1} \\
p\cdot \widetilde{p} &=&0,  \label{Dual2} \\
q\cdot \widetilde{p} &=&I.  \label{Dual3}
\end{eqnarray}%
where $\{a,b\}$ denotes the Poisson bracket of $a$ and $b$, and $a\cdot
b=\{a,b\}$.

Of course, the Bogoliubov-Valatin scheme can be transplanted directly to
diagonalize the classical quadratic Hamiltonian of Eq. (\ref{Ham3}) with
respect to the new generalized coordinates and momenta. However, we would
rather here turn to another way---the canonical equation of motion.

The canonical equation of motion can be deduced from the Hamiltonian of Eq. (%
\ref{Ham3}) and the Poisson brackets of Eqs. (\ref{PB1})--(\ref{PB3}) as
follows,
\begin{eqnarray}
\frac{\mathrm{d}}{\mathrm{d}t}q &=&\{q,H\}=Kp, \\
\frac{\mathrm{d}}{\mathrm{d}t}p &=&\{p,H\}=-Vq.
\end{eqnarray}%
where $t$ denotes the time. As a result, we have
\begin{equation}
\frac{\mathrm{d}^{2}}{\mathrm{d}t^{2}}q=-KVq.  \label{DEqu}
\end{equation}%
That is a homogeneous system of linear ordinary differential equations with
constant coefficients.

From the theory of ordinary differential equations \cite{Walter}, we know
that the solution of the homogeneous linear system above depends on the
eigenvalue problem,
\begin{equation}
\omega ^{2}q=KVq.  \label{EVL}
\end{equation}%
This eigenvalue problem can be solved rigorously with the help of the
Cholesky decomposition of $K$,
\begin{equation}
K=Q\widetilde{Q},  \label{KQQ}
\end{equation}%
where $Q$ is an invertible matrix. The existence of such a decomposition
stems mathematically from the positivity of $K$ \cite{Strang}. By
introducing a temporal variable $\xi $,
\begin{equation}
\xi =Q^{-1}q,
\end{equation}%
Eq. (\ref{EVL}) can be transformed into
\begin{equation}
\omega ^{2}\xi =\Lambda \xi ,
\end{equation}%
where%
\begin{equation}
\Lambda =\widetilde{Q}VQ=\widetilde{\Lambda }\geq 0.  \label{LMD}
\end{equation}%
Just as $V$, the matrix $\Lambda $ is still real, symmetric, and nonnegative
definite. So it can be orthogonally diagonalized,
\begin{equation}
\widetilde{S}\Lambda S=\Gamma ,  \label{SLSG}
\end{equation}%
where
\begin{equation}
\widetilde{S}S=S\widetilde{S}=I,  \label{SSI}
\end{equation}%
\begin{equation}
\Gamma =\left[
\begin{array}{cc}
\begin{array}{cc}
\omega _{1}^{2} &  \\
& \omega _{2}^{2}%
\end{array}
& \text{{\LARGE 0}} \\
\text{{\LARGE 0}} &
\begin{array}{cc}
\ddots &  \\
& \omega _{n}^{2}%
\end{array}%
\end{array}%
\right] .
\end{equation}%
Here $\omega _{i}^{2}\geq 0$ ($i=1,2,\cdots n$) are the eigenvalues of $%
\Lambda $, and $S$ the orthogonal matrix with the eigenvectors of $\Lambda $
as its column vectors.

From Eqs. (\ref{KQQ}), (\ref{LMD}), (\ref{SLSG}), and (\ref{SSI}), it
follows that
\begin{equation}
T^{-1}KVT=\Gamma ,  \label{TKVT}
\end{equation}%
where%
\begin{equation}
T=QS.
\end{equation}%
If we put%
\begin{equation}
T=\left[
\begin{array}{cccc}
v_{1}, & v_{2}, & \cdots , & v_{n}%
\end{array}%
\right] ,
\end{equation}%
where $v_{i}$ ($i=1,2,\cdots ,n$) denote the column vectors of $T$. Equation
(\ref{TKVT}) shows that $v_{i}$ are the eigenvectors of the matrix $KV$,
\begin{equation}
\omega _{i}^{2}v_{i}=KVv_{i},
\end{equation}%
belonging to the eigenvalues $\omega _{i}^{2}$, respectively. In other
words, they are the solutions of the eigenvalue problem of Eq. (\ref{EVL}).
Evidently, they are orthonormal and complete,
\begin{eqnarray}
\widetilde{T}GT &=&I, \\
T\widetilde{T}G &=&I,
\end{eqnarray}%
where $G=K^{-1}$. Namely, they constitute a $n$-dimensional Hilbert space
with $G$ as its metric tensor.

Just as usual, the general solution of Eq. (\ref{DEqu}) can be expanded in
this Hilbert space as
\begin{equation}
q(t)=\sum_{i=1}^{n}\psi _{i}(t)v_{i},
\end{equation}%
where $\psi _{i}(t)$ ($i=1,2,\cdots ,n$) are the expanding coefficients.
But, not as usual, we do not care here how to determine those coefficients
from the initial conditions. Instead, we would rather view this expansion as
a linear transformation,
\begin{equation}
q(t)=T\psi (t),
\end{equation}%
where $\psi (t)$ is the column vector,
\begin{equation}
\psi (t)=\left[
\begin{array}{c}
\psi _{1}(t) \\
\psi _{2}(t) \\
\vdots \\
\psi _{n}(t)%
\end{array}%
\right] .
\end{equation}%
As will be seen later, this view is crucial for the diagonalization of the
Hamiltonian. Since $T$ has full rank, the transformation is invertible. The
inverse is
\begin{equation}
\psi (t)=T^{-1}q(t).  \label{Tf1}
\end{equation}%
Physically, $\psi (t)$ represents the new generalized coordinates, and $q(t)$
the old ones.

The corresponding transformation for the generalized momenta can be deduced
from Eq. (\ref{Dual3}). As is well known, a Poisson bracket is a bilinear
function of its two arguments. This together with Eq. (\ref{Dual3})
indicates that there exists a duality relationship between $p(t)$ and $q(t)$
\cite{Roman}. This duality implies that $p(t)$ will transform
contravariantly with $q(t)$, i.e.,
\begin{equation}
\pi (t)=\widetilde{T}p(t),  \label{Tf2}
\end{equation}%
where $\pi (t)$ represents the new generalized momenta.

Under the transformation of Eqs. (\ref{Tf1}) and (\ref{Tf2}), the
Hamiltonian of the system becomes as follows,%
\begin{eqnarray}
H &=&\frac{1}{2}\widetilde{\pi }\pi +\frac{1}{2}\widetilde{\psi }\Gamma \psi
\notag \\
&=&\frac{1}{2}\sum_{i=1}^{n}\left( \pi _{i}^{2}+\omega _{i}^{2}\psi
_{i}^{2}\right) ,  \label{Harm1}
\end{eqnarray}%
where
\begin{eqnarray}
\psi \cdot \widetilde{\psi } &=&0,  \label{Harm2} \\
\pi \cdot \widetilde{\pi } &=&0,  \label{Harm3} \\
\psi \cdot \widetilde{\pi } &=&I.  \label{Harm4}
\end{eqnarray}%
They are identical to the system of Eq. (\ref{Ham3}) and Eqs. (\ref{Dual1}%
)--(\ref{Dual3}), with the Hamiltonian being diagonalized with respect to
the new generalized coordinates and momenta.

\begin{proposition}
\label{PPS0} A classical quadratic Hamiltonian such as Eq. (\ref{Ham3}) can
be diagonalized with respect to the generalized coordinates and momenta.
\end{proposition}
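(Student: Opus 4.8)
The plan is to exhibit an explicit \emph{canonical} transformation that carries the coupled Hamiltonian of Eq.~(\ref{Ham3}) into a decoupled sum of oscillators, following the construction already set up in Sec.~\ref{Sec1B}. The work divides into two parts. First, I would diagonalize the dynamical matrix $KV$ that governs the equation of motion Eq.~(\ref{DEqu}) and its associated eigenvalue problem Eq.~(\ref{EVL}). Second, I would assemble the coordinate and momentum transformations from this spectral data and check that they are simultaneously canonical (i.e.\ reproduce the Poisson brackets) and diagonalizing (i.e.\ reduce $H$ to normal form). Only the two together establish diagonalization with respect to the generalized coordinates and momenta.

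For the first part, the difficulty is that $KV$ is a product of two symmetric matrices and is therefore not symmetric itself, so neither the reality of its spectrum nor the existence of a complete eigenbasis is automatic. The decisive lever is the positivity $K>0$, which guarantees the Cholesky factorization Eq.~(\ref{KQQ}) with an invertible $Q$. The change of variables $\xi=Q^{-1}q$ then turns Eq.~(\ref{EVL}) into the problem $\omega^{2}\xi=\Lambda\xi$ for the \emph{symmetric} nonnegative matrix $\Lambda=\widetilde{Q}VQ$ of Eq.~(\ref{LMD}). The spectral theorem supplies an orthogonal $S$ with $\widetilde{S}\Lambda S=\Gamma\geq 0$ as in Eqs.~(\ref{SLSG})--(\ref{SSI}), yielding real nonnegative frequencies $\omega_{i}^{2}$, and setting $T=QS$ gives both $T^{-1}KVT=\Gamma$ (Eq.~(\ref{TKVT})) and the metric relations $\widetilde{T}GT=I$, $T\widetilde{T}G=I$ with $G=K^{-1}$, i.e.\ the single key identity $T\widetilde{T}=K$.

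For the second part, I would define the new coordinates and momenta by $q=T\psi$ and $\pi=\widetilde{T}p$ (Eqs.~(\ref{Tf1})--(\ref{Tf2})), the contravariant rule for $\pi$ being forced by the duality of the Poisson bracket. Canonicity is then immediate from this dual pairing: substituting into $q\cdot\widetilde{q}$, $p\cdot\widetilde{p}$, $q\cdot\widetilde{p}$ conjugates Eqs.~(\ref{Dual1})--(\ref{Dual3}) by $T^{\pm 1}$ and recovers Eqs.~(\ref{Harm2})--(\ref{Harm4}). The diagonalization of $H$ rests on the identity $T\widetilde{T}=K$: it sends the kinetic term $\widetilde{p}Kp$ to $\widetilde{\pi}\pi$, while the same relation rewrites $T^{-1}KVT$ as $\widetilde{T}VT$, so the potential term $\widetilde{q}Vq$ becomes $\widetilde{\psi}\Gamma\psi$. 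Collecting these gives the normal form Eq.~(\ref{Harm1}).

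I expect the main obstacle to lie in the first part: reducing the non-symmetric eigenvalue problem for $KV$ to a symmetric one, or equivalently achieving the simultaneous reduction in which a \emph{single} $T$ whitens the kinetic form ($T\widetilde{T}=K$) and diagonalizes the potential form ($\widetilde{T}VT=\Gamma$). This is exactly the classical simultaneous diagonalization of two quadratic forms with one of them positive definite, and it is the positivity $K>0$ that makes it possible, since it both furnishes the Cholesky factor and turns $G=K^{-1}$ into a genuine metric on the mode space. The canonicity check, by contrast, is a short formal computation once the dual transformation structure is in place.
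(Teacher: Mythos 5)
Your proposal is correct and follows essentially the same route as the paper: Cholesky factorization $K=Q\widetilde{Q}$, reduction of the non-symmetric problem for $KV$ to the symmetric one for $\Lambda=\widetilde{Q}VQ$, the derivative transformation $T=QS$ with $q=T\psi$, $\pi=\widetilde{T}p$, and the verification of both canonicity and the normal form. Your identification of $T\widetilde{T}=K$ as the single identity driving both the kinetic whitening and the rewriting $T^{-1}KVT=\widetilde{T}VT$ is a clean way to package what the paper does in Eqs.~(\ref{KQQ})--(\ref{Harm4}).
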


This instance demonstrates clearly that the equation of motion is a very
effective and powerful weapon for the diagonalization of a quadratic
Hamiltonian, in comparison with the method of the preceding subsection. We
see that the equation of motion can generate a linear transformation in a
very natural way, which can then not only diagonalize the quadratic
Hamiltonian but also ensure the invariance of Poisson brackets.

In the picture of diagonalization, the system is represented by the normal
modes of motion. To diagonalize a quadratic Hamiltonian is therefore
equivalent to seeking the normal modes of the system. Of course, the natural
tool for seeking the normal modes is the equation of motion, from the point
of view of physics. That is the physical interpretation for the
diagonalization. All in all, the canonical equation of motion is a candidate
way to the diagonalization of a classical quadratic Hamiltonian, in addition
to the purely algebraic method due to Bogoliubov and Valatin.

Since Heisenberg equation is the quantum counterpart of the canonical
equation of motion in the classical mechanics, it encourages us to try to
employ Heisenberg equation to realize the diagonalization of the quantum
quadratic Hamiltonian. That is the main idea of this review.

Complying with this idea, we shall first study the BV transformation and
diagonalization of the Bose system, we find that a complete theory can be
established using the Heisenberg equation (Sec. \ref{DTBS}). And then we
study the BV transformation and diagonalization of the Fermi system (Sec. %
\ref{DTFS}), it is parallel to the Bose case. Their applications are
discussed in the following two sections (Sec. \ref{ABS} and Sec. \ref{AFS}).
Afterwards, we would turn to studying the Dirac transformation and
diagonalization, which concern coordinates and momenta. It is found that
they are the generalizations of the BV transformation and diagonalization
(Sec. \ref{GBVT}). An advantage of the Dirac transformation and
diagonalization is that they can be transplanted readily to the complex
collective coordinates and momenta, and therefore have wide applications in
field quantization (Sec. \ref{FQ}). Finally, we would like to clarify the
mathematical essence of the transformation and diagonalization. We find that
the equation of motion can be sublated. The transformation and
diagonalization have nothing to do the equation of motion, but are the
intrinsic and invariant property of a Hermitian quadratic form that is
equipped with commutator or Poisson bracket (Sec. \ref{MEBVD}).

Finally, it is worth noting that we shall confine our interest in this
review only to the diagonalization of quadratic Hamiltonians. It will go out
of our consideration as to whether and how a quadratic Hamiltonian can be
derived and obtained for a real system.

\section{Diagonalization Theory of Bose Systems \label{DTBS}}

In this section, we employ the Heisenberg equation of motion to study the
quadratic Hamiltonian of bososns. It is found that a whole theory of
diagonalization can be developed for the Bose system.

\subsection{Dynamic matrix}

The Heisenberg equation of motion can be derived from Eqs. (\ref{Ham1})--(%
\ref{Com3}),
\begin{eqnarray}
i\frac{\mathrm{d}}{\mathrm{d}t}c &=&\alpha c+\gamma \widetilde{c}^{\dag },
\label{Heqb1} \\
i\frac{\mathrm{d}}{\mathrm{d}t}\widetilde{c}^{\dag } &=&-\gamma ^{\dag }c-%
\widetilde{\alpha }\widetilde{c}^{\dag }.  \label{Heqb2}
\end{eqnarray}%
Here and hereafter, we shall apply the natural units of measurement, i.e., $%
\hbar =c=1$, for convenience. The two equations above can be combined as
\begin{equation}
i\frac{\mathrm{d}}{\mathrm{d}t}\psi =D\psi ,  \label{Heqb3}
\end{equation}%
where
\begin{equation}
D=\left[
\begin{array}{cc}
\alpha & \gamma \\
-\gamma ^{\dag } & -\widetilde{\alpha }%
\end{array}%
\right] .  \label{DM}
\end{equation}%
It should be pointed out that the matrix $D$ is distinct from the matrix $M$
of Eq. (\ref{CMH}),
\begin{equation}
M=\left[
\begin{array}{cc}
\alpha & \gamma \\
\gamma ^{\dag } & \widetilde{\alpha }%
\end{array}%
\right] ,  \label{CM}
\end{equation}%
except both $\alpha =0$ and $\gamma =0$, the trivial case that has been
excluded, \textit{ab initio}, in Sec. \ref{BVTS}. As indicated by Eq. (\ref%
{Ham2}), the matrix $M$ represents the constant coefficients of the
Hamiltonian. In contrast, Eq. (\ref{Heqb3}) demonstrates that the matrix $D$
will control the dynamic behavior of the system.

\begin{definition}
We shall call the matrix present in the Hamiltonian, such as $M$, the
coefficient matrix, and the matrix present in the Heisenberg equation, such
as $D$, the dynamic matrix.
\end{definition}

It is a characteristic feature of the Bose system that the dynamic matrix $D$
is different from the coefficient matrix $M$.

On the other hand, it can be readily seen from Eqs. (\ref{DM}) and (\ref{CM}%
) that the dynamic matrix $D$ and the coefficient matrix $M$ have the
relation,%
\begin{equation}
D=I_{-}M.  \label{DIM}
\end{equation}%
This relation will play a fundamental role in the diagonalization of the
Bose system. Besides, it is worth noting that $D$ is generally not Hermitian
whereas $M$ is Hermitian forever.

Following Sec. \ref{Sec1B}, let us study the eigenvalue problem of Eq. (\ref%
{Heqb3}),\
\begin{equation}
\omega\psi=D\psi.  \label{EVEq}
\end{equation}
We expect that it would generate a linear transformation that could be used
to diagonalize the quadratic Hamiltonian of bosons.

Now that $D\neq M$, there arises a question. As shown in Eqs. (\ref{Ham3d})
and (\ref{Cond2}), the Hamiltonian requires a Hermitian congruence
transformation to diagonalize the coefficient matrix $M$. However, the
Heisenberg equation can, at most, generate a similarity transformation to
diagonalize the dynamic matrix $D$, as can be seen from Eq. (\ref{EVEq}).
Not only the matrices to be diagonalized but also the manners of
diagonalization are different from each other. That is the key problem
occurring in the Bose system.

To solve the problem, let us begin with a survey on the general properties
of the dynamic matrix $D$.

\begin{lemma}
\label{Lmm1} If $\omega$ is an eigenvalue of the dynamic matrix $D$, then $%
-\omega^{\ast}$ will also be an eigenvalue of $D$.
\end{lemma}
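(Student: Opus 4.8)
The plan is to exploit the explicit block structure of $D$ in Eq.~(\ref{DM}) together with the symmetry relations $\alpha^{\dag}=\alpha$ and $\widetilde{\gamma}=-\gamma$ recorded in Eq.~(\ref{AGM}) for the Bose case. Suppose $\omega$ is an eigenvalue of $D$ with eigenvector $\psi$, so that $D\psi=\omega\psi$. The key observation is that the relation $D=I_{-}M$ of Eq.~(\ref{DIM}), with $M$ Hermitian and $I_{-}=\mathrm{diag}(I,-I)$ satisfying $I_{-}^{2}=I$ and $I_{-}^{\dag}=I_{-}$, gives $D$ a pseudo-Hermitian character: one has $I_{-}D I_{-}=I_{-}(I_{-}M)I_{-}=MI_{-}$, and since $M=M^{\dag}$ this equals $(I_{-}M)^{\dag}=D^{\dag}$. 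Hence $D^{\dag}=I_{-}DI_{-}$, which is the structural identity I would build the argument on.

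First I would take the adjoint of the eigenvalue equation $D\psi=\omega\psi$ to obtain $\psi^{\dag}D^{\dag}=\omega^{\ast}\psi^{\dag}$, so that $\omega^{\ast}$ is an eigenvalue of $D^{\dag}$ (with left eigenvector $\psi^{\dag}$). Next I would substitute the pseudo-Hermiticity identity $D^{\dag}=I_{-}DI_{-}$, giving $\psi^{\dag}I_{-}DI_{-}=\omega^{\ast}\psi^{\dag}$. Multiplying on the right by $I_{-}$ and using $I_{-}^{2}=I$ yields $\psi^{\dag}I_{-}D=\omega^{\ast}\psi^{\dag}I_{-}$, so the row vector $\psi^{\dag}I_{-}$ is a left eigenvector of $D$ with eigenvalue $\omega^{\ast}$. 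Since the eigenvalues of $D$ and those of its transpose coincide, and left eigenvalues equal right eigenvalues, this already shows $\omega^{\ast}$ belongs to the spectrum of $D$; however, the lemma asserts the stronger and more useful fact that $-\omega^{\ast}$ is an eigenvalue, so I would instead route the argument through the charge-conjugation symmetry implicit in the form of $D$.

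The cleaner route is to find an antilinear (or linear-plus-conjugation) symmetry $C$ with $CDC^{-1}=-D$. Observe that the block form of $D$ pairs the upper-left block $\alpha$ with the lower-right block $-\widetilde{\alpha}$, and the off-diagonal blocks $\gamma,-\gamma^{\dag}$ are swapped and conjugated under the exchange map $\Sigma=\left[\begin{smallmatrix}0 & I\\ I & 0\end{smallmatrix}\right]$ combined with complex conjugation. Concretely, I would verify by direct block multiplication that $\Sigma D^{\ast}\Sigma=-D$, using $\alpha^{\ast}=\widetilde{\alpha}$ and $\gamma^{\ast}=-\widetilde{\gamma^{\dag}}$. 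Then, starting from $D\psi=\omega\psi$, conjugating gives $D^{\ast}\psi^{\ast}=\omega^{\ast}\psi^{\ast}$, and applying $\Sigma$ together with $\Sigma D^{\ast}\Sigma=-D$ produces $D(\Sigma\psi^{\ast})=-\omega^{\ast}(\Sigma\psi^{\ast})$. Thus $\Sigma\psi^{\ast}$ is an eigenvector of $D$ belonging to the eigenvalue $-\omega^{\ast}$, completing the proof.

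The main obstacle I anticipate is purely computational: verifying the intertwining identity $\Sigma D^{\ast}\Sigma=-D$ block by block and confirming that the Hermiticity and (anti)symmetry conditions of Eq.~(\ref{AGM}) make every block match with the correct sign. I would also double-check that $\Sigma\psi^{\ast}$ is nonzero (immediate, since $\Sigma$ is invertible and conjugation preserves nonvanishing), so that it is a genuine eigenvector rather than the trivial vector.
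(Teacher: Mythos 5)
Your overall strategy is sound and is, in essence, the very symmetry the paper itself exploits: the intertwining identity $\Sigma_{x}D^{\ast}\Sigma_{x}=-D$ you propose is precisely the content of the relation $\left[ -\Sigma_{x}(\omega I_{+}-D)\Sigma_{x}\right]^{\ast}=-\omega^{\ast}I_{+}-D$ used in the paper's proof of Lemma \ref{Lmm2}, while the paper proves the present lemma by the equivalent manipulation of the characteristic determinant in Eqs.~(\ref{CHM1})--(\ref{CHM2}). Your eigenvector version is arguably cleaner, since the single map $\psi\mapsto\Sigma_{x}\psi^{\ast}$ delivers Lemma \ref{Lmm1} and Lemma \ref{Lmm2} in one stroke. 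The pseudo-Hermiticity digression ($D^{\dag}=I_{-}DI_{-}$, whence $\omega^{\ast}\in\sigma(D)$) is correct but, as you yourself note, yields the wrong conjugate and is rightly abandoned.

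There is, however, a concrete error that would derail exactly the ``purely computational'' verification you defer to the end. For the Bose system the submatrix $\gamma$ is \emph{symmetric}, $\widetilde{\gamma}=\gamma$; the antisymmetric sign in Eq.~(\ref{AGM}) belongs to fermions. You instead take $\widetilde{\gamma}=-\gamma$, and the auxiliary relation $\gamma^{\ast}=-\widetilde{\gamma^{\dag}}$ you plan to use is false for any $\gamma\neq 0$, because $\widetilde{\gamma^{\dag}}=\gamma^{\ast}$ identically. Carrying out the block multiplication with the bosonic $D$ of Eq.~(\ref{DM}) gives $\Sigma_{x}D^{\ast}\Sigma_{x}=\left[\begin{smallmatrix}-\alpha & -\widetilde{\gamma}\\ \gamma^{\ast} & \widetilde{\alpha}\end{smallmatrix}\right]$, which must be compared with $-D=\left[\begin{smallmatrix}-\alpha & -\gamma\\ \gamma^{\dag} & \widetilde{\alpha}\end{smallmatrix}\right]$: the off-diagonal blocks match if and only if $\widetilde{\gamma}=\gamma$ (equivalently $\gamma^{\ast}=\gamma^{\dag}$). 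With your stated relations the verification fails; with the correct bosonic symmetry every block matches and your argument closes as intended. The fix is therefore a one-line sign correction rather than a change of method.
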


\begin{proof}
The characteristic equation of Eq. (\ref{EVEq}) is%
\begin{equation}
\det (\omega I_{+}-D)=\left\vert
\begin{array}{cc}
\omega I-\alpha & -\gamma \\
\gamma ^{\dag } & \omega I+\widetilde{\alpha }%
\end{array}%
\right\vert =0.  \label{ChEq}
\end{equation}%
First, let us perform some elementary row and column operations to the
characteristic determinant,
\begin{eqnarray}
\det (\omega I_{+}-D) &=&\left\vert
\begin{array}{cc}
-I & 0 \\
0 & -I%
\end{array}%
\right\vert \left\vert
\begin{array}{cc}
0 & I \\
I & 0%
\end{array}%
\right\vert  \notag \\
&&\times \left\vert
\begin{array}{cc}
\omega I-\alpha & -\gamma \\
\gamma ^{\dag } & \omega I+\widetilde{\alpha }%
\end{array}%
\right\vert \left\vert
\begin{array}{cc}
0 & I \\
I & 0%
\end{array}%
\right\vert  \notag \\
&=&\left\vert
\begin{array}{cc}
-\omega I-\widetilde{\alpha } & -\gamma ^{\dag } \\
\gamma & -\omega I+\alpha%
\end{array}%
\right\vert .  \label{CHM1}
\end{eqnarray}%
And then take complex conjugate,
\begin{equation}
\det (\omega I_{+}-D)^{\ast }=\left\vert
\begin{array}{cc}
-\omega ^{\ast }I-\alpha & -\gamma \\
\gamma ^{\dag } & -\omega ^{\ast }I+\widetilde{\alpha }%
\end{array}%
\right\vert ,  \label{CHM2}
\end{equation}%
where we have used the facts for the Bose system: $\alpha ^{\dag }=\alpha $
and $\widetilde{\gamma }=\gamma $. Paying attention to
\begin{equation}
\det (\omega I_{+}-D)^{\ast }=\det (\omega I_{+}-D)=0,
\end{equation}%
we have%
\begin{equation}
\left\vert
\begin{array}{cc}
-\omega ^{\ast }I-\alpha & -\gamma \\
\gamma ^{\dag } & -\omega ^{\ast }I+\widetilde{\alpha }%
\end{array}%
\right\vert =0.
\end{equation}%
One reaches the lemma immediately by comparing this equation with Eq. (\ref%
{ChEq}).
\end{proof}

This lemma shows that the eigenvalues of the dynamic matrix $D$ will appear
in pairs if they exist. When one of a pair is $\omega$, the other is $%
-\omega^{\ast}$.

Physically, this property of the dynamic matrix $D$ originates from the
Hermitian symmetry of the Hamiltonian: $H^{\dag}=H$. This symmetry implies
that, if
\begin{equation}
c(t)=c_{0}\exp(\mp i\omega t)
\end{equation}
is a solution of Eq. (\ref{Heqb1}), then
\begin{equation}
\widetilde{c}^{\dag}(t)=\widetilde{c}_{0}^{\dag}\exp(\pm i\omega^{\ast}t)
\end{equation}
will be the solution of Eq. (\ref{Heqb2}). That is to say, if
\begin{equation}
\psi(t)=\psi_{0}\exp(-i\omega t)
\end{equation}
is a solution of Eq. (\ref{Heqb3}), the
\begin{equation}
\psi(t)=\psi_{0}\exp(i\omega^{\ast}t)
\end{equation}
must also be a solution of Eq. (\ref{Heqb3}).

\begin{lemma}
\label{Lmm2} If $v(\omega)$ is an eigenvector belonging to the eigenvalue $%
\omega$ of the dynamic matrix $D$, then $v(-\omega^{\ast})$ will be an
eigenvector belonging to the eigenvalue $-\omega^{\ast}$. Here the vector $%
v(-\omega^{\ast})$ is defined by
\begin{equation}
v(-\omega^{\ast})=\Sigma_{x}v^{\ast}(\omega)  \label{VWW}
\end{equation}
with
\begin{equation}
\Sigma_{x}=\left[
\begin{array}{cc}
0 & I \\
I & 0%
\end{array}
\right] .
\end{equation}
\end{lemma}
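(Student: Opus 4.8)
The plan is to reduce the statement to a single intertwining relation for the dynamic matrix, namely
\[
\Sigma_{x}D^{\ast}\Sigma_{x}=-D,
\]
after which the lemma follows in a couple of lines. Starting from the eigenvalue equation $Dv(\omega)=\omega v(\omega)$ of Eq. (\ref{EVEq}), I would first take the complex conjugate of both sides to get $D^{\ast}v^{\ast}(\omega)=\omega^{\ast}v^{\ast}(\omega)$. Left-multiplying by $\Sigma_{x}$ and inserting the identity $\Sigma_{x}^{2}=I$ (immediate from the block form of $\Sigma_{x}$) gives $(\Sigma_{x}D^{\ast}\Sigma_{x})(\Sigma_{x}v^{\ast}(\omega))=\omega^{\ast}(\Sigma_{x}v^{\ast}(\omega))$. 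Replacing $\Sigma_{x}D^{\ast}\Sigma_{x}$ by $-D$ and recalling the definition $v(-\omega^{\ast})=\Sigma_{x}v^{\ast}(\omega)$ of Eq. (\ref{VWW}), one obtains $Dv(-\omega^{\ast})=-\omega^{\ast}v(-\omega^{\ast})$, which is exactly the assertion.

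The heart of the matter is therefore the verification of the intertwining relation, and this is where the Bose symmetries enter. I would compute the conjugated block matrix directly from the form of $D$ in Eq. (\ref{DM}); carrying out the two swaps gives
\[
\Sigma_{x}D^{\ast}\Sigma_{x}=\left[
\begin{array}{cc}
-\widetilde{\alpha}^{\ast} & -(\gamma^{\dag})^{\ast} \\
\gamma^{\ast} & \alpha^{\ast}
\end{array}
\right].
\]
Comparing entrywise with $-D$, the four required identities are $\widetilde{\alpha}^{\ast}=\alpha$, $(\gamma^{\dag})^{\ast}=\gamma$, $\gamma^{\ast}=\gamma^{\dag}$, and $\alpha^{\ast}=\widetilde{\alpha}$. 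Each is a restatement of the two Bose relations $\alpha^{\dag}=\alpha$ and $\widetilde{\gamma}=\gamma$ already invoked in the proof of Lemma \ref{Lmm1}: Hermiticity of $\alpha$ gives $\alpha^{\ast}=\widetilde{\alpha}$ and hence $\widetilde{\alpha}^{\ast}=\alpha$, while the symmetry $\widetilde{\gamma}=\gamma$ makes $\gamma^{\ast}$ symmetric, so that $\gamma^{\ast}=\gamma^{\dag}$ and $(\gamma^{\dag})^{\ast}=\gamma$.

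I expect the only real obstacle to be bookkeeping: keeping the three operations of transpose, complex conjugate, and Hermitian conjugate straight across the four blocks and invoking the two distinct Bose symmetries in the correct slots. Conceptually there is nothing deep, since $\Sigma_{x}D^{\ast}\Sigma_{x}=-D$ is simply the eigenvector-level counterpart of the determinantal identity already exploited in Lemma \ref{Lmm1}---the same conjugation-by-$\Sigma_{x}$ trick that produced Eq. (\ref{CHM2}) drives the present argument. Once the intertwining relation is established, no estimates or case distinctions remain.
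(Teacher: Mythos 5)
Your proof is correct and follows essentially the same route as the paper: both arguments conjugate the eigenvalue equation by complex conjugation and $\Sigma_{x}$, and both rest on the intertwining relation $\Sigma_{x}D^{\ast}\Sigma_{x}=-D$, which the paper extracts from the determinant manipulations of Lemma \ref{Lmm1} (Eqs. (\ref{CHM1}) and (\ref{CHM2})) while you verify it by direct block computation using $\alpha^{\dag}=\alpha$ and $\widetilde{\gamma}=\gamma$. Your version merely makes the key identity explicit rather than citing it implicitly; the mathematical content is the same.
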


\begin{proof}
Substituting $v(\omega)$ into Eq. (\ref{EVEq}), one has%
\begin{equation}
(\omega I_{+}-D)v(\omega)=0.
\end{equation}
It follows that%
\begin{equation}
-\Sigma_{x}(\omega I_{+}-D)\Sigma_{x}\Sigma_{x}v(\omega)=0,
\end{equation}
and that%
\begin{equation}
\left[ -\Sigma_{x}(\omega I_{+}-D)\Sigma_{x}\right] ^{\ast}\left[
\Sigma_{x}v(\omega)\right] ^{\ast}=0.
\end{equation}
From Eqs. (\ref{CHM1}) and (\ref{CHM2}), one can easily see that%
\begin{equation}
\left[ -\Sigma_{x}(\omega I_{+}-D)\Sigma_{x}\right] ^{\ast}=-\omega^{\ast
}I_{+}-D,
\end{equation}
he thus gets%
\begin{equation}
(-\omega^{\ast}I_{+}-D)\Sigma_{x}v^{\ast}(\omega)=0.
\end{equation}
This means that $\Sigma_{x}v^{\ast}(\omega)$ is an eigenvector belonging to
the eigenvalue $-\omega^{\ast}$. That is just Eq. (\ref{VWW}).
\end{proof}

This lemma shows that, for a given pair of eigenvalues ($\omega,-\omega^{%
\ast }$), their eigenvectors can be formed into pairs according to Eq. (\ref%
{VWW}).

In the classical case, the dynamic matrix $D_{cl}$ of Eq. (\ref{DEqu}) is
\begin{equation}
D_{cl}=KV.
\end{equation}%
It indicates that $D_{cl}$ is the production of a positive definite matrix $%
K $ and a Hermitian matrix $V$. Although $D_{cl}$ is not Hermitian in
general, it is diagonalizable and all its eigenvalues are real.
Mathematically, that is because the matrix $K$ has Cholesky decomposition,
as has been seen in Sec. \ref{Sec1B}. Now, as shown in Eq. (\ref{DIM}), the
dynamic matrix $D$ for a Bose system is the production of an indefinite
matrix $I_{-}$ and a Hermitian matrix $M$. There exists no Cholesky
decomposition for the matrix $I_{-}$, and there is no guarantee for $D$ to
be diagonalizable. Furthermore, the eigenvalues of $D$ will, in general, be
complex other than real even if $D$ is diagonalizable. In a word, the
present situation is much more involved than the classical case. To be
clear, let us take a look at the simplest case, i.e., the Hamiltonian of Eq.
(\ref{Ham1}) with $n=1$.

\begin{example}
\label{Example1}%
\begin{equation}
H=\alpha\,c^{\dag}c+\frac{1}{2}\gamma\,c^{\dag}c^{\dag}+\frac{1}{2}%
\gamma^{\ast}\,cc.
\end{equation}
\end{example}

\begin{solution}
Apparently, the dynamic matrix $D$ is a $2\times 2$ matrix,
\begin{equation}
D=\left[
\begin{array}{cc}
\alpha & \gamma \\
-\gamma ^{\ast } & -\alpha%
\end{array}%
\right] .
\end{equation}%
The eigenvalue equation is
\begin{equation}
\left[
\begin{array}{cc}
\alpha & \gamma \\
-\gamma ^{\ast } & -\alpha%
\end{array}%
\right] \left[
\begin{array}{c}
x \\
y%
\end{array}%
\right] =\omega \left[
\begin{array}{c}
x \\
y%
\end{array}%
\right] .  \label{Exm1EgEv}
\end{equation}%
Obviously, the eigenvalues can be obtained from the characteristic equation,
\begin{equation}
\omega ^{2}-\alpha ^{2}+\left\vert \gamma \right\vert ^{2}=0,
\end{equation}%
the results are
\begin{equation}
\omega =\left\{
\begin{array}{ll}
\pm \sqrt{\alpha ^{2}-\left\vert \gamma \right\vert ^{2}}, & \left\vert
\alpha \right\vert >\left\vert \gamma \right\vert \\
0, & \left\vert \alpha \right\vert =\left\vert \gamma \right\vert \\
\pm i\sqrt{\left\vert \gamma \right\vert ^{2}-\alpha ^{2}}, & \left\vert
\alpha \right\vert <\left\vert \gamma \right\vert .%
\end{array}%
\right.
\end{equation}%
Namely, there are two real eigenvalues when $\left\vert \alpha \right\vert
>\left\vert \gamma \right\vert $, a zero eigenvalue when $\left\vert \alpha
\right\vert =\left\vert \gamma \right\vert $, and two imaginary eigenvalues
when $\left\vert \alpha \right\vert <\left\vert \gamma \right\vert $.

It is easy to show that, if $\left\vert \alpha\right\vert =\left\vert
\gamma\right\vert $, there exists only one eigenvector,
\begin{equation}
\left[
\begin{array}{c}
x \\
y%
\end{array}
\right] =\left[
\begin{array}{c}
1 \\
\mp\mathrm{e}^{-i\theta}%
\end{array}
\right] ,
\end{equation}
where $\theta=\arg(\gamma)$ is the argument of $\gamma$, and the signs $\mp$
correspond to $\alpha=\pm\left\vert \gamma\right\vert $, respectively.
Therefore, the dynamic matrix $D$ can not be diagonalized when $\left\vert
\alpha\right\vert =\left\vert \gamma\right\vert $.

When $\left\vert \alpha\right\vert <\left\vert \gamma\right\vert $, the
dynamic matrix $D$ has two linearly independent eigenvectors. It is thus
diagonalizable, but its eigenvalues are both imaginary.

When $\left\vert \alpha\right\vert >\left\vert \gamma\right\vert $, the
dynamic matrix $D$ has two linearly independent eigenvectors, it is also
diagonalizable. In particular, its eigenvalues are both real.

In sum, the dynamic matrix of this Bose system has the same property as that
of the classical system only when $\left\vert \alpha \right\vert >\left\vert
\gamma \right\vert $: It is diagonalizable, and its eigenvalues are real.
\end{solution}

This simple example exhibits clearly the complexity of Bose systems. To
resolve this complexity, we shall study first the necessary and then the
sufficient condition for the diagonalization of a quadratic Hamiltonian of
bosons, which constitute the themes of the following two subsections,
respectively.

\subsection{Necessary condition for diagonalization}

To be clear and definite in the following, we would first give three
definitions here.

\begin{definition}
\label{BVM} If an invertible matrix has the form as Eq. (\ref{TM}), we call
it a Bogoliubov-Valatin matrix.
\end{definition}

\begin{definition}
\label{BVT} A linear transformation defined by Eq. (\ref{Transf1}) will be
called a Bogoliubov-Valatin transformation if $T$ is a BV matrix and $\psi $%
\ is a standard bosonic field, i.e., it satisfies Eq. (\ref{Com4}).
\end{definition}

\begin{definition}
\label{BVD} The quadratic Hamiltonian defined in Eq. (\ref{Ham1}) is said to
be Bogoliubov-Valatinianly diagonalizable if there is such a BV
transformation that can fulfill both the conditions of Eqs. (\ref{Cond1})
and (\ref{Cond2}).
\end{definition}

It should be pointed out that a BV transformation is not required to satisfy
either the condition of Eq. (\ref{Cond1}) or the condition of Eq. (\ref%
{Cond2}), according to the definition \ref{BVT}.

\begin{lemma}
\label{Lmm3} The inverse of a BV matrix is also a BV matrix.
\end{lemma}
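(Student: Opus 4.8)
The plan is to prove Lemma~\ref{Lmm3} by first finding an intrinsic, coordinate-free characterization of the BV form and then observing that this characterization is stable under inversion. The key observation is that the block structure $T=\left[\begin{array}{cc}A & B\\ B^{\ast} & A^{\ast}\end{array}\right]$ of Eq.~(\ref{TM}) is equivalent to a single symmetry relation involving the matrix $\Sigma_{x}$ introduced in Lemma~\ref{Lmm2}. Explicitly, I would compute $\Sigma_{x}T^{\ast}\Sigma_{x}$ for a general block matrix and verify that $T$ is a BV matrix if and only if
\begin{equation}
T^{\ast}=\Sigma_{x}T\,\Sigma_{x},
\end{equation}
or equivalently $\Sigma_{x}T^{\ast}\Sigma_{x}=T$, using only $\Sigma_{x}^{2}=I$ and the fact that $\Sigma_{x}$ is real. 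This reformulation trades the two coupled off-diagonal constraints hidden in Eq.~(\ref{TM}) for one clean algebraic identity.

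Once this characterization is in hand, the lemma follows almost immediately. Since $T$ is a BV matrix it is invertible by Definition~\ref{BVM}, so $T^{-1}$ exists and is itself invertible (its inverse being $T$). It therefore remains only to show that $T^{-1}$ satisfies the same symmetry relation. Here I would use two elementary facts: entrywise complex conjugation is multiplicative on matrices, so that $(T^{-1})^{\ast}=(T^{\ast})^{-1}$; and $\Sigma_{x}$ is its own inverse. Combining these with the characterization gives
\begin{equation}
(T^{-1})^{\ast}=(T^{\ast})^{-1}=(\Sigma_{x}T\,\Sigma_{x})^{-1}=\Sigma_{x}T^{-1}\Sigma_{x},
\end{equation}
which is precisely the BV symmetry for $T^{-1}$. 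Hence $T^{-1}$ is again of the form of Eq.~(\ref{TM}) and is invertible, so it is a BV matrix.

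I expect no serious obstacle in this argument; the only real content is recognizing the correct reformulation of the BV form in terms of $\Sigma_{x}$, after which everything is formal. The one point that warrants care is the bookkeeping of the two distinct operations $A^{\ast}$ (entrywise conjugate) and $A^{\dagger}$ (conjugate transpose): the BV condition is phrased using $\ast$ alone, and I must not inadvertently import transposition when manipulating $\Sigma_{x}T^{\ast}\Sigma_{x}$. An alternative route would be to write out $T^{-1}$ via the block-inversion formula and read off its blocks directly, but that approach needs invertibility of the individual blocks $A$ and $B$ and produces messier expressions; the symmetry argument avoids these case distinctions entirely and is the one I would carry out.
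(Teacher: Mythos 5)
Your proposal is correct and is essentially identical to the paper's own proof: the paper likewise characterizes a BV matrix by the identity $\Sigma_{x}T^{\ast}\Sigma_{x}=T$ and obtains the result by taking inverses of both sides. Your version merely spells out the intermediate facts $(T^{-1})^{\ast}=(T^{\ast})^{-1}$ and $\Sigma_{x}^{-1}=\Sigma_{x}$, which the paper leaves implicit.
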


\begin{proof}
From Eq. (\ref{TM}), it is easy to show that a matrix $T$ is a BV matrix if
and only if
\begin{equation}
\Sigma_{x}T^{\ast}\Sigma_{x}=T.
\end{equation}
Taking the inverses of the two sides, we obtain
\begin{equation}
\Sigma_{x}\left( T^{-1}\right) ^{\ast}\Sigma_{x}=T^{-1}.  \label{STS}
\end{equation}
This demonstrates that $T^{-1}$ is also a BV matrix.
\end{proof}

From Eq. (\ref{Psi}), one can easily see that
\begin{equation}
\psi=\left( \widetilde{\Sigma_{x}\psi}\right) ^{\dag}.
\end{equation}
This is a basic symmetry of the field operator, we shall call it the
involution symmetry, for convenience. Mathematically, this symmetry roots
from the fact that $c_{i}$ and $c_{i}^{\dag}$ are not independent, but are
the Hermitian conjugates of each other. Conversely, if a field has the
involution symmetry as above, its component operators can not be
independent, there must exist some relationship among them.

Now, consider the new field $\varphi$ defined by the BV transformation of
Eq. (\ref{Transf1}). It can be given by the inverse transformation,
\begin{equation}
\varphi=T^{-1}\psi.  \label{FTP}
\end{equation}
Obviously,
\begin{equation}
\Sigma_{x}\varphi=\Sigma_{x}T^{-1}\Sigma_{x}\Sigma_{x}\psi,
\end{equation}
which results in
\begin{equation}
\left( \widetilde{\Sigma_{x}\varphi}\right) ^{\dag}=\Sigma_{x}\left(
T^{-1}\right) ^{\ast}\Sigma_{x}\left( \widetilde{\Sigma_{x}\psi}\right)
^{\dag}.
\end{equation}
This implies that
\begin{equation}
\varphi=\left( \widetilde{\Sigma_{x}\varphi}\right) ^{\dag}.
\end{equation}
where Eqs. (\ref{STS})--(\ref{FTP}) have been used. Therefore, the new field
will have the same involution symmetry as the old one after a BV
transformation.

\begin{lemma}
\label{Lmm4} The involution symmetry of the field operator is conserved for
the Bose system after a BV transformation. Namely, the involution symmetry
is an invariant property of the BV transformation.
\end{lemma}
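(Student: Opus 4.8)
The plan is to assemble the three ingredients already laid out just above the statement and package them into a single chain of equalities. These are: the involution symmetry $\psi=(\widetilde{\Sigma_{x}\psi})^{\dag}$ of the old field, read off from Eq.~(\ref{Psi}); the inverted transformation law $\varphi=T^{-1}\psi$ of Eq.~(\ref{FTP}); and Lemma~\ref{Lmm3}, which tells us that $T^{-1}$ is again a BV matrix and therefore obeys $\Sigma_{x}(T^{-1})^{\ast}\Sigma_{x}=T^{-1}$, i.e.\ Eq.~(\ref{STS}). The target is to establish that the new field carries the same symmetry, $\varphi=(\widetilde{\Sigma_{x}\varphi})^{\dag}$.

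First I would insert $\Sigma_{x}^{2}=I$ into $\varphi=T^{-1}\psi$ to rewrite it as $\Sigma_{x}\varphi=(\Sigma_{x}T^{-1}\Sigma_{x})(\Sigma_{x}\psi)$. The next step is to apply the operation $v\mapsto(\widetilde{v})^{\dag}$ to both sides. The one computational fact I need is the identity $(\widetilde{Aw})^{\dag}=A^{\ast}(\widetilde{w})^{\dag}$, valid for a constant complex matrix $A$ and an operator column $w$, which follows at once by writing out components and using $(\lambda O)^{\dag}=\lambda^{\ast}O^{\dag}$ for a scalar $\lambda$ and an operator $O$. Taking $A=\Sigma_{x}T^{-1}\Sigma_{x}$ and $w=\Sigma_{x}\psi$, and using that $\Sigma_{x}$ is real so that $(\Sigma_{x}T^{-1}\Sigma_{x})^{\ast}=\Sigma_{x}(T^{-1})^{\ast}\Sigma_{x}$, gives $(\widetilde{\Sigma_{x}\varphi})^{\dag}=\Sigma_{x}(T^{-1})^{\ast}\Sigma_{x}(\widetilde{\Sigma_{x}\psi})^{\dag}$.

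The conclusion then comes from two substitutions on the right-hand side. Lemma~\ref{Lmm3} collapses $\Sigma_{x}(T^{-1})^{\ast}\Sigma_{x}$ to $T^{-1}$, and the involution symmetry of $\psi$ collapses $(\widetilde{\Sigma_{x}\psi})^{\dag}$ to $\psi$; together they reduce the right-hand side to $T^{-1}\psi=\varphi$. Hence $\varphi=(\widetilde{\Sigma_{x}\varphi})^{\dag}$, which is precisely the involution symmetry for the new field, and the invariance claimed by the lemma follows.

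No step here is deep, so I do not anticipate a genuine obstacle; the only point requiring care is the bookkeeping in the middle, namely tracking how the combined transpose-and-dagger operation conjugates the constant coefficients while leaving the order of the two $\Sigma_{x}$ factors intact, and arranging those factors so that Lemma~\ref{Lmm3} can be invoked in exactly the form $\Sigma_{x}(T^{-1})^{\ast}\Sigma_{x}=T^{-1}$. In effect the proof is just the chain of displayed equations already recorded above; the lemma merely records their net consequence.
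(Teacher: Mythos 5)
Your proposal is correct and is essentially identical to the paper's own argument: both insert $\Sigma_{x}^{2}=I$ into $\varphi=T^{-1}\psi$, apply the transpose-then-dagger operation to obtain $(\widetilde{\Sigma_{x}\varphi})^{\dag}=\Sigma_{x}(T^{-1})^{\ast}\Sigma_{x}(\widetilde{\Sigma_{x}\psi})^{\dag}$, and then invoke Eq.~(\ref{STS}) from Lemma~\ref{Lmm3} together with the involution symmetry of $\psi$ to conclude $\varphi=(\widetilde{\Sigma_{x}\varphi})^{\dag}$. The only difference is that you make explicit the component-level identity $(\widetilde{Aw})^{\dag}=A^{\ast}(\widetilde{w})^{\dag}$, which the paper uses tacitly.
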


Suppose that $\varphi $ is a new field, it thus has the involution symmetry.
As mentioned above, its component operators will not be independent. In
fact, it is easy to show that $\varphi $ must have the same form as the old
field $\psi $ defined in Eq. (\ref{Psi}). In other words, it can be
represented as follows,
\begin{equation}
\varphi =\left[
\begin{array}{c}
d \\
\widetilde{d^{\dag }}%
\end{array}%
\right] ,
\end{equation}%
where
\begin{equation}
d=\left[
\begin{array}{c}
d_{1} \\
d_{2} \\
\vdots \\
d_{n}%
\end{array}%
\right] ,\ d^{\dag }=\left[
\begin{array}{cccc}
d_{1}^{\dag }, & d_{2}^{\dag }, & \cdots , & d_{n}^{\dag }%
\end{array}%
\right] .
\end{equation}%
Here $d_{i}$ and $d_{i}^{\dag }$ represent a new pair of operators, which
are Hermitianly conjugate to each other. Their commutation rules can be
derived from the inverse transformation,%
\begin{equation}
\varphi \cdot \varphi ^{\dag }=T^{-1}\psi \cdot \psi ^{\dag }\left(
T^{-1}\right) ^{\dag }.
\end{equation}%
As stated in the definition \ref{BVT}, the old field $\psi $ satisfies the
standard commutation rule of Eq. (\ref{Com4}). We therefore obtain%
\begin{equation}
\varphi \cdot \varphi ^{\dag }=T^{-1}I_{-}\left( T^{-1}\right) ^{\dag }.
\label{NComm}
\end{equation}%
The commutation rule for the new field $\varphi $ is determined wholly by
the BV matrix $T$, it may not be standard,
\begin{equation}
\varphi \cdot \varphi ^{\dag }\neq I_{-}.
\end{equation}%
It is standard if and only if $T$ satisfies the condition of Eq. (\ref{Cond1}%
), which is equivalent to
\begin{equation}
T^{-1}I_{-}\left( T^{-1}\right) ^{\dag }=I_{-}.
\end{equation}%
In a word, if a BV transformation satisfies the condition of Eq. (\ref{Cond1}%
), the new field is a standard bosonic field; if it further satisfies the
condition of Eq. (\ref{Cond2}), the Hamiltonian of Eq. (\ref{Ham1}) gets BV
diagonalized.

We shall leave it to the next subsection to discuss how to obtain a BV
transformation and make it fulfill the two conditions of Eqs. (\ref{Cond1})
and (\ref{Cond2}). Here and now, we would, above all, show a basic property
of the BV transformation, which will play the central role in the theory of
diagonalization of quantum quadratic Hamiltonians.

\begin{lemma}
\label{Similar} Under a BV transformation, the two dynamic matrices
respectively for the old and new fields will be similar to each other.
\end{lemma}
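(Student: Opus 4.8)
The plan is to show that the new field $\varphi$ obeys a Heisenberg equation of its own, $i\frac{\mathrm{d}}{\mathrm{d}t}\varphi = D'\varphi$, whose dynamic matrix is $D' = T^{-1}DT$; the similarity of $D$ and $D'$ is then immediate because $T$ is invertible. The backbone is the observation that the Heisenberg equation $i\frac{\mathrm{d}}{\mathrm{d}t}\mathcal{O}=[\mathcal{O},H]$ holds for \emph{every} operator against one and the same fixed $H$, so that applying it componentwise to $\varphi$ defines $D'$ through $[\varphi,H]=D'\varphi$, in exact parallel with the way Eq.~(\ref{Heqb3}) defines $D$ for $\psi$.

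First I would exploit the fact that the BV transformation $\psi=T\varphi$ of Eq.~(\ref{Transf1}) has a \emph{constant} c-number matrix $T$. Differentiating in time and invoking $i\frac{\mathrm{d}}{\mathrm{d}t}\psi=D\psi$ gives $T\,(i\frac{\mathrm{d}}{\mathrm{d}t}\varphi)=i\frac{\mathrm{d}}{\mathrm{d}t}\psi=D\psi=DT\varphi$, hence $TD'\varphi=DT\varphi$, that is, $(TD'-DT)\varphi=0$. The next step—cancelling $\varphi$—is the one point that needs care, since the entries of $\varphi$ are operators rather than free scalars, so I must argue that they are linearly independent over $\mathbb{C}$. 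This follows from Eq.~(\ref{NComm}): if $\sum_{a}\lambda_{a}\varphi_{a}=0$ for constants $\lambda_{a}$, then taking the bracket with each $\varphi_{b}^{\dag}$ yields $\widetilde{\lambda}\,(\varphi\cdot\varphi^{\dag})=0$, and since $\varphi\cdot\varphi^{\dag}=T^{-1}I_{-}(T^{-1})^{\dag}$ is invertible (a product of invertible matrices), $\lambda=0$. With the entries independent, $(TD'-DT)\varphi=0$ forces $TD'-DT=0$, so $D'=T^{-1}DT$, which is the assertion.

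As an independent check I would compute $D'$ directly as the product of the new commutation metric and the new coefficient matrix, in analogy with Eq.~(\ref{DIM}). The new coefficient matrix is $T^{\dag}MT$ by Eq.~(\ref{Ham3d}) and the new metric is $T^{-1}I_{-}(T^{-1})^{\dag}$ by Eq.~(\ref{NComm}), so $D'=T^{-1}I_{-}(T^{-1})^{\dag}T^{\dag}MT$. The crux is the cancellation $(T^{-1})^{\dag}T^{\dag}=(T^{\dag})^{-1}T^{\dag}=I$, after which $D'=T^{-1}I_{-}MT=T^{-1}DT$ by Eq.~(\ref{DIM}), in agreement with the first route. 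The main obstacle is therefore not any lengthy computation but these twin subtleties: legitimizing the cancellation of the operator vector $\varphi$ (equivalently, establishing the structural formula $D'=(\varphi\cdot\varphi^{\dag})(T^{\dag}MT)$ for the possibly non-standard metric), and noticing that the Hermitian-conjugate factors $(T^{-1})^{\dag}$ and $T^{\dag}$ annihilate each other—precisely the mechanism that turns the Hermitian congruence $M\mapsto T^{\dag}MT$ on the coefficient matrix into a plain similarity $D\mapsto T^{-1}DT$ on the dynamic matrix.
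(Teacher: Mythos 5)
Your main argument is the same as the paper's: differentiate $\psi=T\varphi$, use the two Heisenberg equations to get $(T^{-1}DT-D_{1})\varphi=0$, and conclude $D_{1}=T^{-1}DT$. Your explicit justification for cancelling the operator vector $\varphi$ (linear independence of its components via the invertibility of $\varphi\cdot\varphi^{\dag}=T^{-1}I_{-}(T^{-1})^{\dag}$) is a step the paper merely asserts, and your second route is a consistent cross-check; both are correct.
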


\begin{proof}
Obviously, under a BV transformation as given in Eqs. (\ref{Transf1}) and (%
\ref{TM}), the equation of motion of the new field $\varphi$ will still be
linear in $\varphi$ itself, i.e.,
\begin{equation}
i\frac{\mathrm{d}}{\mathrm{d}t}\varphi=[\varphi,\,H]=D_{1}\varphi,
\end{equation}
where $D_{1}$ is the dynamic matrix for the new field $\varphi$. Apparently,
this equation has the same form as that for the old field $\psi$,
\begin{equation}
i\frac{\mathrm{d}}{\mathrm{d}t}\psi=D\psi,
\end{equation}
where $D$ is the dynamic matrix for the old field $\psi$. On the other hand,
it follows from Eq. (\ref{Transf1}) that
\begin{equation}
i\frac{\mathrm{d}}{\mathrm{d}t}\psi=Ti\frac{\mathrm{d}}{\mathrm{d}t}\varphi.
\end{equation}
With the above two equations of motion for $\psi$ and $\varphi$, this
equation can be expressed as
\begin{equation}
D\psi=TD_{1}\varphi.
\end{equation}
Substituting $\psi$ further with Eq. (\ref{Transf1}), one has
\begin{equation}
\left( T^{-1}DT-D_{1}\right) \varphi=0.
\end{equation}
This equation is equivalent to%
\begin{equation}
v_{i}\varphi=0,\text{ \ }\forall i\in\{1,2,\cdots,2n\},
\end{equation}
where $v_{i}$ are the row vectors of the matrix $T^{-1}DT-D_{1}$. It implies
that
\begin{equation}
v_{i}=0,\text{ \ }\forall i\in\{1,2,\cdots,2n\}.
\end{equation}
That is%
\begin{equation}
T^{-1}DT-D_{1}=0,
\end{equation}
viz.,%
\begin{equation}
D_{1}=T^{-1}DT.
\end{equation}
In other words, the dynamic matrix will vary in a similar manner under a BV
transformation.
\end{proof}

\begin{proposition}
\label{PPS1} If a quadratic Hamiltonian of bosons can be BV diagonalized,
then its dynamic matrix is diagonalizable, and all the eigenvalues of the
dynamic matrix will be real.
\end{proposition}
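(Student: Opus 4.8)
The plan is to bridge the two structural facts already established: Lemma~\ref{Similar}, which says that a BV transformation acts on the dynamic matrix by \emph{similarity}, $D_{1}=T^{-1}DT$, and the diagonalization requirement of Eq.~(\ref{Cond2}), which is phrased as a Hermitian \emph{congruence} $T^{\dag}MT$. These are two different operations, so the heart of the matter is to convert one into the other; the conversion is supplied precisely by the statistics-preserving condition of Eq.~(\ref{Cond1}).

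By Definition~\ref{BVD}, BV diagonalizability gives a BV matrix $T$ that satisfies both Eq.~(\ref{Cond1}), namely $TI_{-}T^{\dag}=I_{-}$, and Eq.~(\ref{Cond2}), say $T^{\dag}MT=\Omega$ with $\Omega$ diagonal and real. First I would left-multiply Eq.~(\ref{Cond1}) by $T^{-1}$ to obtain the rearranged identity $T^{-1}I_{-}=I_{-}T^{\dag}$. Then, invoking the defining relation $D=I_{-}M$ of Eq.~(\ref{DIM}) together with Lemma~\ref{Similar}, I would compute the new dynamic matrix directly:
\[
D_{1}=T^{-1}DT=T^{-1}I_{-}MT=I_{-}T^{\dag}MT=I_{-}\Omega ,
\]
where the third equality uses the rearranged Eq.~(\ref{Cond1}) and the last uses Eq.~(\ref{Cond2}). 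Since $I_{-}$ and $\Omega$ are both diagonal, $D_{1}=I_{-}\Omega$ is diagonal, with entries $\omega_{1},\dots,\omega_{n},-\omega_{n+1},\dots,-\omega_{2n}$.

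It then remains only to read off the conclusions. Because $D=TD_{1}T^{-1}$ is similar to the diagonal matrix $D_{1}$, the dynamic matrix $D$ is diagonalizable. Moreover $D$ and $D_{1}$ share the same spectrum, so the eigenvalues of $D$ are exactly the diagonal entries of $D_{1}$ listed above, all of which are real by the reality of the $\omega_{i}$ guaranteed in Eq.~(\ref{Cond2}). This proves both assertions.

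The step I expect to carry all the weight is the single rearrangement of Eq.~(\ref{Cond1}) into $T^{-1}I_{-}=I_{-}T^{\dag}$, since this is the only place where the statistics-preserving condition genuinely enters, and it is exactly what turns the congruence of the coefficient matrix into a similarity of the dynamic matrix. Everything after that is bookkeeping with diagonal matrices. A minor point worth double-checking is that the relation $D=I_{-}M$ of Eq.~(\ref{DIM}) is applied to the \emph{old} field, which is legitimate because $D$ and $M$ are the matrices of the original Hamiltonian; with this route one does not even need to argue separately that the new field $\varphi$ is a standard bosonic field, although Eq.~(\ref{Cond1}) guarantees that as well.
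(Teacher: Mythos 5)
Your proof is correct, and it reaches the same destination as the paper's --- exhibiting $T^{-1}DT$ as a real diagonal matrix --- but by a noticeably more algebraic route. The paper starts from the diagonalized Hamiltonian $H=\sum_{i}\omega_{i}d_{i}^{\dag}d_{i}+C$, uses Hermiticity of $H$ to get $\omega_{i}\in\mathbb{R}$ and the standard commutation rule of the new field to read off $D_{1}=\mathrm{diag}(\omega_{1},\cdots,\omega_{n},-\omega_{1},\cdots,-\omega_{n})$ directly from the Heisenberg equation of the new field, and only then invokes Lemma~\ref{Similar}. You instead work entirely at the level of the defining conditions: the rearrangement $T^{-1}I_{-}=I_{-}T^{\dag}$ of Eq.~(\ref{Cond1}), combined with $D=I_{-}M$, converts the congruence $T^{\dag}MT=\Omega$ of Eq.~(\ref{Cond2}) into the similarity $T^{-1}DT=I_{-}\Omega$ in one line. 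This buys two things: you never need to argue that the new field is a standard bosonic field or appeal to Hermiticity of the diagonalized Hamiltonian (reality of $\Omega$ is already part of Eq.~(\ref{Cond2})), and you make explicit the congruence-to-similarity mechanism that the paper itself only surfaces later, in the proof of Lemma~\ref{MDiag}, where the identity $T_{n}^{\dag}I_{-}T_{n}=I_{-}$ is used in exactly the reverse direction. What the paper's version buys is brevity and a physical reading ($D_{1}$ is the dynamic matrix of the quasi-particle picture). One cosmetic remark: your citation of Lemma~\ref{Similar} is not actually load-bearing --- once $T^{-1}DT=I_{-}\Omega$ is computed, diagonalizability and reality of the spectrum of $D$ follow from similarity invariance alone, without identifying $I_{-}\Omega$ as the dynamic matrix of the new field.
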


\begin{proof}
Suppose that the diagonalized form of the Hamiltonian is%
\begin{equation}
H=\sum_{i=1}^{n}\omega _{i}\,d_{i}^{\dag }d_{i}+C,
\end{equation}%
where $C$ is a real constant. Since $H$ is Hermitian, all $\omega _{i}$ must
be real, i.e., $\omega _{i}\in
\mathbb{R}
$. By definition, the new field $\varphi $ satisfies the standard
commutation rule,
\begin{equation}
\varphi \cdot \varphi ^{\dag }=I_{-}.
\end{equation}%
From the two equations above, the dynamic matrix $D_{1}$ for the new field $%
\varphi $ can be found as
\begin{equation}
D_{1}=\mathrm{diag}(\omega _{1},\omega _{2},\cdots ,\omega _{n},-\omega
_{1},-\omega _{2},\cdots ,-\omega _{n}),
\end{equation}%
where $\mathrm{diag}(a_{1},a_{2},\cdots ,a_{m})$ denotes the diagonal matrix
with $a_{1}$, $a_{2}$, $\cdots $, $a_{m}$ on the main diagonal. From this,
one can reach the proposition by the lemma \ref{Similar}.
\end{proof}

\begin{definition}
A dynamic matrix is said to be physically diagonalizable if it is
diagonalizable, and all its eigenvalues are real.
\end{definition}

By this definition, the proposition can be restated as follows.

If a quadratic Hamiltonian of bosons can be BV diagonalized, its dynamic
matrix is physically diagonalizable.

\begin{corollary}
\label{Corollary1} The quadratic Hamiltonian of bosons defined in Eq. (\ref%
{Ham1}) can not be BV diagonalized if the coefficient submatrix $\alpha$
vanishes identically, i.e., $\alpha=0$.
\end{corollary}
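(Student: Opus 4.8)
The plan is to argue by the contrapositive of Proposition \ref{PPS1}: since BV diagonalizability forces the dynamic matrix to be physically diagonalizable, it suffices to exhibit, when $\alpha=0$, a non-real eigenvalue of the dynamic matrix $D$, for then $D$ fails to be physically diagonalizable. First I would substitute $\alpha=0$ (so that $\widetilde{\alpha}=0$ as well) into Eq. (\ref{DM}) to obtain
\begin{equation}
D=\left[
\begin{array}{cc}
0 & \gamma \\
-\gamma^{\dag} & 0
\end{array}
\right].
\end{equation}

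The key idea is to pass to $D^{2}$ rather than to wrestle with the characteristic determinant directly. A short block multiplication gives
\begin{equation}
D^{2}=-\left[
\begin{array}{cc}
\gamma\gamma^{\dag} & 0 \\
0 & \gamma^{\dag}\gamma
\end{array}
\right],
\end{equation}
which is Hermitian and negative semidefinite, because $\gamma\gamma^{\dag}$ and $\gamma^{\dag}\gamma$ are both positive semidefinite. Consequently every eigenvalue of $D^{2}$ is real and nonpositive. By the spectral mapping principle for polynomials, each eigenvalue of $D^{2}$ is the square $\omega^{2}$ of an eigenvalue $\omega$ of $D$; hence $\omega^{2}\le 0$ for every eigenvalue $\omega$ of $D$, i.e. every eigenvalue of $D$ is purely imaginary. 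This is the structural heart of the argument, and it mirrors the $n=1$ computation of Example \ref{Example1} in the regime $|\alpha|<|\gamma|$.

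Next I would rule out the degenerate possibility that all these imaginary eigenvalues collapse to $0$. Since $\alpha=0$, the exclusion of the trivial case stated after Eq. (\ref{AGM}) forces $\gamma\neq 0$; therefore $\gamma^{\dag}\gamma\neq 0$ carries a strictly positive eigenvalue $\mu$. Then $-\mu<0$ is an eigenvalue of $D^{2}$, and the spectral mapping principle again supplies an eigenvalue $\omega$ of $D$ with $\omega^{2}=-\mu$, that is $\omega=\pm i\sqrt{\mu}\neq 0$. Thus $D$ has a nonzero, purely imaginary---hence non-real---eigenvalue. This already violates the requirement that all eigenvalues be real in the definition of physical diagonalizability, independently of whether $D$ happens to be diagonalizable, so $D$ is not physically diagonalizable. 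The contrapositive of Proposition \ref{PPS1} then yields that the Hamiltonian of Eq. (\ref{Ham1}) with $\alpha=0$ cannot be BV diagonalized.

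I expect the only delicate point to be the two invocations of the spectral mapping principle: one must ensure that the eigenvalues of $D^{2}$ are exactly the squares of the eigenvalues of $D$, so that a strictly negative eigenvalue of $D^{2}$ genuinely certifies a nonzero (and therefore non-real) eigenvalue of $D$, and cannot be an artifact of non-diagonalizability. Everything else---the block computation of $D^{2}$ and the semidefiniteness of $\gamma\gamma^{\dag}$ and $\gamma^{\dag}\gamma$---is routine.
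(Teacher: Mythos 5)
Your proof is correct, and it reaches the same structural conclusion as the paper --- that with $\alpha=0$ the spectrum of $D$ is purely imaginary and, because $\gamma\neq 0$, not entirely zero --- but by a different mechanism. The paper simply observes that $D$ is anti-Hermitian ($D^{\dag}=-D$, using $\widetilde{\gamma}=\gamma$ for bosons), hence unitarily diagonalizable with purely imaginary eigenvalues, and that $D\neq 0$ forces some of them to be nonzero; it then invokes the necessary condition of Proposition \ref{PPS1}. You instead square $D$ to obtain the Hermitian negative-semidefinite matrix $-\mathrm{diag}(\gamma\gamma^{\dag},\gamma^{\dag}\gamma)$ and apply the spectral mapping theorem twice. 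Both invocations are sound: the inclusion $\{\omega^{2}:\omega\in\sigma(D)\}\subset\sigma(D^{2})$ is immediate from eigenvectors, and the converse direction needed for your second step follows from the factorization $D^{2}+\mu I=(D-i\sqrt{\mu}\,I)(D+i\sqrt{\mu}\,I)$, which is singular only if one of the factors is. Your route is slightly more computational but buys something the paper's argument does not need here: you never have to decide whether $D$ is diagonalizable, since a single non-real eigenvalue already defeats physical diagonalizability, whereas the paper establishes unitary diagonalizability as a by-product. The two arguments are in fact close relatives --- anti-Hermiticity is precisely what makes $D^{2}=-DD^{\dag}$ negative semidefinite --- and both finish by the contrapositive of Proposition \ref{PPS1}.
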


\begin{proof}
In this case, the dynamic matrix of Eq. (\ref{DM}) reduces to
\begin{equation}
D=\left[
\begin{array}{cc}
0 & \gamma \\
-\gamma ^{\dag } & 0%
\end{array}%
\right] .
\end{equation}%
Obviously, it is anti-Hermitian, and thus unitarily diagonalizable. Since $%
\gamma \neq 0$, the eigenvalues of $D$ can not all be zero, some of them
must be purely imaginary. In other words, the dynamic matrix $D$ is
diagonalizable but not physically diagonalizable, which is inconsistent with
the necessary condition for the BV diagonalization of a Hamiltonian.
\end{proof}

This corollary shows that it is not all the quadratic Hamiltonians of bosons
that can be BV diagonalized. What kind of Hamiltonians is BV diagonalizable?
To answer it, one needs to study the sufficient condition for the BV
diagonalization.

\subsection{Sufficient condition for diagonalization}

In the preceding subsection, we have already obtained the necessary
condition for the BV diagonalization. Henceforth, we would presume that the
necessary condition holds for the Bose system. Starting from this
presumption, we shall search the sufficient condition for the BV
diagonalization in this subsection.

By definition, the dynamic matrix $D$ is of size $2n$. If $D$ is physically
diagonalizable, it has a complete set of totally $2n$ linearly independent
eigenvectors. We have learned from the lemmas \ref{Lmm1} and \ref{Lmm2}\
that the eigenvalues and eigenvectors of $D$ will appear in pairs. Let us
continue this discussion about pairing.

\begin{lemma}
\label{Lmm5} If the dynamic matrix $D$ is physically diagonalizable, then,
for each pair of nonzero eigenvalues, i.e., $(\omega,-\omega)$ with $%
\omega\neq0$, they have the same degeneracy. In other words, their
eigenspaces have the same dimension.
\end{lemma}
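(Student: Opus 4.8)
The plan is to exploit the eigenvector-pairing map already constructed in Lemma \ref{Lmm2} and to show that, in the physically diagonalizable case, it furnishes an antilinear bijection between the two eigenspaces, which must therefore share the same complex dimension. First I would note that since $D$ is physically diagonalizable every eigenvalue is real, so for the eigenvalue $\omega$ the partner supplied by Lemma \ref{Lmm1} is $-\omega^{\ast}=-\omega$, and the pairing is precisely $(\omega,-\omega)$. Let $E_{\omega}$ and $E_{-\omega}$ denote the corresponding eigenspaces. By Lemma \ref{Lmm2} the map
\[
\Phi(v)=\Sigma_{x}v^{\ast}
\]
carries any eigenvector of $\omega$ to an eigenvector of $-\omega$, i.e. $\Phi(E_{\omega})\subseteq E_{-\omega}$.

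Next I would record the two structural properties of $\Phi$ that do the work. It is antilinear, $\Phi(\lambda v)=\lambda^{\ast}\Phi(v)$, because of the complex conjugation; and it is an involution, since $\Sigma_{x}$ is real with $\Sigma_{x}^{2}=I$, so that $\Phi(\Phi(v))=\Sigma_{x}(\Sigma_{x}v^{\ast})^{\ast}=\Sigma_{x}^{2}v=v$. Running the same argument with $\omega$ and $-\omega$ interchanged gives $\Phi(E_{-\omega})\subseteq E_{\omega}$, so both inclusions are equalities and $\Phi$ restricts to a bijection $E_{\omega}\to E_{-\omega}$ whose inverse is $\Phi$ itself.

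Finally I would deduce equality of dimensions from the fact that an antilinear bijection preserves complex dimension. Concretely, if $\{v_{1},\dots,v_{k}\}$ is a basis of $E_{\omega}$, then $\{\Phi(v_{1}),\dots,\Phi(v_{k})\}$ spans $E_{-\omega}$ and stays linearly independent: a relation $\sum_{i}c_{i}\Phi(v_{i})=0$ rewrites, by antilinearity, as $\Phi\!\left(\sum_{i}c_{i}^{\ast}v_{i}\right)=0$, whence injectivity of $\Phi$ forces $\sum_{i}c_{i}^{\ast}v_{i}=0$ and then independence of the $v_{i}$ forces all $c_{i}=0$. Therefore $\dim E_{-\omega}=\dim E_{\omega}$, which is the assertion; because $D$ is diagonalizable these eigenspace dimensions are exactly the degeneracies.

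The only subtlety I expect is the one handled in the last step: since $\Phi$ is antilinear rather than linear, one cannot simply invoke ``isomorphism implies equal dimension'' for complex vector spaces without the short bookkeeping on the coefficients $c_{i}\mapsto c_{i}^{\ast}$. Everything else is a direct reading of Lemmas \ref{Lmm1} and \ref{Lmm2} together with the reality of the spectrum guaranteed by physical diagonalizability.
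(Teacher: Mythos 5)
Your proposal is correct and follows essentially the same route as the paper: both rest on the pairing map $v\mapsto\Sigma_{x}v^{\ast}$ from Lemma \ref{Lmm2} applied in both directions, together with the observation that it preserves linear independence. You merely make explicit (antilinearity, involution, the $c_{i}\mapsto c_{i}^{\ast}$ bookkeeping) what the paper compresses into ``it can be readily confirmed.''
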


\begin{proof}
According to the lemma \ref{Lmm2}, if the eigenvalue $\omega $ has $m$
eigenvectors,
\begin{equation}
v_{l}(\omega ),\text{ \ }l=1,2,\cdots ,m,
\end{equation}%
the corresponding $m$ vectors $v_{l}(-\omega )$,
\begin{equation}
v_{l}(-\omega )=\Sigma _{x}v_{l}^{\ast }(\omega ),\text{ \ }l=1,2,\cdots ,m,
\label{VP1}
\end{equation}%
are the eigenvectors belonging to the eigenvalue $-\omega $. It can be
readily confirmed that the vectors $v_{l}(-\omega )$ ($l=1,2,\cdots ,m$) are
linearly independent if and only if the vectors $v_{l}(\omega )$ ($%
l=1,2,\cdots ,m$) are linearly independent.

In other words, if $\omega$ has $m$ linearly independent eigenvectors, then $%
-\omega$ also has $m$ linearly independent eigenvectors, and vice versa.
That is to say, the eigenvalues $\omega$ and $-\omega$ have the same
degeneracy, their eigenspaces have the same dimension.
\end{proof}

The proof above shows that, if the basis vectors for the eigenspace of $%
\omega$ ($\omega\neq0$) have been determined, the basis vectors for the
eigenspace of $-\omega$ can be chosen as Eq. (\ref{VP1}), and vice versa.

\begin{lemma}
\label{Lmm6} If the dynamic matrix $D$ is physically diagonalizable and has
zero eigenvalue, the eigenspace of zero eigenvalue is even dimensional. In
particular, its basis vectors can be chosen and grouped as%
\begin{equation}
v_{m+l}(0)=\Sigma_{x}v_{l}^{\ast}(0),\text{ \ }l=1,2,\cdots,m,  \label{VP2}
\end{equation}
where $2m$ ($m\in%
\mathbb{N}
$) is the dimension of the eigenspace of zero eigenvalue.
\end{lemma}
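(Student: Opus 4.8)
The plan is to put a natural antilinear involution on the zero eigenspace $E_{0}=\ker D$, deduce evenness of $\dim E_{0}$ from the spectral pairing established in the preceding lemmas, and then read off the grouped basis from the real structure of that involution. Taking $\omega=0$ in Lemma \ref{Lmm2}, the map $\mathcal{C}:v\mapsto\Sigma_{x}v^{\ast}$ sends an eigenvector for the eigenvalue $0$ to an eigenvector for the eigenvalue $-0^{\ast}=0$, so $\mathcal{C}$ maps $E_{0}$ into itself. Since $\Sigma_{x}$ is real with $\Sigma_{x}^{2}=I$, one has $\mathcal{C}^{2}v=\Sigma_{x}(\Sigma_{x}v^{\ast})^{\ast}=v$; thus $\mathcal{C}$ is an antilinear involution of $E_{0}$.

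Next I would establish evenness by a dimension count that reuses Lemma \ref{Lmm5}. Because $D$ is physically diagonalizable, all its eigenvalues are real and its eigenspaces span the whole $2n$-dimensional space, so $2n=\dim E_{0}+\sum_{\omega\neq0}\dim E_{\omega}$. Being real and nonzero, the remaining eigenvalues split into disjoint pairs $\{\omega,-\omega\}$, and Lemma \ref{Lmm5} gives $\dim E_{\omega}=\dim E_{-\omega}$, so each pair contributes the even number $2\dim E_{\omega}$. Hence $\sum_{\omega\neq0}\dim E_{\omega}$ is even and $\dim E_{0}=2n-(\text{even})$ is even; write $\dim E_{0}=2m$. (As an alternative self-contained route one checks that $\Omega(u,v)=\widetilde{u}\,\Sigma_{x}I_{-}v$ restricts to a nondegenerate antisymmetric bilinear form on $E_{0}$, which by itself forces even dimension; I would keep the dimension count as the primary argument.)

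Finally I would build the grouped basis from the real structure of $\mathcal{C}$. Put $E_{0}^{+}=\{v\in E_{0}:\mathcal{C}v=v\}$. The averaging identity $v=\tfrac{1}{2}(v+\mathcal{C}v)+i\cdot\tfrac{1}{2i}(v-\mathcal{C}v)$, in which both summands are $\mathcal{C}$-fixed, gives the standard decomposition $E_{0}=E_{0}^{+}\oplus iE_{0}^{+}$, so $\dim_{\mathbb{R}}E_{0}^{+}=\dim_{\mathbb{C}}E_{0}=2m$. I would then pick a real basis $e_{1},\dots,e_{2m}$ of $E_{0}^{+}$, pair consecutive vectors, and set $v_{l}=\tfrac{1}{2}(e_{2l-1}+ie_{2l})$ for $l=1,\dots,m$. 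Since each $e_{j}$ is $\mathcal{C}$-fixed, $\mathcal{C}v_{l}=\tfrac{1}{2}(e_{2l-1}-ie_{2l})$, and from $v_{l}+\mathcal{C}v_{l}=e_{2l-1}$ and $v_{l}-\mathcal{C}v_{l}=ie_{2l}$ one recovers every $e_{j}$; hence $\{v_{1},\dots,v_{m},\mathcal{C}v_{1},\dots,\mathcal{C}v_{m}\}$ is a basis of $E_{0}$. Relabelling $v_{m+l}(0)=\mathcal{C}v_{l}(0)=\Sigma_{x}v_{l}^{\ast}(0)$ is exactly Eq. (\ref{VP2}).

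The main obstacle is conceptual rather than computational: evenness is \emph{not} a consequence of the antilinear involution alone, since such involutions exist in every dimension, and it must be imported from the $\pm\omega$ pairing of the spectrum. It is precisely the evenness $\dim_{\mathbb{R}}E_{0}^{+}=2m$ that lets the real basis be partitioned into the $m$ conjugate pairs demanded by Eq. (\ref{VP2}); had $\dim E_{0}$ been odd no such grouping could exist. The one technical point to justify carefully is the real-structure decomposition $E_{0}=E_{0}^{+}\oplus iE_{0}^{+}$, which follows cleanly from $\mathcal{C}^{2}=\mathrm{id}$.
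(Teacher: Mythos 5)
Your proof is correct, and for the construction of the paired basis it takes a genuinely different route from the paper's. The evenness step coincides: the paper also deduces it from Lemma \ref{Lmm5} via the $\pm\omega$ pairing of the nonzero spectrum inside the total dimension $2n$, exactly your dimension count. For the grouped basis, however, the paper works concretely with the linear system $Dv(0)=0$: since $\mathrm{rank}(D)=2n-2m$, it designates the $2m$ components at the $\Sigma_x$-symmetric index positions $\alpha=1,\dots,m,n+1,\dots,n+m$ as free unknowns, assigns them Kronecker-delta values to produce $v_{1}(0),\dots,v_{m}(0)$, obtains the second group from Lemma \ref{Lmm2}, and reads off linear independence from the delta pattern of the free components. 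You instead treat $\mathcal{C}v=\Sigma_{x}v^{\ast}$ as an antilinear involution on $E_{0}$, extract its real form $E_{0}^{+}$ of real dimension $2m$, and complexify consecutive pairs of a real basis. Your version is more invariant and, in one respect, tighter: it does not rely on being able to place the free unknowns at those particular $\Sigma_x$-paired positions, a point the paper asserts rather than justifies (admissible sets of free variables are dictated by the pivot structure of $D$, and the existence of a symmetric choice is not automatic). The paper's version, in exchange, is fully explicit and produces basis vectors with a normalized component pattern that is convenient in computations. Your closing observation is also accurate and worth keeping: the involution alone cannot yield evenness (real structures exist in every dimension), so evenness must be imported from the spectral pairing before the real basis can be split into the $m$ conjugate pairs demanded by Eq. (\ref{VP2}).
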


\begin{proof}
The first point is a direct result of the lemma \ref{Lmm5}.

As to the second one, let us consider the eigenvalue equation,
\begin{equation}
Dv(0)=0.  \label{DW0}
\end{equation}%
That is a homogeneous system of $2n$ linear equations, its solution set
forms the eigenspace of zero eigenvalue.

Since the dimension of the eigenspace of zero eigenvalue is $2m$, we have $%
\mathrm{rank}(D)=2n-2m$ where $\mathrm{rank}(A)$ denotes the rank of the
matrix $A$. It means that the vector $v(0)$ has $2m$ free unknown
components. Therefore, we can choose the following $2m$ components of $v(0)$%
,
\begin{equation}
v^{\alpha }(0),\text{ \ }\alpha =1,2,\cdots ,m,n+1,n+2,\cdots ,n+m,
\end{equation}%
as the free unknowns. First, let the free unknowns be respectively as
follows,
\begin{equation}
v_{l}^{\alpha }(0)=\delta _{\alpha l},  \label{VW01}
\end{equation}%
where%
\begin{eqnarray}
l &=&1,2,\cdots ,m, \\
\alpha &=&1,2,\cdots ,m,n+1,n+2,\cdots ,n+m.
\end{eqnarray}%
We obtain from Eq. (\ref{DW0}) the first group of eigenvectors,
\begin{equation}
v_{l}(0),\text{ \ }l=1,2,\cdots ,m.
\end{equation}%
Clearly, they are linearly independent. Then, using the lemma \ref{Lmm2}, we
have the other group of eigenvectors,
\begin{equation}
v_{m+l}(0)=\Sigma _{x}v_{l}^{\ast }(0),\text{ \ }l=1,2,\cdots ,m.
\end{equation}%
They are also linearly independent. The definitions for $v_{m+l}(0)$ show
that
\begin{equation}
v_{m+l}^{\alpha }(0)=\delta _{\alpha ,m+l},  \label{VW02}
\end{equation}%
where%
\begin{eqnarray}
l &=&1,2,\cdots ,m, \\
\alpha &=&1,2,\cdots ,m,n+1,n+2,\cdots ,n+m.
\end{eqnarray}

Equations (\ref{VW01}) and (\ref{VW02}) imply that the two groups are also
linearly independent. The combination of the two groups has $2m$ linearly
independent eigenvectors, they form a basis for the eigenspace of zero
eigenvalue. All in all, the basis vectors for the eigenspace of zero
eigenvalue can be chosen and grouped as Eq. (\ref{VP2}).
\end{proof}

Following the two lemmas above, if the dynamic matrix $D$ is physically
diagonalizable, it is enough for us to find a half of the eigenvectors of $D$%
, the other half can be determined by Eqs. (\ref{VP1}) and (\ref{VP2}). In
other words, the eigenvalues and eigenvectors of $D$ can be formed into
pairs according to Eqs. (\ref{VP1}) and (\ref{VP2}). Each pair has two
linearly independent eigenvectors with opposite eigenvalues. Such a pair
will be called a dynamic mode pair. Consequently, there are totally $n$
dynamic mode pairs. Henceforth, Eqs. (\ref{VP1}) and (\ref{VP2}) will be
used as the conventions for the pairs of dynamic modes.

Suppose that the dynamic matrix $D$ is physically diagonalizable. One can
construct a linear transformation as in Sec. \ref{Sec1B},
\begin{equation}
\psi =T\varphi ,  \label{PsiToPhi}
\end{equation}%
where $\varphi $ represents the new field operator, and $T$ is the matrix
which consists of all the eigenvectors of $D$,
\begin{widetext}
\begin{equation}
T=\left[
\begin{array}{cccccccc}
v(\omega _{1}), & v(\omega _{2}), & \cdots , & v(\omega _{n}), & v(-\omega
_{1}), & v(-\omega _{2}), & \cdots , & v(-\omega _{n})%
\end{array}%
\right] .  \label{TPair}
\end{equation}%
\end{widetext}
Here each eigenvalue is counted up to its multiplicity, and the $n$ dynamic
mode pairs are separated and arranged sequentially into the left and right
halves of the matrix $T$. Since the dynamic matrix $D$ is supposed to be
physically diagonalizable, the matrix $T$ has full rank and is hence
nonsingular and invertible. This analysis demonstrates that an invertible
linear transformation can be derived from the equation of motion if the
dynamic matrix of the system is physically diagonalizable.

\begin{definition}
If the dynamic matrix $D$ is physically diagonalizable, then a linear
transformation can be defined by Eqs. (\ref{PsiToPhi}) and (\ref{TPair}). We
shall call it the derivative transformation, and call the corresponding
matrix $T$ the derivative matrix.
\end{definition}

By use of both the conventions of Eqs. (\ref{VP1}) and (\ref{VP2}), we have
\begin{widetext}
\begin{equation}
\Sigma _{x}T^{\ast }=\left[
\begin{array}{cccccccc}
v(-\omega _{1}), & v(-\omega _{2}), & \cdots , & v(-\omega _{n}), & v(\omega
_{1}), & v(\omega _{2}), & \cdots , & v(\omega _{n})%
\end{array}%
\right] .
\end{equation}%
\end{widetext}
Paying attention to the fact that $\Sigma _{x}$ is an elementary matrix, the
right multiplication by it represents switching the left and right halves of
the square matrix standing left to it. So we obtain
\begin{equation}
\Sigma _{x}T^{\ast }\Sigma _{x}=T.  \label{inv3}
\end{equation}%
That is an important property of the the derivative transformation, it
implies that the derivative matrix $T$ has the same form as that of (\ref{TM}%
). According to the definitions \ref{BVM} and \ref{BVT} as well as the lemma %
\ref{Lmm4}, we obtain the lemma for the derivative transformation.

\begin{lemma}
\label{Lmm7} If the dynamic matrix $D$ is physically diagonalizable, its
derivative matrix is a BV matrix. The corresponding derivative
transformation is a BV transformation, and thus it will conserve the
involution symmetry of the field operator of the Bose system.
\end{lemma}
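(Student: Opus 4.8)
The plan is to assemble structural facts already established rather than perform fresh computation, since the decisive identity has in effect been proved just before the statement. First I would observe that physical diagonalizability of $D$ supplies a complete set of $2n$ linearly independent eigenvectors; because the columns of the derivative matrix $T$ of Eq.~(\ref{TPair}) are precisely these eigenvectors, $T$ has full rank and is therefore invertible. This settles the invertibility half of Definition~\ref{BVM}.

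Next, the key structural ingredient is Eq.~(\ref{inv3}), $\Sigma_{x}T^{\ast}\Sigma_{x}=T$, which was derived directly from the pairing conventions (\ref{VP1}) and (\ref{VP2}) once the eigenvectors are arranged with the $n$ dynamic mode pairs split into the left and right halves of $T$. I would combine this with the characterization recorded in the proof of Lemma~\ref{Lmm3}, namely that a matrix obeys $\Sigma_{x}T^{\ast}\Sigma_{x}=T$ if and only if it has the block form of Eq.~(\ref{TM}). Hence $T$ carries the required structure, and together with invertibility this shows that $T$ is a BV matrix in the sense of Definition~\ref{BVM}.

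I would then invoke Definition~\ref{BVT}: since $T$ is a BV matrix and the old field $\psi$ is a standard bosonic field (it obeys Eq.~(\ref{Com4})), the derivative transformation $\psi=T\varphi$ satisfies both defining requirements and is therefore a BV transformation. The final assertion about involution symmetry is then immediate from Lemma~\ref{Lmm4}, which states exactly that every BV transformation conserves the involution symmetry of the field operator for a Bose system.

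The only subtlety, and it is not really an obstacle because the heavy lifting resides in Eq.~(\ref{inv3}), is to confirm that the pairing conventions are applied consistently across both the nonzero eigenvalue pairs (governed by Lemma~\ref{Lmm5}) and the zero eigenspace (governed by Lemma~\ref{Lmm6}), so that each column $v(-\omega)$ occupying the right half of $T$ is genuinely $\Sigma_{x}v^{\ast}(\omega)$ of its partner in the left half. Once this column-by-column correspondence is verified, Eq.~(\ref{inv3}) holds exactly and the lemma follows with no further calculation.
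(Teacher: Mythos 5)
Your proposal is correct and follows essentially the same route as the paper: the paper likewise derives $\Sigma_{x}T^{\ast}\Sigma_{x}=T$ from the pairing conventions of Eqs.~(\ref{VP1}) and (\ref{VP2}) by noting that right-multiplication by $\Sigma_{x}$ swaps the two halves of the column array, identifies this with the block form of Eq.~(\ref{TM}) via the characterization in the proof of Lemma~\ref{Lmm3}, and then concludes through Definitions~\ref{BVM} and \ref{BVT} and Lemma~\ref{Lmm4}. Your closing remark about verifying the column-by-column correspondence is exactly the content of the paper's displayed expression for $\Sigma_{x}T^{\ast}$, so nothing is missing.
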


As shown by the proof above, it is the two conventions of Eqs. (\ref{VP1})
and (\ref{VP2}) that guarantee that the derivative transformation is a BV
transformation. Consequently, one must comply with both of them when he
constructs a BV transformation.

Up to now, we have proved that a BV transformation can be generated by the
Heisenberg equation of motion if the dynamic matrix of the system is
physically diagonalizable.

Although the new field can inherit the involution symmetry through the
derivative BV transformation, its commutation rule is not always standard.
From now on, we shall turn to handling this problem. As already known, it is
determined by Eq. (\ref{NComm}). Usually, it does not matter what the
magnitude of an eigenvector is. Above, when constructing the derivative BV
transformation of Eqs. (\ref{PsiToPhi}) and (\ref{TPair}), we did not
consider the magnitudes of the eigenvectors either. Nevertheless, Eq. (\ref%
{NComm}) shows that the magnitudes of the eigenvectors can change the
commutation rule of the new field heavily. Therefore, it is necessary for us
to take into account the magnitudes of the eigenvectors if we want to make
the new commutation rule standard. Mathematically, the magnitude of a vector
concerns the metric on the linear space. Therefore, we introduce, first, a
sesquilinear form \cite{Roman,Chen} for the Bose system.

\begin{definition}
Using $I_{-}$, we define a sesquilinear form $\phi$ on the $2n$-dimensional
unitary space $%
\mathbb{C}
^{2n}$ as follows,%
\begin{equation}
\phi(\left\vert x\right\rangle ,\text{ }\left\vert y\right\rangle
)=\left\langle x\right\vert I_{-}\left\vert y\right\rangle ,\text{ \ }%
\forall\left\vert x\right\rangle ,\text{ }\left\vert y\right\rangle \in%
\mathbb{C}
^{2n},
\end{equation}
where Dirac notations have been used for the vectors of $%
\mathbb{C}
^{2n}$. The form $\phi$ will be directly referred to as the metric $I_{-}$,
too.
\end{definition}

The definition is proper because the matrix $I_{-}$ is Hermitian with
respect to the standard basis and standard inner product of the unitary
space $%
\mathbb{C}
^{2n}$. Besides, the metric vector space defined by the form $\phi$ is
nondegenerate because the matrix $I_{-}$ is nonsingular.

Although this form is indefinite, it is very useful for clarifying the
properties of the eigenvectors of the dynamic matrix.

\begin{lemma}
\label{Ortho} If the dynamic matrix $D$ is physically diagonalizable, its
eigenspaces will be orthogonal to each other with respect to the metric $%
I_{-}$.
\end{lemma}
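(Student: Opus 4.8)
The plan is to exploit the factorization $D=I_{-}M$ recorded in Eq.~(\ref{DIM}), together with the Hermiticity of both $M$ and $I_{-}$ and the reality of the spectrum guaranteed by physical diagonalizability. First I would note that, since $I_{-}^{2}=I$, the non-Hermitian eigenproblem $Dv=\omega v$ is equivalent to
\begin{equation}
Mv=\omega I_{-}v .
\end{equation}
This rewriting converts the eigenvalue equation for $D$ into a statement pairing the Hermitian matrix $M$ with the metric $I_{-}$, which is precisely the structure in which the sesquilinear form $\phi$ was defined.

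Next I would take two eigenvectors $v_{1}$ and $v_{2}$ of $D$ belonging to eigenvalues $\omega_{1}$ and $\omega_{2}$, so that $Mv_{i}=\omega_{i}I_{-}v_{i}$, and evaluate the scalar $v_{1}^{\dag}Mv_{2}$ in two ways. Inserting $Mv_{2}=\omega_{2}I_{-}v_{2}$ directly gives $v_{1}^{\dag}Mv_{2}=\omega_{2}\,v_{1}^{\dag}I_{-}v_{2}$. On the other hand, taking the Hermitian conjugate of $Mv_{1}=\omega_{1}I_{-}v_{1}$ and using $M^{\dag}=M$ and $I_{-}^{\dag}=I_{-}$ yields $v_{1}^{\dag}M=\omega_{1}^{\ast}\,v_{1}^{\dag}I_{-}$, so that $v_{1}^{\dag}Mv_{2}=\omega_{1}^{\ast}\,v_{1}^{\dag}I_{-}v_{2}$. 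Subtracting the two expressions leaves
\begin{equation}
(\omega_{1}^{\ast}-\omega_{2})\,\langle v_{1}\vert I_{-}\vert v_{2}\rangle=0 .
\end{equation}

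At this point I would invoke the standing hypothesis that $D$ is \emph{physically} diagonalizable, so all its eigenvalues are real; in particular $\omega_{1}^{\ast}=\omega_{1}$. The identity then reads $(\omega_{1}-\omega_{2})\,\phi(v_{1},v_{2})=0$, whence $\phi(v_{1},v_{2})=\langle v_{1}\vert I_{-}\vert v_{2}\rangle=0$ whenever $\omega_{1}\neq\omega_{2}$. As $v_{1}$ and $v_{2}$ range over the two eigenspaces, this is exactly the asserted $I_{-}$-orthogonality of eigenspaces attached to distinct eigenvalues.

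I do not anticipate a genuine obstacle here: the reality of the spectrum furnished by physical diagonalizability is doing all the work, transporting the familiar ``distinct eigenvalues force orthogonal eigenvectors'' mechanism for Hermitian operators into the indefinite-metric setting of $\phi$. The one point deserving care is the reading of ``orthogonal eigenspaces,'' which here means eigenspaces of \emph{different} eigenvalues; because $\phi$ is indefinite it need not be nondegenerate on a single eigenspace, so one cannot yet conclude that vectors \emph{within} one eigenspace admit a $\phi$-orthonormal basis. That refinement, essential for ultimately normalizing the derivative matrix to satisfy Eq.~(\ref{Cond1}), is a separate issue to be addressed afterward.
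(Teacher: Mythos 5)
Your proposal is correct and follows essentially the same route as the paper: both rewrite $Dv=\omega v$ as $Mv=\omega I_{-}v$ via Eq.~(\ref{DIM}), evaluate the matrix element $v_{1}^{\dag}Mv_{2}$ in two ways using the Hermiticity of $M$ and $I_{-}$, and invoke the reality of the eigenvalues supplied by physical diagonalizability to conclude $I_{-}$-orthogonality of eigenspaces with distinct eigenvalues. Your closing caveat about the indefiniteness of $\phi$ within a single eigenspace is also exactly the issue the paper defers to the subsequent lemmas.
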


\begin{proof}
Since the dynamic matrix $D$ is physically diagonalizable, the linear space $%
\mathbb{C}
^{2n}$ can be decomposed into the direct sum of the eigenspaces of $D$,
\begin{equation}
\mathbb{C}
^{2n}=E_{1}\oplus E_{2}\oplus\cdots\oplus E_{m},  \label{DSum}
\end{equation}
where $E_{k}$ ($1\leq k\leq m$ with $1\leq m\leq2n$) are the eigenspaces of $%
D$, which belong to the eigenvalues $\omega_{k}\in%
\mathbb{R}
$ respectively,
\begin{equation}
D\left\vert k\mu\right\rangle =\omega_{k}\left\vert k\mu\right\rangle ,\text{
\ }\forall\left\vert k\mu\right\rangle \in E_{k}.
\end{equation}
Here a label $\mu$ is added to distinguish the vectors of the eigenspace $%
E_{k}$.

Using the relation of Eq. (\ref{DIM}), the equation above can be
reformulated as
\begin{equation}
M\left\vert k\mu\right\rangle =\omega_{k}I_{-}\left\vert k\mu\right\rangle .
\end{equation}
As a result, we obtain%
\begin{equation}
\left\langle l\nu\right\vert M\left\vert k\mu\right\rangle =\omega
_{k}\left\langle l\nu\right\vert I_{-}\left\vert k\mu\right\rangle .
\label{vMv}
\end{equation}
By complex conjugate, we have%
\begin{equation}
\left\langle k\mu\right\vert M\left\vert l\nu\right\rangle =\omega
_{k}\left\langle k\mu\right\vert I_{-}\left\vert l\nu\right\rangle ,
\end{equation}
the eigenvalue $\omega_{k}$ being real. Again, from Eq. (\ref{vMv}), we have%
\begin{equation}
\left\langle k\mu\right\vert M\left\vert l\nu\right\rangle =\omega
_{l}\left\langle k\mu\right\vert I_{-}\left\vert l\nu\right\rangle .
\end{equation}
The combination of the two equations above leads to
\begin{equation}
\left\langle k\mu\right\vert I_{-}\left\vert l\nu\right\rangle =0,\text{ \
if }\omega_{k}\neq\omega_{l}.  \label{ViVj}
\end{equation}
This equation demonstrates that the different eigenspaces of $D$ are
orthogonal to each other, i.e., $E_{k}\perp E_{l}$ ($k\neq l$), with respect
to the metric $I_{-}$ defined above.
\end{proof}

The lemma shows that the whole space of $%
\mathbb{C}
^{2n}$ can be decomposed into an orthogonal direct sum of the eigenspaces of
the dynamic matrix $D$ if $D$ is physically diagonalizable,
\begin{equation}
\mathbb{C}
^{2n}=E_{1}\odot E_{2}\odot\cdots\odot E_{m}.  \label{ODS}
\end{equation}
Meanwhile, as indicated by Eqs. (\ref{vMv}) and (\ref{ViVj}), the
coefficient matrix $M$ becomes block diagonalized with respect to the
eigenspaces of $D$,%
\begin{equation}
\left\langle l\nu\right\vert M\left\vert k\mu\right\rangle =\omega
_{k}\left\langle k\nu\right\vert I_{-}\left\vert k\mu\right\rangle \delta
_{kl}.  \label{vMvNew}
\end{equation}

\begin{lemma}
\label{OrNormal} If the dynamic matrix $D$ is physically diagonalizable,
then, for each eigenspace of $D$, there exists an orthonormal basis with
respect to the metric $I_{-}$. That is,%
\begin{equation}
\left\langle k\mu \right\vert I_{-}\left\vert k\nu \right\rangle =\lambda
_{\mu }\delta _{\mu \nu },\text{ \ }\left\vert k\mu \right\rangle \in E_{k},%
\text{ }\left\vert k\nu \right\rangle \in E_{k},
\end{equation}%
where $1\leq \mu ,$ $\nu \leq n_{k}$ with $n_{k}$ being the dimension of the
eigenspace $E_{k}$, and $\lambda _{\mu }=+1$ or $-1$.
\end{lemma}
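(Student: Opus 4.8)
The plan is to show first that the metric $I_{-}$ restricts to a \emph{nondegenerate} Hermitian form on each eigenspace $E_{k}$, and then to invoke the standard diagonalization theory for nondegenerate Hermitian forms to produce the required orthonormal basis. The whole construction takes place inside the single eigenspace $E_{k}$, the mutual orthogonality of distinct eigenspaces (Lemma \ref{Ortho}) being needed only to reduce to this setting.

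First I would establish nondegeneracy of $\phi$ on $E_{k}$. Suppose to the contrary that there were a nonzero $\left\vert v\right\rangle \in E_{k}$ with $\left\langle v\right\vert I_{-}\left\vert w\right\rangle =0$ for all $\left\vert w\right\rangle \in E_{k}$. By Lemma \ref{Ortho}, $E_{k}$ is orthogonal to every other eigenspace, so this relation extends to all $\left\vert w\right\rangle \in E_{l}$ with $l\neq k$; the orthogonal direct-sum decomposition of Eq. (\ref{ODS}) then gives $\left\langle v\right\vert I_{-}\left\vert w\right\rangle =0$ for every $\left\vert w\right\rangle \in \mathbb{C}^{2n}$. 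Since $I_{-}$ is nonsingular and hence nondegenerate on the whole space, this forces $\left\vert v\right\rangle =0$, a contradiction. Thus $\phi$ is nondegenerate on $E_{k}$.

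Next I would construct the basis by induction on $n_{k}=\dim E_{k}$. The key sublemma is that a nondegenerate Hermitian form on a nonzero space always possesses a non-isotropic vector, i.e. some $\left\vert v\right\rangle$ with $\left\langle v\right\vert I_{-}\left\vert v\right\rangle \neq 0$; this follows from the polarization identity, which recovers $\phi$ from its diagonal values and therefore shows that a form with identically vanishing diagonal must be zero---impossible for a nondegenerate form on a nonzero space. Rescaling such a $\left\vert v\right\rangle$ so that $\left\langle v\right\vert I_{-}\left\vert v\right\rangle =\pm 1$ gives the first basis vector $\left\vert k1\right\rangle$ with $\lambda_{1}=\pm 1$. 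Because $\left\vert k1\right\rangle$ is non-isotropic, $E_{k}$ splits as the $\phi$-orthogonal direct sum of the line $\mathbb{C}\left\vert k1\right\rangle$ and its orthogonal complement within $E_{k}$, and the restriction of $\phi$ to that complement is again nondegenerate and has dimension $n_{k}-1$. Iterating yields an orthogonal basis $\left\{ \left\vert k\mu\right\rangle \right\}_{\mu =1}^{n_{k}}$ with $\left\langle k\mu\right\vert I_{-}\left\vert k\nu\right\rangle =\lambda_{\mu}\delta_{\mu\nu}$ and each $\lambda_{\mu}=\pm 1$, which is the assertion.

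The main obstacle is precisely the indefiniteness of $I_{-}$: naive Gram--Schmidt orthogonalization can stall because an isotropic vector cannot be normalized to unit length. The nondegeneracy established in the first step, combined with the polarization identity guaranteeing a non-isotropic vector at each stage, is exactly what prevents the induction from breaking down, so I would take care to present those two points in full; the orthogonal splitting and the descent to a lower-dimensional nondegenerate form are then routine.
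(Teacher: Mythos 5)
Your proof is correct and follows essentially the same route as the paper: the paper's entire argument is the observation that $I_{-}$ must restrict to a nondegenerate form on each $E_{k}$ (else it would be degenerate on all of $\mathbb{C}^{2n}$ via the orthogonal decomposition of Eq. (\ref{ODS})), after which the existence of an orthonormal basis is taken as standard. You supply explicitly the part the paper leaves implicit --- the polarization argument producing a non-isotropic vector and the inductive orthogonal splitting --- which is a welcome completion rather than a different approach.
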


\begin{proof}
Taking notice of Eq. (\ref{ODS}), the metric $I_{-}$ must also be a
nonsingular sesquilinear form on each eigenspace. Otherwise, it is singular
on the whole space $%
\mathbb{C}
^{2n}$, which leads to an evident contradiction.
\end{proof}

Obviously, if
\begin{equation}
\mathcal{B}_{k}\triangleq \{\left\vert k\mu \right\rangle \in E_{k}|1\leq
\mu \leq n_{k}\}
\end{equation}%
is an orthonormal basis for the eigenspace $E_{k}$, then the union of $%
\mathcal{B}_{1}$, $\mathcal{B}_{2}$, $\cdots $, $\mathcal{B}_{m}$, i.e.,
\begin{eqnarray}
\mathcal{B} &=&\mathcal{B}_{1}\cup \mathcal{B}_{2}\cup \cdots \cup \mathcal{B%
}_{m}  \notag \\
&=&\{\left\vert k\mu \right\rangle \in E_{k}|\text{ }1\leq k\leq m,\text{ }%
1\leq \mu \leq n_{k}\},
\end{eqnarray}%
will form an orthonormal basis for $%
\mathbb{C}
^{2n}$.

Now, for a nonzero eigenvalue $\omega $ ($\omega \neq 0$), we can choose an
orthonormal basis for it,%
\begin{equation}
v_{l}(\omega ),\text{ \ }1\leq l\leq m,
\end{equation}%
where $m$ is the dimension of the eigenspace of $\omega $. Then, according
to the lemma \ref{Lmm5} and the convention of Eq. (\ref{VP1}), we have a
basis for the eigenspace of $-\omega $,%
\begin{equation}
v_{l}(-\omega )=\Sigma _{x}v_{l}^{\ast }(\omega ),\text{ \ }1\leq l\leq m.
\end{equation}%
It is also an orthonormal basis,%
\begin{eqnarray}
v_{l}^{\dag }(-\omega )I_{-}v_{k}(-\omega ) &=&\left[ v_{l}^{\dag }(\omega
)\Sigma _{x}I_{-}\Sigma _{x}v_{k}(\omega )\right] ^{\ast }  \notag \\
&=&-\lambda _{l}\delta _{lk},  \label{PNnorm}
\end{eqnarray}%
where we have used the identity,%
\begin{equation}
\Sigma _{x}I_{-}\Sigma _{x}=-I_{-}.
\end{equation}%
In sum, there are totally $m$ dynamic mode pairs for the eigenenergy pair ($%
\omega ,-\omega $) where $\omega \neq 0$. Equation (\ref{PNnorm}) shows that
each mode pair has two linearly independent eigenvectors with opposite norms.

As pointed out after the lemma \ref{Lmm7}, the two conventions of Eqs. (\ref%
{VP1}) and (\ref{VP2}) must be obeyed in constructing a derivative BV
transformation. The discussions above demonstrate that for a pair of nonzero
eigenvalues, the orthonormaliztion is compatible with the convention of Eq. (%
\ref{VP1}). If zero is an eigenvalue of the dynamic matrix, is the
orthonormaliztion compatible with the convention of Eq. (\ref{VP2})? i.e.,
does there exist such a basis for the eigenspace of zero eigenvalue that is
orthonormal and satisfies the convention of Eq. (\ref{VP2}) simultaneously?
The answer is yes.

\begin{lemma}
\label{ZeroSpace} If the dynamic matrix $D$ is physically diagonalizable and
has zero eigenvalue, there exists such an orthonormal basis for the
eigenspace of zero eigenvalue that meets the requirement of Eq. (\ref{VP2}).
\end{lemma}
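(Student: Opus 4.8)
The plan is to exploit the antilinear pairing map that underlies Eq. (\ref{VP2}) and to show that it is automatically compatible with $I_{-}$-orthonormalization. Denote by $E_{0}$ the eigenspace of the zero eigenvalue; by the lemma \ref{Lmm6} its dimension is $2m$, and by the lemma \ref{OrNormal} the Hermitian form $\phi(u,w)=\langle u|I_{-}|w\rangle=u^{\dag}I_{-}w$ is nondegenerate when restricted to $E_{0}$. Define $J$ on $E_{0}$ by $Ju=\Sigma_{x}u^{\ast}$. By the lemma \ref{Lmm2} $J$ maps $E_{0}$ into itself, and since $\Sigma_{x}$ is real with $\Sigma_{x}^{2}=I$ we have $J^{2}=I$, so $J$ is an antilinear involution of $E_{0}$. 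The aim is to produce $v_{1}(0),\dots,v_{m}(0)$ such that, together with $v_{m+l}(0)=Jv_{l}(0)$, they form an $I_{-}$-orthonormal basis of $E_{0}$; this is precisely an orthonormal basis obeying Eq. (\ref{VP2}). Note that the lemma \ref{OrNormal} already supplies \emph{some} orthonormal basis of $E_{0}$, so the only new content is the compatibility with the pairing convention.

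The crucial fact I would establish first is that every vector is already $I_{-}$-orthogonal to its own $J$-image. Consider the bilinear form $\beta(u,w)=\phi(Ju,w)=\widetilde{u}\,\Sigma_{x}I_{-}\,w$, where $(\Sigma_{x}u^{\ast})^{\dag}=\widetilde{u}\,\Sigma_{x}$ has been used. From the identity $\Sigma_{x}I_{-}\Sigma_{x}=-I_{-}$ noted before Eq. (\ref{PNnorm}) one gets $\Sigma_{x}I_{-}=-I_{-}\Sigma_{x}$, hence $\widetilde{\Sigma_{x}I_{-}}=\widetilde{I_{-}}\,\widetilde{\Sigma_{x}}=I_{-}\Sigma_{x}=-\Sigma_{x}I_{-}$; thus $\Sigma_{x}I_{-}$ is skew-symmetric, $\beta$ is skew-symmetric, and $\beta(u,u)=0$ for all $u$. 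Since $\phi(u,Ju)=\overline{\phi(Ju,u)}=\overline{\beta(u,u)}=0$, each $u\in E_{0}$ is $I_{-}$-orthogonal to $Ju$. A parallel computation yields the conjugation rule $\phi(Ju,Jw)=-\overline{\phi(u,w)}=-\phi(w,u)$, so in particular $\phi(Ju,Ju)=-\phi(u,u)$: a vector and its image carry opposite norms. These are the zero-eigenvalue analogues of Eq. (\ref{PNnorm}).

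With these two facts the basis is built by a pairing recursion. As $\phi$ is nondegenerate, hence nonzero, on $E_{0}$, polarization yields a vector with $\phi(u,u)\neq0$; rescaling gives $v_{1}(0)$ with $\phi(v_{1}(0),v_{1}(0))=\pm1$. By the previous paragraph $\{v_{1}(0),Jv_{1}(0)\}$ is then $I_{-}$-orthonormal and spans a $J$-invariant subspace $W_{1}$ on which $\phi$ is nondegenerate. I would next restrict attention to the $\phi$-orthogonal complement $W_{1}^{\perp}$ of $W_{1}$ in $E_{0}$; it is again $J$-invariant, for if $w\in W_{1}^{\perp}$ then, writing $v_{1}(0)=J(Jv_{1}(0))$ and invoking $\phi(Ju,Jw)=-\phi(w,u)$, one finds $\phi(Jw,v_{1}(0))=-\overline{\phi(w,Jv_{1}(0))}=0$ and $\phi(Jw,Jv_{1}(0))=-\overline{\phi(w,v_{1}(0))}=0$. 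Because $\phi$ is nondegenerate on $W_{1}$, it stays nondegenerate on $W_{1}^{\perp}$, whose dimension is $2m-2$. Iterating the step $m$ times produces $v_{1}(0),\dots,v_{m}(0)$ whose $J$-images $v_{m+l}(0)=\Sigma_{x}v_{l}^{\ast}(0)$ complete the desired $I_{-}$-orthonormal basis satisfying Eq. (\ref{VP2}).

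The one genuinely delicate ingredient is the skew-symmetry of $\beta$, i.e. that $\Sigma_{x}I_{-}$ is the standard symplectic matrix: it forces $\phi(u,Ju)=0$ with no further adjustment and is exactly what makes the orthonormalization compatible with the pairing of Eq. (\ref{VP2}). The remaining ingredients---existence of an anisotropic vector via polarization, $J$-invariance of the orthogonal complement, and termination of the recursion after $m$ steps---are routine once that relation and the conjugation rule $\phi(Ju,Jw)=-\phi(w,u)$ are in hand.
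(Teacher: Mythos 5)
Your proof is correct, and while it shares the paper's overall skeleton (peel off a two-dimensional nondegenerate subspace spanned by a vector and its $\Sigma_{x}$-conjugate, then recurse on the orthogonal complement), the mechanism you use is genuinely different and cleaner. The paper starts from a basis already satisfying Eq.~(\ref{VP2}), argues by cases in the inductive step (if every basis vector is isotropic it locates two with nonvanishing inner product, adjusts a phase so that the product is purely imaginary, and forms explicit combinations $w_{1}=v_{1}(0)+iv_{2}(0)$, etc.), and then runs a modified Gram--Schmidt while verifying by hand that the projected vectors remain paired. You instead package everything into the antilinear involution $J u=\Sigma_{x}u^{\ast}$ and two identities: $\phi(u,Ju)=0$ for \emph{every} $u$ (from the skew-symmetry of $\Sigma_{x}I_{-}$, which the paper only exploits for particular vectors), and the conjugation rule $\phi(Ju,Jw)=-\overline{\phi(u,w)}$, which makes the $J$-invariance of the orthogonal complement and the opposite-norm property automatic. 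Your anisotropic vector comes from polarization (a nondegenerate Hermitian form whose quadratic form vanishes identically is zero), which replaces the paper's isotropic-versus-nonisotropic case split and the phase adjustment entirely; in particular you never need to start from a VP2-compatible basis, only from the $J$-invariance of the eigenspace. What the paper's version buys is an explicit, hands-on algorithm starting from the basis of Lemma~\ref{Lmm6}; what yours buys is brevity and the recognition that the compatibility of orthonormalization with the pairing convention is a structural consequence of $\Sigma_{x}I_{-}$ being skew (it equals $-J$ with $J$ the unit symplectic matrix of Eq.~(\ref{JSy}), a harmless sign discrepancy with your phrasing), rather than something to be checked vector by vector.
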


\begin{proof}
According to the lemma \ref{Lmm6}, there always exists such a basis for the
eigenspace $V_{0}$ of zero eigenvalue that satisfies the requirement of Eq. (%
\ref{VP2}),
\begin{equation}
v_{m+k}(0)=\Sigma_{x}v_{k}^{\ast}(0),\text{ \ }k=1,2,\cdots,m,
\end{equation}
where $2m$ ($m\in%
\mathbb{N}
$) is the dimension of $V_{0}$, i.e., $\dim(V_{0})=2m$.

When $m=1$,
\begin{equation}
v_{1}^{\dag }(0)I_{-}v_{1}(0)\neq 0,
\end{equation}%
i.e., $v_{1}(0)$ can not be isotropic. Otherwise, one has%
\begin{equation}
v_{1}^{\dag }(0)I_{-}v_{1}(0)=0,\text{ \ }v_{2}^{\dag }(0)I_{-}v_{2}(0)=0.
\label{VV001}
\end{equation}%
In addition,%
\begin{eqnarray}
v_{1}^{\dag }(0)I_{-}v_{2}(0) &=&v_{1}^{\dag }(0)I_{-}\Sigma _{x}v_{1}^{\ast
}(0)  \notag \\
&=&\widetilde{v_{1}^{\ast }}(0)Jv_{1}^{\ast }(0)  \notag \\
&=&0,
\end{eqnarray}%
where $J$ is the unit symplectic matrix,%
\begin{equation}
J=\left[
\begin{array}{cc}
0 & I \\
-I & 0%
\end{array}%
\right] .  \label{JSy}
\end{equation}%
That is to say, the two eigenvectors $v_{1}(0)$ and $v_{2}(0)$ are
orthogonal to each other,%
\begin{equation}
v_{1}^{\dag }(0)I_{-}v_{2}(0)=0,\text{ \ }v_{2}^{\dag }(0)I_{-}v_{1}(0)=0.
\label{VV002}
\end{equation}%
Equations (\ref{VV001}) and (\ref{VV002}) show that the matrix of the
sesquilinear form $I_{-}$ vanishes identically on $V_{0}$. That is in
contradiction with the fact that $I_{-}$ is a nonsingular metric on $V_{0}$.
In a word, the norm of $v_{1}(0)$ can not vanish. So we can normalize it,%
\begin{equation}
v_{1}^{\dag }(0)I_{-}v_{1}(0)=1\text{ or }-1.
\end{equation}%
Accordingly,
\begin{equation}
v_{2}^{\dag }(0)I_{-}v_{2}(0)=-1\text{ or }1.
\end{equation}%
Together with Eq. (\ref{VV002}), one has%
\begin{equation}
v_{i}^{\dag }(0)I_{-}v_{j}(0)=-\lambda _{i}\delta _{ij},\text{ \ }\lambda
_{i}=\pm 1,\text{ }i,j=1,2.
\end{equation}%
This implies that the lemma holds when $m=1$.

Suppose that the lemma holds when $m=l$ ($l\in%
\mathbb{N}
$), we shall show that it also holds when $m=l+1$.

If there exists at least one of the eigenvectors that is nonisotropic,
assume without loss of generality that $v_{1}(0)$ is such a vector, i.e.,%
\begin{equation}
v_{1}^{\dag }(0)I_{-}v_{1}(0)\neq 0,
\end{equation}%
then we have%
\begin{equation}
v_{l+2}^{\dag }(0)I_{-}v_{l+2}(0)\neq 0,
\end{equation}%
namely, $v_{l+2}(0)$ is nonisotropic, too. That is because%
\begin{equation}
v_{l+2}(0)=\Sigma _{x}v_{1}^{\ast }(0).  \label{ZV1}
\end{equation}%
Consider the two-dimensional subspace $W$ spanned by the linearly
independent set $\{v_{1}(0),v_{l+2}(0)\}$, i.e.,
\begin{equation}
W=\mathrm{span}(v_{1}(0),v_{l+2}(0)).
\end{equation}%
Analogous to the case of $m=1$, we can obtain such an orthonormal basis for $%
W$,%
\begin{equation}
v_{i}^{\dag }(0)I_{-}v_{j}(0)=-\lambda _{i}\delta _{ij},  \label{ZV2}
\end{equation}%
where $\lambda _{i}=\pm 1$, and $i,j=1,l+2$. It satisfies the requirement of
Eq. (\ref{VP2}).

If all the eigenvectors are isotropic, there must exist at least two
eigenvectors such that their inner product is nonvanishing. Otherwise, the
form $I_{-}$ will vanish identically on $V_{0}$,%
\begin{equation}
v_{i}^{\dag }(0)I_{-}v_{j}(0)=0,\text{ \ }\forall i,j\in \{1,2,\cdots
,2l+2\},
\end{equation}%
which is obviously impossible because the form $I_{-}$ is nonsingular on $%
V_{0}$. Without loss of generality, let us suppose that
\begin{equation}
v_{1}^{\dag }(0)I_{-}v_{2}(0)\neq 0.  \label{NonReal0}
\end{equation}%
We can always adjust the phase of $v_{1}(0)$ or $v_{2}(0)$ so that $%
v_{1}^{\dag }(0)I_{-}v_{2}(0)$ is purely imaginary,%
\begin{equation}
v_{1}^{\dag }(0)I_{-}v_{2}(0)\notin
\mathbb{R}
.  \label{NonReal}
\end{equation}%
Now, set%
\begin{eqnarray}
w_{1} &=&v_{1}(0)+iv_{2}(0), \\
w_{2} &=&v_{1}(0)-iv_{2}(0),
\end{eqnarray}%
and%
\begin{eqnarray}
w_{l+2} &=&\Sigma _{x}w_{1}^{\ast }=v_{l+2}(0)-iv_{l+3}(0), \\
w_{l+3} &=&\Sigma _{x}w_{2}^{\ast }=v_{l+2}(0)+iv_{l+3}(0).
\end{eqnarray}%
From Eqs. (\ref{NonReal0}) and (\ref{NonReal}), we obtain%
\begin{eqnarray}
w_{1}^{\dag }I_{-}w_{1} &=&i\left[ v_{1}^{\dag }(0)I_{-}v_{2}(0)-v_{2}^{\dag
}(0)I_{-}v_{1}(0)\right]  \notag \\
&\neq &0.
\end{eqnarray}%
Obviously,
\begin{equation}
w_{1}(0),\text{ }w_{2}(0),\text{ }w_{l+2}(0),\text{ and }w_{l+3}(0)\in V_{0}.
\end{equation}%
Besides, they are linearly independent. For convenience, let us reset%
\begin{eqnarray}
v_{1}(0) &=&w_{1},\text{ \ }v_{2}(0)=w_{2}, \\
v_{l+2}(0) &=&w_{l+2},\text{ \ }v_{l+3}(0)=w_{l+3},
\end{eqnarray}%
and consider the new set,
\begin{widetext}%
\begin{equation}
\{v_{1}(0),v_{2}(0),\cdots ,v_{l+1}(0),v_{l+2}(0),v_{l+3}(0),\cdots
,v_{2l+2}(0)\}.
\end{equation}%
\end{widetext}
It is evident that this set forms a new basis for $V_{0}$, and satisfies the
requirement of Eq. (\ref{VP2}). In particular, $v_{1}(0)$ is nonisotropic,
\begin{equation}
v_{1}^{\dag }(0)I_{-}v_{1}(0)\neq 0.
\end{equation}%
Therefore, the new basis returns to the case discussed just above. All in
all, we can always obtain a two-dimensional subspace $W$ as given in Eqs. (%
\ref{ZV1})--(\ref{ZV2}) whether the basis vectors of $V_{0}$ are isotropic
or not.

Using the two basis vectors of $W$, we can put%
\begin{eqnarray}
\xi _{i}(0) &=&v_{i+1}(0)-\frac{v_{1}^{\dag }(0)I_{-}v_{i+1}(0)}{v_{1}^{\dag
}(0)I_{-}v_{1}(0)}v_{1}(0)  \notag \\
&&-\frac{v_{l+2}^{\dag }(0)I_{-}v_{i+1}(0)}{v_{l+2}^{\dag }(0)I_{-}v_{l+2}(0)%
}v_{l+2}(0), \\
\xi _{l+i}(0) &=&v_{l+i+2}(0)-\frac{v_{1}^{\dag }(0)I_{-}v_{l+i+2}(0)}{%
v_{1}^{\dag }(0)I_{-}v_{1}(0)}v_{1}(0)  \notag \\
&&-\frac{v_{l+2}^{\dag }(0)I_{-}v_{l+i+2}(0)}{v_{l+2}^{\dag
}(0)I_{-}v_{l+2}(0)}v_{l+2}(0),
\end{eqnarray}%
where $i=1,2,\cdots ,l$. They are all orthogonal to $v_{1}(0)$ and $%
v_{l+2}(0)$,%
\begin{equation}
v_{i}^{\dag }(0)I_{-}\xi _{j}(0)=0,\text{ \ }i=1,l+2;\text{ }j=1,2,\cdots
,2l.
\end{equation}%
Evidently, all those vectors $\xi _{i}(0)$ are still linearly independent
and eigenvectors of zero eigenvalue, i.e.,
\begin{equation}
\xi _{i}(0)\in V_{0},\text{ \ }i=1,2,\cdots ,2l.
\end{equation}%
Now, consider the space $W^{\prime }$,
\begin{equation}
W^{\prime }=\mathrm{span}(\{\xi _{i}(0)|\text{ }i=1,2,\cdots ,2l\}).
\end{equation}%
It is a proper subspace of $V_{0}$, $\dim (W^{\prime })=2l$. Obviously,
\begin{equation}
V_{0}=W\oplus W^{\prime }.
\end{equation}%
As shown above, $v_{1}(0)$ and $v_{l+2}(0)$ are both orthogonal to the set $%
\{\xi _{i}(0)|$ $i=1,2,\cdots ,2l\}$, therefore,
\begin{equation}
W^{\prime }=W^{\perp }.
\end{equation}%
This implies that $V_{0}$ is the orthogonal direct sum of $W$ and $W^{\prime
}$,%
\begin{equation}
V_{0}=W\odot W^{\prime }.  \label{WPW}
\end{equation}

It is easy to show that%
\begin{gather}
\left[ v_{1}^{\dag }(0)I_{-}v_{i+1}(0)\right] ^{\ast }=-v_{l+2}^{\dag
}(0)I_{-}v_{l+i+2}(0), \\
\left[ v_{l+2}^{\dag }(0)I_{-}v_{i+1}(0)\right] ^{\ast }=-v_{1}^{\dag
}(0)I_{-}v_{l+i+2}(0).
\end{gather}%
As a result, we obtain%
\begin{equation}
\xi _{l+i}(0)=\Sigma _{x}\xi _{i}^{\ast }(0),\text{ \ }i=1,2,\cdots ,l.
\end{equation}%
This indicates that the basis for $W^{\prime }$ is exacly in accordance with
the convention of Eq. (\ref{VP2}). Since $\dim (W^{\prime })=2l$, by the
induction hypothesis, the space $W^{\prime }$ has an orthonormal basis that
satisfies the convention of Eq. (\ref{VP2}). Suppose the basis is the set:
\begin{equation}
\{\zeta _{1}(0),\zeta _{2}(0),\cdots ,\zeta _{2l}(0)\},
\end{equation}%
which satisfies%
\begin{equation}
\zeta _{l+i}(0)=\Sigma _{x}\zeta _{i}^{\ast }(0),  \label{ZV3}
\end{equation}%
where $i=1,2,\cdots ,l$, and%
\begin{equation}
\zeta _{i}^{\dag }(0)I_{-}\zeta _{j}(0)=-\lambda _{i}\delta _{ij},
\label{ZV4}
\end{equation}%
where $\lambda _{i}=\pm 1$ and $i,j=1,2,\cdots ,2l$. It is evident that the
set
\begin{equation}
\{v_{1}(0),v_{l+2}(0),\zeta _{1}(0),\zeta _{2}(0),\cdots ,\zeta _{2l}(0)\}
\end{equation}%
is a basis for $V_{0}$. Upon ordering them as follows,%
\begin{eqnarray}
\eta _{1}(0) &=&v_{1}(0), \\
\eta _{l+2}(0) &=&v_{l+2}(0), \\
\eta _{i+1}(0) &=&\zeta _{i}(0),\text{ \ }i=1,2,\cdots ,l, \\
\eta _{l+i+2}(0) &=&\zeta _{l+i}(0),\text{ \ }i=1,2,\cdots ,l,
\end{eqnarray}%
one has%
\begin{equation}
\eta _{l+i}(0)=\Sigma _{x}\eta _{i}^{\ast }(0),
\end{equation}%
where $i=1,2,\cdots ,l+1$, and%
\begin{equation}
\eta _{i}^{\dag }(0)I_{-}\eta _{j}(0)=-\lambda _{i}\delta _{ij},
\end{equation}%
where $\lambda _{i}=\pm 1$ and $i,j=1,2,\cdots ,2(l+1)$. Here Eqs. (\ref{ZV1}%
), (\ref{ZV2}), (\ref{WPW}), (\ref{ZV3}), and (\ref{ZV4}) have been used.
The two equations above show that the new set
\begin{equation}
\{\eta _{1}(0),\eta _{2}(0),\cdots ,\eta _{2(l+1)}(0)\}
\end{equation}%
is an orthonormal basis for $V_{0}$, and it satisfies the requirement of Eq.
(\ref{VP2}). In other words, the lemma holds for $m=l+1$.

Finally, by mathematical induction, the lemma is valid for any $m\in%
\mathbb{N}
$.
\end{proof}

In this proof, we present a modified version of the Gram-Schmidt
orthogonalization process, which can maintain the satisfiability of the
convention of Eq. (\ref{VP2}). Put it another way, if one starts form a
basis satisfying Eq. (\ref{VP2}), he can arrive finally at an orthonormal
basis which will still satisfy Eq. (\ref{VP2}) through this modified
Gram-Schmidt orthogonalization process.

This lemma shows that there are totally $m$ mode pairs for the zero
eigenvalue $\omega=0$, each mode pair has two linearly independent
eigenvectors with opposite norms.

The total $n$ dynamic mode pairs are thus selected, their eigenvectors form
an orthonormal basis for the whole space of $%
\mathbb{C}
^{2n}$,
\begin{equation}
v^{\dag }(\omega _{i})I_{-}v(\omega _{j})=\lambda _{i}\delta _{ij},
\end{equation}%
where $\lambda _{i}=\pm 1$ and $1\leq i,j\leq 2n$. Here each eigenvalue is
counted up to its multiplicity. Each dynamic mode pair has two linearly
independent eigenvectors with both opposite eigenenergies and opposite
norms. Therefore, a half of the basis vectors have the norm $1$, the other
half have the norm $-1$.

From Eq. (\ref{vMvNew}) and the equation above, we obtain the following
lemma.

\begin{lemma}
\label{BasisC} With the orthonormal basis of $%
\mathbb{C}
^{2n}$ chosen as above, the coefficient matrix $M$ is diagonalized,%
\begin{equation}
v^{\dag}(\omega_{i})Mv(\omega_{j})=\lambda_{i}\omega_{i}\delta_{ij},\text{ \
}1\leq i,j\leq2n,  \label{vMv1}
\end{equation}
where $v^{\dag}(\omega_{i})$ is an eigenvector of the eigenvalue $\omega_{i}$%
, with $\lambda_{i}=1$ or $-1$ being the corresponding norm.
\end{lemma}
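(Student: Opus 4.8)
The plan is to derive the identity by combining three ingredients already in hand: the eigenvalue equation $D\,v(\omega_j)=\omega_j v(\omega_j)$ for the dynamic matrix, the structural relation $D=I_{-}M$ of Eq. (\ref{DIM}), and the orthonormality of the chosen basis, $v^{\dag}(\omega_i)I_{-}v(\omega_j)=\lambda_i\delta_{ij}$, established immediately before the statement. The essential observation is that $I_{-}=\mathrm{diag}(I,-I)$ is an involution, $I_{-}^{2}=I$, which lets one pass freely between the similarity-type relation obeyed by $D$ and a congruence-type relation obeyed by $M$.

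First I would substitute $D=I_{-}M$ into the eigenvalue equation to get $I_{-}M\,v(\omega_j)=\omega_j v(\omega_j)$, and then left-multiply by $I_{-}$ and use $I_{-}^{2}=I$ to obtain the convenient form
\[
M\,v(\omega_j)=\omega_j\,I_{-}v(\omega_j).
\]
Left-multiplying by the row vector $v^{\dag}(\omega_i)$ then yields
\[
v^{\dag}(\omega_i)M\,v(\omega_j)=\omega_j\,v^{\dag}(\omega_i)I_{-}v(\omega_j),
\]
and the orthonormality relation collapses the right-hand side to $\omega_j\lambda_i\delta_{ij}$. Since the Kronecker delta is nonzero only when $i=j$, in which case $\omega_j=\omega_i$, this equals $\lambda_i\omega_i\delta_{ij}$, which is exactly the assertion of the lemma.

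The computation is genuinely short, so I do not expect any real obstacle at this final step; all the difficulty has already been discharged by Lemmas \ref{Ortho}, \ref{OrNormal}, and \ref{ZeroSpace}, which guarantee the existence of a basis that is simultaneously $I_{-}$-orthonormal and compatible with the pairing conventions of Eqs. (\ref{VP1}) and (\ref{VP2}). The only point requiring a little care is the index and sign bookkeeping: one must confirm that $\lambda_i$ pairs with $\omega_i$ rather than $\omega_j$, which the delta function settles by forcing the two indices to coincide. An equivalent route, more in keeping with the remark preceding the statement, is to start from Eq. (\ref{vMvNew}), which already block-diagonalizes $M$ across the distinct eigenspaces of $D$, and then invoke the orthonormality relation within each block to diagonalize the individual blocks; this reproduces the same formula with the off-block entries vanishing automatically.
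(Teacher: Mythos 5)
Your proposal is correct and follows essentially the same route as the paper: the paper obtains the lemma by combining Eq. (\ref{vMvNew}) (which was itself derived from $Mv(\omega_j)=\omega_j I_{-}v(\omega_j)$, i.e. from $D=I_{-}M$ and the eigenvalue equation) with the orthonormality relation $v^{\dag}(\omega_i)I_{-}v(\omega_j)=\lambda_i\delta_{ij}$, exactly as in your computation. Your remark that the degenerate case ($i\neq j$ with $\omega_i=\omega_j$) is covered only because Lemmas \ref{OrNormal} and \ref{ZeroSpace} supply intra-eigenspace orthonormality is the right point of care, and matches the paper's logic.
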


For each mode pair, its two eigenvectors have opposite norms, one is $+1$,
the other is $-1$. So we can stipulate an order for every mode pair: The
first eigenvector has the norm of $+1$, and the second one has the norm of $%
-1$. Under this stipulation, the derivative BV transformation of Eqs. (\ref%
{PsiToPhi}) and (\ref{TPair}) becomes
\begin{widetext}%
\begin{gather}
\psi =T_{n}\varphi ,  \label{TN} \\
T_{n}=\left[
\begin{array}{cccccccc}
v(\omega _{1}), & v(\omega _{2}), & \cdots , & v(\omega _{n}), & v(-\omega
_{1}), & v(-\omega _{2}), & \cdots , & v(-\omega _{n})%
\end{array}%
\right] ,  \label{DBVTn}
\end{gather}%
\end{widetext}
where, for each mode pair of $(v(\omega _{i}),v(-\omega _{i}))$, the
eigenvectors are ordered as follows,
\begin{eqnarray}
v^{\dag }(\omega _{i})I_{-}v(\omega _{i}) &=&1,  \label{TN1} \\
v^{\dag }(-\omega _{i})I_{-}v(-\omega _{i}) &=&-1.  \label{TN2}
\end{eqnarray}%
That is to say, the left half of $T_{n}$ is filled with the eigenvectors
with the positive norms of $+1$; the right half of $T_{n}$ is filled with
the eigenvectors with the negative norms of $-1$. For convenience, we shall
call $T_{n}$ the normal derivative BV matrix and Eq. (\ref{TN}) the normal
derivative BV transformation. To sum up, a normal derivative BV
transformation can always be generated by the Heisenberg equation of motion
if the dynamic matrix of the system is physically diagonalizable.

According to the stipulation for the normal BV matrix $T_{n}$, one has%
\begin{equation}
v^{\dag }(\omega _{i})I_{-}v(\omega _{j})=\lambda _{i}\delta _{ij},\text{ \ }%
1\leq i,j\leq 2n,
\end{equation}%
where%
\begin{equation}
\lambda _{i}=\left\{
\begin{array}{ll}
1, & 1\leq i\leq n \\
-1, & n+1\leq i\leq 2n.%
\end{array}%
\right.
\end{equation}%
In terms of matrix, it can be expressed as%
\begin{equation}
T_{n}^{\dag }I_{-}T_{n}=I_{-}.
\end{equation}%
Thereby, we arrive at the following lemma.

\begin{lemma}
\label{BVTn} If the dynamic matrix $D$ is physically diagonalizable, the
normal derivative BV matrix $T_{n}$ satisfies the identity,
\begin{equation}
T_{n}^{\dag }I_{-}T_{n}=I_{-}.
\end{equation}
\end{lemma}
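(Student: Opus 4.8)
The plan is to read the matrix identity directly off the componentwise orthonormality relations that the preceding lemmas have already established for the columns of $T_{n}$. By construction in Eqs. (\ref{TN})--(\ref{DBVTn}), the columns of $T_{n}$ are the $2n$ eigenvectors $v(\omega_{1}),\dots,v(\omega_{n}),v(-\omega_{1}),\dots,v(-\omega_{n})$ of the dynamic matrix $D$. Hence the $(i,j)$ entry of the product $T_{n}^{\dag}I_{-}T_{n}$ is exactly the sesquilinear pairing $v^{\dag}(\omega_{i})I_{-}v(\omega_{j})$, i.e. the metric $I_{-}$ evaluated on two column vectors. So proving the identity reduces to showing that this array of pairings equals $I_{-}$.

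First I would assemble the full orthonormality relation $v^{\dag}(\omega_{i})I_{-}v(\omega_{j})=\lambda_{i}\delta_{ij}$ from pieces already in hand. Lemma \ref{Ortho} gives that eigenvectors belonging to distinct eigenvalues are $I_{-}$-orthogonal, which annihilates every off-diagonal entry arising between different eigenspaces. Lemma \ref{OrNormal} supplies, within each eigenspace, a basis that is orthonormal with respect to $I_{-}$ with norms $\lambda=\pm1$, disposing of the remaining off-diagonal entries and fixing the diagonal magnitudes. For the zero eigenspace I would invoke Lemma \ref{ZeroSpace}, which furnishes an $I_{-}$-orthonormal basis compatible with the pairing convention (\ref{VP2}); and Eq. (\ref{PNnorm}) records that, for a nonzero pair $(\omega,-\omega)$, the partner vector $v(-\omega)=\Sigma_{x}v^{\ast}(\omega)$ carries norm opposite to that of $v(\omega)$. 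Together these yield the diagonal matrix $\mathrm{diag}(\lambda_{1},\dots,\lambda_{2n})$.

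The final step is purely bookkeeping: under the stipulated ordering of $T_{n}$, Eqs. (\ref{TN1})--(\ref{TN2}) place the $n$ eigenvectors of norm $+1$ in the left half and the $n$ of norm $-1$ in the right half, so that $\lambda_{i}=+1$ for $1\leq i\leq n$ and $\lambda_{i}=-1$ for $n+1\leq i\leq 2n$. That diagonal matrix is precisely $I_{-}$, and the asserted identity $T_{n}^{\dag}I_{-}T_{n}=I_{-}$ follows at once by rewriting the componentwise relations in matrix form.

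The only point requiring genuine care — what I would flag as the main obstacle, although it has effectively been resolved upstream — is verifying that exactly $n$ basis vectors have norm $+1$ and exactly $n$ have norm $-1$, so that the sign pattern of $\mathrm{diag}(\lambda_{i})$ matches the block structure of $I_{-}$ rather than some other distribution of signs. This balance is guaranteed by the pairing structure: each nonzero mode pair contributes one $+1$ and one $-1$ through Eq. (\ref{PNnorm}), and the zero eigenspace does likewise through the $\Sigma_{x}$-pairing of Lemma \ref{ZeroSpace}. Once this count is confirmed, no further computation is needed.
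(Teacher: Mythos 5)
Your argument is correct and follows the same route as the paper: the paper likewise assembles the relation $v^{\dag}(\omega _{i})I_{-}v(\omega _{j})=\lambda _{i}\delta _{ij}$ from the lemmas \ref{Ortho}, \ref{OrNormal}, and \ref{ZeroSpace} together with the opposite-norm pairing of Eq.\ (\ref{PNnorm}), observes that the mode-pair structure forces exactly $n$ norms of each sign, and then reads off $T_{n}^{\dag }I_{-}T_{n}=I_{-}$ from the stipulated ordering of Eqs.\ (\ref{TN1})--(\ref{TN2}). No gaps; your flagged concern about the sign count is exactly the point the paper also settles via the pairing conventions.
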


In other words, $T_{n}$ is a member of the group $U(n,n)$ \cite{Chen}. The
sesquilinear form $I_{-}$ remains invariant under the transformation of $%
T_{n}$.

This lemma implies that the new field $\varphi $ is a standard bosonic
field, i.e.,%
\begin{equation}
\varphi \cdot \varphi ^{\dag }=I_{-}.
\end{equation}%
Put it another way, the new component operators, $d_{i}$ and $d_{i}^{\dag }$
($i=1,2,\cdots ,n$) will satisfy the standard commutation rules for the
annihilation and creation operators of bosons,%
\begin{equation}
\lbrack d_{i},d_{j}^{\dag }]=\delta _{i,j},\text{ \ }[d_{i},d_{j}]=0,\text{
\ }[d_{i}^{\dag },d_{j}^{\dag }]=0.  \label{DComm}
\end{equation}

\begin{lemma}
\label{MDiag} If the dynamic matrix $D$ is physically diagonalizable, the
normal derivative BV matrix $T_{n}$ will diagonalize the coefficient matrix $%
M$ in the manner of Hermitian congruence. That is
\begin{equation}
T_{n}^{\dag }MT_{n}=\mathrm{diag}(\omega _{1},\cdots ,\omega _{n},\omega
_{1},\cdots ,\omega _{n}).
\end{equation}
\end{lemma}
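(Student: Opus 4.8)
The plan is to combine the eigenstructure of $T_{n}$ with the algebraic relation $D=I_{-}M$ of Eq. (\ref{DIM}) and the $U(n,n)$ invariance $T_{n}^{\dag}I_{-}T_{n}=I_{-}$ of Lemma \ref{BVTn}. First I would record what the construction of the normal derivative BV matrix already gives us: by Eqs. (\ref{DBVTn})--(\ref{TN2}) the columns of $T_{n}$ are eigenvectors of the dynamic matrix $D$, arranged so that the first $n$ columns are $v(\omega_{1}),\dots,v(\omega_{n})$ (attached to the eigenvalues $\omega_{1},\dots,\omega_{n}$, carrying norm $+1$) and the last $n$ columns are $v(-\omega_{1}),\dots,v(-\omega_{n})$ (attached to the eigenvalues $-\omega_{1},\dots,-\omega_{n}$, carrying norm $-1$). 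Hence $DT_{n}=T_{n}\Lambda$ with $\Lambda=\mathrm{diag}(\omega_{1},\dots,\omega_{n},-\omega_{1},\dots,-\omega_{n})$.

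Since $I_{-}^{2}=I$, Eq. (\ref{DIM}) can be rewritten as $M=I_{-}D$. Substituting this and then using the eigenvalue relation and Lemma \ref{BVTn}, I would compute
\begin{equation}
T_{n}^{\dag}MT_{n}=T_{n}^{\dag}I_{-}DT_{n}=T_{n}^{\dag}I_{-}T_{n}\Lambda=I_{-}\Lambda .
\end{equation}
It then remains only to evaluate $I_{-}\Lambda$. Because $I_{-}$ acts as $+1$ on the first $n$ coordinates and as $-1$ on the last $n$, left multiplication by $I_{-}$ flips the sign of the last $n$ diagonal entries of $\Lambda$, converting each $-\omega_{i}$ back into $\omega_{i}$. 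Therefore $I_{-}\Lambda=\mathrm{diag}(\omega_{1},\dots,\omega_{n},\omega_{1},\dots,\omega_{n})$, which is exactly the asserted form.

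Alternatively, the statement follows immediately entrywise from Lemma \ref{BasisC}: the $(i,j)$ entry of $T_{n}^{\dag}MT_{n}$ is $v^{\dag}(\omega_{i})Mv(\omega_{j})=\lambda_{i}\omega_{i}\delta_{ij}$ by Eq. (\ref{vMv1}), so the product is already diagonal with entries $\lambda_{i}\omega_{i}$, and the normal ordering (\ref{TN1})--(\ref{TN2}) then identifies these. I expect no genuine obstacle, since the real work is carried by Lemmas \ref{BVTn} and \ref{BasisC}; the one point requiring care is the sign bookkeeping on the right half of $T_{n}$. There one must check that the negative norm $\lambda_{i}=-1$ exactly cancels the negative eigenvalue $-\omega_{i}$, so that $\lambda_{i}\omega_{i}$ (equivalently the $(I_{-}\Lambda)$ entry) comes out positive and both halves of the diagonal display the same $\omega_{1},\dots,\omega_{n}$. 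This cancellation is precisely what the ordering stipulation (\ref{TN1})--(\ref{TN2}) was set up to guarantee.
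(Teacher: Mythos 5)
Your proof is correct, and your second (entrywise) argument is exactly the paper's own one-line proof, which simply combines Lemma \ref{BVTn} with Eq. (\ref{vMv1}) and the sign bookkeeping $\lambda_i\omega_i$ that the ordering stipulation of Eqs. (\ref{TN1})--(\ref{TN2}) guarantees. Your first, matrix-level computation $T_{n}^{\dag}MT_{n}=T_{n}^{\dag}I_{-}T_{n}\,T_{n}^{-1}DT_{n}=I_{-}\Lambda$ is an equivalent reformulation using the same ingredients, and is in fact the very computation the paper writes out explicitly for the analogous Fermi and Dirac lemmas.
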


\begin{proof}
It comes simply from the lemma \ref{BVTn}\ and Eq. (\ref{vMv1}).
\end{proof}

This lemma shows that the dynamic matrix $D$ and the coefficient matrix $M$
can be diagonalized simultaneously if $D$ is physically diagonalizable:%
\begin{gather}
T_{n}^{-1}DT_{n}=\mathrm{diag}(\omega _{1},\cdots ,\omega _{n},-\omega
_{1},\cdots ,-\omega _{n}), \\
T_{n}^{\dag }MT_{n}=\mathrm{diag}(\omega _{1},\cdots ,\omega _{n},\omega
_{1},\cdots ,\omega _{n}).
\end{gather}%
It is evident that the manners of diagonalization are different: The former
is diagonalized by a similar transformation, and the latter by a Hermitian
congruence transformation. In brief, the two different matrices have been
diagonalized by the two different manners of transformation simultaneously,
that solves the key problem occurring in the Bose system.

With the two lemmas above, we obtain the sufficient condition for the
diagonalization of the Bose system.

\begin{proposition}
\label{PPS2} A quadratic Hamiltonian of bosons is BV diagonalizable if its
dynamic matrix is physically diagonalizable.
\end{proposition}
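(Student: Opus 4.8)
The plan is to exhibit the normal derivative BV matrix $T_n$ of Eqs. (\ref{TN}) and (\ref{DBVTn}) as the transformation realizing the BV diagonalization, and then to check the two defining requirements of Definition \ref{BVD}. Since $D$ is assumed physically diagonalizable, the construction is available: by Lemmas \ref{Lmm5} and \ref{Lmm6} its eigenvectors fall into $n$ dynamic mode pairs obeying the conventions Eqs. (\ref{VP1}) and (\ref{VP2}), and by Lemmas \ref{Ortho}, \ref{OrNormal}, and \ref{ZeroSpace} they can be chosen orthonormal with respect to the metric $I_-$, yielding exactly $n$ basis vectors of norm $+1$ and $n$ of norm $-1$. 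Ordering the $+1$ vector before the $-1$ vector within each pair assembles $T_n$.

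First I would invoke Lemma \ref{Lmm7}: $T_n$ is a BV matrix, so the map $\psi = T_n\varphi$ is a bona fide BV transformation of the standard bosonic field $\psi$. Next I would verify the statistics condition Eq. (\ref{Cond1}), $T_n I_- T_n^\dag = I_-$. Lemma \ref{BVTn} supplies the companion identity $T_n^\dag I_- T_n = I_-$, and the passage between the two is purely algebraic: using that $I_-$ is Hermitian with $I_-^2 = I$, left-multiplication of $T_n^\dag I_- T_n = I_-$ by $I_-$ gives $T_n^{-1} = I_- T_n^\dag I_-$, equivalently $T_n^\dag = I_- T_n^{-1} I_-$, whence $T_n I_- T_n^\dag = T_n I_- (I_- T_n^{-1} I_-) = T_n T_n^{-1} I_- = I_-$. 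This is the familiar equivalence of the two defining relations of the group $U(n,n)$.

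Finally I would establish the diagonalization condition Eq. (\ref{Cond2}). This is immediate from Lemma \ref{MDiag}, which already yields $T_n^\dag M T_n = \mathrm{diag}(\omega_1,\cdots,\omega_n,\omega_1,\cdots,\omega_n)$; the entries $\omega_i$ are eigenvalues of the physically diagonalizable $D$ and hence real, exactly as Eq. (\ref{Cond2}) demands. With both Eqs. (\ref{Cond1}) and (\ref{Cond2}) satisfied by the BV transformation $T_n$, Definition \ref{BVD} is met and $H$ is BV diagonalizable.

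The proposition is therefore essentially a bookkeeping assembly of results proved above, and I anticipate no serious obstacle here. The only delicate step is the conversion between $T_n^\dag I_- T_n = I_-$ and $T_n I_- T_n^\dag = I_-$; all the real difficulty --- selecting a pair-respecting orthonormal eigenbasis of $D$, which leans on the nonsingularity of $I_-$ on each eigenspace (Lemma \ref{OrNormal}) and on the modified Gram-Schmidt procedure for the zero-eigenvalue space (Lemma \ref{ZeroSpace}) --- has already been handled in the preceding lemmas.
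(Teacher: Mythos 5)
Your proposal is correct and follows essentially the same route as the paper: assemble the normal derivative BV matrix $T_{n}$ from the pair-respecting orthonormal eigenbasis and then invoke Lemma \ref{BVTn} (for the preservation of the commutation relations) and Lemma \ref{MDiag} (for the Hermitian-congruence diagonalization of $M$) to land on the diagonal form of Eq. (\ref{B4}). Your explicit algebraic passage from $T_{n}^{\dag}I_{-}T_{n}=I_{-}$ to $T_{n}I_{-}T_{n}^{\dag}=I_{-}$ is a detail the paper leaves implicit, but it does not change the argument.
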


\begin{proof}
By replacing $\psi$ with the $\varphi$ of Eq. (\ref{TN}), Eq. (\ref{Ham2})
can be reformulated as%
\begin{equation}
H=\frac{1}{2}\varphi^{\dag}T_{n}^{\dag}MT_{n}\varphi-\frac{1}{2}\mathrm{tr}%
(\alpha).
\end{equation}
Using the lemma \ref{MDiag}, we have
\begin{equation}
H=\sum_{i=1}^{n}\omega_{i}d_{i}^{\dag}d_{i}+\frac{1}{2}\sum_{i=1}^{n}%
\omega_{i}-\frac{1}{2}\mathrm{tr}(\alpha),  \label{B4}
\end{equation}
where, as shown in Eq. (\ref{DComm}), the $d_{i}$ and $d_{i}^{\dag}$ ($%
i=1,2,\cdots,n$) are the new annihilation and creation operators for the
bosons. Equation (\ref{B4}) shows that a quadratic Hamiltonian of bosons can
be BV diagonalized if its dynamic matrix is physically diagonalizable.
\end{proof}

So far, we have found that the Heisenberg equation of motion is a natural
generator of the normal BV transformation. This transformation can react on
the Hamiltonian itself and brings it into a diagonalized form automatically.
That is just what we expected.

\begin{corollary}
\label{Corollary2} The quadratic Hamiltonian of bosons of Eq. (\ref{Ham1})
is BV diagonalizable if the coefficient submatrix $\gamma$ vanishes
identically, i.e., $\gamma=0$.
\end{corollary}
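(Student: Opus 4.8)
The plan is to reduce the claim to Proposition \ref{PPS2} by showing that the vanishing of $\gamma$ forces the dynamic matrix to be physically diagonalizable. Setting $\gamma=0$ in Eq. (\ref{DM}), the dynamic matrix collapses to the block-diagonal form
\begin{equation}
D=\left[
\begin{array}{cc}
\alpha & 0 \\
0 & -\widetilde{\alpha }
\end{array}
\right] ,
\end{equation}
so its spectrum is simply the union of the spectra of the two diagonal blocks $\alpha$ and $-\widetilde{\alpha}$, and any eigenbasis for each block assembles into an eigenbasis for $D$. Thus the entire question is whether each block is diagonalizable with real eigenvalues.

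First I would invoke the Hermiticity of $\alpha$ recorded in Eq. (\ref{AGM}): since $\alpha^{\dag}=\alpha$, the upper block is unitarily diagonalizable and all its eigenvalues are real. For the lower block I would use $\alpha^{\dag}=\alpha$ once more to note that $\widetilde{\alpha}=\alpha^{\ast}$, and that the complex conjugate of a Hermitian matrix is again Hermitian; hence $-\widetilde{\alpha}=-\alpha^{\ast}$ is Hermitian as well, so it too is diagonalizable with real eigenvalues. Because a block-diagonal matrix built from two individually diagonalizable blocks is diagonalizable, and because the combined spectrum is real, $D$ is diagonalizable and all its eigenvalues lie in $\mathbb{R}$. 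By the definition preceding Corollary \ref{Corollary1}, this is precisely the statement that $D$ is physically diagonalizable.

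Finally I would apply Proposition \ref{PPS2}, which asserts that physical diagonalizability of the dynamic matrix is sufficient for BV diagonalizability of the Hamiltonian; this completes the argument. There is essentially no obstacle here beyond the elementary spectral observations: the only point requiring a moment's care is verifying that $-\widetilde{\alpha}$ inherits Hermiticity (equivalently, real spectrum) from $\alpha$, which is immediate from $\widetilde{\alpha}=\alpha^{\ast}$. The corollary thus stands in pleasing contrast to Corollary \ref{Corollary1}: whereas $\alpha=0$ produced a purely imaginary spectrum and destroyed diagonalizability, $\gamma=0$ leaves behind a manifestly real-spectrum, diagonalizable dynamic matrix.
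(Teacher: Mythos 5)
Your argument is correct and follows essentially the same route as the paper: both observe that with $\gamma=0$ the dynamic matrix becomes block-diagonal with Hermitian blocks $\alpha$ and $-\widetilde{\alpha}$ (the paper simply notes $D=D^{\dag}$ globally rather than arguing block by block), conclude that $D$ is diagonalizable with real spectrum, i.e.\ physically diagonalizable, and then invoke the sufficiency result. No gaps.
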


\begin{proof}
When $\gamma=0$, the dynamic matrix of Eq. (\ref{DM}) reduces to
\begin{equation}
D=\left[
\begin{array}{cc}
\alpha & 0 \\
0 & -\widetilde{\alpha}%
\end{array}
\right] .
\end{equation}
Since $\alpha$ is Hermitian, the $\widetilde{\alpha}$ is Hermitian, $%
\widetilde{\alpha}^{\dag}=\widetilde{\alpha}$, too. As a result, $D=D^{\dag}$%
. It implies that $D$ is unitarily diagonalizable and all its eigenvalues
are real when $\gamma=0$. Of course, the dynamic matrix $D$ is BV
diagonalizable if $\gamma=0$, which proves the corollary.
\end{proof}

On one hand, the corollary \ref{Corollary2} shows that some quadratic
Hamiltonians of bosons are BV diagonalizable. On the other hand, the
corollary \ref{Corollary1} indicates that there are also some quadratic
Hamiltonians of bosons that can not be BV diagonalized. What is
simultaneously the necessary and sufficient condition for the BV
diagonalization? Evidently, the answer is just the combination of the two
prepositions \ref{PPS1} and \ref{PPS2}.

\begin{theorem}
\label{BThm} A quadratic Hamiltonian of bosons is BV diagonalizable if and
only if its dynamic matrix is physically diagonalizable.
\end{theorem}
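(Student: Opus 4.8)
The plan is to establish the theorem as a direct conjunction of the necessary and sufficient conditions already proved, namely Proposition \ref{PPS1} and Proposition \ref{PPS2}. Since the statement is a biconditional, I would split it into two implications and dispatch each by invoking the relevant proposition, so that no new computation is required at this final stage; the theorem is genuinely a summary statement whose substance has been distributed across the preceding lemmas.

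First I would handle the necessity (``only if'') direction: assume the quadratic Hamiltonian of bosons is BV diagonalizable, and conclude that its dynamic matrix $D$ is physically diagonalizable. This is exactly the content of Proposition \ref{PPS1}, whose argument rests on Lemma \ref{Similar}—the dynamic matrices of the old and new fields are similar under a BV transformation. In the diagonalized frame the new field satisfies the standard commutation rule, so its dynamic matrix is $D_{1}=\mathrm{diag}(\omega_{1},\dots,\omega_{n},-\omega_{1},\dots,-\omega_{n})$ with every $\omega_{i}\in\mathbb{R}$, and similarity transports both diagonalizability and reality of the spectrum back to $D$.

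Next I would handle the sufficiency (``if'') direction: assume $D$ is physically diagonalizable, and exhibit an explicit BV transformation that diagonalizes $H$. This is Proposition \ref{PPS2}, and it is here that essentially all the work resides. I would point to the chain of lemmas on which it depends: the pairing of eigenvalues and eigenvectors (Lemmas \ref{Lmm1}, \ref{Lmm2}, \ref{Lmm5}, \ref{Lmm6}); the existence of an $I_{-}$-orthonormal eigenbasis respecting the pairing conventions of Eqs. (\ref{VP1}) and (\ref{VP2})—with the modified Gram--Schmidt procedure of Lemma \ref{ZeroSpace} on the zero-eigenvalue subspace being the genuinely delicate step—culminating in the facts that the normal derivative matrix $T_{n}$ lies in $U(n,n)$ (Lemma \ref{BVTn}) and simultaneously diagonalizes the coefficient matrix $M$ by Hermitian congruence (Lemma \ref{MDiag}).

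Combining the two implications closes the biconditional. The main obstacle is not in the present theorem at all but has already been absorbed into Proposition \ref{PPS2}: reconciling the two distinct modes of diagonalization—similarity for the non-Hermitian dynamic matrix $D$ versus Hermitian congruence for the Hermitian matrix $M$—through one and the same transformation $T_{n}$. This is possible precisely because the relation $D=I_{-}M$ of Eq. (\ref{DIM}) lets the single $I_{-}$-orthonormal eigenbasis of $D$ serve both roles at once, so at the level of the theorem statement nothing remains but to cite the two propositions.
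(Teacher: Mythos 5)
Your proposal is correct and matches the paper exactly: the paper itself presents Theorem \ref{BThm} as nothing more than the conjunction of Proposition \ref{PPS1} (necessity, via Lemma \ref{Similar}) and Proposition \ref{PPS2} (sufficiency, via the chain of lemmas culminating in Lemmas \ref{BVTn} and \ref{MDiag}). Your accurate observation that the real work lives in the preceding propositions, not in the theorem itself, is precisely how the paper handles it.
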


This theorem is obviously consistent with our physical intuition: A system
will behavior as a collective of quasi-particles if and only if there exists
a complete set of linearly independent normal modes of motion in the system.

In particular, it converts the BV diagonalization into the eigenvalue
problem of the dynamic matrix, whose theory is very clear and simple
mathematically, and familiar to all of us. Therefore, this theorem makes it
easy for us to find BV transformation and realize BV diagonalization.

Now, let us return to the example \ref{Example1}. If $\left\vert
\alpha\right\vert <\left\vert \gamma\right\vert $, the dynamic matrix has
two imaginary eigenvalues; the Hamiltonian is not BV diagonalizable. If $%
\left\vert \alpha\right\vert =\left\vert \gamma\right\vert $, the dynamic
matrix is itself not diagonalizable; the Hamiltonian is not BV
diagonalizable. If $\left\vert \alpha\right\vert >\left\vert
\gamma\right\vert $, the dynamic matrix has two real eigenvalues, it is
hence physically diagonalizable, therefore, the Hamiltonian is BV
diagonalizable. To sum up, the Hamiltonian is BV diagonalizable only when $%
\left\vert \alpha\right\vert >\left\vert \gamma\right\vert $.

Further, let us find out the normal derivative BV transformation for the
example \ref{Example1} in the case of $\left\vert \alpha \right\vert
>\left\vert \gamma \right\vert $.

Set%
\begin{equation}
\omega =\sqrt{\alpha ^{2}-\left\vert \gamma \right\vert ^{2}}.
\end{equation}%
The normalized eigenvector for $\omega $ can be obtained from Eq. (\ref%
{Exm1EgEv}),
\begin{equation}
v(\omega )=\left\{
\begin{array}{ll}
\frac{1}{\sqrt{2\omega \left( \alpha -\omega \right) }}\left[
\begin{array}{c}
\gamma \\
\omega -\alpha%
\end{array}%
\right] , & \alpha >0 \\
\frac{1}{\sqrt{2\omega \left( \omega -\alpha \right) }}\left[
\begin{array}{c}
\gamma \\
\omega -\alpha%
\end{array}%
\right] , & \alpha <0,%
\end{array}%
\right.
\end{equation}%
with the norm being
\begin{equation}
v^{\ast }(\omega )I_{-}v(\omega )=\left\{
\begin{array}{ll}
1, & \alpha >0 \\
-1, & \alpha <0.%
\end{array}%
\right.
\end{equation}%
According to Eq. (\ref{DBVTn}), the normal BV matrix is%
\begin{equation}
T_{n}=\left\{
\begin{array}{ll}
\lbrack v(\omega ),v(-\omega )], & \alpha >0 \\
\lbrack v(-\omega ),v(\omega )], & \alpha <0,%
\end{array}%
\right.
\end{equation}%
where the normalized eigenvector for $-\omega $ can be given according to
the convention of Eq. (\ref{VP1}). In detail, it reads,
\begin{equation}
T_{n}=\left\{
\begin{array}{cc}
\frac{1}{\sqrt{2\omega \left( \alpha -\omega \right) }}\left[
\begin{array}{cc}
\gamma & \omega -\alpha \\
\omega -\alpha & \gamma ^{\ast }%
\end{array}%
\right] , & \alpha >0 \\
\frac{1}{\sqrt{2\omega \left( \omega -\alpha \right) }}\left[
\begin{array}{cc}
\omega -\alpha & \gamma \\
\gamma ^{\ast } & \omega -\alpha%
\end{array}%
\right] , & \alpha <0.%
\end{array}%
\right.
\end{equation}

The following results can be readily verified,
\begin{equation}
T_{n}^{\dag }I_{-}T_{n}=I_{-},
\end{equation}%
\begin{equation}
T_{n}^{-1}DT_{n}=\left\{
\begin{array}{ll}
\mathrm{diag}(\omega ,-\omega ), & \alpha >0 \\
\mathrm{diag}(-\omega ,\omega ), & \alpha <0,%
\end{array}%
\right.
\end{equation}%
\begin{equation}
T_{n}^{\dag }MT_{n}=\left\{
\begin{array}{ll}
\mathrm{diag}(\omega ,\omega ), & \alpha >0 \\
\mathrm{diag}(-\omega ,-\omega ), & \alpha <0,%
\end{array}%
\right.
\end{equation}%
\begin{equation}
H=\left\{
\begin{array}{ll}
\omega d^{\dag }d+\frac{1}{2}\omega -\frac{1}{2}\alpha , & \alpha >0 \\
-\omega d^{\dag }d-\frac{1}{2}\omega -\frac{1}{2}\alpha , & \alpha <0,%
\end{array}%
\right.
\end{equation}%
where
\begin{equation}
\lbrack d,d^{\dag }]=1.
\end{equation}

In this example, the eigenvalues are not degenerate. Let us look at a
degenerate case.

\begin{example}
\label{Ins2}%
\begin{equation}
H=\alpha (c_{1}^{\dag }c_{1}+c_{2}^{\dag }c_{2})+\gamma c_{1}^{\dag
}c_{2}^{\dag }+\gamma ^{\ast }c_{1}c_{2}.  \label{Exmp3}
\end{equation}
\end{example}

\begin{solution}
The dynamic matrix $D$ is a $4\times 4$ matrix,%
\begin{equation}
D=\left[
\begin{array}{cccc}
\alpha & 0 & 0 & \gamma \\
0 & \alpha & \gamma & 0 \\
0 & -\gamma ^{\ast } & -\alpha & 0 \\
-\gamma ^{\ast } & 0 & 0 & -\alpha%
\end{array}%
\right] .
\end{equation}%
The characteristic equation is
\begin{equation}
\left( \omega ^{2}-\alpha ^{2}+\left\vert \gamma \right\vert ^{2}\right)
^{2}=0.
\end{equation}%
The solutions are
\begin{equation}
\omega =\left\{
\begin{array}{ll}
\pm \sqrt{\alpha ^{2}-\left\vert \gamma \right\vert ^{2}}, & \left\vert
\alpha \right\vert >\left\vert \gamma \right\vert \\
0,\text{ } & \left\vert \alpha \right\vert =\left\vert \gamma \right\vert \\
\pm i\sqrt{\left\vert \gamma \right\vert ^{2}-\alpha ^{2}}, & \left\vert
\alpha \right\vert <\left\vert \gamma \right\vert .%
\end{array}%
\right.
\end{equation}

If $\left\vert \alpha\right\vert <\left\vert \gamma\right\vert $, the
dynamic matrix has imaginary eigenvalues. Of course, the Hamiltonian is not
BV diagonalizable.

It can be readily verified that $D$ has only two linearly independent
eigenvectors if $\left\vert \alpha\right\vert =\left\vert \gamma\right\vert $%
,
\begin{equation}
v_{1}(0)=\left[
\begin{array}{c}
1 \\
0 \\
0 \\
\mp\mathrm{e}^{-i\theta}%
\end{array}
\right] ,\text{ \ }v_{2}(0)=\left[
\begin{array}{c}
0 \\
\mp\mathrm{e}^{i\theta} \\
1 \\
0%
\end{array}
\right] ,
\end{equation}
where $\theta=\arg(\gamma)$, and $\mp$ correspond to $\alpha=\pm\left\vert
\gamma\right\vert $ respectively. It means that $D$ is itself not
diagonalizable. Needless to say, the Hamiltonian is not BV diagonalizable
when $\left\vert \alpha\right\vert =\left\vert \gamma\right\vert $.

If $\left\vert \alpha \right\vert >\left\vert \gamma \right\vert $, there
are a pair of real eigenvalues, i.e., $(\omega ,-\omega )$ where%
\begin{equation}
\omega =\sqrt{\alpha ^{2}-\left\vert \gamma \right\vert ^{2}}.
\end{equation}%
It is found that $\omega $ has two linearly independent eigenvectors, they
can be chosen and orthonormalized as follows,%
\begin{equation}
v_{1}(\omega )=\left\{
\begin{array}{cc}
\left[
\begin{array}{c}
\frac{\gamma }{\sqrt{2\omega \left( \alpha -\omega \right) }} \\
0 \\
0 \\
\frac{\omega -\alpha }{\sqrt{2\omega \left( \alpha -\omega \right) }}%
\end{array}%
\right] , & \alpha >0 \\
\left[
\begin{array}{c}
\frac{\gamma }{\sqrt{2\omega \left( \omega -\alpha \right) }} \\
0 \\
0 \\
\frac{\omega -\alpha }{\sqrt{2\omega \left( \omega -\alpha \right) }}%
\end{array}%
\right] , & \alpha <0,%
\end{array}%
\right.
\end{equation}%
\begin{equation}
v_{2}(\omega )=\left\{
\begin{array}{cc}
\left[
\begin{array}{c}
0 \\
\frac{\omega +\alpha }{\sqrt{2\omega \left( \omega +\alpha \right) }} \\
\frac{-\gamma ^{\ast }}{\sqrt{2\omega \left( \omega +\alpha \right) }} \\
0%
\end{array}%
\right] , & \alpha >0 \\
\left[
\begin{array}{c}
0 \\
\frac{\omega +\alpha }{\sqrt{-2\omega \left( \omega +\alpha \right) }} \\
\frac{\gamma ^{\ast }}{\sqrt{-2\omega \left( \omega +\alpha \right) }} \\
0%
\end{array}%
\right] , & \alpha <0,%
\end{array}%
\right.
\end{equation}%
their norms are
\begin{equation}
v_{1}^{\ast }(\omega )I_{-}v_{1}(\omega )=\left\{
\begin{array}{ll}
1, & \alpha >0 \\
-1, & \alpha <0,%
\end{array}%
\right.
\end{equation}%
\begin{equation}
v_{2}^{\ast }(\omega )I_{-}v_{2}(\omega )=\left\{
\begin{array}{ll}
1, & \alpha >0 \\
-1, & \alpha <0.%
\end{array}%
\right.
\end{equation}%
The orthonormal eigenvectors for $-\omega $, $v_{1}(-\omega )$ and $%
v_{2}(-\omega )$, can be obtained according to the convention of Eq. (\ref%
{VP1}).

Evidently, each eigenvalue is two-fold degenerate. The four eigenvectors, $%
v_{1}(\omega )$, $v_{2}(\omega )$, $v_{1}(-\omega )$ and $v_{2}(-\omega )$,
form an orthonormal basis for $%
\mathbb{C}
^{4}$, i.e.,
\begin{equation}
T_{n}^{\dag }I_{-}T_{n}=I_{-},
\end{equation}%
where $T_{n}$ is the normal derivative BV matrix,%
\begin{equation}
T_{n}=\left\{
\begin{array}{ll}
\lbrack v_{1}(\omega ),v_{2}(\omega ),v_{1}(-\omega ),v_{2}(-\omega )], &
\alpha >0 \\
\lbrack v_{1}(-\omega ),v_{2}(-\omega ),v_{1}(\omega ),v_{2}(\omega )], &
\alpha <0.%
\end{array}%
\right.  \label{TnExm}
\end{equation}%
Those facts demonstrate that the dynamic matrix $D$ is physically
diagonalizable if $\left\vert \alpha \right\vert >\left\vert \gamma
\right\vert $.

To sum up, the Hamiltonian of Eq. (\ref{Exmp3}) is BV diagonalizable only
when $\left\vert \alpha \right\vert >\left\vert \gamma \right\vert $.

When $\left\vert \alpha \right\vert >\left\vert \gamma \right\vert $, the
following results can be verified straightforwardly,%
\begin{equation}
T_{n}^{-}DT_{n}=\left\{
\begin{array}{ll}
\mathrm{diag}(\omega ,\omega ,-\omega ,-\omega ), & \alpha >0 \\
\mathrm{diag}(-\omega ,-\omega ,\omega ,\omega ), & \alpha <0,%
\end{array}%
\right. ,
\end{equation}%
\begin{equation}
T_{n}^{\dag }MT_{n}=\left\{
\begin{array}{ll}
\mathrm{diag}(\omega ,\omega ,\omega ,\omega ), & \alpha >0 \\
\mathrm{diag}(-\omega ,-\omega ,-\omega ,-\omega ), & \alpha <0,%
\end{array}%
\right.
\end{equation}%
\begin{equation}
H=\left\{
\begin{array}{ll}
\omega (d_{1}^{\dag }d_{1}+d_{2}^{\dag }d_{2})+\omega -\alpha , & \alpha >0
\\
-\omega (d_{1}^{\dag }d_{1}+d_{2}^{\dag }d_{2})-\omega -\alpha , & \alpha <0.%
\end{array}%
\right.
\end{equation}%
where
\begin{equation}
\lbrack d_{i},d_{j}^{\dag }]=\delta _{ij},\text{ \ }[d_{i},d_{j}]=0,\text{ \
}[d_{i}^{\dag },d_{j}^{\dag }]=0.
\end{equation}
\end{solution}

The two examples above do not have zero eigenvalue. The following is an
example with zero eigenvalue.

\begin{example}
\label{Ins3}%
\begin{equation}
H=c_{1}^{\dag }c_{1}+c_{2}^{\dag }c_{2}-c_{1}^{\dag }c_{2}-c_{2}^{\dag
}c_{1}.  \label{Exmp4}
\end{equation}
\end{example}

\begin{solution}
The dynamic matrix $D$ is
\begin{equation}
D=\left[
\begin{array}{cccc}
1 & -1 & 0 & 0 \\
-1 & 1 & 0 & 0 \\
0 & 0 & -1 & 1 \\
0 & 0 & 1 & -1%
\end{array}
\right] .
\end{equation}
It has three eigenvalues,
\begin{equation}
\omega=0,\text{ }2,\text{ }-2.
\end{equation}

For the pair of the nonzero eigenvalues, $(2,-2)$, they are nondegenerate,
\begin{equation}
v(2)=\frac{1}{\sqrt{2}}\left[
\begin{array}{c}
1 \\
-1 \\
0 \\
0%
\end{array}%
\right] ,\text{ \ }v(-2)=\frac{1}{\sqrt{2}}\left[
\begin{array}{c}
0 \\
0 \\
1 \\
-1%
\end{array}%
\right] ,
\end{equation}%
where the convention of Eq. (\ref{VP1}) has been used. Their norms are%
\begin{equation}
v^{\dag }(2)I_{-}v(2)=1,\text{ \ }v^{\dag }(-2)I_{-}v(-2)=-1.
\end{equation}

For the zero eigenvalue, it is two-fold degenerate, the orthonormal basis
for this eigenspace can be chosen as follows,%
\begin{equation}
v(0)=\frac{1}{\sqrt{2}}\left[
\begin{array}{c}
1 \\
1 \\
0 \\
0%
\end{array}%
\right] ,\text{ \ }v(-0)=\frac{1}{\sqrt{2}}\left[
\begin{array}{c}
0 \\
0 \\
1 \\
1%
\end{array}%
\right] ,
\end{equation}%
where the convention of Eq. (\ref{VP2}) has been used. Their norms are%
\begin{equation}
v^{\dag }(0)I_{-}v(0)=1,\text{ \ }v^{\dag }(-0)I_{-}v(-0)=-1.
\end{equation}

Therefore, the dynamic matrix $D$ has three real eigenvalues and four
linearly independent eigenvectors, so it is physically diagonalizable.
According to the theorem \ref{BThm}, the Hamiltonian of Eq. (\ref{Exmp4}) is
BV diagonalizable.

According to Eq. (\ref{DBVTn}), the normal derivative BV matrix $T_{n}$ has
the form,
\begin{equation}
T_{n}=\left[
\begin{array}{cccc}
v(2), & v(0), & v(-2), & v(-0)%
\end{array}%
\right] .
\end{equation}%
It is easy to show that%
\begin{equation}
T_{n}^{\dag }I_{-}T_{n}=I_{-},
\end{equation}%
\begin{equation}
T_{n}^{-}DT_{n}=\mathrm{diag}(2,0,-2,-0),
\end{equation}%
\begin{equation}
T_{n}^{\dag }MT_{n}=\mathrm{diag}(2,0,2,0),
\end{equation}%
\begin{equation}
H=2d_{1}^{\dag }d_{1}+0d_{2}^{\dag }d_{2},
\end{equation}%
where
\begin{equation}
\lbrack d_{i},d_{j}^{\dag }]=\delta _{ij},\text{ \ }[d_{i},d_{j}]=0,\text{ \
}[d_{i}^{\dag },d_{j}^{\dag }]=0.
\end{equation}
\end{solution}

For this example, the eigenspace of the zero eigenvalue is two dimensional,
it is the smallest and nondegenerate. Finally, we give an example whose
eigenspace of the zero eigenvalue is more than two dimensional.

\begin{example}
\label{Ins4}%
\begin{eqnarray}
H &=&2c_{1}^{\dag }c_{1}+c_{2}^{\dag }c_{2}+c_{3}^{\dag }c_{3}+\sqrt{2}%
(c_{1}^{\dag }c_{2}+c_{2}^{\dag }c_{1})  \notag \\
&&+\sqrt{2}(c_{1}^{\dag }c_{3}+c_{3}^{\dag }c_{1})+(c_{2}^{\dag
}c_{3}+c_{3}^{\dag }c_{2}).
\end{eqnarray}
\end{example}

\begin{solution}
The dynamic matrix $D$ is%
\begin{equation}
D=\left[
\begin{array}{cccccc}
2 & \sqrt{2} & \sqrt{2} & 0 & 0 & 0 \\
\sqrt{2} & 1 & 1 & 0 & 0 & 0 \\
\sqrt{2} & 1 & 1 & 0 & 0 & 0 \\
0 & 0 & 0 & -2 & -\sqrt{2} & -\sqrt{2} \\
0 & 0 & 0 & -\sqrt{2} & -1 & -1 \\
0 & 0 & 0 & -\sqrt{2} & -1 & -1%
\end{array}%
\right] .
\end{equation}%
There are three eigenvalues,
\begin{equation}
\omega =0,\text{ }4,\text{ }-4.
\end{equation}%
For the pair of the nonzero eigenvalues, $(4,-4)$, they are nondegenerate,
\begin{equation}
v(4)=\left[
\begin{array}{c}
\frac{\sqrt{2}}{2} \\
\frac{1}{2} \\
\frac{1}{2} \\
0 \\
0 \\
0%
\end{array}%
\right] ,\text{ \ }v(-4)=\left[
\begin{array}{c}
0 \\
0 \\
0 \\
\frac{\sqrt{2}}{2} \\
\frac{1}{2} \\
\frac{1}{2}%
\end{array}%
\right] ,
\end{equation}%
with the norms as follows%
\begin{equation}
v^{\dag }(4)I_{-}v(4)=1,\text{ \ }v^{\dag }(-4)I_{-}v(-4)=-1.
\end{equation}%
For the zero eigenvalue, it is four-fold degenerate, the orthonormal basis
for the eigenspace can be chosen as follows,%
\begin{equation}
v_{1}(0)=\left[
\begin{array}{c}
\sqrt{\frac{1}{3}} \\
-\sqrt{\frac{2}{3}} \\
0 \\
0 \\
0 \\
0%
\end{array}%
\right] ,\text{ \ }v_{2}(0)=\left[
\begin{array}{c}
\frac{1}{\sqrt{6}} \\
\frac{1}{2\sqrt{3}} \\
-\frac{\sqrt{3}}{2} \\
0 \\
0 \\
0%
\end{array}%
\right] ,
\end{equation}%
\begin{equation}
v_{1}(-0)=\left[
\begin{array}{c}
0 \\
0 \\
0 \\
\sqrt{\frac{1}{3}} \\
-\sqrt{\frac{2}{3}} \\
0%
\end{array}%
\right] ,\text{ \ }v_{2}(-0)=\left[
\begin{array}{c}
0 \\
0 \\
0 \\
\frac{1}{\sqrt{6}} \\
\frac{1}{2\sqrt{3}} \\
-\frac{\sqrt{3}}{2}%
\end{array}%
\right] ,
\end{equation}%
their norms are%
\begin{eqnarray}
v_{1}^{\dag }(0)I_{-}v_{1}(0) &=&1, \\
v_{2}^{\dag }(0)I_{-}v_{2}(0) &=&1, \\
v_{1}^{\dag }(-0)I_{-}v_{1}(-0) &=&-1, \\
v_{2}^{\dag }(-0)I_{-}v_{2}(-0) &=&-1.
\end{eqnarray}

In a word, the dynamic matrix $D$ has three real eigenvalues and six
linearly independent eigenvectors. The Hamiltonian is BV diagonalizable.

The normal derivative BV matrix $T_{n}$ can be constructed according to Eq. (%
\ref{DBVTn}),
\begin{equation}
T_{n}=\left[ v(4),v_{1}(0),v_{2}(0),v(-4),v_{1}(-0),v_{2}(-0)\right] .
\end{equation}%
It can be verified straightforwardly that%
\begin{equation}
T_{n}^{\dag }I_{-}T_{n}=I_{-},
\end{equation}%
\begin{equation}
T_{n}^{-}DT_{n}=\mathrm{diag}(4,0,0,-4,-0,-0),
\end{equation}%
\begin{equation}
T_{n}^{\dag }MT_{n}=\mathrm{diag}(4,0,0,4,0,0),
\end{equation}%
\begin{equation}
H=4d_{1}^{\dag }d_{1}+0d_{2}^{\dag }d_{2}+0d_{3}^{\dag }d_{3},
\end{equation}%
where
\begin{equation}
\lbrack d_{i},d_{j}^{\dag }]=\delta _{ij},\text{ \ }[d_{i},d_{j}]=0,\text{ \
}[d_{i}^{\dag },d_{j}^{\dag }]=0.
\end{equation}
\end{solution}

\subsection{Uniqueness}

The theorem \ref{BThm} asserts that a quadratic Hamiltonian of bosons can be
BV diagonalized if its dynamic matrix is physically diagonalizable.
Apparently, there may exist many different BV transformations for a certain
Hamiltonian that can all realize the diagonalization. This occurs especially
when some eigenvalues of the dynamic matrix are degenerate because there are
much orthonormal bases for the eigenspace of a degenerate eigenvalue, and
accordingly there are much various choices for the column vectors of the
normal derivative BV matrix. For instance, it is easy to show that, when $%
\alpha >0$, the following two vectors,%
\begin{equation}
v_{1}(\omega )=\left[
\begin{array}{c}
\frac{\gamma }{2\sqrt{2\omega \left( \alpha -\omega \right) }} \\
\frac{\omega +\alpha }{2\sqrt{2\omega \left( \omega +\alpha \right) }} \\
\frac{-\gamma ^{\ast }}{2\sqrt{2\omega \left( \omega +\alpha \right) }} \\
\frac{\omega -\alpha }{2\sqrt{2\omega \left( \alpha -\omega \right) }}%
\end{array}%
\right] ,
\end{equation}%
\begin{equation}
v_{2}(\omega )=\left[
\begin{array}{c}
\frac{\gamma }{2\sqrt{2\omega \left( \alpha -\omega \right) }} \\
\frac{-(\omega +\alpha )}{2\sqrt{2\omega \left( \omega +\alpha \right) }} \\
\frac{\gamma ^{\ast }}{2\sqrt{2\omega \left( \omega +\alpha \right) }} \\
\frac{\omega -\alpha }{2\sqrt{2\omega \left( \alpha -\omega \right) }}%
\end{array}%
\right] ,
\end{equation}%
also constitutes an orthonormal basis for the eigenspace of $\omega $ of the
example \ref{Ins2}. Substituting them into the BV matrix, one will find that
this normal derivative BV matrix can diagonalize the Hamiltonian as the $%
T_{n}$ given in Eq. (\ref{TnExm}). Even if all the eigenvalues are
nondegenerate, the eigenvectors can choose their phases freely. In sum, the
normal derivative BV transformation can never be unique.

That poses a natural question: Can different normal BV transformations give
rise to different diagonalized forms for a certain quadratic Hamiltonian?
Or, put it another way, is the diagonalized form of a quadratic Hamiltonian
unique? The answer will be yes if one does not care the order of the
quadratic terms present in a diagonal Hamiltonian.

\begin{theorem}
\label{UniqueB} If a quadratic Hamiltonian of bosons is BV diagonalizable,
its diagonalized form will be unique up to a permutation of the quadratic
terms.
\end{theorem}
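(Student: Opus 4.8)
The plan is to show that the multiset of frequencies $\{\omega_1,\dots,\omega_n\}$ appearing in any diagonalized form is an intrinsic invariant of $H$, computable from the dynamic matrix $D$ alone, so that two diagonalizations can differ only by a reordering of these frequencies. By Theorem \ref{BThm}, BV diagonalizability means $D$ is physically diagonalizable, and every diagonalization is produced by a normal BV matrix $T$ obeying both Eq. (\ref{Cond1}) and Eq. (\ref{Cond2}), yielding $H=\sum_{i=1}^n\omega_i d_i^\dag d_i+C$ as in Eq. (\ref{B4}). The first step is to prove that the columns of \emph{any} such $T$ are forced to be eigenvectors of $D$.

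To do this I would rewrite the statistics condition $T^\dag I_-T=I_-$ (Lemma \ref{BVTn}) as $T^{-1}=I_-T^\dag I_-$, and combine it with $D=I_-M$ (Eq. (\ref{DIM})) and the diagonalization $T^\dag MT=\Omega$ (a real diagonal matrix, Eq. (\ref{Cond2})) to get $T^{-1}DT=I_-T^\dag MT=I_-\Omega$, which is diagonal. Hence each column $v_j$ of $T$ is an eigenvector of $D$ with eigenvalue $(I_-)_{jj}\Omega_{jj}$, and its $I_-$-norm is $(T^\dag I_-T)_{jj}=(I_-)_{jj}$, i.e. $+1$ for $j\le n$ and $-1$ for $j>n$. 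The BV form of $T$ (Eq. (\ref{TM}), equivalently $\Sigma_xT^*\Sigma_x=T$) forces $v_{n+i}=\Sigma_xv_i^*$, and Lemma \ref{Lmm2} then identifies $v_{n+i}$ as an eigenvector of eigenvalue $-\omega_i$; comparing the two expressions for the eigenvalues shows $\Omega_{ii}=\Omega_{n+i,n+i}=\omega_i$, so that $\omega_i$ is precisely the $D$-eigenvalue carried by the positive-norm column $v_i$.

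Next I would fix an eigenvalue $\mu$ of $D$ and count how often it occurs among the $\{\omega_i\}$. Since eigenvectors for distinct eigenvalues are $I_-$-orthogonal (Lemma \ref{Ortho}) and the $2n$ columns form an $I_-$-orthonormal basis, the columns lying in the eigenspace $E_\mu$ constitute an $I_-$-orthonormal basis of $E_\mu$ (Lemma \ref{OrNormal}). Among them the positive-norm ones are exactly the $v_i$ with $\omega_i=\mu$, so their number is the multiplicity of $\mu$ in the multiset $\{\omega_i\}$; but this number is also the positive index of inertia of the Hermitian form $I_-$ restricted to $E_\mu$. By Sylvester's law of inertia this index depends only on $(E_\mu,I_-|_{E_\mu})$ and not on the chosen orthonormal basis, hence not on the particular $T$. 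Therefore the multiplicity of every $\mu$, and thus the entire multiset $\{\omega_1,\dots,\omega_n\}$, is determined by $H$ alone.

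Finally, the constant is $C=\frac{1}{2}\sum_{i=1}^n\omega_i-\frac{1}{2}\mathrm{tr}(\alpha)$ by Eq. (\ref{B4}); since $\sum_i\omega_i$ is a symmetric function of the now-fixed multiset and $\mathrm{tr}(\alpha)$ is intrinsic to $H$, the constant is also unique. This establishes uniqueness up to a permutation of the quadratic terms. I expect the main obstacle to be the identification step of the second paragraph: one must verify carefully that an arbitrary diagonalizing BV matrix—not merely the normal derivative matrix constructed in the previous subsection—has columns that are genuine eigenvectors of $D$ adapted to its eigenspaces, so that the positive-norm columns in each $E_\mu$ realize a maximal positive-definite subspace and Sylvester's law becomes directly applicable.
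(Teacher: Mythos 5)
Your proof is correct, and it takes a genuinely different route from the paper's. The paper argues by comparing two diagonalized forms head-on: Lemma \ref{Similar} makes the two diagonal dynamic matrices $D_{1}$ and $D_{2}$ similar, forcing the multisets $\{\pm\omega_{i}\}$ and $\{\pm\omega_{i}^{\prime}\}$ to coincide, and the residual sign ambiguity $\omega_{1}=-\omega_{1}^{\prime}$ is then excluded because the connecting matrix $T_{n1}^{-1}T_{n2}$ preserves the metric $I_{-}$ and so cannot carry an eigenvector of norm $+1$ to one of norm $-1$. You instead exhibit the frequency multiset as an invariant computed from $H$ alone: the multiplicity of $\mu$ among $\omega_{1},\dots,\omega_{n}$ equals the positive index of inertia of $I_{-}$ restricted to the eigenspace $E_{\mu}$ of $D$, which Sylvester's law renders independent of the choice of $I_{-}$-orthonormal eigenbasis and hence of the diagonalizing matrix. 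Your preliminary observation---that the columns of \emph{any} BV matrix satisfying Eqs. (\ref{Cond1}) and (\ref{Cond2}) are automatically $I_{-}$-orthonormal eigenvectors of $D$, via $T^{-1}DT=I_{-}T^{\dag}MT$---is sound and is something the paper verifies only for the derivative matrix it explicitly constructs, so it is a worthwhile addition rather than a gap. The trade-off is that the paper's route is more elementary, needing only the spectra of similar diagonal matrices plus a single norm computation, but its eigenvector-matching step is delicate when an eigenvalue is degenerate or when both $\mu$ and $-\mu$ occur among the frequencies; your inertia argument treats all degeneracies uniformly and also pins down the additive constant explicitly. Both proofs ultimately rest on the same mechanism: the sign of the $I_{-}$-norm is what separates a mode of frequency $+\omega$ from its partner of frequency $-\omega$.
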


\begin{proof}
Suppose that there are two diagonalized forms for the Hamiltonian of Eq. (%
\ref{Ham1}). According to Eq. (\ref{B4}), they can be written as%
\begin{eqnarray}
H_{1} &=&\sum_{i=1}^{n}\omega _{i}d_{i}^{\dag }d_{i}+\frac{1}{2}%
\sum_{i=1}^{n}\omega _{i}-\frac{1}{2}\mathrm{tr}(\alpha ), \\
H_{2} &=&\sum_{i=1}^{n}\omega _{i}^{\prime }d_{i}^{\dag }d_{i}+\frac{1}{2}%
\sum_{i=1}^{n}\omega _{i}^{\prime }-\frac{1}{2}\mathrm{tr}(\alpha ).
\end{eqnarray}%
We shall prove that the set $\{\omega _{1},\omega _{2},\cdots ,\omega _{n}\}$
is identical to the set $\{\omega _{1}^{\prime },\omega _{2}^{\prime
},\cdots ,\omega _{n}^{\prime }\}$. Here, if $\omega _{i}=\omega
_{j}^{\prime }$, they are both counted up to the same multiplicity.

According to the lemma \ref{Similar}, the dynamic matrices of $H_{1}$ and $%
H_{2}$ are both similar to that of the Hamiltonian $H$ of Eq. (\ref{Ham1}),%
\begin{eqnarray}
T_{n1}^{-1}DT_{n1} &=&D_{1}, \\
T_{n2}^{-1}DT_{n2} &=&D_{2},
\end{eqnarray}%
where $D_{1}$, $D_{2}$, and $D$ are, respectively, the dynamic matrices for $%
H_{1}$, $H_{2}$ and $H$, and $T_{n1}$ and $T_{n2}$ are both the normal
derivative BV matrices,%
\begin{eqnarray}
T_{n1}^{\dag }I_{-}T_{n1} &=&I_{-},  \label{NBV1} \\
T_{n2}^{\dag }I_{-}T_{n2} &=&I_{-}.  \label{NBV2}
\end{eqnarray}%
Thereby, $D_{1}$ and $D_{2}$ are similar to each other,%
\begin{equation}
\left( T_{n1}^{-1}T_{n2}\right) ^{-1}D_{1}\left( T_{n1}^{-1}T_{n2}\right)
=D_{2}.  \label{D1D2}
\end{equation}%
Observe%
\begin{eqnarray}
D_{1} &=&\mathrm{diag}(\omega _{1},\cdots ,\omega _{n},-\omega _{1},\cdots
,-\omega _{n}), \\
D_{2} &=&\mathrm{diag}(\omega _{1}^{\prime },\cdots ,\omega _{n}^{\prime
},-\omega _{2}^{\prime },\cdots ,-\omega _{n}^{\prime }).
\end{eqnarray}%
We have
\begin{widetext}%
\begin{equation}
\{\omega _{1},\omega _{2},\cdots ,\omega _{n},-\omega _{1},-\omega
_{2},\cdots ,-\omega _{n}\}=\{\omega _{1}^{\prime },\omega _{2}^{\prime
},\cdots ,\omega _{n}^{\prime },-\omega _{1}^{\prime },-\omega _{2}^{\prime
},\cdots ,-\omega _{n}^{\prime }\},
\end{equation}%
\end{widetext}
where each eigenenergy is counted up to its multiplicity. As a consequence,%
\begin{equation}
\omega _{1}\in \{\omega _{1}^{\prime },\omega _{2}^{\prime },\cdots ,\omega
_{n}^{\prime }\},
\end{equation}%
or
\begin{equation}
\omega _{1}\in \{-\omega _{1}^{\prime },-\omega _{2}^{\prime },\cdots
,-\omega _{n}^{\prime }\}.
\end{equation}%
If
\begin{equation}
\omega _{1}\in \{-\omega _{1}^{\prime },-\omega _{2}^{\prime },\cdots
,-\omega _{n}^{\prime }\},
\end{equation}%
let $\omega _{1}$ be $-\omega _{1}^{\prime }$, i.e., $\omega _{1}=-\omega
_{1}^{\prime }$, without loss of generality. Thus, one obtains from Eq. (\ref%
{D1D2})%
\begin{equation}
v(\omega _{1})=T_{n1}^{-1}T_{n2}v(-\omega _{1}^{\prime }).
\end{equation}%
This gives rise to%
\begin{equation}
v^{\dag }(\omega _{1})I_{-}v(\omega _{1})=v^{\dag }(-\omega _{1}^{\prime
})I_{-}v(-\omega _{1}^{\prime }),  \label{Contrad}
\end{equation}%
where Eqs. (\ref{NBV1}) and (\ref{NBV2}) have been used. However,
\begin{eqnarray}
v^{\dag }(\omega _{1})I_{-}v(\omega _{1}) &=&1, \\
v^{\dag }(-\omega _{1}^{\prime })I_{-}v(-\omega _{1}^{\prime }) &=&-1.
\end{eqnarray}%
Obviously, they contradict the equation (\ref{Contrad}). Therefore,
\begin{equation}
\omega _{1}\notin \{-\omega _{1}^{\prime },-\omega _{2}^{\prime },\cdots
,-\omega _{n}^{\prime }\},
\end{equation}%
it must belong to the set $\{\omega _{1}^{\prime },\omega _{2}^{\prime
},\cdots ,\omega _{n}^{\prime }\}$, i.e.,
\begin{equation}
\omega _{1}\in \{\omega _{1}^{\prime },\omega _{2}^{\prime },\cdots ,\omega
_{n}^{\prime }\}.
\end{equation}%
This implies that%
\begin{equation}
\{\omega _{1},\omega _{2},\cdots ,\omega _{n}\}\subset \{\omega _{1}^{\prime
},\omega _{2}^{\prime },\cdots ,\omega _{n}^{\prime }\}.
\end{equation}%
For the same reason,%
\begin{equation}
\{\omega _{1}^{\prime },\omega _{2}^{\prime },\cdots ,\omega _{n}^{\prime
}\}\subset \{\omega _{1},\omega _{2},\cdots ,\omega _{n}\}.
\end{equation}%
So
\begin{equation}
\{\omega _{1},\omega _{2},\cdots ,\omega _{n}\}=\{\omega _{1}^{\prime
},\omega _{2}^{\prime },\cdots ,\omega _{n}^{\prime }\}.
\end{equation}%
This means that $H_{1}$ and $H_{2}$ are identical up to a permutation of the
members of the set $\{\omega _{1},\omega _{2},\cdots ,\omega _{n}\}$. In
other words, the diagonalized form of a quadratic Hamiltonian is unique up
to a permutation of the quadratic terms.
\end{proof}

Up to now, the BV diagonalization is always meant to diagonalize a
Hamiltonian with respect to the normal bosons, which fulfill the standard
commutation rules. Sometimes, e.g., in quantum electrodynamics, the
so-called time-polarized bosons are needed, they satisfy the abnormal
commutation relations,%
\begin{equation}
\lbrack b_{i},b_{j}^{\dag }]=-\delta _{ij},\text{ \ }[b_{i},b_{j}]=0,\text{
\ }[b_{i}^{\dag },b_{j}^{\dag }]=0.  \label{AbCom}
\end{equation}%
If one exchanges the roles of $b_{i}$ and $b_{i}^{\dag }$, and interprets
them respectively as a creator and annihilator, i.e.,%
\begin{equation}
d_{i}=b_{i}^{\dag },\text{ \ }d_{i}^{\dag }=b_{i},
\end{equation}%
he has%
\begin{equation}
\lbrack d_{i},d_{j}^{\dag }]=\delta _{ij},\text{ \ }[d_{i},d_{j}]=0,\text{ \
}[d_{i}^{\dag },d_{j}^{\dag }]=0.
\end{equation}%
That is to say, the time-polarized bosons can always be transformed into the
normal bosons, and vice versa. Hence, a Hamiltonian can also be diagonalized
with respect to the time-polarized bosons. Accordingly, Eq. (\ref{B4}) will
become%
\begin{equation}
H=\sum_{i=1}^{n}-\omega _{i}\left( -b_{i}^{\dag }b_{i}\right) -\frac{1}{2}%
\sum_{i=1}^{n}\omega _{i}-\frac{1}{2}\mathrm{tr}(\alpha ),
\end{equation}%
where $n_{i}=-b_{i}^{\dag }b_{i}$ are the particle-number operators for the
time-polarized bosons. Actually, a Hamiltonian can be diagonalized with
respect to the normal bosons, or the time-polarized bosons, or the mixture
of both the normal and time-polarized bosons as you wish, e.g.,%
\begin{eqnarray}
H &=&\sum_{i=1}^{m}\omega _{i}d_{i}^{\dag }d_{i}+\sum_{i=m+1}^{n}\omega
_{i}b_{i}^{\dag }b_{i}  \notag \\
&&+\frac{1}{2}\sum_{i=1}^{m}\omega _{i}-\frac{1}{2}\sum_{i=m+1}^{n}\omega
_{i}-\frac{1}{2}\mathrm{tr}(\alpha ),
\end{eqnarray}%
where $0\leq m\leq n$. This fact will be used in Sec. \ref{TPP}. Anyway, the
diagonalization is unique with respect to the normal bosons. The normal
bosons will be used, by default, for the diagonalization of the Bose system
unless otherwise specified.

By the way, we would like to note that the conclusions up to now are also
valid for the Bose system whose Hamiltonian is represented quadratically
with regard to the time-polarized bosons, or the mixture of both the normal
and time-polarized bosons. That is because, as mentioned above, all the
time-polarized bosons can be transformed into the normal bosons.

To conclude, a quadratic Hamiltonian of bosons has BV diagonalization if and
only if its dynamic matrix is physically diagonalizable. If the
diagonalization exists, its form is unique.

Thus far, a whole theory of diagonalization has been achieved for the Bose
system.

\section{Diagonalization Theory of Fermi Systems \label{DTFS}}

In this section, we turn to the Fermi case. We shall study first the
existence and then the uniqueness of the BV diagonalization for the Fermi
system.

\subsection{Existence}

The Heisenberg equation for the fermionic field $\psi$ can be derived from
Eq. (\ref{Ham1}),
\begin{equation}
i\frac{\mathrm{d}}{\mathrm{d}t}\psi=D\psi,
\end{equation}
where $D$ is the dynamic matrix for the Fermi system,
\begin{equation}
D=\left[
\begin{array}{cc}
\alpha & \gamma \\
\gamma^{\dag} & -\widetilde{\alpha}%
\end{array}
\right] .
\end{equation}
In contrast to the Bose system where the dynamic matrix is distinct from the
coefficient matrix, the dynamic matrix $D$ is now identical to the
coefficient matrix $M$,
\begin{equation}
M=\left[
\begin{array}{cc}
\alpha & \gamma \\
\gamma^{\dag} & -\widetilde{\alpha}%
\end{array}
\right] .
\end{equation}
This demonstrates that the coefficient matrix $M$ will control the dynamic
behavior of the system, just as the dynamic matrix $D$. That is the radical
difference between the Fermi and Bose systems. For the latter, as we know,
the coefficient matrix does not control the dynamic behavior of the system.
Now that $D=M$ and $M$ is Hermitian, $D$ is Hermitian, too. That is another
feature of the Fermi system, which will bring us much convenience.

Similar to Eq. (\ref{DIM}), the relation between $D$ and $M$ can be formally
written as
\begin{equation}
D=I_{+}M.
\end{equation}
This relation is useful in the diagonalization of the Fermi system.

As before, let us consider the eigenvalue problem,\
\begin{equation}
\omega\psi=D\psi.
\end{equation}

\begin{lemma}
For a quadratic Hamiltonian of fermions, its dynamic matrix is always BV
diagonalizable.
\end{lemma}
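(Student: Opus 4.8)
The plan is to exploit the fact, just recorded above, that for a fermionic quadratic Hamiltonian the dynamic matrix coincides with the coefficient matrix, $D = M$, and is therefore Hermitian. A Hermitian matrix is always unitarily diagonalizable and has only real eigenvalues, so $D$ is automatically physically diagonalizable; the obstructions that can spoil the Bose case—complex eigenvalues, or outright failure of diagonalizability, as in Example \ref{Example1}—simply cannot occur here. The entire task therefore reduces to upgrading the unitary diagonalization of $D$ into a genuine BV diagonalization, i.e. to exhibiting a diagonalizing matrix $T$ that has the BV form of Eq. (\ref{TM}) and that preserves the anticommutation relations.

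First I would re-establish the spectral pairing, the fermionic analogue of Lemma \ref{Lmm2}. Using the fermionic symmetry $\widetilde{\gamma} = -\gamma$ (together with $\alpha^{\dag} = \alpha$) one verifies directly that $\Sigma_{x} D^{\ast} \Sigma_{x} = -D$, hence $D\Sigma_{x} = -\Sigma_{x}D^{\ast}$. Consequently, whenever $Dv(\omega) = \omega v(\omega)$ the vector $v(-\omega) := \Sigma_{x} v^{\ast}(\omega)$ satisfies $Dv(-\omega) = -\omega v(-\omega)$, exactly as in Eq. (\ref{VWW}). Since $D$ is Hermitian every $\omega$ is real, so the spectrum is symmetric about the origin, and the antiunitary involution $v \mapsto \Sigma_{x}v^{\ast}$ maps the eigenspace of $\omega$ bijectively onto that of $-\omega$; in particular the two eigenspaces have equal dimension. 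Because the total dimension is $2n$ and the nonzero eigenvalues are thereby grouped into pairs of equal multiplicity, the eigenspace of the zero eigenvalue is forced to be even dimensional.

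Next I would assemble $T$ by the construction of Eqs. (\ref{PsiToPhi})–(\ref{DBVTn}), but now with respect to the metric $I_{+}$ in place of $I_{-}$. The decisive simplification is that $I_{+}$ is the ordinary positive-definite inner product: as in Lemma \ref{Ortho}, but now trivially, eigenvectors belonging to distinct real eigenvalues are orthogonal, and within each eigenspace an orthonormal basis is delivered by the ordinary Gram–Schmidt process—no isotropic vectors arise and the delicate modified procedure of Lemma \ref{ZeroSpace} is never needed. For each nonzero pair I place an orthonormal basis $\{v_{l}(\omega)\}$ in the left half of $T$ and its images $\{v_{l}(-\omega) = \Sigma_{x}v_{l}^{\ast}(\omega)\}$ in the right half; these remain orthonormal since $\Sigma_{x}I_{+}\Sigma_{x} = I_{+}$, the analogue of Eq. (\ref{PNnorm}) now yielding norm $+1$ rather than $-1$. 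For the even-dimensional zero eigenspace I would start from an orthonormal basis fixed by the involution $v \mapsto \Sigma_{x}v^{\ast}$ and take the conjugate combinations $e = (g + ig')/\sqrt{2}$ and $\Sigma_{x}e^{\ast} = (g - ig')/\sqrt{2}$; a short computation shows these are orthonormal and pair up correctly, so the convention of Eq. (\ref{VP2}) is met without disturbing orthonormality.

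Built in this way, $T$ has columns orthonormal with respect to $I_{+}$, so $T^{\dag}I_{+}T = I_{+}$, i.e. $T$ is unitary and the new field is a standard fermionic field; and by the conjugation pairing $T$ obeys $\Sigma_{x}T^{\ast}\Sigma_{x} = T$, so it is a BV matrix in the sense of Eq. (\ref{TM}). Finally, because $T$ is unitary, $T^{\dag}MT = T^{-1}DT$ is the diagonal matrix of real eigenvalues, which diagonalizes $H$ exactly as in Proposition \ref{PPS2}. The single point that genuinely demands care is the zero eigenspace: one must confirm both that it is even dimensional (which follows from the spectral symmetry above) and that its orthonormal basis can be arranged to respect the pairing of Eq. (\ref{VP2}). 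This is the counterpart of the only subtle step in the Bose treatment, but it is markedly easier here because $I_{+}$ is definite and all norms equal $+1$, which is precisely why the fermionic dynamic matrix is \emph{always} BV diagonalizable.
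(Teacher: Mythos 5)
Your opening observation --- that for a Fermi system $D=M$ is Hermitian, hence unitarily diagonalizable with all eigenvalues real --- is the paper's entire proof of this lemma; by the paper's own definitions the lemma asserts only that the dynamic matrix is physically diagonalizable, so you could have stopped after your first two sentences. Everything that follows in your write-up is really a proof of the stronger Theorem \ref{FThm} (that the fermionic \emph{Hamiltonian} is always BV diagonalizable), which the paper establishes separately through the subsequent chain of lemmas (\ref{Lmm2}, \ref{Lmm5}--\ref{Lmm7}, \ref{Ortho}--\ref{ZeroSpace}, \ref{BVTn}, \ref{MDiag} adapted to $I_{+}$). That extra material is correct --- the identity $\Sigma_{x}D^{\ast}\Sigma_{x}=-D$ does follow from $\alpha^{\dag}=\alpha$ and $\widetilde{\gamma}=-\gamma$ and yields the spectral pairing, and the evenness of the zero eigenspace follows from the pairing of the nonzero multiplicities --- and one piece of it is genuinely nicer than the paper's version: for the zero eigenspace you pick an orthonormal basis fixed by the antiunitary involution $v\mapsto\Sigma_{x}v^{\ast}$ (which squares to $+1$, so a real form exists) and then build the conjugate pairs $(g+ig')/\sqrt{2}$ and $(g-ig')/\sqrt{2}$. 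This replaces the paper's fermionic adaptation of Lemma \ref{ZeroSpace}, an induction using the Cauchy inequality and a modified Gram--Schmidt process, by a two-line construction, made possible precisely because $I_{+}$ is positive definite so no isotropic vectors can occur.
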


\begin{proof}
As mentioned above, the dynamic matrix for a Fermi system is Hermitian. It
is well known that a Hermitian matrix is diagonalizable, and all its
eigenvalues are real. Therefore, the dynamic matrix for a Fermi system is
always BV diagonalizable.
\end{proof}

This property is basically different from the Bose system. There, the
dynamic matrix is not always diagonalizable. Needless to say, it is not
always BV diagonalizable.

As $D=M$ and both are Hermitian, they can be diagonalized by an exactly
identical unitary transformation. Mathematically, a unitary transformation
is always a similar transformation, the diagonalization manner of $D$ is not
inharmonious with that of $M$ any longer. The problem present in the Bose
system disappears spontaneously in the Fermi system.

Analogous to the Bose system, one can easily show that the lemmas \ref{Lmm1}%
, \ref{Lmm2}, \ref{Lmm3}, \ref{Lmm4}, and \ref{Similar} are all valid for
the Fermi system. Since the dynamic matrix is always BV diagonalizable now,
the necessary condition for the BV diagonalization will hold automatically
for a Fermi system. That guarantees further that the lemmas \ref{Lmm5}, \ref%
{Lmm6}, and \ref{Lmm7} also hold for the Fermi system. All those eight
lemmas stem from the Hermiticity of the Hamiltonian, and are irrespective of
the statistics and metric of the system.

By introducing a sesquilinear form with $I_{+}$ and substituting $I_{-}$
with $I_{+}$, the lemmas \ref{Ortho} and \ref{OrNormal} are both valid for
the Fermi system, the only difference lies in that the norm with respect to $%
I_{+}$ is positive definite whereas the norm with respect to $I_{-}$ is
indefinite. The lemma \ref{ZeroSpace} also holds for the Fermi case, but its
proof needs quite a lot of modification, which we give below.

\begin{proof}
The eigenspace $V_{0}$ of zero eigenvalue is even dimensional, let the
dimension be $2m$ ($m\in%
\mathbb{N}
$). According to the lemma \ref{Lmm6}, there always exists a basis for $%
V_{0} $ that satisfies the requirement of Eq. (\ref{VP2}), i.e.,%
\begin{equation}
v_{m+l}(0)=\Sigma_{x}v_{l}^{\ast}(0),\text{ \ }l=1,2,\cdots,m,
\end{equation}

When $m=1$, $\dim(V_{0})=2$, there are two basis vectors, i.e., $v_{1}(0)$
and $v_{2}(0)$, they are linearly indepentdent.

First of all, we would make $v_{1}(0)$ normalized,%
\begin{equation}
v_{1}^{\dag }(0)I_{+}v_{1}(0)=1.
\end{equation}%
And then we consider $v_{2}(0)$. In fact, it is also normalized,
\begin{equation}
v_{2}^{\dag }(0)I_{+}v_{2}(0)=1,
\end{equation}%
that is because%
\begin{equation}
v_{2}(0)=\Sigma _{x}v_{1}^{\ast }(0).
\end{equation}%
There are two possible cases for $v_{2}(0)$: (1) It is orthogonal to $%
v_{1}(0)$. (2) It is not orthogonal to $v_{1}(0)$.

If $v_{2}(0)\bot v_{1}(0)$, i.e.,
\begin{equation}
v_{1}^{\dag }(0)I_{+}v_{2}(0)=0,
\end{equation}%
we have%
\begin{eqnarray}
v_{1}^{\dag }(0)I_{+}v_{1}(0) &=&1,  \label{V2V1A} \\
v_{2}^{\dag }(0)I_{+}v_{2}(0) &=&1, \\
v_{1}^{\dag }(0)I_{+}v_{2}(0) &=&0.
\end{eqnarray}%
and%
\begin{equation}
v_{2}(0)=\Sigma _{x}v_{1}^{\ast }(0),  \label{V2V1B}
\end{equation}%
So the basis $\{v_{1}(0),v_{2}(0)\}$ is itself orthonormal and satisfies the
requirement of Eq. (\ref{VP2}). In other words, the lemma holds if $%
v_{2}(0)\bot v_{1}(0)$.

If $v_{2}(0)$ is not orthogonal to $v_{1}(0)$, i.e.,%
\begin{equation}
v_{1}^{\dag }(0)I_{+}v_{2}(0)\neq 0,
\end{equation}%
we shall introduce two vectors $w_{1}(0)$ and $w_{2}(0)$ as follows,%
\begin{eqnarray}
w_{1}(0) &=&av_{1}(0)+bv_{2}(0),  \label{V10} \\
w_{2}(0) &=&\Sigma _{x}w_{1}^{\ast }(0).  \label{V20}
\end{eqnarray}%
Here $a\in
\mathbb{C}
$ and $b\in
\mathbb{C}
$ are two coefficients, they will be determined by the orthonormal
conditions,%
\begin{eqnarray}
w_{1}^{\dag }(0)I_{+}w_{1}(0) &=&1,  \label{V1V11} \\
w_{1}^{\dag }(0)I_{+}w_{2}(0) &=&0.  \label{V1V20}
\end{eqnarray}%
Equations (\ref{V10}) and (\ref{V20}) show that both $w_{1}(0)$ and $%
w_{2}(0) $ are linear combinations of $v_{1}(0)$ and $v_{2}(0)$.
Consequently, $w_{1}(0)$ and $w_{2}(0)$ are also the eigenvectors of zero
eigenvalue, i.e., $w_{1}(0)\in V_{0}$ and $w_{2}(0)\in V_{0}$.

Observe
\begin{equation}
v_{1}^{\dag }(0)I_{+}v_{2}(0)=\left[ \widetilde{v}_{1}(0)I_{+}\Sigma
_{x}v_{1}(0)\right] ^{\ast }.
\end{equation}%
We can adjust the phase of $v_{1}(0)$ anew so that
\begin{equation}
v_{1}^{\dag }(0)I_{+}v_{2}(0)>0.
\end{equation}%
By use of Cauchy inequality, we obtain
\begin{equation}
v_{1}^{\dag }(0)I_{+}v_{2}(0)<\sqrt{v_{1}^{\dag }(0)I_{+}v_{1}(0)}\sqrt{%
v_{2}^{\dag }(0)I_{+}v_{2}(0)},
\end{equation}%
where we have used the fact that $v_{1}(0)$ and $v_{2}(0)$ are linearly
independent. Since%
\begin{equation}
\sqrt{v_{1}^{\dag }(0)I_{+}v_{1}(0)}\sqrt{v_{2}^{\dag }(0)I_{+}v_{2}(0)}=1,
\end{equation}%
we have%
\begin{equation}
v_{1}^{\dag }(0)I_{+}v_{2}(0)<1.
\end{equation}%
In brief, we can alway have%
\begin{equation}
0<v_{1}^{\dag }(0)I_{+}v_{2}(0)<1,
\end{equation}%
when $v_{2}(0)$ is not orthogonal to $v_{1}(0)$.

Under such choice, Eqs. (\ref{V1V11}) and (\ref{V1V20}) become%
\begin{eqnarray}
a^{\ast }a+b^{\ast }b+\left( a^{\ast }b+b^{\ast }a\right) v_{1}^{\dag
}(0)I_{+}v_{2}(0) &=&1, \\
\left( a^{\ast }a^{\ast }+b^{\ast }b^{\ast }\right) v_{1}^{\dag
}(0)I_{+}v_{2}(0)+2a^{\ast }b^{\ast } &=&0.
\end{eqnarray}%
It can be readily confirmed that there exists at least the following real
solution for the coefficients $a$ and $b$,%
\begin{eqnarray}
a &=&\frac{1}{2\sqrt{1+v_{1}^{\dag }(0)I_{+}v_{2}(0)}}  \notag \\
&&+\frac{1}{2\sqrt{1-v_{1}^{\dag }(0)I_{+}v_{2}(0)}}, \\
b &=&\frac{1}{2\sqrt{1+v_{1}^{\dag }(0)I_{+}v_{2}(0)}}  \notag \\
&&-\frac{1}{2\sqrt{1-v_{1}^{\dag }(0)I_{+}v_{2}(0)}}.
\end{eqnarray}%
With this solution, we have%
\begin{gather}
w_{1}^{\dag }(0)I_{+}w_{1}(0)=1, \\
w_{2}^{\dag }(0)I_{+}w_{2}(0)=1, \\
w_{1}^{\dag }(0)I_{+}w_{2}(0)=0, \\
w_{2}(0)=\Sigma _{x}w_{1}^{\ast }(0).
\end{gather}%
That is to say, the set $\{w_{1}(0),w_{2}(0)\}$ will form an orthonormal
basis for $V_{0}$, and satisfy the requirement of Eq. (\ref{VP2}). This
implies that the lemma also holds if $v_{2}(0)$ is not orthogonal to $%
v_{1}(0)$.

To sum up, the lemma will always hold when $m=1$.

Suppose that the lemma holds when $m=l$ ($l\in
\mathbb{N}
$). We consider then the case where $m=l+1$. Obviously, it has a proper
subspace $W$ spanned by the linearly independent set $\{v_{1}(0),v_{l+2}(0)%
\} $, i.e.,
\begin{equation}
W=\mathrm{span}(v_{1}(0),v_{l+2}(0)).
\end{equation}%
Taking notice of%
\begin{equation}
v_{l+2}(0)=\Sigma _{x}v_{1}^{\ast }(0),
\end{equation}%
and following the same arguments as those for the case of $m=1$, we can
obtain an orthonormal basis for $W$,%
\begin{equation}
v_{i}^{\dag }(0)I_{-}v_{j}(0)=-\lambda _{i}\delta _{ij},
\end{equation}%
where $\lambda _{i}=\pm 1$ and $i,j=1,l+2$. It is evident that this basis
satisfies the requirement of Eq. (\ref{VP2}).

The rest steps of mathematical induction are simply similar to those for the
Bose case. The convention of Eq. (\ref{VP2}) can be kept by the modified
Gram-Schmidt orthogonalization process.
\end{proof}

The lemma \ref{BasisC} still holds for the Fermi system, with $\lambda
_{i}\equiv1$ for $i=1,2,\cdots,2n$.

Since all the eigenvectors are normalized to $+1$ now, one can not use the
sign of the norm to stipulate an order within a mode pair. Here, we shall
resort to the sign of the eigenvalue: The first eigenvalue in a pair is
positive, and the second one negative; it is arbitrary if both the
eigenvalues in a pair are equal to zero. Under such stipulation, the normal
derivative BV transformation has the form,
\begin{widetext}%
\begin{gather}
\psi =T_{n}\varphi ,  \label{Stipu1} \\
T_{n}=\left[
\begin{array}{cccccccc}
v(\omega _{1}), & v(\omega _{2}), & \cdots , & v(\omega _{n}), & v(-\omega
_{1}), & v(-\omega _{2}), & \cdots , & v(-\omega _{n})%
\end{array}%
\right] ,  \label{Stipu2}
\end{gather}%
\end{widetext}
where
\begin{equation}
\omega _{i}\geq 0,\text{ \ }i=1,2,\cdots ,n.  \label{Stipu3}
\end{equation}%
That is to say, the left half of $T_{n}$ is filled with the eigenvectors
with nonnegative eigenvalues; the right half of $T_{n}$ is filled with the
eigenvectors with nonpositive eigenvalues.

With $T_{n}$ ordered as above, the lemma \ref{BVTn} holds for the Fermi case,%
\begin{equation}
T_{n}^{\dag }I_{+}T_{n}=I_{+},
\end{equation}%
i.e., $T_{n}$ is a member of the $U(2n)$ group \cite{Chen}. This lemma
asserts that the new filed is a standard fermionic field.

The lemma \ref{MDiag} must be modified as follows,%
\begin{equation}
T_{n}^{\dag }MT_{n}=\mathrm{diag}(\omega _{1},\cdots ,\omega _{n},-\omega
_{1},\cdots ,-\omega _{n}).
\end{equation}%
That is because%
\begin{eqnarray}
T_{n}^{\dag }MT_{n} &=&T_{n}^{\dag }I_{+}DT_{n}  \notag \\
&=&T_{n}^{\dag }I_{+}T_{n}T_{n}^{-1}DT_{n}  \notag \\
&=&T_{n}^{-1}DT_{n},
\end{eqnarray}%
where $T_{n}^{\dag }I_{+}T_{n}=I_{+}$ has been used.

At last, we arrive at the diagonalization theorem for the Fermi system.

\begin{theorem}
\label{FThm} Any quadratic Hamiltonian of fermions is BV diagonalizable.
\end{theorem}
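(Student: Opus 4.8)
The plan is to read off the theorem as the direct culmination of the lemmas assembled above for the Fermi system, in complete parallel with the proof of Proposition~\ref{PPS2} and Theorem~\ref{BThm} for bosons. The decisive simplification is that, for fermions, the dynamic matrix coincides with the coefficient matrix, $D=M$, and is therefore Hermitian; by the lemma recorded just after the eigenvalue problem, it is \emph{always} diagonalizable with real eigenvalues, i.e.\ always physically diagonalizable. Thus the hypothesis that had to be imposed in the Bose case holds here unconditionally, and no quadratic Hamiltonian of fermions is excluded.

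First I would invoke the Fermi analogues of Lemmas~\ref{Lmm5}--\ref{Lmm7}, \ref{Ortho}, \ref{OrNormal}, \ref{ZeroSpace}, and \ref{BasisC} to assemble the normal derivative BV matrix $T_{n}$ of Eqs.~(\ref{Stipu1})--(\ref{Stipu3}), whose columns form an $I_{+}$-orthonormal complete set of eigenvectors of $D$, paired according to the conventions of Eqs.~(\ref{VP1}) and (\ref{VP2}). Two properties then do all the work: the Fermi version of Lemma~\ref{BVTn},
\begin{equation}
T_{n}^{\dag }I_{+}T_{n}=I_{+},
\end{equation}
which guarantees that the new field $\varphi$ is a standard fermionic field and hence that the statistics condition (\ref{Cond1}) is met; and the Fermi-modified Lemma~\ref{MDiag},
\begin{equation}
T_{n}^{\dag }MT_{n}=\mathrm{diag}(\omega _{1},\cdots ,\omega _{n},-\omega _{1},\cdots ,-\omega _{n}),
\end{equation}
which is manifestly diagonal and so supplies condition (\ref{Cond2}). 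By Definition~\ref{BVD} this already establishes that $H$ is BV diagonalizable.

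To exhibit the diagonal Hamiltonian explicitly, I would substitute $\psi =T_{n}\varphi$ into Eq.~(\ref{Ham2}) (with the upper sign, appropriate to fermions) to obtain $H=\tfrac{1}{2}\varphi ^{\dag }T_{n}^{\dag }MT_{n}\varphi +\tfrac{1}{2}\mathrm{tr}(\alpha )$, and then expand using the block structure of $\varphi$. The lower block contributes terms $-\tfrac{1}{2}\omega _{i}\,d_{i}d_{i}^{\dag }$, which I would re-express through the standard fermionic anticommutator $d_{i}d_{i}^{\dag }=1-d_{i}^{\dag }d_{i}$, collecting everything into
\begin{equation}
H=\sum_{i=1}^{n}\omega _{i}\,d_{i}^{\dag }d_{i}-\frac{1}{2}\sum_{i=1}^{n}\omega _{i}+\frac{1}{2}\mathrm{tr}(\alpha ),
\end{equation}
a diagonalized form in the number operators of genuine fermions.

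In contrast to the Bose case, there is no real obstacle here: all the substantive difficulty has been front-loaded into the preceding lemmas, most notably the Fermi version of Lemma~\ref{ZeroSpace} with its modified Gram--Schmidt argument in the positive-definite metric $I_{+}$, and the unconditional Hermiticity of $D=M$. The only point demanding care is the bookkeeping of signs: unlike the bosonic $T_{n}^{\dag }MT_{n}$, the lower block here carries $-\omega _{i}$, and one must rearrange the anticommutator correctly so that the two blocks combine into a single positive multiple of each number operator plus a constant. The theorem is therefore essentially a corollary of the constructions already in place.
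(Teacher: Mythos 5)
Your proposal is correct and follows essentially the same route as the paper: it exploits $D=M$ being Hermitian to get unconditional physical diagonalizability, assembles the normal derivative BV matrix via the Fermi versions of the pairing and orthonormalization lemmas, and uses $T_{n}^{\dag }I_{+}T_{n}=I_{+}$ together with $T_{n}^{\dag }MT_{n}=\mathrm{diag}(\omega _{1},\cdots ,\omega _{n},-\omega _{1},\cdots ,-\omega _{n})$ to land on Eq.~(\ref{F4}). The sign bookkeeping through the anticommutator is handled correctly, matching the paper's diagonalized form.
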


Apparently, the diagonalized form for the Hamiltonian is%
\begin{equation}
H=\sum_{i=1}^{n}\omega_{i}d_{i}^{\dag}d_{i}-\frac{1}{2}\sum_{i=1}^{n}%
\omega_{i}+\frac{1}{2}\mathrm{tr}(\alpha),  \label{F4}
\end{equation}
where all the eigenenergies are nonnegative,
\begin{equation}
\omega_{i}\geq0,\text{ \ }i=1,2,\cdots,n.
\end{equation}

Here, it is worth emphasizing that the BV diagonalization for a Fermi system
is itself of unitary diagonalization, that is because $T_{n}$ is, in fact, a
unitary matrix, $T_{n}^{\dag}T_{n}=I_{+}$.

All in all, the BV diagonalization for a quadratic Hamiltonian of fermions
is much simpler than that for a quadratic Hamiltonian of bosons.

\begin{example}
\label{Insf1}%
\begin{equation}
H=\alpha(c_{1}^{\dag}c_{1}+c_{2}^{\dag}c_{2})+\gamma(c_{1}^{\dag}c_{2}^{\dag
}-c_{1}c_{2}).
\end{equation}
\end{example}

\begin{solution}
The dynamic matrix $D$ is
\begin{equation}
D=\left[
\begin{array}{cccc}
\alpha & 0 & 0 & \gamma \\
0 & \alpha & -\gamma & 0 \\
0 & -\gamma & -\alpha & 0 \\
\gamma & 0 & 0 & -\alpha%
\end{array}%
\right] .
\end{equation}%
It has only a pair of eigenvalues, $(\omega ,-\omega )$ where%
\begin{equation}
\omega =\sqrt{\alpha ^{2}+\gamma ^{2}},
\end{equation}%
they are both two-fold degenerate. The normal BV matrix can be chosen as
follows,%
\begin{eqnarray}
T_{n} &=&\left[
\begin{array}{cccc}
v_{1}(\omega ), & v_{2}(\omega ), & v_{1}(-\omega ), & v_{2}(-\omega )%
\end{array}%
\right]  \notag \\
&=&\frac{1}{\sqrt{\left( \omega -\alpha \right) ^{2}+\gamma ^{2}}}  \notag \\
&&\times
\begin{bmatrix}
\gamma & 0 & 0 & \omega -\alpha \\
0 & -\gamma & \omega -\alpha & 0 \\
0 & \omega -\alpha & \gamma & 0 \\
\omega -\alpha & 0 & 0 & -\gamma%
\end{bmatrix}%
,
\end{eqnarray}%
where the convention of Eq. (\ref{VP1}) has been used for $v_{1}(-\omega )$
and $v_{2}(-\omega )$. The diagonalized Hamiltonian has the form,%
\begin{equation}
H=\omega (d_{1}^{\dag }d_{1}+d_{2}^{\dag }d_{2})-\omega +\alpha .
\end{equation}
\end{solution}

\begin{example}
\begin{equation}
H=\mu(c_{1}^{\dag}c_{2}+c_{2}^{\dag}c_{1})+\nu(c_{1}^{\dag}c_{2}^{%
\dag}-c_{1}c_{2}),
\end{equation}
where $\nu>0$.
\end{example}

\begin{solution}
The dynamic matrix $D$ is%
\begin{equation}
D=%
\begin{bmatrix}
0 & \mu & 0 & \nu \\
\mu & 0 & -\nu & 0 \\
0 & -\nu & 0 & -\mu \\
\nu & 0 & -\mu & 0%
\end{bmatrix}%
.
\end{equation}%
There are two pairs of eigenvalues, $(\omega _{1},-\omega _{1})$ and $%
(\omega _{2},-\omega _{2})$ where%
\begin{equation}
\omega _{1}=\mu +\nu ,\text{ \ }\omega _{2}=\mu -\nu .
\end{equation}%
The eigenvectors $v(\omega _{1})$ and $v(\omega _{2})$ can be chosen as
follows,%
\begin{equation}
v(\omega _{1})=\frac{1}{2}%
\begin{bmatrix}
1 \\
1 \\
-1 \\
1%
\end{bmatrix}%
,\text{ \ }v(\omega _{2})=\frac{1}{2}%
\begin{bmatrix}
1 \\
1 \\
1 \\
-1%
\end{bmatrix}%
.
\end{equation}%
Correspondingly, $v(-\omega _{1})$ and $v(-\omega _{2})$ can be obtained
from the convention of Eq. (\ref{VP1}).

The normal BV matrix and the form of the diagonalized Hamiltonian are listed
as follows.

1. If $\mu >\nu $,%
\begin{eqnarray}
T_{n} &=&\left[
\begin{array}{cccc}
v(\omega _{1}), & v(\omega _{2}), & v(-\omega _{1}), & v(-\omega _{2})%
\end{array}%
\right] , \\
H &=&\omega _{1}d_{1}^{\dag }d_{1}+\omega _{2}d_{2}^{\dag }d_{2}-\frac{1}{2}%
\left( \omega _{1}+\omega _{2}\right) .
\end{eqnarray}

2. If $\mu <-\nu $,%
\begin{eqnarray}
T_{n} &=&\left[
\begin{array}{cccc}
v(-\omega _{1}), & v(-\omega _{2}), & v(\omega _{1}), & v(\omega _{2})%
\end{array}%
\right] , \\
H &=&-\omega _{1}d_{1}^{\dag }d_{1}-\omega _{2}d_{2}^{\dag }d_{2}+\frac{1}{2}%
\left( \omega _{1}+\omega _{2}\right) .
\end{eqnarray}

3. If $-\nu \leq \mu \leq \nu $,%
\begin{eqnarray}
T_{n} &=&\left[
\begin{array}{cccc}
v(\omega _{1}), & v(-\omega _{2}), & v(-\omega _{1}), & v(\omega _{2})%
\end{array}%
\right] , \\
H &=&\omega _{1}d_{1}^{\dag }d_{1}-\omega _{2}d_{2}^{\dag }d_{2}-\frac{1}{2}%
\left( \omega _{1}-\omega _{2}\right) .
\end{eqnarray}
\end{solution}

\subsection{Uniqueness}

Following the same arguments as those for the Bose system, one can readily
know that there exist much different BV transformations that can all realize
BV diagonalization to the same Hamiltonian of fermions. Nevertheless, the
diagonalized form will be unique up to a permutation of the quadratic terms.

\begin{theorem}
The diagonalized form for a quadratic Hamiltonian of fermions is unique up
to a permutation of the quadratic terms.
\end{theorem}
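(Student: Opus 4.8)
The plan is to follow the architecture of the proof of Theorem \ref{UniqueB}, but to take advantage of the fact that the Fermi ordering convention (\ref{Stipu3}) makes every physical eigenenergy nonnegative; this lets me replace the indefinite-norm contradiction of the bosonic argument by a direct count of eigenvalue multiplicities.

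First I would assume the Hamiltonian of Eq. (\ref{Ham1}) admits two diagonalized forms. By Eq. (\ref{F4}) they read
\[
H_1=\sum_{i=1}^{n}\omega_i d_i^\dag d_i-\frac{1}{2}\sum_{i=1}^{n}\omega_i+\frac{1}{2}\mathrm{tr}(\alpha),\quad H_2=\sum_{i=1}^{n}\omega_i' d_i^\dag d_i-\frac{1}{2}\sum_{i=1}^{n}\omega_i'+\frac{1}{2}\mathrm{tr}(\alpha),
\]
where $\omega_i\ge 0$ and $\omega_i'\ge 0$ by the stipulation (\ref{Stipu3}). The dynamic matrix of each diagonal form is read off from the Fermi version of Lemma \ref{MDiag}, giving $D_1=\mathrm{diag}(\omega_1,\dots,\omega_n,-\omega_1,\dots,-\omega_n)$ and $D_2=\mathrm{diag}(\omega_1',\dots,\omega_n',-\omega_1',\dots,-\omega_n')$. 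By Lemma \ref{Similar}, $D_1$ and $D_2$ are each similar to the dynamic matrix $D$ of $H$, hence similar to one another; since similar matrices share the same spectrum counted with multiplicity, the multiset $\{\omega_1,\dots,\omega_n,-\omega_1,\dots,-\omega_n\}$ coincides with $\{\omega_1',\dots,\omega_n',-\omega_1',\dots,-\omega_n'\}$.

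The key step is then to descend from this sign-symmetric spectrum to the nonnegative frequencies themselves. For any $c>0$ the value $c$ can occur among the diagonal entries of $D_1$ only through the left block $\{\omega_i\}$, because $-\omega_i=c>0$ would force $\omega_i<0$, contradicting (\ref{Stipu3}); the same holds for $D_2$. Hence the multiplicity of $c$ among the $\omega_i$ equals its multiplicity among the $\omega_i'$. For $c=0$ the multiplicity in the full spectrum is exactly twice the number of vanishing frequencies, so halving yields equal numbers of zeros on the two sides. Consequently $\{\omega_1,\dots,\omega_n\}=\{\omega_1',\dots,\omega_n'\}$ as multisets, which forces $\sum_i\omega_i=\sum_i\omega_i'$ and so makes the additive constants agree as well. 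Therefore $H_1$ and $H_2$ coincide up to a permutation of the number terms $d_i^\dag d_i$, which is exactly the asserted uniqueness.

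I do not anticipate a genuine obstacle. In contrast to Theorem \ref{UniqueB}, no Cauchy--Schwarz estimate or appeal to the sign of the $I_-$ norm is needed here, precisely because the bosonic ambiguity $\omega_1=-\omega_j'$ can arise in the Fermi case only when both frequencies vanish, which is harmless. The single point demanding care is the bookkeeping for the zero eigenvalue, whose multiplicity is split evenly between the two blocks of the dynamic matrix and must therefore be divided by two when the nonnegative frequencies are extracted.
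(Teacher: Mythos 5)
Your proposal is correct and follows essentially the same route as the paper: both proofs use Lemma \ref{Similar} to equate the full sign-symmetric spectra of $D_1$ and $D_2$ as multisets and then invoke the nonnegativity stipulation (\ref{Stipu3}) to extract $\{\omega_1,\dots,\omega_n\}=\{\omega_1',\dots,\omega_n'\}$. The only difference is that you spell out the multiplicity bookkeeping for positive and zero eigenvalues explicitly, whereas the paper compresses this into a single appeal to Eqs. (\ref{W1}) and (\ref{W2}).
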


\begin{proof}
Suppose that there are two diagonalized forms for the Hamiltonian of Eq. (%
\ref{Ham1}). According to Eq. (\ref{F4}), they can be written as%
\begin{eqnarray}
H_{1} &=&\sum_{i=1}^{n}\omega _{i}d_{i}^{\dag }d_{i}-\frac{1}{2}%
\sum_{i=1}^{n}\omega _{i}+\frac{1}{2}\mathrm{tr}(\alpha ), \\
H_{2} &=&\sum_{i=1}^{n}\omega _{i}^{\prime }d_{i}^{\dag }d_{i}-\frac{1}{2}%
\sum_{i=1}^{n}\omega _{i}^{\prime }+\frac{1}{2}\mathrm{tr}(\alpha ),
\end{eqnarray}%
where%
\begin{eqnarray}
\omega _{i} &\geq &0,\text{ \ }i=1,2,\cdots ,n,  \label{W1} \\
\omega _{i}^{\prime } &\geq &0,\text{ \ }i=1,2,\cdots ,n.  \label{W2}
\end{eqnarray}%
Let $D_{1}$ and $D_{2}$ be the dynamic matrices for $H_{1}$ and $H_{2}$,
respectively. Obviously, they are both diagonal,%
\begin{eqnarray}
D_{1} &=&\mathrm{diag}(\omega _{1},\cdots ,\omega _{n},-\omega _{1},\cdots
,-\omega _{n}), \\
D_{2} &=&\mathrm{diag}(\omega _{1}^{\prime },\cdots ,\omega _{n}^{\prime
},-\omega _{2}^{\prime },\cdots ,-\omega _{n}^{\prime }).
\end{eqnarray}%
As $D_{1}$ and $D_{2}$ are similar to each other, one has
\begin{widetext}
\begin{equation}
\{\omega _{1},\omega _{2},\cdots ,\omega _{n},-\omega _{1},-\omega
_{2},\cdots ,-\omega _{n}\}=\{\omega _{1}^{\prime },\omega _{2}^{\prime
},\cdots ,\omega _{n}^{\prime },-\omega _{1}^{\prime },-\omega _{2}^{\prime
},\cdots ,-\omega _{n}^{\prime }\}.
\end{equation}%
\end{widetext}
Paying attention to Eqs. (\ref{W1}) and (\ref{W2}), he can further obtain%
\begin{equation}
\{\omega _{1},\omega _{2},\cdots ,\omega _{n}\}=\{\omega _{1}^{\prime
},\omega _{2}^{\prime },\cdots ,\omega _{n}^{\prime }\}.
\end{equation}%
This demonstrates that the diagonalized form of a quadratic Hamiltonian is
unique up to a permutation of the quadratic terms.
\end{proof}

Since all the energies of quasiparticles are nonnegative in the Hamiltonian
of Eq. (\ref{F4}), there will be no quasiparticle in the ground state of the
system, viz., the ground state is exactly the vacuum of quasiparticles. If
one does not obey the stipulation given in Eqs. (\ref{Stipu1}), (\ref{Stipu2}%
), and (\ref{Stipu3}), he can obtain other diagonalized forms for the
Hamiltonian. Nevertheless, the ground states of the system will not be the
vacuum of quasiparticles. For example, instead of Eqs. (\ref{Stipu1}), (\ref%
{Stipu2}), and (\ref{Stipu3}), let us stipulate anew that
\begin{widetext}%
\begin{gather}
\psi =T_{n}\varphi , \\
T_{n}=\left[
\begin{array}{cccccccc}
v(-\omega _{1}), & v(-\omega _{2}), & \cdots , & v(-\omega _{n}), & v(\omega
_{1}), & v(\omega _{2}), & \cdots , & v(\omega _{n})%
\end{array}%
\right] ,
\end{gather}%
\end{widetext}
where%
\begin{equation}
\omega _{i}\geq 0,\text{ \ }i=1,2,\cdots ,n.
\end{equation}%
It is easy to show that the Hamiltonian has a new diagonalized form,\newpage
\begin{equation}
H=-\sum_{i=1}^{n}\omega _{i}d_{i}^{\dag }d_{i}+\frac{1}{2}%
\sum_{i=1}^{n}\omega _{i}+\frac{1}{2}\mathrm{tr}(\alpha ).
\end{equation}%
Now, all the energies of quasiparticles are nonpositive, the ground state of
the system will be the Fermi sea which is occupied fully by quasiparticles.
Here, the elementary excitations of the system would be quasiholes rather
than quasiparticles. Upon the particle-hole transformation,
\begin{equation}
f_{i}=d_{i}^{\dag },\text{ \ }f_{i}^{\dag }=d_{i},\text{ \ }i=1,2,\cdots ,n,
\end{equation}%
the new diagonalized form can be transformed into the old one,%
\begin{equation}
H=\sum_{i=1}^{n}\omega _{i}f_{i}^{\dag }f_{i}-\frac{1}{2}\sum_{i=1}^{n}%
\omega _{i}+\frac{1}{2}\mathrm{tr}(\alpha ).
\end{equation}%
and vice versa. Hence, both are essentially equivalent. Of course, you can
also use the mixing picture if you like, e.g.,%
\begin{eqnarray}
H &=&\sum_{i=1}^{m}\omega _{i}f_{i}^{\dag }f_{i}-\sum_{i=m+1}^{n}\omega
_{i}d_{i}^{\dag }d_{i}  \notag \\
&&-\frac{1}{2}\sum_{i=1}^{m}\omega _{i}+\frac{1}{2}\sum_{i=m+1}^{n}\omega
_{i}+\frac{1}{2}\mathrm{tr}(\alpha ),
\end{eqnarray}%
where $0<m<n$. The elementary excitations of the system include now both the
quasiparticles and quasiholes. Usually, the hole and mixing pictures are
less convenient than the particle picture. That is the reason why we
stipulate the BV transformation as in Eqs. (\ref{Stipu1}), (\ref{Stipu2}),
and (\ref{Stipu3}). By default, the particle picture will be used for the
diagonalization of the Fermi system unless otherwise specified. The
diagonalization is unique in the this picture.

To conclude, the BV diagonalization exists and is unique for every quadratic
Hamiltonian of fermions.

\section{Application to Bose Systems \label{ABS}}

Now, we apply the diagonalization theory of bosons to real systems. We shall
concentrate ourselves on the two typical Hamiltonians: the normal
Hamiltonian and the pairing Hamiltonian. As a matter of fact, they are the
prototypes of many practical models, and represent almost all the problems
which we encounter frequently in practice.

\subsection{The normal Hamiltonian}

First, let us consider the normal Hamiltonian,
\begin{equation}
H=\sum_{i,j=1}^{n}\alpha _{ij}c_{i}^{\dag }c_{j}.  \label{HamType1}
\end{equation}%
In this Hamiltonian, there are only the normal terms such as $c_{i}^{\dag
}c_{j}$. The abnormal terms, such as $c_{i}^{\dag }c_{j}^{\dag }$ and $%
c_{i}c_{j}$, disappear completely. This kind of Hamiltonian has been
discussed in the corollary \ref{Corollary2}, according to it, such a
Hamiltonian is always BV diagonalizable. In fact, the result can be
strengthened further as follows.

\begin{proposition}
\label{BPPType1} The normal Hamiltonian of Eq. (\ref{HamType1}) can be BV
diagonalized by the unitary transformation generated by the coefficient
matrix $\alpha$.
\end{proposition}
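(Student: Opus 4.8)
The plan is to exploit the fact that the normal Hamiltonian is just the case $\gamma=0$, for which the dynamic matrix is already block diagonal, $D=\mathrm{diag}(\alpha,-\widetilde{\alpha})$, so that the normal derivative BV transformation guaranteed by the corollary \ref{Corollary2} can be written down explicitly in terms of the unitary that diagonalizes $\alpha$ alone. First I would invoke the Hermiticity of $\alpha$ (Eq. \ref{AGM}) to produce a unitary $U$ with $U^{\dag}\alpha U=\Lambda\equiv\mathrm{diag}(\omega_{1},\cdots,\omega_{n})$, where each eigenvalue $\omega_{i}\in\mathbb{R}$. The candidate transformation is the block-diagonal matrix
$$T=\left[\begin{array}{cc} U & 0 \\ 0 & U^{\ast}\end{array}\right],$$
whose columns are precisely the eigenvectors of $D$: the left block supplies the eigenvectors of $\alpha$ and the right block those of $-\widetilde{\alpha}$. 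Comparing with Eq. (\ref{TM}) (with $A=U$ and $B=0$) shows at once that $T$ has the BV form, so by the definition \ref{BVM} it is a BV matrix.

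The core of the argument is to check that this $T$ satisfies the two conditions of Eqs. (\ref{Cond1}) and (\ref{Cond2}). For the statistics-preserving condition (\ref{Cond1}) I would compute $TI_{-}T^{\dag}$ blockwise; its diagonal blocks are $UU^{\dag}=I$ and $-U^{\ast}\widetilde{U}=-I$, the latter following from the unitarity of $U$ upon conjugation, since $\overline{U^{\ast}\widetilde{U}}=UU^{\dag}=I$. Hence $TI_{-}T^{\dag}=I_{-}$. For the diagonalization condition (\ref{Cond2}) the crucial input is the identity $\widetilde{\alpha}=\alpha^{\ast}$, which is merely $\alpha^{\dag}=\alpha$ rewritten, together with the reality of the $\omega_{i}$: conjugating $U^{\dag}\alpha U=\Lambda$ yields $\widetilde{U}\,\widetilde{\alpha}\,U^{\ast}=\Lambda^{\ast}=\Lambda$. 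Since with $\gamma=0$ the coefficient matrix is $M=\mathrm{diag}(\alpha,\widetilde{\alpha})$, both diagonal blocks of $T^{\dag}MT$ collapse to the same $\Lambda$, giving $T^{\dag}MT=\mathrm{diag}(\omega_{1},\cdots,\omega_{n},\omega_{1},\cdots,\omega_{n})$, which is diagonal. This simultaneously verifies (\ref{Cond2}) and identifies the quasiparticle energies with the eigenvalues of $\alpha$.

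Finally, I would substitute into the diagonalized form Eq. (\ref{B4}) and observe that the constant terms cancel: because $\sum_{i}\omega_{i}=\mathrm{tr}(\Lambda)=\mathrm{tr}(\alpha)$, one has $\frac{1}{2}\sum_{i}\omega_{i}-\frac{1}{2}\mathrm{tr}(\alpha)=0$, so $H=\sum_{i}\omega_{i}\,d_{i}^{\dag}d_{i}$ with no residual zero-point energy. The real substance of the proposition is that the off-diagonal blocks of $T$ vanish, i.e. the transformation reduces to the plain unitary change of basis $c=Ud$ with no mixing of creators and annihilators, which is exactly why $\alpha$ by itself generates it. I do not anticipate a genuine obstacle; the only point demanding care is the conjugation bookkeeping that ties $\widetilde{\alpha}$ to $\alpha^{\ast}$ and uses $\omega_{i}\in\mathbb{R}$ to send the lower-right block to $\Lambda$ rather than to $\Lambda^{\ast}$. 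One may also note in passing that $U$ is fixed only up to the usual phase and degeneracy freedom, consistent with the non-uniqueness of the BV transformation already remarked upon, while the diagonalized form itself stays unique by the theorem \ref{UniqueB}.
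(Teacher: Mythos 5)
Your proof is correct and follows essentially the same route as the paper: both construct the block-diagonal BV matrix $T=\mathrm{diag}(U,U^{\ast})$ from the unitary $U$ that diagonalizes the Hermitian matrix $\alpha$, and verify that it preserves the metric $I_{-}$ and brings $M=\mathrm{diag}(\alpha,\widetilde{\alpha})$ to $\mathrm{diag}(\omega_{1},\cdots,\omega_{n},\omega_{1},\cdots,\omega_{n})$. Your explicit remark that $\sum_{i}\omega_{i}=\mathrm{tr}(\alpha)$ kills the zero-point constant is a nice touch the paper leaves implicit, but it does not change the argument.
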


\begin{proof}
The eigenvalue equation is%
\begin{equation}
Dv(\omega )=\omega v(\omega ),
\end{equation}%
where $D$ is the dynamic matrix,%
\begin{equation}
D=\left[
\begin{array}{cc}
\alpha & 0 \\
0 & -\widetilde{\alpha }%
\end{array}%
\right] .
\end{equation}%
Let the eigenvector $v(\omega )$ be%
\begin{equation}
v(\omega )=\left[
\begin{array}{c}
x(\omega ) \\
y(\omega )%
\end{array}%
\right] ,
\end{equation}%
where $x(\omega )$ and $y(\omega )$ are the two subvectors of size $n$. The
eigenvalue equation of $D$ becomes
\begin{equation}
\left[
\begin{array}{cc}
\alpha & 0 \\
0 & -\widetilde{\alpha }%
\end{array}%
\right] \left[
\begin{array}{c}
x(\omega ) \\
y(\omega )%
\end{array}%
\right] =\omega \left[
\begin{array}{c}
x(\omega ) \\
y(\omega )%
\end{array}%
\right] .  \label{VW}
\end{equation}%
It reduces to
\begin{eqnarray}
\alpha x(\omega ) &=&\omega x(\omega ),  \label{XW1} \\
\widetilde{\alpha }y(\omega ) &=&-\omega y(\omega ).  \label{XW2}
\end{eqnarray}%
The first equation is exactly the eigenvalue equation of the coefficient
matrix $\alpha $. Since $\alpha $ is Hermitian, it can be unitarily
diagonalized,%
\begin{equation}
U^{\dag }\alpha U=\mathrm{diag}\left( \omega _{1},\omega _{2},\cdots ,\omega
_{n}\right) ,  \label{UAU}
\end{equation}%
where $\omega _{i}\in
\mathbb{R}
$ ($i=1,2,\cdots ,n$) are the eigenvalues of $\alpha $, and $U$ the unitary
matrix which consists of the eigenvectors of $\alpha $,%
\begin{gather}
U^{\dag }U=UU^{\dag }=I,  \label{UU} \\
U=\left[
\begin{array}{cccc}
x(\omega _{1}), & x(\omega _{2}), & \cdots , & x(\omega _{n})%
\end{array}%
\right] ,
\end{gather}%
the $x(\omega _{i})$ standing for the eigenvector of the eigenvalue $\omega
_{i}$, respectively. Obviously, Eqs. (\ref{XW1}) and (\ref{XW2}) have the
solutions,%
\begin{equation}
v(\omega _{i})=\left[
\begin{array}{c}
x(\omega _{i}) \\
0%
\end{array}%
\right] ,\text{ \ }i=1,2,\cdots ,n,
\end{equation}%
they are all the eigenvectors of Eq. (\ref{VW}), and orthonormalized as
follows,%
\begin{equation}
v^{\dag }(\omega _{i})I_{-}v(\omega _{j})=\delta _{ij},\text{ \ }%
i,j=1,2,\cdots ,n,
\end{equation}%
which comes directly from Eq. (\ref{UU}). According to the convention for $%
v(-\omega _{i})$, the rest half of the eigenvectors of Eq. (\ref{VW}) is
\begin{equation}
v(-\omega _{i})=\left[
\begin{array}{c}
0 \\
x^{\ast }(\omega _{i})%
\end{array}%
\right] ,\text{ \ }i=1,2,\cdots ,n,
\end{equation}%
they are also orthonormalized,%
\begin{equation}
v^{\dag }(-\omega _{i})I_{-}v(-\omega _{j})=-\delta _{ij},\text{ \ }%
i,j=1,2,\cdots ,n.
\end{equation}%
In fact, $x^{\ast }(\omega _{i})$ is exactly the eigenvector of $\widetilde{%
\alpha }$,%
\begin{equation}
\widetilde{\alpha }x^{\ast }(\omega _{i})=-(-\omega _{i})x^{\ast }(\omega
_{i}),
\end{equation}%
with the eigenvalue being $-\omega _{i}$. Substituting them into Eq. (\ref%
{DBVTn}), we obtain the normal BV matrix,%
\begin{equation}
T_{n}=\left[
\begin{array}{cc}
U & 0 \\
0 & U^{\ast }%
\end{array}%
\right] .  \label{UUT}
\end{equation}%
It can be readily confirmed that
\begin{equation}
T_{n}^{\dag }I_{-}T_{n}=I_{-},
\end{equation}%
\begin{equation}
T_{n}^{-}DT_{n}=\mathrm{diag}(\omega _{1},\cdots ,\omega _{n},-\omega
_{1},\cdots ,-\omega _{n}),
\end{equation}%
\begin{equation}
T_{n}^{\dag }MT_{n}=\mathrm{diag}(\omega _{1},\cdots ,\omega _{n},\omega
_{1},\cdots ,\omega _{n}),
\end{equation}%
\begin{equation}
H=\sum_{i=1}^{n}\omega _{i}d_{i}^{\dag }d_{j},  \label{Hdd}
\end{equation}%
where
\begin{equation}
\lbrack d_{i},d_{j}^{\dag }]=\delta _{i,j},\text{ \ }[d_{i},d_{j}]=0,\text{
\ }[d_{i}^{\dag },d_{j}^{\dag }]=0.
\end{equation}

Equation (\ref{UUT}) shows that the normal BV matrix $T_{n}$ for the
diagonalization of the Hamiltonian of (\ref{HamType1}) can be constructed
using the unitary matrix $U$ generated by the coefficient matrix $\alpha$.
Put it another way, the quadratic Hamiltonian of Eq. (\ref{HamType1}) can be
BV diagonalized by the unitary transformation generated by the coefficient
matrix $\alpha$.
\end{proof}

The proposition shows that, to diagonalize the Hamiltonian of Eq. (\ref%
{HamType1}), one should first find the unitary matrix $U$ from the
coefficient matrix $\alpha$, and then construct the normal BV matrix $T_{n}$
according to Eq. (\ref{UUT}).

The above procedure is feasible but somewhat redundant, it can be further
simplified. As a matter of fact, the BV transformation corresponding to $%
T_{n}$ can be reduced to a simple unitary transformation,%
\begin{equation}
c=Ud,\text{ \ }c^{\dag}=d^{\dag}U^{\dag},  \label{Udc}
\end{equation}
which can be verified readily from the substitution of $T_{n}$ into Eq. (\ref%
{TN}). As a consequence, we obtain
\begin{equation}
d\cdot d^{\dag}=I,\text{ \ }d\cdot d=0,\text{ \ }d^{\dag}\cdot d^{\dag}=0,
\label{DDI}
\end{equation}
where the standard relations,
\begin{equation}
c\cdot c^{\dag}=I,\text{ \ }c\cdot c=0,\text{ \ }c^{\dag}\cdot c^{\dag}=0,
\end{equation}
have been used. Through the unitary transformation of Eq. (\ref{Udc}), the
Hamiltonian of Eq. (\ref{HamType1}) can be straightforwardly diagonalized as
follows,%
\begin{equation}
H=c^{\dag}\alpha c=d^{\dag}U^{\dag}\alpha
Ud=\sum_{i=1}^{n}\omega_{i}d_{i}^{\dag}d_{j}.  \label{CAC}
\end{equation}
It is the same as Eq. (\ref{Hdd}).

Physically, this simple version arises directly from the fact that the
Heisenberg equation of the field $c$,
\begin{equation}
i\frac{\mathrm{d}}{\mathrm{d}t}c=\alpha c,  \label{RHeisenberg}
\end{equation}%
does not couple with its Hermitian field $c^{\dag }$. If a unitary
transformation $U$ for the field $c$ is generated by this equation of motion,%
\begin{equation}
c=Ud,
\end{equation}%
an adjoint transformation $U^{\dag }$ will be yielded meanwhile for the
Hermitian field $c^{\dag }$,%
\begin{equation}
c^{\dag }=d^{\dag }U^{\dag },
\end{equation}%
by the equation of motion,
\begin{equation}
-i\frac{\mathrm{d}}{\mathrm{d}t}c^{\dag }=c^{\dag }\alpha .
\end{equation}%
As shown by Eqs. (\ref{UAU}), (\ref{DDI}) and (\ref{CAC}), they diagonalize
the Hamiltonian exactly. The simple version makes it much easier to
diagonalize the Hamiltonian of Eq. (\ref{HamType1}). It is unnecessary to
solve the eigenvalue problem of the dynamic matrix $D$ and construct the BV
matrix $T_{n}$, but sufficient for us to find out the unitary matrix $U$
from the Hermitian matrix $\alpha $. In short, it reduces the eigenvalue
problem of $D$, which is of size $2n$, to the eigenvalue problem of $\alpha $%
, which is of size $n$.

\begin{example}
\begin{equation}
H=\varepsilon_{1}c_{1}^{\dag}c_{1}+\varepsilon_{2}c_{2}^{\dag}c_{2}+\mu
(c_{1}^{\dag}c_{2}+c_{2}^{\dag}c_{1}).
\end{equation}
\end{example}

\begin{solution}
The coefficient matrix is%
\begin{equation}
\alpha =%
\begin{bmatrix}
\varepsilon _{1} & \mu \\
\mu & \varepsilon _{2}%
\end{bmatrix}%
.
\end{equation}%
It has two eigenvalues,%
\begin{eqnarray}
\omega _{1} &=&\frac{1}{2}\left[ \varepsilon _{1}+\varepsilon _{2}+\sqrt{%
\left( \varepsilon _{1}-\varepsilon _{2}\right) ^{2}+4\mu ^{2}}\right] , \\
\omega _{2} &=&\frac{1}{2}\left[ \varepsilon _{1}+\varepsilon _{2}-\sqrt{%
\left( \varepsilon _{1}-\varepsilon _{2}\right) ^{2}+4\mu ^{2}}\right] .
\end{eqnarray}%
The unitary matrix can be found as follows,%
\begin{eqnarray}
U &=&\left[
\begin{array}{cc}
v(\omega _{1}), & v(\omega _{2})%
\end{array}%
\right]  \notag \\
&=&\frac{1}{\sqrt{\left( \omega _{1}-\varepsilon _{1}\right) ^{2}+\mu ^{2}}}
\notag \\
&&\times
\begin{bmatrix}
\mu & \omega _{2}-\varepsilon _{2} \\
\omega _{1}-\varepsilon _{1} & \mu%
\end{bmatrix}%
,
\end{eqnarray}%
The diagonalized Hamiltonian is%
\begin{equation}
H=\omega _{1}d_{1}^{\dag }d_{1}+\omega _{2}d_{2}^{\dag }d_{2}.
\end{equation}
\end{solution}

\begin{example}
\begin{eqnarray}
H &=&\varepsilon (c_{1}^{\dag }c_{1}+c_{2}^{\dag }c_{2}+c_{3}^{\dag }c_{3})
\notag \\
&&+\mu (c_{1}^{\dag }c_{2}+c_{2}^{\dag }c_{1}+c_{2}^{\dag }c_{3}+c_{3}^{\dag
}c_{2}).
\end{eqnarray}
\end{example}

\begin{solution}
The coefficient matrix is%
\begin{equation}
\alpha =%
\begin{bmatrix}
\varepsilon & \mu & 0 \\
\mu & \varepsilon & \mu \\
0 & \mu & \varepsilon%
\end{bmatrix}%
.
\end{equation}%
It has three eigenvalues,%
\begin{equation}
\omega _{1}=\varepsilon ,\text{ \ }\omega _{2}=\varepsilon +\sqrt{2}\mu ,%
\text{ \ }\omega _{3}=\varepsilon -\sqrt{2}\mu .
\end{equation}%
The corresponding unitary matrix is
\begin{eqnarray}
U &=&\left[
\begin{array}{ccc}
v(\omega _{1}), & v(\omega _{2}), & v(\omega _{3})%
\end{array}%
\right]  \notag \\
&=&\frac{1}{2}%
\begin{bmatrix}
\sqrt{2} & 1 & 1 \\
0 & \sqrt{2} & -\sqrt{2} \\
-\sqrt{2} & 1 & 1%
\end{bmatrix}%
,
\end{eqnarray}%
It is easy to show that
\begin{equation}
H=\omega _{1}d_{1}^{\dag }d_{1}+\omega _{2}d_{2}^{\dag }d_{2}+\omega
_{3}d_{3}^{\dag }d_{3}.
\end{equation}
\end{solution}

By the way, we note that the examples \ref{Ins3} and \ref{Ins4} can also be
diagonalized using the present method.

\subsection{The pairing Hamiltonian}

As shown above, the Heisenberg equation and dynamic matrix are reducible for
a normal Hamiltonian. There is another reducible case, which we are going to
handle below.

Consider the so-called pairing Hamiltonian,
\begin{equation}
H=\sum_{i,j=1}^{n}(\alpha _{ij}a_{i}^{\dag }a_{j}+\varepsilon
_{ij}b_{i}^{\dag }b_{j}+\gamma _{ij}a_{i}^{\dag }b_{j}^{\dag }+\gamma
_{ji}^{\ast }a_{i}b_{j}),  \label{HamType2}
\end{equation}%
where $a_{i}$ ($a_{i}^{\dag }$) and $b_{i}$ ($b_{i}^{\dag }$) are both the
annihilation (creation) operators of bosons, and
\begin{equation}
\alpha ^{\dag }=\alpha ,\text{ \ }\varepsilon ^{\dag }=\varepsilon ,\text{ \
}\widetilde{\gamma }=\gamma .
\end{equation}%
In this Hamiltonian, the abnormal terms, such as $a_{i}b_{j}$ and $%
a_{i}^{\dag }b_{j}^{\dag }$, appear exactly in pairs, each pair has one $a$%
-boson and one $b$-boson, there are totally $n$ pairs between $a$- and $b$%
-bosons. Simply speaking, the particles of the system are formed perfectly
into boson pairs.

According to Eq. (\ref{Heqb3}), the Heisenberg equation for the Hamiltonian
above has the variables of $a_{i}$ $a_{i}^{\dag }$, $b_{i}$, and $%
b_{i}^{\dag }$ where $i=1,2,\cdots ,n$, hence it has the multiplicity of $4n$%
. It is easy to show that the Heisenberg equation can be reduced to the
equations of motion of the variables of $a_{i}$ and $b_{i}^{\dag }$ ($%
i=1,2,\cdots ,n$),%
\begin{eqnarray}
i\frac{\mathrm{d}}{\mathrm{d}t}a_{i} &=&\alpha _{ij}a_{j}+\gamma
_{ij}b_{j}^{\dag }, \\
i\frac{\mathrm{d}}{\mathrm{d}t}b_{i}^{\dag } &=&-\varepsilon
_{ji}b_{j}-\gamma _{ij}^{\ast }a_{j}.
\end{eqnarray}%
Clearly, that is just of multiplicity $2n$. We shall utilize those $2n$
multiple equations straightforwardly to study the diagonalization problem of
the Hamiltonian of Eq. (\ref{HamType2}). It is equivalent to but will be
simpler than from the primitive equation (\ref{Heqb3}) and the theorem \ref%
{BThm}, as has already been seen from the discussions on the normal
Hamiltonian. In particular, it will bring us a fairly simple algorithm for
the diagonalization of the paring Hamiltonian.

For the sake of convenience, we introduce the new operators $c_{i}$ and $%
c_{i}^{\dag }$ as follows,%
\begin{equation}
c_{i}=b_{i}^{\dag },\text{ \ }c_{i}^{\dag }=b_{i},\text{ \ }i=1,2,\cdots ,n.
\end{equation}%
Accordingly, the commutators will be
\begin{gather}
a\cdot a^{\dag }=I,\text{ \ }a\cdot a=0,\text{ \ }a^{\dag }\cdot a^{\dag }=0,
\\
c\cdot c^{\dag }=-I,\text{ \ }c\cdot c=0,\text{ \ }c^{\dag }\cdot c^{\dag
}=0, \\
a\cdot c=0,\text{ \ }a\cdot c^{\dag }=0,\text{ \ }a^{\dag }\cdot c=0,\text{
\ }a^{\dag }\cdot c^{\dag }=0.
\end{gather}%
In fact, the $c$-particles are just the so-called time-polarized bosons.
Using those new operators, the Hamiltonian of Eq. (\ref{HamType2}) can be
written as
\begin{equation}
H=\psi ^{\dag }M\psi -\mathrm{tr}(\varepsilon ),  \label{HamPairb}
\end{equation}%
where $M$ is the coefficient matrix,
\begin{equation}
M=\left[
\begin{array}{cc}
\alpha & \gamma \\
\gamma ^{\dag } & \varepsilon%
\end{array}%
\right] ,
\end{equation}%
and $\psi $ the field operator,
\begin{equation}
\psi =\left[
\begin{array}{c}
a \\
c%
\end{array}%
\right] ,\text{ \ }\psi ^{\dag }=\left[
\begin{array}{cc}
a^{\dag }, & c^{\dag }%
\end{array}%
\right] .
\end{equation}%
The commutator for $\psi $ is
\begin{equation}
\psi \cdot \psi ^{\dag }=I_{-}.  \label{CommP}
\end{equation}%
Here, it is worth pointing out that there is no involution symmetry for the
field $\psi $ now, which is quite different from that of Eq. (\ref{Psi}). As
can be seen latter, this property will make the diagonalization much easier:
One need not ensure the involution symmetry for the new field any more.

The Heisenberg equation of motion for the field $\psi$ can be derived from
Eqs. (\ref{HamPairb}) and (\ref{CommP}),
\begin{equation}
i\frac{\mathrm{d}}{\mathrm{d}t}\psi=D\psi,
\end{equation}
where $D$ is the dynamic matrix,
\begin{equation}
D=\left[
\begin{array}{cc}
\alpha & \gamma \\
-\gamma^{\dag} & -\varepsilon%
\end{array}
\right] .
\end{equation}
As regards $M$ and $D$, one has
\begin{equation}
D=I_{-}M,  \label{DIM1}
\end{equation}
which is identical to Eq. (\ref{DIM}). We note that the coefficient matrix $%
M $ is Hermitian.

Different from the normal Hamiltonian, the pairing Hamiltonian is not always
BV diagonalizable.

\begin{proposition}
The boson pairing Hamiltonian of Eq. (\ref{HamType2}) is BV diagonalizable
if and only if the dynamic matrix $D$ is physically diagonalizable.
\end{proposition}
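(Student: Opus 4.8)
The plan is to mirror the development that culminated in Theorem \ref{BThm}, exploiting the fact that the reduced pairing problem has exactly the algebraic shape of the general Bose setup---$H=\psi^{\dag}M\psi-\mathrm{tr}(\varepsilon)$ with $\psi\cdot\psi^{\dag}=I_{-}$, $M$ Hermitian, and $D=I_{-}M$ by Eq. (\ref{DIM1})---but with one constraint removed: the field $\psi=[a;c]$ carries no involution symmetry, so the diagonalizing transformation need only lie in $U(n,n)$ rather than take the BV form of Eq. (\ref{TM}). I would therefore prove the two implications separately, reusing the orthogonality machinery verbatim where it depends only on $D=I_{-}M$, and replacing the eigenvalue-pairing conventions of lemmas \ref{Lmm1} and \ref{Lmm2} by a signature argument.

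For necessity, suppose the Hamiltonian is BV diagonalizable. Then there is an invertible $T_{n}$ that preserves the commutator, $T_{n}^{\dag}I_{-}T_{n}=I_{-}$, and brings $H$ to a sum of (normal or time-polarized) number operators, i.e. $T_{n}^{\dag}MT_{n}=\Omega$ with $\Omega$ real and diagonal. Using $T_{n}^{-1}I_{-}=I_{-}T_{n}^{\dag}$, which follows from $T_{n}^{\dag}I_{-}T_{n}=I_{-}$, I would compute
\begin{equation}
T_{n}^{-1}DT_{n}=T_{n}^{-1}I_{-}MT_{n}=I_{-}T_{n}^{\dag}MT_{n}=I_{-}\Omega .
\end{equation}
Since $I_{-}\Omega$ is real and diagonal, $D$ is similar to it and is hence diagonalizable with a real spectrum, which is precisely physical diagonalizability. (Equivalently one may invoke lemma \ref{Similar}, whose proof uses only the linearity of the transformation and the equation of motion, and so survives the loss of involution symmetry.)

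For sufficiency, assume $D$ is physically diagonalizable. The lemmas \ref{Ortho}, \ref{OrNormal}, and \ref{BasisC} rest solely on $D=I_{-}M$ with $M$ Hermitian together with reality of the spectrum, so they carry over unchanged: $\mathbb{C}^{2n}$ splits into an $I_{-}$-orthogonal direct sum of the eigenspaces of $D$, each eigenspace admits an $I_{-}$-orthonormal basis, and in any such basis $M$ is diagonal with the real entries $\lambda_{i}\omega_{i}$. The single place where the pairing structure was used before---guaranteeing that the norms $+1$ and $-1$ each occur exactly $n$ times---must now be supplied differently, since the eigenvalues need not come in $\pm$ pairs. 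I would close this gap with Sylvester's law of inertia: the form $I_{-}$ has signature $(n,n)$, so any $I_{-}$-orthonormal basis contains exactly $n$ vectors of norm $+1$ and $n$ of norm $-1$. Reordering so the positive-norm eigenvectors fill the first $n$ columns and the negative-norm ones the last $n$ yields a matrix $T_{n}$ with $T_{n}^{\dag}I_{-}T_{n}=I_{-}$ and, by lemma \ref{BasisC}, with $T_{n}^{\dag}MT_{n}$ real and diagonal. Substituting $\psi=T_{n}\varphi$ into Eq. (\ref{HamPairb}) then diagonalizes $H$ into number operators of the new bosons, establishing BV diagonalizability.

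The main obstacle is exactly this norm-counting step. In the theory of the preceding sections the $\pm\omega$ symmetry of lemma \ref{Lmm1} forced each mode pair to contribute one positive and one negative norm; that symmetry is absent for the pairing Hamiltonian because the lower-right block of $D$ is $-\varepsilon$ rather than $-\widetilde{\alpha}$, so lemma \ref{Lmm1} no longer applies and eigenvalues may be distributed arbitrarily. The inertia argument is the natural substitute, and checking that it delivers $T_{n}^{\dag}I_{-}T_{n}=I_{-}$ exactly---rather than merely some diagonal metric of signature $(n,n)$---is the only genuinely new point; everything else either coincides with the lemmas already proved for the Bose system or is routine reinterpretation of the negative-norm modes as time-polarized bosons.
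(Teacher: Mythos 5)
Your proposal is correct and follows essentially the same route as the paper: the necessity argument via similarity of dynamic matrices mirrors the paper's appeal to Proposition \ref{PPS1}, and the sufficiency argument reuses Lemmas \ref{Ortho}, \ref{OrNormal}, and \ref{BasisC} and then invokes Sylvester's law of inertia to count $n$ positive and $n$ negative norms, exactly as the paper does. You also correctly identify the key simplification the paper highlights, namely that the field $[a;c]$ has no involution symmetry, so only orthonormalization (not the pairing conventions) needs to be enforced.
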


\begin{proof}
The sufficiency can be proved as follows.

First, if the dynamic matrix $D$ is BV diagonalizable, then its eigenspaces
will be orthogonal to each other with respect to the metric $I_{-}$. The
proof is the same as that for the lemma \ref{Ortho}, which can be easily
seen by comparing Eq. (\ref{DIM1}) with Eq. (\ref{DIM}).

Second, for every eigenspace of the dynamic matrix $D$, there exists an
orthonormal basis with respect to the metric $I_{-}$. The proof is
completely the same as that for the lemma \ref{OrNormal}.

Now, summing up all the orthonormal bases chosen as above, we obtain an
orthonormal basis for the whole space $%
\mathbb{C}
^{2n}$,
\begin{equation}
v^{\dag}(\omega_{i})I_{-}v(\omega_{j})=\lambda_{i}\delta_{ij},
\end{equation}
where $v(\omega_{i})$ ($1\leq i\leq2n$) are the eigenvectors with $\lambda
_{i}=\pm1$ being the corresponding norms. It follows from Eq. (\ref{DIM1})
that
\begin{equation}
v^{\dag}(\omega_{i})Mv(\omega_{j})=\lambda_{i}\omega_{i}\delta_{ij}.
\end{equation}

By introducing the matrix,
\begin{equation}
U=\left[
\begin{array}{cccc}
v(\omega _{1}), & v(\omega _{2}), & \cdots , & v(\omega _{2n})%
\end{array}%
\right] ,
\end{equation}%
the two equations above can be formulated as
\begin{gather}
U^{\dag }I_{-}U=\mathrm{diag}(\lambda _{1},\lambda _{2},\cdots ,\lambda
_{2n}), \\
U^{\dag }MU=\mathrm{diag}(\lambda _{1}\omega _{1},\lambda _{2}\omega
_{2},\cdots ,\lambda _{2n}\omega _{2n}).
\end{gather}%
Here, it is enough for us to take only into account of the
orthonormalization of the eigenvectors because, as pointed out above, the
field $\psi $ has no involution symmetry. One need not take care of both the
orthonormalization and involution symmetry simultaneously as before,
particularly as in the lemma \ref{ZeroSpace}. Obviously, that brings us much
convenience.

Applying Sylveter's law of inertia \cite{Roman} to the first equation above,
we find that, of the total $2n$ norms ($\lambda _{i}$ with $i=1,2,\cdots ,2n$%
), there must be $n$ positive norms and $n$ negative norms. Upon rearranging
the order of the eigenvectors, $v(\omega _{i}),$ $1\leq i\leq 2n$, within
the matrix $U$, the two equations above can be reformulated as
\begin{gather}
U^{\dag }I_{-}U=I_{-}, \\
U^{\dag }MU=\mathrm{diag}(\omega _{1},\cdots ,\omega _{n},-\omega
_{n+1},\cdots ,-\omega _{2n}).
\end{gather}

Now, defining a new field $\varphi $,%
\begin{equation}
\varphi =U^{-1}\psi ,
\end{equation}%
we have from Eq. (\ref{CommP})
\begin{equation}
\varphi \cdot \varphi ^{\dag }=I_{-}.
\end{equation}%
Accordingly, the Hamiltonian of Eq. (\ref{HamPairb}) can be written as
\begin{equation}
H=\varphi ^{\dag }U^{\dag }MU\varphi -\mathrm{tr}(\varepsilon ).
\end{equation}%
If one expands $\varphi $ as
\begin{equation}
\varphi =\left[
\begin{array}{c}
d \\
e%
\end{array}%
\right] ,\text{ \ }\varphi ^{\dag }=\left[
\begin{array}{cc}
d^{\dag }, & e^{\dag }%
\end{array}%
\right] ,
\end{equation}%
he obtains the commuatation realtions,
\begin{gather}
d\cdot d^{\dag }=I,\text{ \ }d\cdot d=0,\text{ \ }d^{\dag }\cdot d^{\dag }=0,
\\
e\cdot e^{\dag }=-I,\text{ \ }e\cdot e=0,\text{ \ }e^{\dag }\cdot e^{\dag
}=0, \\
d\cdot e=0,\text{ \ }d\cdot e^{\dag }=0,\text{ \ }d^{\dag }\cdot e=0,\text{
\ }d^{\dag }\cdot e^{\dag }=0,
\end{gather}%
and the corresponding Hamiltonian,
\begin{equation}
H=\sum_{i=1}^{n}(\omega _{i}d_{i}^{\dag }d_{i}-\omega _{n+i}e_{i}^{\dag
}e_{i})-\mathrm{tr}\left( \varepsilon \right) .
\end{equation}

Heeding that the $e$-particles are the time-polarized bosons, we need
perform the transformation,
\begin{equation}
f_{i}=e_{i}^{\dag },\text{ \ }f_{i}^{\dag }=e_{i},\text{ \ }i=1,2,\cdots ,n,
\end{equation}%
where the $f$-particles return to the normal bosons. At last, we obtain the
diagonalized Hamiltonian,
\begin{equation}
H=\sum_{i=1}^{n}(\omega _{i}d_{i}^{\dag }d_{i}-\omega _{n+i}f_{i}^{\dag
}f_{i})-\sum_{i=1}^{n}\omega _{n+i}-\mathrm{tr}\left( \varepsilon \right) ,
\end{equation}%
where the operators $d_{i}$ ($d_{i}^{\dag }$) and $f_{i}$ ($f_{i}^{\dag }$) (%
$i=1,2,\cdots ,n$) satisfy the standard commutation relations for bosons,
\begin{gather}
d\cdot d^{\dag }=I,\text{ \ }d\cdot d=0,\text{ \ }d^{\dag }\cdot d^{\dag }=0,
\\
f\cdot f^{\dag }=I,\text{ \ }f\cdot f=0,\text{ \ }f^{\dag }\cdot f^{\dag }=0,
\\
d\cdot f=0,\text{ \ }d\cdot f^{\dag }=0,\text{ \ }d^{\dag }\cdot f=0,\text{
\ }d^{\dag }\cdot f^{\dag }=0.
\end{gather}

The proof for necessity is simply similar to that for the proposition \ref%
{PPS1}.
\end{proof}

\begin{example}
\begin{equation}
H=\varepsilon_{1}c_{1}^{\dag}c_{1}+\varepsilon_{2}c_{2}^{\dag}c_{2}+%
\gamma(c_{1}^{\dag}c_{2}^{\dag}+c_{2}c_{1}),
\end{equation}
where $\gamma>0$.
\end{example}

\begin{solution}
The dynamic matrix is%
\begin{equation}
D=%
\begin{bmatrix}
\varepsilon _{1} & \gamma \\
-\gamma & -\varepsilon _{2}%
\end{bmatrix}%
.
\end{equation}%
Obviously, its characteristic equation is%
\begin{equation}
\omega ^{2}+(\varepsilon _{2}-\varepsilon _{1})\omega +(\gamma
^{2}-\varepsilon _{1}\varepsilon _{2})=0.
\end{equation}

1. If $\left\vert \varepsilon _{1}+\varepsilon _{2}\right\vert <2\gamma $,
there are two imaginary eigenvalues,%
\begin{equation}
\omega =\frac{1}{2}\left[ \varepsilon _{1}-\varepsilon _{2}\pm i\sqrt{%
4\gamma ^{2}-\left( \varepsilon _{1}+\varepsilon _{2}\right) ^{2}}\right] .
\end{equation}%
The dynamic matrix $D$ is not physically diagonalizable.

2. If $\left\vert \varepsilon _{1}+\varepsilon _{2}\right\vert =2\gamma $,
there is only one real eigenvalue,%
\begin{equation}
\omega =\frac{1}{2}\left( \varepsilon _{1}-\varepsilon _{2}\right) .
\end{equation}%
It is easy to show that $\omega $ has only one eigenvector. The dynamic
matrix $D$ is not physically diagonalizable.

3. If $\left\vert \varepsilon _{1}+\varepsilon _{2}\right\vert >2\gamma $,
there are two real eigenvalues,%
\begin{eqnarray}
\omega _{1} &=&\frac{1}{2}\left[ \varepsilon _{1}-\varepsilon _{2}+\sqrt{%
\left( \varepsilon _{1}+\varepsilon _{2}\right) ^{2}-4\gamma ^{2}}\right] ,
\\
\omega _{2} &=&\frac{1}{2}\left[ \varepsilon _{1}-\varepsilon _{2}-\sqrt{%
\left( \varepsilon _{1}+\varepsilon _{2}\right) ^{2}-4\gamma ^{2}}\right] .
\end{eqnarray}%
The dynamic matrix $D$ is thus physically diagonalizable. Meanwhile, the
Hamiltonian $H$ can be BV diagonalized, the corresponding transformation
matrix is%
\begin{equation}
U=\left\{
\begin{array}{ll}
\left[ v(\omega _{1}),v(\omega _{2})\right] , & \varepsilon _{1}+\varepsilon
_{2}>2\gamma \\
\left[ v(\omega _{2}),v(\omega _{1})\right] , & \varepsilon _{1}+\varepsilon
_{2}<-2\gamma ,%
\end{array}%
\right.
\end{equation}%
where%
\begin{eqnarray}
v(\omega _{1}) &=&\frac{1}{\sqrt{\gamma ^{2}-\left( \omega _{1}-\varepsilon
_{1}\right) ^{2}}}%
\begin{bmatrix}
\gamma \\
\omega _{1}-\varepsilon _{1}%
\end{bmatrix}%
, \\
v(\omega _{2}) &=&\frac{1}{\sqrt{\gamma ^{2}-\left( \omega _{1}-\varepsilon
_{1}\right) ^{2}}}%
\begin{bmatrix}
\omega _{2}+\varepsilon _{2} \\
-\gamma%
\end{bmatrix}%
.
\end{eqnarray}%
It is easy to show%
\begin{equation}
U^{\dag }I_{-}U=I_{-},
\end{equation}%
\begin{equation}
U^{\dag }MU=\left\{
\begin{array}{ll}
\mathrm{diag}(\omega _{1},-\omega _{2}), & \varepsilon _{1}+\varepsilon
_{2}>2\gamma \\
\mathrm{diag}(\omega _{2},-\omega _{1}), & \varepsilon _{1}+\varepsilon
_{2}<-2\gamma ,%
\end{array}%
\right.
\end{equation}%
\begin{equation}
H=\left\{
\begin{array}{ll}
\omega _{1}d_{1}^{\dag }d_{1}-\omega _{2}d_{2}^{\dag }d_{2}-\omega
_{2}-\varepsilon _{2}, & \varepsilon _{1}+\varepsilon _{2}>2\gamma \\
\omega _{2}d_{1}^{\dag }d_{1}-\omega _{1}d_{2}^{\dag }d_{2}-\omega
_{1}-\varepsilon _{2}, & \varepsilon _{1}+\varepsilon _{2}<-2\gamma .%
\end{array}%
\right.
\end{equation}
\end{solution}

Obviously, this example is an extension of the example \ref{Ins2}. Besides,
one can see that the present method is much more convenient than that
adopted by the example \ref{Ins2}.

\begin{example}
\begin{equation}
H=\sum_{i=1}^{2}\varepsilon(a_{i}^{\dag}a_{i}+b_{i}^{\dag}b_{i})+\sum
_{i,j=1}^{2}\gamma(a_{i}^{\dag}b_{j}^{\dag}+a_{i}b_{j}),
\end{equation}
where $\varepsilon>0$, and $\gamma>0$.
\end{example}

\begin{solution}
The dynamic matrix is%
\begin{equation}
D=%
\begin{bmatrix}
\varepsilon & 0 & \gamma & \gamma \\
0 & \varepsilon & \gamma & \gamma \\
-\gamma & -\gamma & -\varepsilon & 0 \\
-\gamma & -\gamma & 0 & -\varepsilon%
\end{bmatrix}%
.
\end{equation}%
There are four eigenvalues,%
\begin{equation}
\omega _{1}=\varepsilon ,\text{ \ }\omega _{2}=-\varepsilon ,\text{ \ }%
\omega _{3}=\omega _{+},\text{ \ }\omega _{4}=\omega _{-},
\end{equation}%
where%
\begin{equation}
\omega _{\pm }=\left\{
\begin{array}{ll}
\pm i\sqrt{4\gamma ^{2}-\varepsilon ^{2}}, & \varepsilon <2\gamma \\
0,\text{ } & \varepsilon =2\gamma \\
\pm \sqrt{\varepsilon ^{2}-4\gamma ^{2}}, & \varepsilon >2\gamma .%
\end{array}%
\right.
\end{equation}

If $\varepsilon<2\gamma$, $D$ has two imaginary eigenvalues, $H$ is not BV
diagonalizable.

If $\varepsilon=2\gamma$, the zero eigenvalue has only one eigenvector, $H$
is not BV diagonalizable.

If $\varepsilon >2\gamma $, $D$ is physically diagonalizable, $H$ can be BV
diagonalized. The corresponding transformation matrix is
\begin{equation}
U=\left[
\begin{array}{cccc}
v(\omega _{1}), & v(\omega _{3}), & v(\omega _{2}), & v(\omega _{4})%
\end{array}%
\right] ,
\end{equation}%
where%
\begin{equation}
v(\omega _{1})=\left[
\begin{array}{c}
\frac{1}{\sqrt{2}} \\
-\frac{1}{\sqrt{2}} \\
0 \\
0%
\end{array}%
\right] ,\text{ \ }v(\omega _{3})=\left[
\begin{array}{c}
\frac{2\gamma }{\sqrt{8\gamma ^{2}-2\left( \omega _{+}-\varepsilon \right)
^{2}}} \\
\frac{2\gamma }{\sqrt{8\gamma ^{2}-2\left( \omega _{+}-\varepsilon \right)
^{2}}} \\
\frac{\omega _{+}-\varepsilon }{\sqrt{8\gamma ^{2}-2\left( \omega
_{+}-\varepsilon \right) ^{2}}} \\
\frac{\omega _{+}-\varepsilon }{\sqrt{8\gamma ^{2}-2\left( \omega
_{+}-\varepsilon \right) ^{2}}}%
\end{array}%
\right]
\end{equation}%
\begin{equation}
v(\omega _{2})=\left[
\begin{array}{c}
0 \\
0 \\
\frac{1}{\sqrt{2}} \\
-\frac{1}{\sqrt{2}}%
\end{array}%
\right] ,\text{ \ }v(\omega _{4})=\left[
\begin{array}{c}
\frac{\omega _{-}+\varepsilon }{\sqrt{8\gamma ^{2}-2\left( \omega
_{-}+\varepsilon \right) ^{2}}} \\
\frac{\omega _{-}+\varepsilon }{\sqrt{8\gamma ^{2}-2\left( \omega
_{-}+\varepsilon \right) ^{2}}} \\
-\frac{2\gamma }{\sqrt{8\gamma ^{2}-2\left( \omega _{-}+\varepsilon \right)
^{2}}} \\
-\frac{2\gamma }{\sqrt{8\gamma ^{2}-2\left( \omega _{-}+\varepsilon \right)
^{2}}}%
\end{array}%
\right]
\end{equation}%
The diagonalized Hamiltonian is%
\begin{eqnarray}
H &=&\omega _{1}d_{1}^{\dag }d_{1}+\omega _{3}d_{2}^{\dag }d_{2}-\omega
_{2}d_{3}^{\dag }d_{3}-\omega _{4}d_{4}^{\dag }d_{4}  \notag \\
&&-\omega _{2}-\omega _{4}-2\varepsilon .
\end{eqnarray}
\end{solution}

The propositions and algorithms developed in this section can be applied to
statistical as well as condensed-matter physics \cite%
{Bogoliubov1,Bogoliubov4,Fetter,Wagner,Leggett,Madelung,White}.

\section{Application to Fermi Systems \label{AFS}}

As in the preceding subsection, we shall also concentrate ourselves on the
normal and pairing Hamiltonians. They represent the problems which we
encounter most frequently in practice.

\subsection{The normal Hamiltonian \label{NHF}}

The normal Hamiltonian reads
\begin{equation}
H=\sum_{i,j=1}^{n}\alpha_{ij}c_{i}^{\dag}c_{j}.  \label{FHamType1}
\end{equation}
The proposition \ref{BPPType1} can be easily transplanted to the present
case.

\begin{proposition}
\label{PPSNHF} A normal Hamiltonian of fermions can be BV diagonalized by
the unitary transformation generated by its coefficient matrix.
\end{proposition}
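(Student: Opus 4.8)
The plan is to imitate the proof of Proposition \ref{BPPType1} for the boson case, exploiting the fact that the abnormal coupling $\gamma$ vanishes so that the problem decouples. With $\gamma=0$ the dynamic matrix of the fermionic normal Hamiltonian becomes block diagonal,
\[
D=\left[\begin{array}{cc}\alpha & 0 \\ 0 & -\widetilde{\alpha}\end{array}\right],
\]
and for the Fermi system $D=M$, both Hermitian. First I would split the eigenvector $v(\omega)=[x(\omega);y(\omega)]$ into its upper and lower subvectors of size $n$, reducing the $2n$-dimensional eigenvalue problem $Dv(\omega)=\omega v(\omega)$ to the two decoupled $n$-dimensional problems $\alpha x(\omega)=\omega x(\omega)$ and $\widetilde{\alpha}\,y(\omega)=-\omega y(\omega)$. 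The first is precisely the eigenvalue problem of the coefficient matrix $\alpha$.

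Since $\alpha$ is Hermitian, it is unitarily diagonalizable, $U^{\dag}\alpha U=\mathrm{diag}(\omega_{1},\cdots,\omega_{n})$, with real eigenvalues and an orthonormal family $x(\omega_{i})$ forming the columns of $U$. These furnish the eigenvectors $v(\omega_{i})=[x(\omega_{i});0]$ of $D$ at eigenvalue $\omega_{i}$, while the convention of Eq. (\ref{VP1}) supplies the complementary eigenvectors $v(-\omega_{i})=\Sigma_{x}v^{\ast}(\omega_{i})=[0;x^{\ast}(\omega_{i})]$, which indeed obey $\widetilde{\alpha}\,x^{\ast}(\omega_{i})=\omega_{i}x^{\ast}(\omega_{i})$ because $\widetilde{\alpha}=\alpha^{\ast}$ and $\omega_{i}$ is real. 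With respect to the fermionic metric $I_{+}$ (so the form is the ordinary inner product) I would check that both families are orthonormal with all norms equal to $+1$, and then assemble the normal derivative BV matrix $T_{n}=\mathrm{diag}(U,U^{\ast})$ after ordering the columns by the fermionic stipulation of Eqs. (\ref{Stipu1})--(\ref{Stipu3}). Invoking the Fermi versions of Lemmas \ref{BVTn} and \ref{MDiag}, the matrix $T_{n}$ then satisfies $T_{n}^{\dag}I_{+}T_{n}=I_{+}$ and brings $M$ to diagonal form, so that $H=\sum_{i}\omega_{i}d_{i}^{\dag}d_{i}$.

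The verification that $T_{n}$ is unitary and effects the stated diagonalization is routine; the only delicate point is the ordering of columns, since for fermions every norm equals $+1$ and the pairing of $(\omega_{i},-\omega_{i})$ must be governed by the sign of the eigenvalue rather than by the sign of the norm as in the Bose case. Finally, to sharpen the statement I would collapse the $2n$-dimensional construction to the bare unitary transformation $c=Ud$, $c^{\dag}=d^{\dag}U^{\dag}$ generated directly by $\alpha$: because the Heisenberg equation $i\,\mathrm{d}c/\mathrm{d}t=\alpha c$ does not couple to $c^{\dag}$, the single unitary $U$ preserves the anticommutators and yields $H=d^{\dag}U^{\dag}\alpha U d=\sum_{i}\omega_{i}d_{i}^{\dag}d_{i}$ immediately. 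This is exactly the assertion that the unitary transformation generated by the coefficient matrix performs the BV diagonalization.
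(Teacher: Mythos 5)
Your proposal is correct and follows essentially the same route as the paper: the paper likewise proves Proposition \ref{PPSNHF} by observing that the Heisenberg equation reduces to $i\,\mathrm{d}c/\mathrm{d}t=\alpha c$, decoupled from $c^{\dag}$, so that the unitary $U$ diagonalizing the Hermitian matrix $\alpha$ already yields $c=Ud$ with $d$ a standard fermionic field and $H=\sum_{i}\omega_{i}d_{i}^{\dag}d_{i}$ (your longer $2n$-dimensional detour is just the transplanted proof of Proposition \ref{BPPType1}, which the paper invokes implicitly). The only point worth tightening is the one you already flag as "delicate": when some $\omega_{i}<0$ the matrix is not literally $\mathrm{diag}(U,U^{\ast})$ under the fermionic stipulation, and the paper handles this by appending a particle-hole transformation on the negative-energy modes, which shifts the constant term in the final diagonal form.
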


That is also because the Heisenberg equation for the Hamiltonian of Eq. (\ref%
{FHamType1}) is reducible. It can be reduced as%
\begin{equation}
i\frac{\mathrm{d}}{\mathrm{d}t}c=\alpha c.
\end{equation}%
Since $\alpha $ is Hermitian, this equation of motion can generate a unitary
transformation $U$ for the field $c$,%
\begin{equation}
c=Ud,
\end{equation}%
where%
\begin{gather}
U^{\dag }U=UU^{\dag }=I, \\
U^{\dag }\alpha U=\mathrm{diag}(\omega _{1},\omega _{2},\cdots ,\omega _{n}).
\end{gather}%
The $d$ represents the new field, it is easy to show that $d$ is a standard
fermionic field,%
\begin{equation}
d\cdot d^{\dag }=I,\text{ \ }d\cdot d=0,\text{ \ }d^{\dag }\cdot d^{\dag }=0.
\end{equation}%
Accordingly,%
\begin{equation}
H=d^{\dag }U^{\dag }\alpha Ud=\sum_{i=1}^{n}\omega _{i}d_{i}^{\dag }d_{j}.
\end{equation}%
Besides, a particle-hole transformation will be needed if some eigenenergies
are negative.

To sum up, a normal Hamiltonian can be BV diagonalized by the unitary
transformation generated by its coefficient matrix no matter whether the
system is bosonic or fermionic.

\begin{example}
\begin{equation}
H=\varepsilon(c_{1}^{\dag}c_{1}+c_{2}^{\dag}c_{2})+\mu(c_{1}^{%
\dag}c_{2}+c_{2}^{\dag}c_{1}),
\end{equation}
where $\mu>0$.
\end{example}

\begin{solution}
The coefficient matrix is%
\begin{equation}
\alpha =%
\begin{bmatrix}
\varepsilon & \mu \\
\mu & \varepsilon%
\end{bmatrix}%
.
\end{equation}%
It has two eigenvalues,%
\begin{equation}
\omega _{1}=\varepsilon +\mu ,\text{ \ }\omega _{2}=\varepsilon -\mu .
\end{equation}%
The unitary matrix can be easily found,%
\begin{equation}
U=\left[
\begin{array}{cc}
v(\omega _{1}), & v(\omega _{2})%
\end{array}%
\right] =\frac{1}{\sqrt{2}}%
\begin{bmatrix}
1 & 1 \\
1 & -1%
\end{bmatrix}%
.
\end{equation}

1. If $\varepsilon \geq \mu $, $\omega _{1}>0$ and\ $\omega _{2}\geq 0$,%
\begin{gather}
\left[
\begin{array}{c}
c_{1} \\
c_{2}%
\end{array}%
\right] =\frac{1}{\sqrt{2}}%
\begin{bmatrix}
1 & 1 \\
1 & -1%
\end{bmatrix}%
\left[
\begin{array}{c}
d_{1} \\
d_{2}%
\end{array}%
\right] , \\[0.05in]
H=\omega _{1}d_{1}^{\dag }d_{1}+\omega _{2}d_{2}^{\dag }d_{2}.
\end{gather}

2. If $-\mu \leq \varepsilon <\mu $, $\omega _{1}\geq 0$ and\ $\omega _{2}<0$%
,
\begin{gather}
\left[
\begin{array}{c}
c_{1} \\
c_{2}%
\end{array}%
\right] =\frac{1}{\sqrt{2}}\left[
\begin{array}{cc}
1 & 1 \\
1 & -1%
\end{array}%
\right] \left[
\begin{array}{c}
d_{1} \\
d_{2}^{\dag }%
\end{array}%
\right] , \\[0.05in]
H=\omega _{1}d_{1}^{\dag }d_{1}-\omega _{2}d_{2}^{\dag }d_{2}+\omega _{2}.
\end{gather}

3. If $\varepsilon <-\mu $, $\omega _{1}<0$ and\ $\omega _{2}<0$,
\begin{gather}
\left[
\begin{array}{c}
c_{1} \\
c_{2}%
\end{array}%
\right] =\frac{1}{\sqrt{2}}\left[
\begin{array}{cc}
1 & 1 \\
1 & -1%
\end{array}%
\right] \left[
\begin{array}{c}
d_{1}^{\dag } \\
d_{2}^{\dag }%
\end{array}%
\right] , \\[0.05in]
H=-\omega _{1}d_{1}^{\dag }d_{1}-\omega _{2}d_{2}^{\dag }d_{2}+\omega
_{1}+\omega _{2}.
\end{gather}%
Here, a particle-hole transformation is performed to the $d_{2}$- particles
if $-\mu \leq \varepsilon <\mu $, and to both the $d_{1}$- and $d_{2}$-
particles if $\varepsilon <-\mu $.
\end{solution}

\subsection{The pairing Hamiltonian}

The pairing Hamiltonian reads
\begin{equation}
H=\sum_{i,j=1}^{n}(\alpha _{ij}a_{i}^{\dag }a_{j}+\varepsilon
_{ij}b_{i}^{\dag }b_{j}+\gamma _{ij}a_{i}b_{j}+\gamma _{ji}^{\ast
}a_{i}^{\dag }b_{j}^{\dag }),  \label{FHamType2}
\end{equation}%
where%
\begin{equation}
\alpha ^{\dag }=\alpha ,\text{ \ }\varepsilon ^{\dag }=\varepsilon ,\text{ \
}\widetilde{\gamma }=-\gamma .
\end{equation}

Following the Bose case, let us introduce the new operators $c_{i}$ and $%
c_{i}^{\dag }$ as%
\begin{equation}
c_{i}=b_{i}^{\dag },\text{ \ }c_{i}^{\dag }=b_{i},\text{ \ }i=1,2,\cdots ,n.
\label{PtoH}
\end{equation}%
The new anticommutators will be
\begin{gather}
a\cdot a^{\dag }=I,\text{ \ }a\cdot a=0,\text{ \ }a^{\dag }\cdot a^{\dag }=0,
\\
c\cdot c^{\dag }=I,\text{ \ }c\cdot c=0,\text{ \ }c^{\dag }\cdot c^{\dag }=0,
\\
a\cdot c=0,\text{ \ }a\cdot c^{\dag }=0,\text{ \ }a^{\dag }\cdot c=0,\text{
\ }a^{\dag }\cdot c^{\dag }=0.
\end{gather}%
Obviously, they are still standard, which is rather different from the Bose
case. In terms of these new operators, Eq. (\ref{FHamType2}) can be
expressed as
\begin{equation}
H=\psi ^{\dag }M\psi +\mathrm{tr}(\varepsilon ),  \label{HamPairf}
\end{equation}%
where $M$ is the coefficient matrix,
\begin{equation}
M=\left[
\begin{array}{cc}
\alpha & \gamma \\
\gamma ^{\dag } & -\varepsilon%
\end{array}%
\right] ,
\end{equation}%
and $\psi $ the field operator,
\begin{equation}
\psi =\left[
\begin{array}{c}
a \\
c%
\end{array}%
\right] ,\text{ \ }\psi ^{\dag }=\left[
\begin{array}{cc}
a^{\dag }, & c^{\dag }%
\end{array}%
\right] .
\end{equation}%
It is evident that%
\begin{gather}
M=M^{\dag }, \\
\psi \cdot \psi ^{\dag }=I_{+}.
\end{gather}%
Namely, $M$ is Hermitian, and $\psi $ is a standard fermionic field. This
means that Eq. (\ref{HamPairf}) is, in fact, a normal Hamiltonian. Hence, we
obtain the proposition.

\begin{proposition}
A pairing Hamiltonian of fermions can be first transformed into a normal
Hamiltonian, and then BV diagonalized by the unitary transformation
generated by the corresponding coefficient matrix.
\end{proposition}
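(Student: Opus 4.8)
The plan is to observe that the particle-hole substitution of Eq.~(\ref{PtoH}) turns the fermionic pairing Hamiltonian into an honest normal Hamiltonian on $2n$ fermionic modes, after which Proposition~\ref{PPSNHF} supplies the diagonalization for free. First I would introduce $c_i=b_i^{\dag}$, $c_i^{\dag}=b_i$ and assemble the combined field $\psi=[\,a,\,c\,]$. The decisive point is that $\psi$ remains a \emph{standard} fermionic field, $\psi\cdot\psi^{\dag}=I_{+}$. This is precisely where fermionic statistics do the essential work: the anticommutator $[b_i,b_j^{\dag}]=\delta_{ij}$ is symmetric, so interchanging $b_i$ and $b_i^{\dag}$ leaves it unchanged, $[c_i,c_j^{\dag}]=[b_i^{\dag},b_j]=\delta_{ij}$. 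The transformed $c$-particles are therefore genuine fermions carrying the positive-definite metric $I_{+}$, in sharp contrast to the bosonic pairing Hamiltonian, where the identical substitution produces an antisymmetric commutator, yields time-polarized bosons with the indefinite metric $I_{-}$, and thereby forces the full physical-diagonalizability analysis.

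With the statistics settled, I would rewrite $H$ in matrix form. Collecting the four families of terms, using the normal ordering of $\varepsilon_{ij}b_i^{\dag}b_j$ to generate the additive constant $\mathrm{tr}(\varepsilon)$ together with the $-\varepsilon$ block, and relocating the anomalous terms $\gamma_{ij}a_ib_j$ and $\gamma_{ji}^{\ast}a_i^{\dag}b_j^{\dag}$ into the off-diagonal blocks, one arrives at $H=\psi^{\dag}M\psi+\mathrm{tr}(\varepsilon)$ with the coefficient matrix $M$ of Eq.~(\ref{HamPairf}). Hermiticity $M=M^{\dag}$ then follows at once from $\alpha^{\dag}=\alpha$ and $\varepsilon^{\dag}=\varepsilon$. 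Since $\psi$ is standard and $M$ is Hermitian with no anomalous structure in the $\psi$-variables themselves, Eq.~(\ref{HamPairf}) is literally a normal Hamiltonian of the type~(\ref{FHamType1}) for the $2n$ modes $a_1,\dots,a_n,c_1,\dots,c_n$.

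At this stage Proposition~\ref{PPSNHF} closes the argument: the Hermitian $M$ generates a unitary $U$ with $U^{\dag}MU$ diagonal, the transformation $\psi=U\varphi$ keeps $\varphi$ a standard fermionic field, and it brings $H$ to diagonal form. To present the result in the particle picture I would, for any mode with a negative eigenvalue, apply the particle-hole transformation of Sec.~\ref{NHF}, and finally invert Eq.~(\ref{PtoH}) on the $c$-sector to re-express everything in terms of physical excitations. The main obstacle is none of these conceptual steps but rather the sign bookkeeping in the reduction: one must verify carefully that the fermionic symmetry $\widetilde{\gamma}=-\gamma$ (opposite to the bosonic case), in concert with the anticommutation rules, conspires to produce exactly the stated $M$ and the standard metric $I_{+}$, with no stray sign spoiling either $\psi\cdot\psi^{\dag}=I_{+}$ or $M=M^{\dag}$. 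Once those signs check out, the problem has been reduced to the already-solved normal case and nothing further is required.
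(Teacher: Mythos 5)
Your proposal follows exactly the paper's route: the particle–hole substitution $c_i=b_i^{\dag}$, $c_i^{\dag}=b_i$ preserves the standard anticommutators (the symmetry of the anticommutator being the key fermionic fact), the Hamiltonian becomes $H=\psi^{\dag}M\psi+\mathrm{tr}(\varepsilon)$ with $M$ Hermitian and $\psi$ a standard field carrying the metric $I_{+}$, so the problem reduces to the normal Hamiltonian case and Proposition~\ref{PPSNHF} finishes it. This is correct and essentially identical to the paper's argument, including the contrast with the bosonic pairing case and the final particle--hole adjustment for negative eigenvalues.
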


As well known, Eq. (\ref{PtoH}) represents a particle-hole transformation in
physics. So, the proposition says actually that a pairing Hamiltonian of
fermions can be transformed into a normal Hamiltonian by a particle-hole
transformation.

\begin{example}
\begin{equation}
H=\varepsilon_{1}c_{1}^{\dag}c_{1}+\varepsilon_{2}c_{2}^{\dag}c_{2}+%
\gamma(c_{1}^{\dag}c_{2}^{\dag}+c_{2}c_{1}),
\end{equation}
where $\gamma>0$.
\end{example}

\begin{solution}
\begin{description}
\item The coefficient matrix is%
\begin{equation}
M=%
\begin{bmatrix}
\varepsilon _{1} & \gamma \\
\gamma & -\varepsilon _{2}%
\end{bmatrix}%
.
\end{equation}%
It has two eigenvalues,%
\begin{eqnarray}
\omega _{1} &=&\frac{1}{2}\left[ \varepsilon _{1}-\varepsilon _{2}+\sqrt{%
\left( \varepsilon _{1}+\varepsilon _{2}\right) ^{2}+4\gamma ^{2}}\right] ,
\\
\omega _{2} &=&\frac{1}{2}\left[ \varepsilon _{1}-\varepsilon _{2}-\sqrt{%
\left( \varepsilon _{1}+\varepsilon _{2}\right) ^{2}+4\gamma ^{2}}\right] ,
\end{eqnarray}%
and generates a unitary transformation matrix,%
\begin{eqnarray}
U &=&\left[
\begin{array}{cc}
v(\omega _{1}), & v(\omega _{2})%
\end{array}%
\right]  \notag \\
&=&\frac{1}{\sqrt{\left( \omega _{1}-\varepsilon _{1}\right) ^{2}+\gamma ^{2}%
}}  \notag \\
&&\times
\begin{bmatrix}
\gamma & \omega _{2}+\varepsilon _{2} \\
\omega _{1}-\varepsilon _{1} & \gamma%
\end{bmatrix}%
.
\end{eqnarray}
\end{description}

1. If $\gamma ^{2}\leq -\varepsilon _{1}\varepsilon $ and $\varepsilon
_{1}>0 $, the diagonalized Hamiltonian has the form,
\begin{equation}
H=\omega _{1}d_{1}^{\dag }d_{1}+\omega _{2}d_{2}^{\dag }d_{2}+\varepsilon
_{2},
\end{equation}%
where%
\begin{equation}
\left[
\begin{array}{c}
d_{1} \\
d_{2}%
\end{array}%
\right] =U^{-1}\left[
\begin{array}{c}
c_{1} \\
c_{2}^{\dag }%
\end{array}%
\right] .
\end{equation}

2. If $\gamma ^{2}\leq -\varepsilon _{1}\varepsilon _{2}$ and $\varepsilon
_{1}<0$, the diagonalized Hamiltonian has the form,
\begin{equation}
H=-\omega _{1}d_{1}^{\dag }d_{1}-\omega _{2}d_{2}^{\dag }d_{2}+\omega
_{1}+\omega _{2}+\varepsilon _{2},
\end{equation}%
where%
\begin{equation}
\left[
\begin{array}{c}
d_{1}^{\dag } \\
d_{2}^{\dag }%
\end{array}%
\right] =U^{-1}\left[
\begin{array}{c}
c_{1} \\
c_{2}^{\dag }%
\end{array}%
\right] .
\end{equation}

3. If $\gamma ^{2}>-\varepsilon _{1}\varepsilon _{2}$, the diagonalized
Hamiltonian has the form,
\begin{equation}
H=\omega _{1}d_{1}^{\dag }d_{1}-\omega _{2}d_{2}^{\dag }d_{2}+\omega
_{2}+\varepsilon _{2},
\end{equation}%
where%
\begin{equation}
\left[
\begin{array}{c}
d_{1} \\
d_{2}^{\dag }%
\end{array}%
\right] =U^{-1}\left[
\begin{array}{c}
c_{1} \\
c_{2}^{\dag }%
\end{array}%
\right] .
\end{equation}
\end{solution}

Obviously, this example is an extension of the example \ref{Insf1}. One can
see that the present method is much simpler than that used by the example %
\ref{Insf1}.

\begin{example}
\begin{equation}
H=\sum_{i=1}^{2}\varepsilon(a_{i}^{\dag}a_{i}+b_{i}^{\dag}b_{i})+\sum
_{i,j=1}^{2}\gamma(a_{i}^{\dag}b_{j}^{\dag}+a_{i}b_{j}),
\end{equation}
where $\varepsilon>0$, and $\gamma>0$.
\end{example}

\begin{solution}
The coefficient matrix is
\begin{equation}
M=%
\begin{bmatrix}
\varepsilon & 0 & \gamma & \gamma \\
0 & \varepsilon & \gamma & \gamma \\
\gamma & \gamma & -\varepsilon & 0 \\
\gamma & \gamma & 0 & -\varepsilon%
\end{bmatrix}%
.
\end{equation}%
It has four eigenvalues,%
\begin{gather}
\omega _{1}=\varepsilon ,\text{ \ }\omega _{2}=-\varepsilon , \\
\omega _{3}=\sqrt{\varepsilon ^{2}+4\gamma ^{2}},\text{ \ }\omega _{4}=-%
\sqrt{\varepsilon ^{2}+4\gamma ^{2}}.
\end{gather}%
The diagonalized Hamiltonian is%
\begin{eqnarray}
H &=&\omega _{1}d_{1}^{\dag }d_{1}-\omega _{2}d_{2}^{\dag }d_{2}+\omega
_{3}d_{3}^{\dag }d_{3}-\omega _{4}d_{4}^{\dag }d_{4}  \notag \\
&&+\omega _{2}+\omega _{4}+2\varepsilon .
\end{eqnarray}%
The corresponding BV transformation has the form,
\begin{widetext}%
\begin{equation}
\left[
\begin{array}{c}
a_{1}\vspace*{5pt} \\
a_{2}\vspace*{5pt} \\
b_{1}^{\dag }\vspace*{5pt} \\
b_{2}^{\dag }\vspace*{5pt}%
\end{array}%
\right] =\left[
\begin{array}{cccc}
\frac{1}{\sqrt{2}} & 0 & \frac{2\gamma }{\sqrt{8\gamma ^{2}+2\left( \omega
_{3}-\varepsilon \right) ^{2}}} & \frac{\omega _{4}+\varepsilon }{\sqrt{%
8\gamma ^{2}+2\left( \omega _{4}+\varepsilon \right) ^{2}}} \\
-\frac{1}{\sqrt{2}} & 0 & \frac{2\gamma }{\sqrt{8\gamma ^{2}+2\left( \omega
_{3}-\varepsilon \right) ^{2}}} & \frac{\omega _{4}+\varepsilon }{\sqrt{%
8\gamma ^{2}+2\left( \omega _{4}+\varepsilon \right) ^{2}}} \\
0 & \frac{1}{\sqrt{2}} & \frac{\omega _{3}-\varepsilon }{\sqrt{8\gamma
^{2}+2\left( \omega _{3}-\varepsilon \right) ^{2}}} & \frac{2\gamma }{\sqrt{%
8\gamma ^{2}+2\left( \omega _{4}+\varepsilon \right) ^{2}}} \\
0 & -\frac{1}{\sqrt{2}} & \frac{\omega _{3}-\varepsilon }{\sqrt{8\gamma
^{2}+2\left( \omega _{3}-\varepsilon \right) ^{2}}} & \frac{2\gamma }{\sqrt{%
8\gamma ^{2}+2\left( \omega _{4}+\varepsilon \right) ^{2}}}%
\end{array}%
\right] \left[
\begin{array}{c}
d_{1}\vspace*{5pt} \\
d_{2}^{\dag }\vspace*{5pt} \\
d_{3}\vspace*{5pt} \\
d_{4}^{\dag }\vspace*{5pt}%
\end{array}%
\right] .
\end{equation}%
\end{widetext}%
\end{solution}

The propositions and algorithms developed in this section can be applied to
statistical physics \cite%
{Bogoliubov2,Bogoliubov3,Bogoliubov4,Valatin1,Valatin2,Fetter},
condensed-matter physics \cite{Wagner,Mahan,Altland,Doniach}, and nuclear
physics \cite{Fetter,Ring}.

\section{Generalized\ Bogoliubov-Valatin Transformation \label{GBVT}}

Historically, Dirac \cite{Dirac} and later Shr\"{o}dinger \cite%
{Schrodinger1,Schrodinger2,Schrodinger3,Infeld} found that a linear harmonic
oscillator can be diagonalized with regard to the bosonic creation and
annihilation operators,%
\begin{eqnarray}
H &=&\frac{p^{2}}{2m}+\frac{1}{2}m\omega ^{2}q^{2}  \notag \\
&=&\omega a^{\dag }a+\frac{1}{2}\omega ,  \label{LHH}
\end{eqnarray}%
where $m$ and $\omega $ are the mass and (angular) frequency of the
oscillator, respectively. The $q$ and $p$ are the coordinate and momentum
operators. They are both Hermitian, and satisfy the canonical commutation
rule,%
\begin{equation}
\lbrack p,q]=-i.
\end{equation}%
The $a$ and $a^{\dag }$ are the annihilation and creation operators. They
are Hermitian conjugates of each other, and satisfy the bosonic commutation
rule,%
\begin{equation}
\lbrack a,a^{\dag }]=1.
\end{equation}%
Both $a$ and $a^{\dag }$ are the linear functions of $q$ and $p$,
\begin{eqnarray}
a &=&\sqrt{\frac{m\omega }{2}}\left( q+i\frac{p}{m\omega }\right) ,
\label{DT1} \\
a^{\dag } &=&\sqrt{\frac{m\omega }{2}}\left( q-i\frac{p}{m\omega }\right) ,
\label{DT2}
\end{eqnarray}%
and vice versa. For convenience, we shall call such a transformation Dirac
transformation, and call the corresponding diagonalization Dirac
diagonalization.

In this section, we shall study the necessary and sufficient conditions for
Dirac diagonalization. We find that Dirac diagonalization is actually a
generalization of BV diagonalization.

\subsection{Theory of Dirac diagonalization}

Consider the Hamiltonian that is quadratic in coordinates and momenta,%
\begin{eqnarray}
H &=&\sum_{i,j=1}^{n}(\frac{1}{2}\mu _{ij}p_{i}p_{j}+\frac{1}{2}\kappa
_{ij}q_{i}q_{j}+\frac{1}{2}\gamma _{ij}p_{i}q_{j}+\frac{1}{2}\gamma
_{ji}q_{i}p_{j})  \notag \\
&=&\frac{1}{2}p^{\dag }\mu p+\frac{1}{2}q^{\dag }\kappa q+\frac{1}{2}p^{\dag
}\gamma q+\frac{1}{2}q^{\dag }\gamma ^{\dag }p,  \label{HamC0}
\end{eqnarray}%
where the coefficients are all real, $\kappa _{il}\in
\mathbb{R}
$, $\mu _{ij}\in
\mathbb{R}
$, $\gamma _{ij}\in
\mathbb{R}
$, and%
\begin{equation}
\mu =\mu ^{\dag },\text{ \ }\kappa =\kappa ^{\dag }.
\end{equation}%
The coordinates and momenta satisfy the canonical commutation relations,%
\begin{equation}
\lbrack p_{i},q_{j}]=-i\delta _{ij},\text{ \ }[p_{i},p_{j}]=0,\text{ \ }%
[q_{i},q_{j}]=0,
\end{equation}%
or equivalently,%
\begin{equation}
p\cdot q^{\dag }=-iI,\text{ \ }p\cdot p^{\dag }=0,\text{ \ }q\cdot q^{\dag
}=0.  \label{CCR}
\end{equation}

Let us put%
\begin{equation}
\phi=\left[
\begin{array}{c}
p \\
q%
\end{array}
\right] ,\text{ \ }\phi^{\dag}=\left[
\begin{array}{cc}
p^{\dag}, & q^{\dag}%
\end{array}
\right] .
\end{equation}
The Hamiltonian can written more compactly as%
\begin{equation}
H=\frac{1}{2}\phi^{\dag}M\phi,  \label{CHam}
\end{equation}
where\ $M$ is the coefficient matrix,%
\begin{equation}
M=\left[
\begin{array}{cc}
\mu & \gamma \\
\gamma^{\dag} & \kappa%
\end{array}
\right] ,
\end{equation}
which is Hermitian. Equation (\ref{CCR}) becomes%
\begin{equation}
\phi\cdot\phi^{\dag}=\Sigma_{y},  \label{NCCR}
\end{equation}
where%
\begin{equation}
\Sigma_{y}=\left[
\begin{array}{cc}
0 & -iI \\
iI & 0%
\end{array}
\right] .  \label{Sigmay}
\end{equation}
Obviously, $\Sigma_{y}$ is a Hermitian matrix.

The Heisenberg equation of motion for the field $\phi$ can be derived from
Eqs. (\ref{CHam}) and (\ref{NCCR}),%
\begin{equation}
i\frac{\mathrm{d}}{\mathrm{d}t}\phi=D\phi,  \label{lphi}
\end{equation}
where $D$ is the dynamic matrix,
\begin{equation}
D=\Sigma_{y}M.  \label{DSyM}
\end{equation}
Generally, $D$ is not Hermitian. The above equation gives the relationship
between the dynamic matrx $D$ and the coefficient matrix $M$, it is the
counterpart of Eq. (\ref{DIM}).

The eigenvalue equation for Eq. (\ref{lphi}) is,%
\begin{equation}
\omega v(\omega)=Dv(\omega).
\end{equation}
It can be rewritten as%
\begin{equation}
i\omega v(\omega)=JMv(\omega),  \label{JMV}
\end{equation}
where $J$ is the unit symplectic matrix given in Eq. (\ref{JSy}). Paying
attention to the fact that $JM$ is a real matrix, one realizes immediatly
that the lemma \ref{Lmm1} also holds now.

\begin{lemma}
If $\omega$ is an eigenvalue of the dynamic matrix $D$, then $-\omega^{\ast}$
will also be an eigenvalue of $D$.
\end{lemma}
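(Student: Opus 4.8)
The plan is to exploit the reality of the matrix $JM$, exactly as the text hints immediately before the statement. First I would record the structural facts: because every coefficient $\mu_{ij},\kappa_{ij},\gamma_{ij}$ in Eq.~(\ref{HamC0}) is real, the coefficient matrix $M=\left[\begin{smallmatrix}\mu & \gamma\\ \gamma^{\dagger} & \kappa\end{smallmatrix}\right]$ is a \emph{real symmetric} matrix, and the unit symplectic matrix $J$ has real entries; hence $JM$ is real. I would also note the clean relation $\Sigma_{y}=-iJ$, so that $D=\Sigma_{y}M=-iJM$, i.e.\ $JM=iD$. With this, the eigenvalue equation $Dv(\omega)=\omega v(\omega)$ is equivalent to Eq.~(\ref{JMV}), namely $JM\,v(\omega)=i\omega v(\omega)$.

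The core step is then a single complex conjugation. Taking the complex conjugate of $JM\,v(\omega)=i\omega v(\omega)$ and using $\overline{JM}=JM$ gives $JM\,v^{\ast}(\omega)=-i\omega^{\ast}v^{\ast}(\omega)$. Reading this back through $JM=iD$ yields $D\,v^{\ast}(\omega)=-\omega^{\ast}v^{\ast}(\omega)$, so that $v^{\ast}(\omega)$ is an eigenvector of $D$ belonging to the eigenvalue $-\omega^{\ast}$, which is precisely the claim. In passing this also identifies the paired eigenvector: here it is simply $v^{\ast}(\omega)$, with no $\Sigma_{x}$ twist, because the reality now sits directly in $JM$ rather than in a conjugated block structure as in the bosonic case.

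As an alternative that mirrors the proof of Lemma~\ref{Lmm1}, I could argue at the level of the characteristic polynomial: since $D=-iJM$ with $JM$ real, the spectrum of $JM$ is invariant under complex conjugation (its characteristic polynomial has real coefficients), and multiplying by $-i$ sends a conjugate pair $\{\lambda,\lambda^{\ast}\}$ of $JM$-eigenvalues to the pair $\{-i\lambda,-i\lambda^{\ast}\}=\{\omega,-\omega^{\ast}\}$ of $D$-eigenvalues. Either route is immediate, so I do not expect a genuine obstacle. The only point demanding care is the bookkeeping that promotes the reality of $M$ (a consequence of the real coefficients in Eq.~(\ref{HamC0})) to the reality of $JM$, together with correctly tracking the factor $-i$ that relates $D$ to $JM$ through $\Sigma_{y}=-iJ$.
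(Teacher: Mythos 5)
Your proof is correct and follows exactly the route the paper intends: rewriting the eigenvalue problem as $JM\,v(\omega)=i\omega\,v(\omega)$ and exploiting the reality of $JM$ (the paper simply asserts "one realizes immediately" at this point, and your complex-conjugation step is precisely the omitted calculation). As a bonus, your argument also yields the explicit paired eigenvector $v^{\ast}(\omega)$, which the paper states as the subsequent lemma.
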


Nevertheless, the lemma \ref{Lmm2} should be modified as follows.

\begin{lemma}
If $v(\omega)$ is an eigenvector belonging to the eigenvalue $\omega$ of the
dynamic matrix $D$, then its complex conjugate $v^{\ast}(\omega)$ will be an
eigenvector belonging to the eigenvalue $-\omega^{\ast}$.
\end{lemma}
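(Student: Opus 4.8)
The plan is to exploit the reformulation of the eigenvalue equation already supplied in Eq. (\ref{JMV}), namely $i\omega v(\omega)=JMv(\omega)$, together with the crucial observation recorded just before the lemma: since every coefficient $\mu$, $\kappa$, and $\gamma$ is real, the product $JM$ is a \emph{real} matrix. It is precisely this reality that will replace the conjugation-with-$\Sigma_{x}$ device that was needed in Lemma \ref{Lmm2} for the Bose case, where the field carried an involution symmetry. Here the field is $\phi=[p;q]$ rather than $[c;\widetilde{c^{\dag}}]$, so no $\Sigma_{x}$ appears and the modification to the conclusion is exactly that $v^{\ast}(\omega)$ itself (not $\Sigma_{x}v^{\ast}(\omega)$) is the paired eigenvector.

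First I would take the complex conjugate of both sides of Eq. (\ref{JMV}). On the left, $(i\omega)^{\ast}=-i\omega^{\ast}$, so the left-hand side becomes $-i\omega^{\ast}v^{\ast}(\omega)$. On the right, because $JM$ is real we have $(JM)^{\ast}=JM$, so the right-hand side becomes $JMv^{\ast}(\omega)$. This produces
\begin{equation}
i(-\omega^{\ast})v^{\ast}(\omega)=JMv^{\ast}(\omega),
\end{equation}
which is structurally identical to Eq. (\ref{JMV}) but with $\omega$ replaced by $-\omega^{\ast}$ and $v(\omega)$ replaced by $v^{\ast}(\omega)$.

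Finally I would translate this back into the language of the dynamic matrix. Using $\Sigma_{y}=-iJ$ (immediate from Eqs. (\ref{Sigmay}) and (\ref{JSy})) together with Eq. (\ref{DSyM}), one has $D=-iJM$; multiplying the displayed equation by $-i$ then gives $Dv^{\ast}(\omega)=-\omega^{\ast}v^{\ast}(\omega)$. Hence $v^{\ast}(\omega)$ is an eigenvector of $D$ belonging to the eigenvalue $-\omega^{\ast}$, which is exactly the claim. There is essentially no obstacle here: the whole argument rests on the single algebraic fact that $JM$ is real, and once the eigenvalue problem is cast in the symmetric form of Eq. (\ref{JMV}) the result follows by a one-line conjugation. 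The only point demanding minor care is the bookkeeping of conjugating the imaginary factor $i$ and confirming the resulting sign of the eigenvalue; the reality of $JM$ does all of the real work.
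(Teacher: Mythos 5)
Your proof is correct and follows exactly the route the paper intends: the paper states this lemma without an explicit proof, relying on the observation (made just before the preceding lemma) that $JM$ is real, and your one-line conjugation of Eq. (\ref{JMV}) followed by the translation back to $D$ via $D=\Sigma_{y}M=-iJM$ is precisely that argument, with the signs handled correctly.
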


These two lemmas show that the dynamic mode pair appears now in the form of $%
\{\omega,v(\omega)\}$ and $\{-\omega^{\ast},v^{\ast}(\omega)\}$ where the
two eigenvectors are complex conjugates of each other.

Now, consider the Dirac transformation,%
\begin{equation}
\phi=T\psi,
\end{equation}
where $\psi$ represents the bosonic field given in Eq. (\ref{Psi}).
Evidently, the Heisenberg equation of $\psi$ is still linear and homogeneous,%
\begin{equation}
i\frac{\mathrm{d}}{\mathrm{d}t}\psi=D_{1}\psi,
\end{equation}
where $D_{1}$ is the dynamic matrix for $\psi$. Following the proof for the
lemma \ref{Similar}, we obtain the lemma below.

\begin{lemma}
\label{SimilarD} Under a Dirac transformation, the two dynamic matrices
respectively for the old and new fields will be similar to each other.
\end{lemma}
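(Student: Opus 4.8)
The plan is to follow the proof of Lemma \ref{Similar} essentially word for word, since the Dirac transformation $\phi=T\psi$ occupies exactly the position held by the BV transformation in the bosonic case, and the argument there used nothing beyond the constancy and invertibility of $T$ together with the linear independence of the field components. The symplectic structure $\Sigma_{y}$ and the particular form $D=\Sigma_{y}M$ of the dynamic matrix play no role in that reasoning, so I expect the transcription to go through without incident.

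First I would write down the two equations of motion. The new field $\psi$ is bosonic, so its Heisenberg equation is again linear and homogeneous, $i\frac{\mathrm{d}}{\mathrm{d}t}\psi=D_{1}\psi$, with $D_{1}$ its dynamic matrix; this was already recorded just above the statement. The old field obeys $i\frac{\mathrm{d}}{\mathrm{d}t}\phi=D\phi$ by Eq. (\ref{lphi}). Since $T$ is a constant matrix, differentiating $\phi=T\psi$ in time gives $i\frac{\mathrm{d}}{\mathrm{d}t}\phi=T\,i\frac{\mathrm{d}}{\mathrm{d}t}\psi=TD_{1}\psi$, while substituting the transformation into the old equation of motion gives $i\frac{\mathrm{d}}{\mathrm{d}t}\phi=D\phi=DT\psi$. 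Equating the two expressions and left-multiplying by the existing inverse $T^{-1}$ yields $\left(T^{-1}DT-D_{1}\right)\psi=0$.

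The one step that deserves care is the passage from $\left(T^{-1}DT-D_{1}\right)\psi=0$ to the matrix identity $T^{-1}DT-D_{1}=0$. As in the proof of Lemma \ref{Similar}, I would denote by $v_{i}$ the rows of $T^{-1}DT-D_{1}$, so that $v_{i}\psi=0$ is a linear relation among the component operators of $\psi$. Because those operators, the bosonic annihilation and creation operators underlying Eq. (\ref{Com4}), are linearly independent over $\mathbb{C}$, each $v_{i}$ must vanish, and hence $T^{-1}DT-D_{1}=0$, i.e. $D_{1}=T^{-1}DT$. This is precisely the asserted similarity of the two dynamic matrices. The only hypothesis to confirm at the outset is that the Dirac transformation matrix $T$ is time-independent and invertible, which is built into the definition of the transformation, so I anticipate no genuine obstacle.
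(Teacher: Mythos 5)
Your proposal is correct and matches the paper, which itself disposes of this lemma by saying ``Following the proof for the lemma \ref{Similar}, we obtain the lemma below''; you have simply transcribed that argument with $\phi=T\psi$ in place of $\psi=T\varphi$, exactly as intended. Your added remark that the rows of $T^{-1}DT-D_{1}$ must vanish because the component operators of $\psi$ are linearly independent is a harmless (and welcome) elaboration of a step the paper leaves implicit.
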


It is evident that $D_{1}$ is a real diagonal matrix if the Hamiltonian of
Eq. (\ref{CHam}) has been diagonalized Diracianly. As a result, the
proposition \ref{PPS1} holds for Dirac diagonalization, too.

\begin{proposition}
\label{PPD} If a Hamiltonian quadratic in coordinates and momenta can be
Diracianly diagonalized, its dynamic matrix is physically diagonalizable.
\end{proposition}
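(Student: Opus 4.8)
The plan is to transplant the argument of Proposition \ref{PPS1} to the Dirac setting, replacing Lemma \ref{Similar} by its Dirac analogue Lemma \ref{SimilarD}. First I would invoke the hypothesis: if the Hamiltonian of Eq. (\ref{CHam}) is Diracianly diagonalizable, then after a suitable Dirac transformation $\phi=T\psi$, with $\psi$ the standard bosonic field of Eq. (\ref{Psi}), it assumes the oscillator form
\begin{equation}
H=\sum_{i=1}^{n}\omega_i\,c_i^\dag c_i+C,
\end{equation}
where $C$ is a constant. Because $H$ is Hermitian, every $\omega_i$ must be real.

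Next I would compute the dynamic matrix $D_1$ of the new field $\psi$ directly from this diagonal Hamiltonian. The Heisenberg equations of motion decouple mode by mode,
\begin{equation}
i\frac{\mathrm{d}}{\mathrm{d}t}c_i=[c_i,H]=\omega_i c_i,\quad i\frac{\mathrm{d}}{\mathrm{d}t}c_i^\dag=-\omega_i c_i^\dag,
\end{equation}
so that
\begin{equation}
D_1=\mathrm{diag}(\omega_1,\cdots,\omega_n,-\omega_1,\cdots,-\omega_n).
\end{equation}
This $D_1$ is a real diagonal matrix; it is therefore trivially diagonalizable and carries only real eigenvalues $\pm\omega_i$, i.e., it is physically diagonalizable by definition.

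Finally, Lemma \ref{SimilarD} asserts that under the Dirac transformation $\phi=T\psi$ the dynamic matrices of the old and new fields are similar, $D_1=T^{-1}DT$, equivalently $D=TD_1T^{-1}$. Since similar matrices share their spectrum and their diagonalizability, $D$ inherits from $D_1$ both diagonalizability and the reality of all eigenvalues. Hence $D$ is physically diagonalizable, which is the claim.

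The only point demanding care---the main obstacle---is justifying the precise form of $D_1$: one must confirm that Dirac diagonalization truly delivers standard bosonic operators $c_i$ with the structure of Eq. (\ref{Psi}), so that the Heisenberg equations separate into the single-mode relations above and produce a real diagonal $D_1$. Once the diagonalized Hamiltonian is written in canonical oscillator form, this is immediate, and the similarity step then closes the argument exactly as in the bosonic case.
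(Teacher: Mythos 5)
Your proof is correct and follows essentially the same route as the paper: the paper likewise observes that the diagonalized Hamiltonian forces the new field's dynamic matrix $D_{1}$ to be real diagonal, and then invokes the Dirac analogue of the similarity lemma (Lemma \ref{SimilarD}) to transfer diagonalizability and reality of the spectrum back to $D$. Your write-up merely makes explicit the single-mode Heisenberg equations that the paper leaves as "evident."
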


Also, it can be readily confirmed that the lemma \ref{Lmm5} holds for the
present case.

\begin{lemma}
\label{Lmm5D} If the dynamic matrix $D$ is physically diagonalizable, then,
for each pair of nonzero eigenvalues, i.e., $(\omega,-\omega)$ with $%
\omega\neq0$, they have the same degeneracy. Namely, their eigenspace have
the same dimension. Especially, their bases can be chosen as%
\begin{equation}
v_{l}(-\omega)=v_{l}^{\ast}(\omega),\text{ \ }l=1,2,\cdots,m,  \label{VPD1}
\end{equation}
where $m$ ($m\in%
\mathbb{N}
$) is the dimension of the eigenspace of $\omega$.
\end{lemma}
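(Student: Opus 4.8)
The plan is to adapt the proof of Lemma \ref{Lmm5} almost verbatim, replacing the pairing operation $v \mapsto \Sigma_x v^*$ of the bosonic case with the simpler operation $v \mapsto v^*$ of complex conjugation, which is the pairing dictated by the modified version of Lemma \ref{Lmm2} in the present section. The reasoning is purely about matching degeneracies of conjugate eigenspaces, so no metric structure enters and the argument is strictly lighter than in the Bose setting.

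First I would fix a basis $\{v_l(\omega)\}_{l=1}^{m}$ of the eigenspace of the nonzero eigenvalue $\omega$, where $m$ denotes its degeneracy. Because $D$ is assumed physically diagonalizable, every eigenvalue is real, so $\omega^*=\omega$ and hence $-\omega^*=-\omega$. By the modified version of Lemma \ref{Lmm2}, each complex conjugate $v_l^*(\omega)$ is then an eigenvector belonging to the eigenvalue $-\omega^*=-\omega$, which justifies defining $v_l(-\omega):=v_l^*(\omega)$ for $l=1,\ldots,m$.

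The only substantive step is to check that $\{v_l^*(\omega)\}_{l=1}^{m}$ is linearly independent whenever $\{v_l(\omega)\}_{l=1}^{m}$ is. This follows from the elementary fact that complex conjugation preserves linear independence over $\mathbb{C}$: if $\sum_l c_l v_l^*(\omega)=0$, then conjugating the entire relation gives $\sum_l c_l^* v_l(\omega)=0$, whence $c_l^*=0$ and therefore $c_l=0$ for all $l$ by independence of the $v_l(\omega)$. Consequently the eigenspace of $-\omega$ has dimension at least $m$.

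Finally, I would run the identical argument with the roles of $\omega$ and $-\omega$ exchanged; since the conjugation map $v\mapsto v^*$ is an involution carrying the eigenspace of $-\omega$ back to that of $\omega$, this yields the reverse inequality, so both eigenspaces have the same dimension $m$ and the displayed choice $v_l(-\omega)=v_l^*(\omega)$ furnishes a genuine basis of the eigenspace of $-\omega$. I do not anticipate any real obstacle: the entire content is the reality of the eigenvalues (supplied by physical diagonalizability, which permits the identification of $-\omega^*$ with $-\omega$) together with the trivial conjugation-invariance of linear independence, so the proof is shorter than that of Lemma \ref{Lmm5}, where the $\Sigma_x$ bookkeeping was needed.
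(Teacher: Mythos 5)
Your proof is correct and follows essentially the same route the paper intends: the paper simply asserts that Lemma \ref{Lmm5} "can be readily confirmed" for the Dirac case, and its proof of Lemma \ref{Lmm5} is exactly your argument with the pairing $\Sigma_x v^{\ast}$ in place of $v^{\ast}$. Your observations that physical diagonalizability makes $-\omega^{\ast}=-\omega$ and that complex conjugation preserves linear independence are precisely the two ingredients needed, so nothing is missing.
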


But the lemma \ref{Lmm6} needs a few modifications.

\begin{lemma}
\label{Lmm6D} If the dynamic matrix $D$ is physically diagonalizable and has
zero eigenvalue, the eigenspace of zero eigenvalue is even dimensional. In
particular, its basis vectors can be chosen and grouped as%
\begin{equation}
v_{m+l}(0)=v_{l}^{\ast}(0),\text{ \ }l=1,2,\cdots,m,  \label{VPD2}
\end{equation}
where $2m$ ($m\in%
\mathbb{N}
$) is the dimension of the eigenspace of zero eigenvalue.
\end{lemma}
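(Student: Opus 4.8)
The plan is to transcribe the architecture of the proof of Lemma \ref{Lmm6} for the Bose case, adjusting only for the fact that in the Dirac setting the pairing convention is plain complex conjugation, $v\mapsto v^{\ast}$ (the Dirac analogue of Lemma \ref{Lmm2} established above), rather than the swap-and-conjugate $v\mapsto\Sigma_x v^{\ast}$. Writing $V_0$ for the eigenspace of the zero eigenvalue, the statement decomposes into two independent claims: that $\dim V_0$ is even, and that $V_0$ admits a basis grouped as in Eq. (\ref{VPD2}). I would prove them in that order.

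For the even-dimensionality I would argue exactly as in the Bose case, where this point is a direct consequence of the nonzero-pairing lemma. Since $D$ is physically diagonalizable, $\mathbb{C}^{2n}$ is the direct sum of the eigenspaces of $D$, and their dimensions add up to $2n$. By Lemma \ref{Lmm5D} every nonzero eigenvalue $\omega$ is matched by $-\omega$ of the same degeneracy, so the eigenvalues $\neq 0$ contribute an even total dimension. As $2n$ is even, the remainder $\dim V_0$ is even as well; write it as $2m$ with $m\in\mathbb{N}$.

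For the grouping I would exploit the real structure of the problem. By Eqs. (\ref{DSyM}), (\ref{Sigmay}) and (\ref{JSy}) one has $\Sigma_y=-iJ$, hence $D=-iJM$, so the zero-eigenvalue equation $Dv(0)=0$ is equivalent to $JM\,v(0)=0$ and $V_0=\ker(JM)$. Because $JM$ is a real matrix, its kernel is stable under complex conjugation and possesses a basis of $2m$ \emph{real} vectors $u_1,\dots,u_{2m}$, obtained by Gaussian elimination over $\mathbb{R}$. I would then define
\begin{equation}
v_l(0)=u_{2l-1}+i\,u_{2l},\qquad v_{m+l}(0)=v_l^{\ast}(0)=u_{2l-1}-i\,u_{2l},\qquad l=1,2,\dots,m.
\end{equation}
Since $v_l(0)+v_{m+l}(0)=2u_{2l-1}$ and $v_l(0)-v_{m+l}(0)=2i\,u_{2l}$, the complex span of each pair $\{v_l(0),v_{m+l}(0)\}$ equals that of $\{u_{2l-1},u_{2l}\}$; hence the $2m$ constructed vectors span $V_0$ and, being $2m$ in number, form a basis. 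Each of them lies in $V_0=\ker(JM)$ by construction, and the pairing is precisely Eq. (\ref{VPD2}).

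The only nonroutine point — and thus the main obstacle — is to justify cleanly that the real singular matrix $JM$ has a conjugation-invariant kernel carrying a real basis, and that the subsequent complex recombination produces a genuinely linearly independent family. This is exactly where the Dirac case diverges from the Bose case: there the elementary swap $\Sigma_x$ let one read off independence directly from the identity pattern of the chosen free unknowns (cf. the relations analogous to Eqs. (\ref{VW01})--(\ref{VW02})), whereas here the same bookkeeping must be routed through the real structure of $\ker(JM)$. Once a real kernel basis is secured, the recombination into conjugate pairs is immediate.
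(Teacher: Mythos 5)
Your proposal is correct and follows essentially the same route as the paper: the even-dimensionality is read off from the pairing of nonzero eigenvalues (Lemma \ref{Lmm5D}), and the grouping is obtained by reducing $Dv(0)=0$ to $JMv(0)=0$, taking a real basis of the kernel of the real matrix $JM$, and recombining it into complex-conjugate pairs exactly as in Eq.~(\ref{VPD2}). The paper's own proof is just a terser version of this (it asserts the real basis and the final basis property without the span-counting check you supply), so no further comparison is needed.
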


\begin{proof}
It just needs to prove the second point.

When $\omega =0$, Eq. (\ref{JMV}) becomes%
\begin{equation}
JMv(0)=0.
\end{equation}%
As pointed out above, $JM$ is a real matrix, its eigenvectors can be chosen
naturally as real vectors. In other words, there exists a real basis for the
eigenspace of zero eigenvalue,
\begin{equation}
w_{l}(0)=w_{l}^{\ast }(0),\text{ \ }l=1,2,\cdots ,2m.
\end{equation}%
Let us put%
\begin{eqnarray}
v_{l}(0) &=&w_{l}(0)+iw_{m+l}(0), \\
v_{m+l}(0) &=&w_{l}(0)-iw_{m+l}(0),
\end{eqnarray}%
where $l=1,2,\cdots ,m$. They are obviously a basis that is in accordance
with Eq. (\ref{VPD2}).
\end{proof}

Instead of $I_{-}$, we can here introduce a sesquilinear form using $%
\Sigma_{y}$, it is also a nonsingular metric. Notice the similarity between
Eqs. (\ref{DSyM}) and (\ref{DIM}). We can transplant the lemmas \ref{Ortho}
and \ref{OrNormal} to the present case.

\begin{lemma}
\label{Lmm7D} If the dynamic matrix $D$ is physically diagonalizable, its
eigenspaces will be orthogonal to each other with respect to the metric $%
\Sigma_{y}$.
\end{lemma}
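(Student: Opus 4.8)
The plan is to transplant the proof of Lemma \ref{Ortho} essentially verbatim, replacing the metric $I_{-}$ everywhere by $\Sigma_{y}$. This works because Eq. (\ref{DSyM}), $D=\Sigma_{y}M$, plays precisely the structural role that Eq. (\ref{DIM}), $D=I_{-}M$, played in the bosonic argument, and because $\Sigma_{y}$ shares with $I_{-}$ the two features the proof actually exploits: it is Hermitian, as noted after Eq. (\ref{Sigmay}), and it squares to the identity, $\Sigma_{y}^{2}=I$, so that $\Sigma_{y}^{-1}=\Sigma_{y}$.

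Concretely, I would first use physical diagonalizability of $D$ to write $\mathbb{C}^{2n}$ as the direct sum of the eigenspaces $E_{k}$ belonging to the real eigenvalues $\omega_{k}$, with $D\left|k\mu\right\rangle =\omega_{k}\left|k\mu\right\rangle$ for $\left|k\mu\right\rangle \in E_{k}$. Multiplying by $\Sigma_{y}^{-1}=\Sigma_{y}$ and invoking Eq. (\ref{DSyM}) recasts the eigenvalue equation as $M\left|k\mu\right\rangle =\omega_{k}\Sigma_{y}\left|k\mu\right\rangle$, the exact analog of the reformulation used in Lemma \ref{Ortho}. Forming the matrix element $\left\langle l\nu\right|M\left|k\mu\right\rangle =\omega_{k}\left\langle l\nu\right|\Sigma_{y}\left|k\mu\right\rangle$, taking its complex conjugate with $M^{\dag}=M$, $\Sigma_{y}^{\dag}=\Sigma_{y}$ and $\omega_{k}\in\mathbb{R}$, and comparing with the same identity carrying the index $l$, namely $\left\langle k\mu\right|M\left|l\nu\right\rangle =\omega_{l}\left\langle k\mu\right|\Sigma_{y}\left|l\nu\right\rangle$, I would arrive at $(\omega_{k}-\omega_{l})\left\langle k\mu\right|\Sigma_{y}\left|l\nu\right\rangle =0$. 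Hence $\left\langle k\mu\right|\Sigma_{y}\left|l\nu\right\rangle =0$ whenever $\omega_{k}\neq\omega_{l}$, which is exactly orthogonality of distinct eigenspaces with respect to $\Sigma_{y}$.

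There is no genuine obstacle here; the entire content is the observation that the bosonic proof never used any property of $I_{-}$ beyond Hermiticity and invertibility, both of which $\Sigma_{y}$ possesses. The one point that must be flagged is the reality of the eigenvalues $\omega_{k}$, supplied by the hypothesis that $D$ is physically diagonalizable: it is precisely what licenses the complex-conjugation step $\omega_{k}^{\ast}=\omega_{k}$, without which the two expressions for $\left\langle k\mu\right|M\left|l\nu\right\rangle$ would fail to coincide and the cancellation would break down. With that noted the argument is complete, and it simultaneously records the nonsingularity of the restricted form $\Sigma_{y}$ needed for the companion orthonormalization Lemma \ref{OrNormal}.
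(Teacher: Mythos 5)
Your proposal is correct and is exactly what the paper intends: the paper itself offers no separate proof but simply states that Lemmas \ref{Ortho} and \ref{OrNormal} can be transplanted using the parallel between $D=\Sigma_{y}M$ and $D=I_{-}M$, which is precisely the substitution you carry out. Your explicit identification of the properties actually used --- Hermiticity of $M$ and $\Sigma_{y}$, invertibility via $\Sigma_{y}^{2}=I$, and reality of the eigenvalues from physical diagonalizability --- matches the paper's argument step for step.
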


\begin{lemma}
\label{Lmm8D} If the dynamic matrix $D$ is physically diagonalizable, then,
for each eigenspace of $D$, there exists an orthonormal basis with respect
to the metric $\Sigma _{y}$.
\end{lemma}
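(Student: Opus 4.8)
The plan is to transplant the proof of Lemma~\ref{OrNormal} essentially verbatim, with the Bose metric $I_{-}$ replaced throughout by $\Sigma_{y}$. First I would record that $\Sigma_{y}$, as defined in Eq.~(\ref{Sigmay}), is a Hermitian matrix (already noted immediately after that equation) and that it is nonsingular, since $\det\Sigma_{y}\neq0$. Hence the sesquilinear form $\langle x|\Sigma_{y}|y\rangle$ is a nondegenerate metric on $\mathbb{C}^{2n}$, exactly as $I_{-}$ was in the Bose case. The structural parallel between Eq.~(\ref{DSyM}) and Eq.~(\ref{DIM}) is precisely what licenses this transplant.

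Next I would invoke Lemma~\ref{Lmm7D}: because $D$ is physically diagonalizable, the space $\mathbb{C}^{2n}$ splits as an orthogonal direct sum of the eigenspaces $E_{k}$ of $D$ with respect to $\Sigma_{y}$. The key step is then to show that the restriction of $\Sigma_{y}$ to each $E_{k}$ is itself nonsingular. If it were singular on some $E_{k}$, there would exist a nonzero vector in $E_{k}$ orthogonal (in the $\Sigma_{y}$ metric) to all of $E_{k}$; by the orthogonality of the decomposition it would then be orthogonal to every other summand as well, hence to all of $\mathbb{C}^{2n}$, contradicting the nonsingularity of $\Sigma_{y}$ on the whole space.

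Finally, on each eigenspace the form $\Sigma_{y}$ is a nonsingular Hermitian form in finitely many dimensions, so the standard diagonalization of such a form---a Gram--Schmidt orthogonalization followed by normalization, justified by Sylvester's law of inertia---produces a basis $\{|k\mu\rangle\}$ with $\langle k\mu|\Sigma_{y}|k\nu\rangle=\lambda_{\mu}\delta_{\mu\nu}$ and $\lambda_{\mu}=\pm1$. Collecting these bases over all $k$ yields the orthonormal basis on each eigenspace asserted by the lemma.

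The only point requiring real care, and thus the main obstacle, is the nondegeneracy bookkeeping: one must confirm both that $\Sigma_{y}$ is a genuine nondegenerate metric on the full space and that nondegeneracy survives restriction to each eigenspace. Because $\Sigma_{y}$ is indefinite---it has signature $(n,n)$, its $2\times2$ blocks being copies of the Pauli matrix $\sigma_{y}$ with eigenvalues $\pm1$---one cannot hope to normalize every vector to norm $+1$; the $\lambda_{\mu}=\pm1$ structure is unavoidable, precisely as for $I_{-}$, so that ``orthonormal'' here is understood in the sense of this pseudo-metric.
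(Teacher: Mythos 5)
Your proposal is correct and follows essentially the same route as the paper, which likewise proves the lemma by transplanting the argument for Lemma~\ref{OrNormal}: the orthogonal direct-sum decomposition into eigenspaces from Lemma~\ref{Lmm7D} forces the nonsingular form $\Sigma_{y}$ to remain nonsingular on each eigenspace, whence a $\pm1$-orthonormal basis exists. Your write-up merely makes explicit the restriction-of-nondegeneracy step that the paper dismisses as ``an evident contradiction,'' which is a welcome improvement rather than a deviation.
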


As to the lemma \ref{ZeroSpace}, it needs some modifications.

\begin{lemma}
\label{Lmm9D} If the dynamic matrix $D$ is physically diagonalizable and has
zero eigenvalue, there exists such an orthonormal basis for the eigenspace
of zero eigenvalue that%
\begin{equation}
v_{m+l}(0)=v_{l}^{\ast}(0),\text{ \ }l=1,2,\cdots,m,
\end{equation}
where $2m$ ($m\in%
\mathbb{N}
$) is the dimension of the eigenspace of zero eigenvalue.
\end{lemma}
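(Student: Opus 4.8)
The plan is to transplant the inductive construction already carried out for the Bose-case Lemma~\ref{ZeroSpace}, since the metric $\Sigma_y$ is again indefinite: writing $\Sigma_y=-iJ$ and recalling that $J$ has eigenvalues $\pm i$ each with multiplicity $n$, one sees that $\Sigma_y$ has eigenvalues $+1$ and $-1$, each with multiplicity $n$. Hence isotropic vectors can occur and the simple positive-definite reasoning used for the Fermi case is unavailable. The dictionary for the transplant is: replace the metric $I_-$ by $\Sigma_y$; replace the pairing convention $v_{m+l}(0)=\Sigma_x v_l^{\ast}(0)$ by the plain-conjugation convention $v_{m+l}(0)=v_l^{\ast}(0)$ of Lemma~\ref{Lmm6D}; and replace the structural identity $\Sigma_x I_-\Sigma_x=-I_-$ by $\Sigma_y^{\ast}=-\Sigma_y$. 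From the last identity together with $v_{m+l}(0)=v_l^{\ast}(0)$ one gets the crucial fact that conjugate partners carry opposite norms, $v_{m+l}^{\dag}(0)\Sigma_y v_{m+l}(0)=-\,v_l^{\dag}(0)\Sigma_y v_l(0)$, exactly as in the Bose case. Moreover, because $\Sigma_y=-iJ$ with $J$ antisymmetric, the within-pair cross term vanishes automatically, $v_l^{\dag}(0)\Sigma_y v_l^{\ast}(0)=-i\,\widetilde{v_l^{\ast}}(0)Jv_l^{\ast}(0)=0$.

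With this dictionary I would run the induction on $m$, where $2m=\dim(V_0)$. In the base case $m=1$ I would show $v_1(0)$ cannot be isotropic: if it were, then $v_2(0)=v_1^{\ast}(0)$ would also be isotropic by the opposite-norm relation, the cross term $v_1^{\dag}(0)\Sigma_y v_2(0)$ vanishes automatically, and hence $\Sigma_y$ would vanish identically on the two-dimensional $V_0$, contradicting the nonsingularity of $\Sigma_y$ on each eigenspace (Lemma~\ref{Lmm8D}); normalizing $v_1(0)$ then gives an orthonormal basis $\{v_1(0),v_1^{\ast}(0)\}$ obeying the convention. For the step from $m=l$ to $m=l+1$ I would extract a nonisotropic vector $v_1(0)$ — in the all-isotropic subcase by forming a suitable complex combination of two vectors with nonvanishing inner product, just as in Lemma~\ref{ZeroSpace} — pair it with $v_{l+2}(0)=v_1^{\ast}(0)$ to span a nondegenerate two-dimensional subspace $W$, and then build the $\Sigma_y$-orthogonal complement $W'$ by the modified Gram--Schmidt projection of the remaining basis vectors off $v_1(0)$ and $v_{l+2}(0)$. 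One checks $V_0=W\odot W'$ with $\dim(W')=2l$, and that the projected vectors still satisfy $\xi_{l+i}(0)=\xi_i^{\ast}(0)$, so the induction hypothesis applied to $W'$ finishes the construction after reordering.

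The step I expect to be the main obstacle is verifying that the modified Gram--Schmidt projection preserves the plain-conjugation convention $v_{m+l}(0)=v_l^{\ast}(0)$ on $W'$; this forces one to track the conjugate relations $[v_1^{\dag}(0)\Sigma_y v_{i+1}(0)]^{\ast}=-\,v_{l+2}^{\dag}(0)\Sigma_y v_{l+i+2}(0)$ and their mates, the same bookkeeping that made the Bose proof lengthy. A cleaner route, which I would actually prefer, exploits the reality of $JM$ that is already used in Lemma~\ref{Lmm6D}: the real kernel of $JM$ carries the real antisymmetric form $\langle w,w'\rangle=\widetilde{w}Jw'$, which is nondegenerate because $\Sigma_y$ is nonsingular on $V_0$, so by the symplectic (Darboux) basis theorem one picks a real symplectic basis $e_1,\dots,e_m,f_1,\dots,f_m$ and sets $v_l(0)=\tfrac{1}{\sqrt{2}}(e_l+if_l)$ and $v_{m+l}(0)=v_l^{\ast}(0)=\tfrac{1}{\sqrt{2}}(e_l-if_l)$. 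A direct computation gives $v_l^{\dag}(0)\Sigma_y v_l(0)=+1$, $v_{m+l}^{\dag}(0)\Sigma_y v_{m+l}(0)=-1$, and all remaining inner products zero, producing the desired orthonormal, convention-respecting basis in one stroke.
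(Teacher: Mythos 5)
Your first route is essentially the paper's own proof. The paper establishes the base case $m=1$ exactly as you describe: the within-pair cross term $v_{1}^{\dag }(0)\Sigma _{y}v_{1}^{\ast }(0)=-i\,\widetilde{v_{1}^{\ast }}(0)Jv_{1}^{\ast }(0)$ vanishes by antisymmetry of $J$, so an isotropic $v_{1}(0)$ would force $\Sigma _{y}$ to vanish identically on the two-dimensional $V_{0}$, contradicting its nonsingularity there; it then simply remarks that the remaining induction steps follow the modified Gram--Schmidt process of Lemma \ref{ZeroSpace}. Your dictionary ($I_{-}\to \Sigma _{y}$, $\Sigma _{x}v^{\ast }\to v^{\ast }$, $\Sigma _{x}I_{-}\Sigma _{x}=-I_{-}\to \Sigma _{y}^{\ast }=-\Sigma _{y}$) is precisely what makes that transplant go through, and your identification of the convention-preservation of the projection step as the delicate point matches where the Bose-case proof spends its effort. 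Your second route, via a real Darboux basis, is genuinely different from the paper and cleaner: it exploits the reality of $JM$ (already used in Lemma \ref{Lmm6D}) to reduce the whole statement to the symplectic basis theorem for the real antisymmetric form $\widetilde{w}Jw^{\prime }$ on the real kernel, and the verification $v_{l}^{\dag }(0)\Sigma _{y}v_{l}(0)=+1$, $v_{m+l}^{\dag }(0)\Sigma _{y}v_{m+l}(0)=-1$ with all other products zero is a one-line computation, eliminating both the induction and the bookkeeping of conjugate Gram--Schmidt coefficients. The one point you should make explicit in that route is that the real form is nondegenerate on the real kernel because the complex eigenspace $V_{0}$ is the complexification of the real kernel and $\Sigma _{y}=-iJ$ is nonsingular on $V_{0}$ by Lemma \ref{Lmm8D}; with that noted, both routes are complete.
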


\begin{proof}
According to the lemma \ref{Lmm6D}, the eigenspace $V_{0}$ of zero
eigenvalue has a basis,%
\begin{equation}
v_{m+l}(0)=v_{l}^{\ast}(0),\text{ \ }l=1,2,\cdots,m,
\end{equation}
where $\dim(V_{0})=2m$ ($m\in%
\mathbb{N}
$).

When $m=1$, $v_{1}(0)$ must be nonisotropic,
\begin{equation}
v_{1}^{\dag }(0)\Sigma _{y}v_{1}(0)\neq 0.
\end{equation}%
Otherwise, one has%
\begin{equation}
v_{1}^{\dag }(0)\Sigma _{y}v_{1}(0)=0,\text{ \ }v_{2}^{\dag }(0)\Sigma
_{y}v_{2}(0)=0.  \label{VV001C}
\end{equation}%
In addition,%
\begin{eqnarray}
v_{1}^{\dag }(0)\Sigma _{y}v_{2}(0) &=&v_{1}^{\dag }(0)\Sigma
_{y}v_{1}^{\ast }(0)  \notag \\
&=&-i\widetilde{v_{1}^{\ast }}(0)Jv_{1}^{\ast }(0)  \notag \\
&=&0.
\end{eqnarray}%
That is to say,
\begin{equation}
v_{1}^{\dag }(0)\Sigma _{y}v_{2}(0)=0,\text{ \ }v_{2}^{\dag }(0)\Sigma
_{y}v_{1}(0)=0.  \label{VV002C}
\end{equation}%
Equations (\ref{VV001C}) and (\ref{VV002C}) contradict the fact that $\Sigma
_{y}$ is a nonsingular metric on $V_{0}$. That is to say, $v_{1}(0)$ can not
be isotropic. It can thus be normalized,%
\begin{equation}
v_{1}^{\dag }(0)\Sigma _{y}v_{1}(0)=1\text{ or }-1.
\end{equation}%
Correspondingly,
\begin{equation}
v_{2}^{\dag }(0)\Sigma _{y}v_{2}(0)=-1\text{ or }1.
\end{equation}%
Those together with Eq. (\ref{VV002C}) show that%
\begin{equation}
v_{i}^{\dag }(0)\Sigma _{y}v_{j}(0)=-\lambda _{i}\delta _{ij},\text{ \ }%
\lambda _{i}=\pm 1,\text{ \ }i,j=1,2.
\end{equation}%
Thereby, the lemma holds when $m=1$.

The rest steps of mathematical induction are similar to the lemma \ref%
{ZeroSpace}.
\end{proof}

The combination of the lemmas (\ref{Lmm5D})--(\ref{Lmm9D}) shows that there
are totally $2n$ dynamic mode pairs. Each mode pair takes the form of $%
\{\omega ,v(\omega )\}$ and $\{-\omega ,v^{\ast }(\omega )\}$, viz., it
contains two opposite eigenenergies, and two complex conjugate eigenvectors.
Most importantly, the two complex conjugate eigenvectors have opposite
norms, one is $+1$, the other is $-1$:
\begin{equation}
v^{\dag }(-\omega _{i})\Sigma _{y}v(-\omega _{i})=-\left[ v^{\dag }(\omega
_{i})\Sigma _{y}v(\omega _{i})\right] ^{\ast }=\pm 1.
\end{equation}%
Thereby, we can stipulate an order for every mode pair as in Eqs. (\ref{TN}%
)--(\ref{TN2}): The first eigenvector has the norm of $+1$, the second one
has the norm of $-1$. So, we obtain a derivative transformation $T_{d}$ as
follows,
\begin{widetext}%
\begin{gather}
\phi =T_{d}\psi ,  \label{TNC} \\
T_{d}=\left[
\begin{array}{cccccccc}
v(\omega _{1}), & v(\omega _{2}), & \cdots , & v(\omega _{n}), & v(-\omega
_{1}), & v(-\omega _{2}), & \cdots , & v(-\omega _{n})%
\end{array}%
\right] ,
\end{gather}%
\end{widetext}
where
\begin{eqnarray}
v^{\dag }(\omega _{i})\Sigma _{y}v(\omega _{i}) &=&1, \\
v^{\dag }(-\omega _{i})\Sigma _{y}v(-\omega _{i}) &=&-1.
\end{eqnarray}

Consequently, we obtain a new version of the lemma \ref{BVTn}.

\begin{lemma}
If the dynamic matrix $D$ is physically diagonalizable, its derivative
transformation $T_{d}$ satisfies the identity,
\begin{equation}
T_{d}^{\dag }\Sigma _{y}T_{d}=I_{-}.
\end{equation}
\end{lemma}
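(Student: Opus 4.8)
The plan is to reduce the matrix identity $T_d^\dag \Sigma_y T_d = I_-$ to the scalar orthonormality relations
\begin{equation}
v^\dag(\omega_i)\Sigma_y v(\omega_j) = \lambda_i\,\delta_{ij}, \quad 1\le i,j\le 2n,
\end{equation}
where $\lambda_i=+1$ for $1\le i\le n$ and $\lambda_i=-1$ for $n+1\le i\le 2n$. Since $T_d$ is assembled column by column from the eigenvectors of $D$ as in Eq. (\ref{TNC}), the $(i,j)$ entry of $T_d^\dag\Sigma_y T_d$ is exactly $v^\dag(\omega_i)\Sigma_y v(\omega_j)$, so the displayed relations are precisely the entrywise statement that $T_d^\dag\Sigma_y T_d$ equals $\mathrm{diag}(I,-I)=I_-$. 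Establishing them therefore finishes the proof, and the whole argument runs parallel to that of lemma \ref{BVTn}, with $\Sigma_y$ playing the role of $I_-$.

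First I would dispose of the off-diagonal entries. For two eigenvectors belonging to distinct eigenvalues, lemma \ref{Lmm7D} gives orthogonality with respect to $\Sigma_y$, so $v^\dag(\omega_i)\Sigma_y v(\omega_j)=0$ whenever $\omega_i\ne\omega_j$. For entries within a single eigenspace I would invoke lemma \ref{Lmm8D} for the nonzero eigenvalues and lemma \ref{Lmm9D} for the zero eigenvalue: each eigenspace carries a $\Sigma_y$-orthonormal basis, so distinct basis vectors in a fixed eigenspace are orthogonal and each is normalized to $\pm 1$. These steps kill every off-diagonal entry and reduce the claim to controlling the signs of the $2n$ diagonal entries.

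The hard part will be pinning down this sign pattern, i.e. showing that each dynamic mode pair $\{v(\omega),v(-\omega)\}$ with $v(-\omega)=v^*(\omega)$ (the convention of lemmas \ref{Lmm5D} and \ref{Lmm6D}) carries norms of strictly opposite sign. The key computation is
\begin{equation}
v^\dag(-\omega)\Sigma_y v(-\omega) = \big(v^*(\omega)\big)^\dag \Sigma_y\, v^*(\omega) = -\left[v^\dag(\omega)\Sigma_y v(\omega)\right]^*,
\end{equation}
which rests on the anti-reality $\Sigma_y^*=-\Sigma_y$ of the metric, the analogue here of the identity $\Sigma_x I_-\Sigma_x=-I_-$ used in the Bose case. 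Because $\Sigma_y$ is Hermitian, $v^\dag(\omega)\Sigma_y v(\omega)$ is real, so the right-hand side is simply $-v^\dag(\omega)\Sigma_y v(\omega)$; thus if one member of a pair has norm $+1$, its conjugate partner has norm $-1$.

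Finally I would invoke the stipulated ordering of $T_d$ in Eq. (\ref{TNC}), which places the norm-$(+1)$ eigenvector of each pair in the left half and the norm-$(-1)$ eigenvector in the right half. This yields $\lambda_i=+1$ for $i\le n$ and $\lambda_i=-1$ for $i>n$, which is exactly the entrywise identity $T_d^\dag\Sigma_y T_d=\mathrm{diag}(I,-I)=I_-$, completing the argument. I expect the conjugation identity in the preceding paragraph to be the only genuinely substantive step; everything else is transported verbatim from the Bose development by the substitution $I_-\mapsto\Sigma_y$.
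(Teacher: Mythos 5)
Your proposal is correct and follows essentially the same route as the paper: the text preceding the lemma establishes exactly the entrywise relations $v^{\dag}(\omega_i)\Sigma_y v(\omega_j)=\lambda_i\delta_{ij}$ via the orthogonality of eigenspaces (lemma \ref{Lmm7D}), the orthonormal bases within each eigenspace (lemmas \ref{Lmm8D} and \ref{Lmm9D}), and the sign-flip identity $v^{\dag}(-\omega)\Sigma_y v(-\omega)=-\left[v^{\dag}(\omega)\Sigma_y v(\omega)\right]^{\ast}$ for conjugate mode partners, followed by the same ordering stipulation. Your identification of $\Sigma_y^{\ast}=-\Sigma_y$ as the substantive ingredient replacing $\Sigma_x I_-\Sigma_x=-I_-$ matches the paper's computation precisely.
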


The lemma shows that%
\begin{equation}
\psi\cdot\psi^{\dag}=I_{-}.  \label{PsiPsiI}
\end{equation}

The new field $\psi$ has another important property.

\begin{lemma}
The new field $\psi$ difined in Eq. (\ref{TNC}) has the involution symmetry,
\begin{equation}
\psi=\left( \widetilde{\Sigma_{x}\psi}\right) ^{\dag}.  \label{InvSymm}
\end{equation}
\end{lemma}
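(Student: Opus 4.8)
The plan is to follow the template of Lemma \ref{Lmm4} from the Bose case: reduce the operator identity (\ref{InvSymm}) to a single c-number matrix relation for the derivative matrix $T_d$, combined with the (manifest) symmetry of the old field $\phi$. The one genuinely new ingredient is that $\phi=[p;q]$ is built from the Hermitian operators $p_i,q_i$, so that every component of $\phi$ is self-conjugate and hence
\[
\phi=(\widetilde{\phi})^{\dag};
\]
note that no $\Sigma_x$ appears here, unlike the bosonic field (\ref{Psi}), because $\phi$ carries no annihilation/creation block structure. I would also isolate at the outset the elementary rule describing how component-wise Hermitian conjugation passes through a c-number matrix $C$ acting on an operator column $X$,
\[
(\widetilde{CX})^{\dag}=C^{\ast}(\widetilde{X})^{\dag},
\]
obtained by conjugating each entry $\sum_j C_{ij}X_j$ and using $\overline{C_{ij}}=(C^{\ast})_{ij}$.

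The key matrix relation is $T_d^{\ast}=T_d\Sigma_x$. To establish it I would invoke the pairing conventions already fixed for the Dirac case: Lemma \ref{Lmm5D} (Eq. (\ref{VPD1})) for the nonzero eigenvalues, and Lemmas \ref{Lmm6D} and \ref{Lmm9D} (Eq. (\ref{VPD2})) for the zero eigenvalue, which together state that each right-half column of $T_d$ is the complex conjugate of the corresponding left-half column, $v(-\omega_i)=v^{\ast}(\omega_i)$. Thus, writing the left half as $A=[v(\omega_1),\cdots,v(\omega_n)]$, the right half is precisely $A^{\ast}$, so $T_d=[A,\,A^{\ast}]$ and therefore $T_d^{\ast}=[A^{\ast},\,A]=T_d\Sigma_x$, since right multiplication by the elementary matrix $\Sigma_x$ interchanges the two column-halves. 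Taking inverses yields the equivalent form $(T_d^{-1})^{\ast}=\Sigma_x T_d^{-1}$.

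Assembling the pieces is then immediate. From $\phi=T_d\psi$ in Eq. (\ref{TNC}) we have $\psi=T_d^{-1}\phi$, whence, applying the conjugation rule first with $C=\Sigma_x$ (real, so $\Sigma_x^{\ast}=\Sigma_x$) and then with $C=T_d^{-1}$, and finally the old-field symmetry $(\widetilde{\phi})^{\dag}=\phi$,
\[
(\widetilde{\Sigma_x\psi})^{\dag}
=\Sigma_x(\widetilde{\psi})^{\dag}
=\Sigma_x(T_d^{-1})^{\ast}(\widetilde{\phi})^{\dag}
=\Sigma_x(T_d^{-1})^{\ast}\phi.
\]
Substituting $(T_d^{-1})^{\ast}=\Sigma_x T_d^{-1}$ and using $\Sigma_x^2=I$ collapses the right-hand side to $T_d^{-1}\phi=\psi$, which is exactly (\ref{InvSymm}).

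The only real care the argument demands is bookkeeping: one must keep scrupulously separate the transpose and complex-conjugation that act on the c-number matrix $T_d$ from the operator-level Hermitian conjugation acting on the field components, and verify that the Dirac pairing convention produces exactly $T_d^{\ast}=T_d\Sigma_x$ rather than the form $\Sigma_x T^{\ast}\Sigma_x=T$ that characterised BV matrices in the Bose case --- the discrepancy being nothing but the absence of $\Sigma_x$ in the conjugation conventions (\ref{VPD1})--(\ref{VPD2}). Once that relation is in hand, no further computation is needed.
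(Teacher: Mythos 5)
Your proof is correct and follows essentially the same route as the paper's: both rest on the reality of the old field $\phi$, the key matrix identity $T_{d}^{\ast}=T_{d}\Sigma_{x}$ coming from the pairing conventions (\ref{VPD1})--(\ref{VPD2}), and a short matrix manipulation. The only cosmetic difference is that you work with $\psi=T_{d}^{-1}\phi$ while the paper works with $\phi=T_{d}\psi$, and you spell out the derivation of $T_{d}^{\ast}=T_{d}\Sigma_{x}$ that the paper leaves as ``easy to show.''
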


\begin{proof}
As $\phi$ is a real field, one has%
\begin{equation}
\phi=\widetilde{\phi^{\dag}}.
\end{equation}
From Eq. (\ref{TNC}), it follows that%
\begin{equation}
\widetilde{\phi^{\dag}}=T_{d}^{\ast}\widetilde{\psi^{\dag}},
\end{equation}
which means%
\begin{equation}
T_{d}\psi=T_{d}^{\ast}\widetilde{\psi^{\dag}}.  \label{TnTnStar}
\end{equation}
It is easy to show that%
\begin{equation}
T_{d}^{\ast}=T_{d}\Sigma_{x}.
\end{equation}
Substituting it into Eq. (\ref{TnTnStar}), one obtains%
\begin{equation}
T_{d}\psi=T_{d}\Sigma_{x}\widetilde{\psi^{\dag}}.
\end{equation}
Since $T_{d}$ is invertible, and $\Sigma_{x}^{-1}=\Sigma_{x}$, he arrives
finally at%
\begin{equation}
\Sigma_{x}\psi=\widetilde{\psi^{\dag}}.
\end{equation}
It is equivalent to Eq. (\ref{InvSymm}).
\end{proof}

This lemma shows that the new field $\psi$ takes exactly the form of Eq. (%
\ref{Psi}). Together with Eq. (\ref{PsiPsiI}), it means that the new field $%
\psi$ is a standard bosonic field.

\begin{lemma}
If the dynamic matrix $D$ is physically diagonalizable, its derivative
transformation $T_{d}$ will diagonalize the coefficient matrix $M$ in the
manner of Hermitian congruence,
\begin{equation}
T_{d}^{\dag }MT_{d}=\mathrm{diag}(\omega _{1},\cdots ,\omega _{n},\omega
_{1},\cdots ,\omega _{n}).  \label{TMTC}
\end{equation}
\end{lemma}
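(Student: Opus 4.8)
The plan is to mirror the computation already used for the Fermi case (the modified form of Lemma \ref{MDiag}), exploiting the factorization $D=\Sigma_{y}M$ of Eq. (\ref{DSyM}) together with the identity $T_{d}^{\dag}\Sigma_{y}T_{d}=I_{-}$ established in the preceding lemma. The one algebraic fact I would record at the outset is that $\Sigma_{y}$ is an involution, $\Sigma_{y}^{2}=I$ (immediate from Eq. (\ref{Sigmay})), so that $M=\Sigma_{y}D$ and the coefficient matrix can be traded for the dynamic matrix.

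With that in hand, I would write
\begin{equation}
T_{d}^{\dag}MT_{d}=T_{d}^{\dag}\Sigma_{y}DT_{d},
\end{equation}
insert $T_{d}T_{d}^{-1}=I$ between $\Sigma_{y}$ and $D$, and group the factors as
\begin{equation}
T_{d}^{\dag}MT_{d}=\left(T_{d}^{\dag}\Sigma_{y}T_{d}\right)\left(T_{d}^{-1}DT_{d}\right).
\end{equation}
The first factor equals $I_{-}$ by the preceding lemma. The second factor is the ordinary similarity diagonalization of $D$: since the columns of $T_{d}$ are exactly the eigenvectors $v(\pm\omega_{i})$ arranged as in Eq. (\ref{TNC}), one has $T_{d}^{-1}DT_{d}=\mathrm{diag}(\omega_{1},\cdots,\omega_{n},-\omega_{1},\cdots,-\omega_{n})$. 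Multiplying these two diagonal matrices and noting that $I_{-}=\mathrm{diag}(1,\cdots,1,-1,\cdots,-1)$ reverses the sign of the last $n$ entries, the $-\omega_{i}$ become $+\omega_{i}$, giving
\begin{equation}
T_{d}^{\dag}MT_{d}=\mathrm{diag}(\omega_{1},\cdots,\omega_{n},\omega_{1},\cdots,\omega_{n}),
\end{equation}
which is precisely Eq. (\ref{TMTC}).

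I do not expect a genuine obstacle here: all of the substantive content — the orthonormality of the eigenvectors with respect to the metric $\Sigma_{y}$, the opposite-norm pairing within each dynamic mode pair, and the ordering convention placing the $+1$-norm vectors first — has already been packaged into the single identity $T_{d}^{\dag}\Sigma_{y}T_{d}=I_{-}$. The only points demanding minor care are confirming $\Sigma_{y}^{2}=I$ and keeping the eigenvalue ordering consistent between the similarity diagonalization of $D$ and the Hermitian congruence of $M$, so that the sign flip induced by $I_{-}$ lands on exactly the second block and converts $-\omega_{i}$ to $+\omega_{i}$ as claimed.
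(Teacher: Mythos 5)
Your proposal is correct and coincides with the paper's own proof: both invert $D=\Sigma_{y}M$ via $\Sigma_{y}^{2}=I$ to get $M=\Sigma_{y}D$, insert $T_{d}T_{d}^{-1}$ to factor $T_{d}^{\dag}MT_{d}$ as $\left(T_{d}^{\dag}\Sigma_{y}T_{d}\right)\left(T_{d}^{-1}DT_{d}\right)=I_{-}\,\mathrm{diag}(\omega_{1},\cdots,\omega_{n},-\omega_{1},\cdots,-\omega_{n})$, and read off the sign flip on the second block. No gaps.
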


\begin{proof}
From Eq. (\ref{DSyM}), it follows that%
\begin{equation}
M=\Sigma _{y}D.
\end{equation}%
Therefore,%
\begin{eqnarray}
T_{d}^{\dag }MT_{d} &=&T_{d}^{\dag }\Sigma _{y}DT_{d}  \notag \\
&=&T_{d}^{\dag }\Sigma _{y}T_{d}T_{d}^{-1}DT_{d}  \notag \\
&=&I_{-}T_{d}^{-1}DT_{d}.
\end{eqnarray}%
As $D$ is physically diagonalizable, we have%
\begin{equation}
T_{d}^{-1}DT_{d}=\mathrm{diag}(\omega _{1},\cdots ,\omega _{n},-\omega
_{1},\cdots ,-\omega _{n}).
\end{equation}%
The combination of the two equations above proves the lemma.
\end{proof}

\begin{theorem}
A Hamiltonian quadratic in coordinates and momenta is Diracianly
diagonalizable if and only if its dynamic matrix is physically
diagonalizable.
\end{theorem}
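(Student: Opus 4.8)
The plan is to prove the two implications separately, in exact parallel with Theorem \ref{BThm} for the bosonic case. The necessity is already established: Proposition \ref{PPD} states that if the quadratic Hamiltonian of Eq. (\ref{CHam}) can be Diracianly diagonalized then its dynamic matrix $D$ is physically diagonalizable, so for that half I would simply cite it. All the remaining content is the sufficiency, and by this point the machinery has been fully assembled, so the argument reduces to stitching the preceding lemmas together.

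For sufficiency I would assume $D$ is physically diagonalizable and form the derivative transformation $T_{d}$ of Eq. (\ref{TNC}) out of the $2n$ paired eigenvectors, normalized and ordered so that $v^{\dag}(\omega_{i})\Sigma_{y}v(\omega_{i})=1$ and $v^{\dag}(-\omega_{i})\Sigma_{y}v(-\omega_{i})=-1$. The lemma yielding $T_{d}^{\dag}\Sigma_{y}T_{d}=I_{-}$ then guarantees the new field obeys $\psi\cdot\psi^{\dag}=I_{-}$, i.e. Eq. (\ref{PsiPsiI}), while the involution-symmetry lemma, Eq. (\ref{InvSymm}), forces $\psi$ into the canonical form of Eq. (\ref{Psi}). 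Together these two facts certify that $\psi$ is a genuine standard bosonic field, whose components $c_{i},c_{i}^{\dag}$ satisfy the ordinary bosonic commutation relations, so that the transformation $\phi=T_{d}\psi$ is a bona fide Dirac transformation.

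It then remains only to verify that $H$ actually becomes diagonal. Substituting $\phi=T_{d}\psi$ into Eq. (\ref{CHam}) and invoking the Hermitian-congruence diagonalization of $M$ in Eq. (\ref{TMTC}), I would obtain
\begin{equation}
H=\tfrac{1}{2}\psi^{\dag}T_{d}^{\dag}MT_{d}\psi=\tfrac{1}{2}\sum_{i=1}^{n}\omega_{i}\left(c_{i}^{\dag}c_{i}+c_{i}c_{i}^{\dag}\right)=\sum_{i=1}^{n}\omega_{i}c_{i}^{\dag}c_{i}+\tfrac{1}{2}\sum_{i=1}^{n}\omega_{i},
\end{equation}
where the last step uses $[c_{i},c_{i}^{\dag}]=1$. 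This exhibits $H$ as a sum of independent harmonic oscillators with real frequencies $\omega_{i}$, which is precisely the Diracianly diagonalized form, completing the sufficiency and hence the theorem.

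I do not expect a genuine obstacle here: every nontrivial ingredient — the pairing $\{\omega,v(\omega)\}$, $\{-\omega,v^{\ast}(\omega)\}$ of eigendata, the existence of a $\Sigma_{y}$-orthonormal paired basis even on the zero-eigenvalue space via the modified Gram--Schmidt of Lemma \ref{Lmm9D}, the $U(n,n)$ identity $T_{d}^{\dag}\Sigma_{y}T_{d}=I_{-}$, and Eq. (\ref{TMTC}) — has already been proved. The one point I would state explicitly is that the two norms in each mode pair are automatically opposite, since $v^{\dag}(-\omega_{i})\Sigma_{y}v(-\omega_{i})=-\left[v^{\dag}(\omega_{i})\Sigma_{y}v(\omega_{i})\right]^{\ast}$ and each norm is real and nonzero by nonsingularity of the metric on the eigenspaces; this yields $n$ positive and $n$ negative norms and makes the ordering of $T_{d}$ well-defined, so that unlike the bosonic pairing case no separate appeal to Sylvester's law is needed.
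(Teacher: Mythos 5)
Your proposal is correct and follows essentially the same route as the paper: necessity by citing Proposition \ref{PPD}, and sufficiency by forming the derivative transformation $T_{d}$ of Eq. (\ref{TNC}), using $T_{d}^{\dag}\Sigma_{y}T_{d}=I_{-}$ together with the involution symmetry to certify that the new field is standard bosonic, and then applying Eq. (\ref{TMTC}) to obtain the diagonalized form of Eq. (\ref{DDH}). Your explicit remark that the opposite norms within each mode pair follow from $v^{\dag}(-\omega_{i})\Sigma_{y}v(-\omega_{i})=-\bigl[v^{\dag}(\omega_{i})\Sigma_{y}v(\omega_{i})\bigr]^{\ast}$ is exactly the observation the paper makes just before defining $T_{d}$, so nothing is missing.
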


\begin{proof}
The necessary condition has been proved by the propostion \ref{PPD}.

The sufficient condition can be proved as follows.

Consider the quadratic Hamiltonian of Eq. (\ref{CHam}). If its dynamic
matrix is physically diagonalizable, it can generates a derivative
transformation as given in Eq. (\ref{TNC}). Under this transformation, Eq. (%
\ref{CHam}) becomes%
\begin{eqnarray}
H &=&\frac{1}{2}\psi ^{\dag }T_{d}^{\dag }MT_{d}\psi  \notag \\
&=&\sum_{i=1}^{n}\omega _{i}d_{i}^{\dag }d_{i}+\frac{1}{2}%
\sum_{i=1}^{n}\omega _{i},  \label{DDH}
\end{eqnarray}%
where Eqs. (\ref{Psi}), (\ref{PsiPsiI}), and (\ref{TMTC}) have been used. It
is a Hamiltonian that is diagonal with respect to the bosonic operators $%
d_{i}^{\dag }$ and $d_{i}$\ ($i=1,2,\cdots ,n$).
\end{proof}

This theorem shows that the derivative transformation defined in Eq. (\ref%
{TNC}) is exactly a Dirac transformation, it brings a Hamiltonian quadratic
in coordinates and momenta into the form diagonalized with respect to
bosons. Just as BV transformation, Dirac transformation can be generated by
the equation of motion of the system automatically.

Now, let us return to the linear harmonic oscillator, the dynamic matrix is%
\begin{equation}
D=\left[
\begin{array}{cc}
0 & -i \\
i & 0%
\end{array}%
\right] \left[
\begin{array}{cc}
\frac{1}{m} & 0 \\
0 & m\omega ^{2}%
\end{array}%
\right] .
\end{equation}%
It has a pair of eigenenergies,%
\begin{equation}
\varepsilon =\pm \omega .
\end{equation}%
If $\omega =0$, there exists only one eigenvector,%
\begin{equation}
v(0)=\left[
\begin{array}{c}
0 \\
1%
\end{array}%
\right] .
\end{equation}%
It means that the Hamiltonian of a free particle,%
\begin{equation}
H=\frac{p^{2}}{2m},
\end{equation}%
is not Diracianly diagonalizable. If $\omega >0$, $D$ is physically
diagonalizable, there are two linearly independent orthonormalized
eigenvectors%
\begin{eqnarray}
v(\omega ) &=&\frac{1}{\sqrt{2m\omega }}\left[
\begin{array}{c}
-im\omega \\
1%
\end{array}%
\right] , \\
v(-\omega ) &=&\frac{1}{\sqrt{2m\omega }}\left[
\begin{array}{c}
im\omega \\
1%
\end{array}%
\right] ,
\end{eqnarray}%
where the convention of Eq. (\ref{VPD1}) has been used for $v(-\omega )$.
They generate a Dirac transformation,%
\begin{equation}
\left[
\begin{array}{c}
p \\
q%
\end{array}%
\right] =\frac{1}{\sqrt{2m\omega }}\left[
\begin{array}{cc}
-im\omega & im\omega \\
1 & 1%
\end{array}%
\right] \left[
\begin{array}{c}
a \\
a^{\dag }%
\end{array}%
\right] .
\end{equation}%
As a result, we have%
\begin{eqnarray}
p &=&-i\sqrt{\frac{m\omega }{2}}(a-a^{\dag }),  \label{pDirac} \\
q &=&\frac{1}{\sqrt{2m\omega }}(a+a^{\dag }).  \label{qDirac}
\end{eqnarray}%
They are just the inverse of Eqs. (\ref{DT1}) and (\ref{DT2}). According to
Eq. (\ref{DDH}), the diagonalized Hamiltonian has the form,%
\begin{equation}
H=\omega a^{\dag }a+\frac{1}{2}\omega ,
\end{equation}%
which is the same as Eq. (\ref{LHH}).

\begin{theorem}
If a Hamiltonian quadratic in coordinates and momenta is Diracianly
diagonalizable, its diagonalized form will be unique up to a permutation of
the quadratic terms.
\end{theorem}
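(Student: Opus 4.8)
The plan is to transcribe the proof of Theorem \ref{UniqueB} to the Dirac setting, replacing the field-space metric $I_{-}$ by $\Sigma_{y}$ and invoking the Dirac analogues of the lemmas established above. Suppose the Hamiltonian of Eq. (\ref{CHam}) admits two diagonalized forms. By Eq. (\ref{DDH}) they read $H_{1}=\sum_{i=1}^{n}\omega_{i}d_{i}^{\dag}d_{i}+\frac{1}{2}\sum_{i=1}^{n}\omega_{i}$ and $H_{2}=\sum_{i=1}^{n}\omega_{i}^{\prime}d_{i}^{\dag}d_{i}+\frac{1}{2}\sum_{i=1}^{n}\omega_{i}^{\prime}$, realized through two derivative transformations $T_{d1}$ and $T_{d2}$. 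First I would note that each diagonal bosonic Hamiltonian has dynamic matrix $\mathrm{diag}(\omega_{1},\dots,\omega_{n},-\omega_{1},\dots,-\omega_{n})$, so that the lemma \ref{SimilarD} makes both $D_{1}$ and $D_{2}$ similar to the common dynamic matrix $D$ of Eq. (\ref{DSyM}), and hence similar to each other. Equality of spectra counted with multiplicity then yields the signed set identity $\{\omega_{1},\dots,\omega_{n},-\omega_{1},\dots,-\omega_{n}\}=\{\omega_{1}^{\prime},\dots,\omega_{n}^{\prime},-\omega_{1}^{\prime},\dots,-\omega_{n}^{\prime}\}$.

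The crux is to upgrade this to the unsigned equality $\{\omega_{1},\dots,\omega_{n}\}=\{\omega_{1}^{\prime},\dots,\omega_{n}^{\prime}\}$, i.e. to rule out that a left-half frequency $\omega_{i}$ coincides with some $-\omega_{j}^{\prime}$. I would argue exactly as in Theorem \ref{UniqueB}: assume for contradiction $\omega_{1}=-\omega_{1}^{\prime}$. Writing $S=T_{d1}^{-1}T_{d2}$, the similarity $S^{-1}D_{1}S=D_{2}$ transports the standard eigenvector of $D_{2}$ for $-\omega_{1}^{\prime}$, a right-half basis vector of $I_{-}$-norm $-1$, onto the standard eigenvector of $D_{1}$ for $\omega_{1}$, a left-half basis vector of $I_{-}$-norm $+1$. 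Then I would compute the norm of this single vector in two ways. The key metric identity is that each derivative matrix satisfies $T_{di}^{\dag}\Sigma_{y}T_{di}=I_{-}$, which rearranges to $(T_{di}^{-1})^{\dag}I_{-}T_{di}^{-1}=\Sigma_{y}$; combining the two versions gives $S^{\dag}I_{-}S=T_{d2}^{\dag}\Sigma_{y}T_{d2}=I_{-}$. Hence $S$ preserves the $I_{-}$-norm, forcing $+1=-1$, the desired contradiction. By symmetry both inclusions hold, so the two frequency multisets agree and $H_{1}$ and $H_{2}$ differ only by a permutation of their quadratic terms.

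I expect the only genuine care to lie in the metric bookkeeping: unlike the Bose case, the original field space carries the metric $\Sigma_{y}$ rather than $I_{-}$, and it is the conjugation by $T_{d}$ that converts $\Sigma_{y}$ into the standard $I_{-}$ of the bosonic field; getting this conversion right is precisely what makes the chain $S^{\dag}I_{-}S=I_{-}$ close. A secondary subtlety is that when $\omega_{1}$ is degenerate one cannot simply identify two named eigenvectors, so I would phrase the transport through the signed form restricted to the eigenspaces, using the Dirac versions of the lemmas \ref{Lmm5D}--\ref{Lmm9D} that each eigenspace admits a $\Sigma_{y}$-orthonormal basis with the prescribed norm pattern, rather than through individual vectors. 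With that phrasing the argument is otherwise a routine transcription of Theorem \ref{UniqueB}, requiring no new estimate.
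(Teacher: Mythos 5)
Your proposal is correct and follows essentially the same route as the paper, which simply states that the proof is similar to that of Theorem \ref{UniqueB}; you carry out that transcription faithfully, and in particular the key metric bookkeeping $\bigl(T_{d1}^{-1}T_{d2}\bigr)^{\dag}I_{-}\bigl(T_{d1}^{-1}T_{d2}\bigr)=T_{d2}^{\dag}\Sigma_{y}T_{d2}=I_{-}$ is exactly the step needed to make the norm-contradiction argument close in the Dirac setting.
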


\begin{proof}
The proof is similar to that for the theorem \ref{UniqueB}.
\end{proof}

Set%
\begin{eqnarray}
c_{i} &=&\frac{1}{\sqrt{2}}(q_{i}+ip_{i}), \\
c_{i}^{\dag } &=&\frac{1}{\sqrt{2}}(q_{i}-ip_{i}),
\end{eqnarray}%
where $q_{i}$ and $p_{i}$ ($i=1,2,\cdots ,n$) are the coordinates and
momenta which satisfy the canonical commutation rules of Eq. (\ref{CCR}). It
is easy to show that $c_{i}$ and $c_{i}^{\dag }$ are the annihilation and
creation operators that satisfy the bosonic commutation rules of Eqs. (\ref%
{Com1})--(\ref{Com3}). The inverse is also true. Upon such an invertible
linear substitution, the two Hamiltonians of Eqs. (\ref{Ham1}) and (\ref%
{HamC0}) can be transformed into each other, up to a real constant. Thereby,
the BV diagonalization of Eq. (\ref{Ham1}) is mathematically equivalent to
the Dirac diagonalization of Eq. (\ref{HamC0}). However, Dirac
diagonalization does not require, in physics, the initial field to be a
standard bosonic field. Correspondingly, the transformation matrix does not
need to ensure the invariance of the metric, and is no longer limited to be
a member of the group $U(n,n)$, cf.,%
\begin{eqnarray}
T_{n}^{\dag }I_{-}T_{n} &=&I_{-}, \\
T_{d}^{\dag }\Sigma _{y}T_{d} &=&I_{-}.
\end{eqnarray}%
In this sense, Dirac diagonalization is more general than BV
diagonalization. In particular, it can be generalized easily to complex
collective coordinates and momenta, so it is rather useful in the
quantizations of Bose fields, which will be discussed in the next section.

Here and now, we would give an interesting example of Dirac diagonalization.

\subsection{Landau quantization}

Consider a charged particle moving in a uniform magnetic field. The
Hamiltonian is%
\begin{equation}
H=\frac{1}{2m}(\mathbf{p-}q\mathbf{A)}^{2},
\end{equation}%
where $q$, $m$ and $\mathbf{p}$ denote the charge, mass and momentum of the
particle, respectively. As to $\mathbf{A}$, it is the vector potential of
the magnetic field, which can be expressed in the form \cite{Landau},%
\begin{equation}
\mathbf{A}=\frac{1}{2}\mathbf{B}\times \mathbf{r,}
\end{equation}%
where $\mathbf{B}$ represents the magnetic field, and $\mathbf{r}$ the
coordinates of the particle. Assume without loss of generality that the
magnetic field $\mathbf{B}$ is set along the $z$-axis. Thus%
\begin{eqnarray}
H &=&\frac{1}{2m}\left( p_{x}^{2}+p_{y}^{2}\right) +\frac{1}{2}m\omega
_{L}^{2}\left( x^{2}+y^{2}\right)  \notag \\
&&+\omega _{L}(xp_{y}-yp_{x})+\frac{1}{2m}p_{z}^{2},
\end{eqnarray}%
where $\omega _{L}=qB/2m$ is the Larmor frequency. Obviously, $p_{z}$
commutes with $H$, it will be conserved. This means that $p_{z}$ can be
replaced by a constant, which brings a contantant energy, $p_{z}^{2}/2m$, to
the Hamiltonian. Since a constant energy is unimportant to a Hamiltonian, we
shall concern ourselves with the simplified version,%
\begin{eqnarray}
H &=&\frac{1}{2m}\left( p_{x}^{2}+p_{y}^{2}\right) +\frac{1}{2}m\omega
_{L}^{2}\left( x^{2}+y^{2}\right)  \notag \\
&&+\omega _{L}(xp_{y}-yp_{x}).
\end{eqnarray}%
As shown in the following, this Hamiltonian can be Diracianly diagonalized
and yield the so-called Landau levels.

\begin{solution}
The dynamic matrix is%
\begin{equation}
D=\Sigma _{y}M,
\end{equation}%
where $M$ is the coefficient matrix,%
\begin{equation}
M=\left[
\begin{array}{cccc}
\frac{1}{m} & 0 & 0 & -\omega _{L} \\
0 & \frac{1}{m} & \omega _{L} & 0 \\
0 & \omega _{L} & m\omega _{L}^{2} & 0 \\
-\omega _{L} & 0 & 0 & m\omega _{L}^{2}%
\end{array}%
\right] .
\end{equation}%
The dynamic matrix $D$ has three eigenvalues%
\begin{equation}
\omega _{1}=2\omega _{L},\text{ \ }\omega _{2}=-2\omega _{L},\text{ \ \ }%
\omega _{3}=0.
\end{equation}%
The first two constitute a dynamic mode, their eigenvectors are complex
conjugate to each other,%
\begin{equation}
v(2\omega )=\frac{1}{2\sqrt{m\omega _{L}}}\left[
\begin{array}{c}
-m\omega _{L} \\
-im\omega _{L} \\
-i \\
1%
\end{array}%
\right] ,
\end{equation}%
\begin{equation}
v(-2\omega )=\frac{1}{2\sqrt{m\omega _{L}}}\left[
\begin{array}{c}
-m\omega _{L} \\
im\omega _{L} \\
i \\
1%
\end{array}%
\right] ,
\end{equation}%
the corresponding norms are%
\begin{eqnarray}
v^{\dag }(2\omega _{L})\Sigma _{y}v(2\omega _{L}) &=&1, \\
v^{\dag }(-2\omega _{L})\Sigma _{y}v(-2\omega _{L}) &=&-1.
\end{eqnarray}%
The third is zero, it is two-fold degenerate, and has two linearly
independent eigenvectors, e.g.,%
\begin{equation}
v_{1}(0)=\left[
\begin{array}{c}
m\omega _{L} \\
0 \\
0 \\
1%
\end{array}%
\right] ,\text{ \ }v_{2}(0)=\left[
\begin{array}{c}
0 \\
-m\omega _{L} \\
1 \\
0%
\end{array}%
\right] .
\end{equation}%
They can be linearly combined and orthonormalized into a dynamic mode,%
\begin{equation}
v(+0)=\frac{1}{2\sqrt{m\omega _{L}}}\left[
\begin{array}{c}
m\omega _{L} \\
-im\omega _{L} \\
i \\
1%
\end{array}%
\right] ,
\end{equation}%
\begin{equation}
v(-0)=\frac{1}{2\sqrt{m\omega _{L}}}\left[
\begin{array}{c}
m\omega _{L} \\
im\omega _{L} \\
-i \\
1%
\end{array}%
\right] ,
\end{equation}%
with the norms being%
\begin{eqnarray}
v^{\dag }(+0)\Sigma _{y}v(+0) &=&1, \\
v^{\dag }(-0)\Sigma _{y}v(-0) &=&-1.
\end{eqnarray}%
The Dirac transformation is therefore obtained as follows,%
\begin{equation}
\left[
\begin{array}{c}
p_{x} \\
p_{y} \\
x \\
y%
\end{array}%
\right] =\left[ v(2\omega _{L}),v(+0),v(-2\omega _{L}),v(-0)\right] \left[
\begin{array}{c}
a_{1} \\
a_{2} \\
a_{1}^{\dag } \\
a_{2}^{\dag }%
\end{array}%
\right] .
\end{equation}%
Finally, the diagonalized Hamiltonian is%
\begin{equation}
H=2\omega _{L}a_{1}^{\dag }a_{1}+\omega _{L}.
\end{equation}
\end{solution}

This result is rigorous and exactly the same as that due to Landau \cite%
{Landau}, with $\omega_{c}=2\omega_{L}$ being the cyclotron frequency.

\subsection{Partial diagonalization}

If $\gamma=0$,
\begin{equation}
H=\frac{1}{2}p^{\dag}\mu p+\frac{1}{2}q^{\dag}\kappa q,  \label{Harm5}
\end{equation}
it reduces to the Hamiltonian of Eq. (\ref{Ham3}). As already known from
Sec. \ref{Sec1B}, the latter is identical to the Hamiltonian of Eq. (\ref%
{Harm1}). Thereby, we obtain the proposition.

\begin{proposition}
\label{PPSPD} The Hamiltonian of Eq. (\ref{Harm5}) is Diracianly
diagonalizable when the matrices $\mu$ and $\kappa$ are both positive
definite. If $\mu$ is positive definite but $\kappa$ is nonnegative
definite, it is partially diagonalizable, i.e., except that the normal modes
with zero frequencies are conserved quantities and hence not Diracianly
diagonalizable, all the rest part can be Diracianly diagonalized.
\end{proposition}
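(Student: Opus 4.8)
The plan is to reduce the claim to the criterion already established for Dirac transformations, that the Hamiltonian of Eq.~(\ref{CHam}) is Diracianly diagonalizable if and only if its dynamic matrix $D$ is physically diagonalizable. Putting $\gamma=0$ in the general theory makes the coefficient matrix block diagonal, $M=\mathrm{diag}(\mu,\kappa)$, so by Eqs.~(\ref{DSyM}) and (\ref{Sigmay}) the dynamic matrix is
\begin{equation}
D=\Sigma_{y}M=\left[
\begin{array}{cc}
0 & -i\kappa \\
i\mu & 0
\end{array}
\right].
\end{equation}
Everything then hinges on the spectral structure of this $D$, so the first step is to analyze its eigenvalue problem.

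First I would solve $Dv=\omega v$, writing $v$ as a pair of subvectors $a$ and $b$ of size $n$. For $\omega\neq0$ the lower block gives $b=i\mu a/\omega$, and substitution into the upper block reduces the problem to $\kappa\mu\,a=\omega^{2}a$; hence every nonzero eigenvalue obeys $\omega^{2}\in\mathrm{spec}(\kappa\mu)$. Since $\mu=K$ is positive definite and $\kappa=V$ is nonnegative definite, I can import the argument of Sec.~\ref{Sec1B} unchanged: the Cholesky factorization of $\mu$ together with the orthogonal diagonalization of the congruent matrix $\Lambda$ (Eqs.~(\ref{KQQ})--(\ref{TKVT})) shows that $\mu\kappa$, and therefore the similar matrix $\kappa\mu$, is diagonalizable with real nonnegative eigenvalues $\omega_{1}^{2},\dots,\omega_{n}^{2}\ge0$. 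Each eigenvector $a$ of $\kappa\mu$ with $\omega_{i}^{2}>0$ then yields two genuine eigenvectors of $D$, of upper block $a$ and lower block $\pm i\mu a/\omega_{i}$, carrying the real eigenvalues $\pm\omega_{i}$.

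When both $\mu$ and $\kappa$ are positive definite, every $\omega_{i}^{2}>0$, so this construction produces $2n$ linearly independent eigenvectors of $D$ with real nonzero eigenvalues; $D$ is physically diagonalizable and the first assertion follows at once. For the second assertion I would quantify the zero-frequency obstruction. With $k=\dim\ker\kappa\ge1$, the zero eigenspace of $D$ consists precisely of the vectors whose upper block vanishes and whose lower block lies in $\ker\kappa$, so it has dimension $k$. A block-determinant computation, however, gives the characteristic polynomial $\det(\omega I-D)=\det(\omega^{2}I-\kappa\mu)=\prod_{i}(\omega^{2}-\omega_{i}^{2})$, so the zero eigenvalue carries algebraic multiplicity $2k$. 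The geometric multiplicity $k$ falls short of the algebraic multiplicity $2k$, so $D$ acquires nontrivial Jordan blocks at $\omega=0$ and is not diagonalizable; by Proposition~\ref{PPD} the full Hamiltonian is then not Diracianly diagonalizable.

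The main obstacle is to make the word \emph{partial} precise rather than merely to assert non-diagonalizability. Here I would pass to the real normal-mode coordinates of Sec.~\ref{Sec1B}, under which $H$ acquires the decoupled form $H=\tfrac12\sum_{i=1}^{n}(\pi_{i}^{2}+\omega_{i}^{2}\psi_{i}^{2})$ of Eq.~(\ref{Harm1}) with $\psi,\pi$ still canonical (Eqs.~(\ref{Harm2})--(\ref{Harm4})); this substitution splits the problem into $n$ one-dimensional modes. Each mode with $\omega_{i}>0$ is an ordinary harmonic oscillator, Diracianly diagonalized into $\omega_{i}a_{i}^{\dag}a_{i}+\tfrac12\omega_{i}$ by the single-mode transformation of Eqs.~(\ref{pDirac})--(\ref{qDirac}); these modes supply exactly the $2(n-k)$ eigenvectors found above and span the invariant subspace on which $D$ is physically diagonalizable. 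Each remaining mode with $\omega_{i}=0$ collapses to the free-particle Hamiltonian $\tfrac12\pi_{i}^{2}$, which the earlier example shows is not Diracianly diagonalizable, while its momentum satisfies $\dot\pi_{i}=0$ and is therefore a conserved quantity --- precisely the secular Jordan degree of freedom counted above. Collecting the oscillator modes together with the conserved free modes then delivers the stated partial diagonalization.
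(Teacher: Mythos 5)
Your proposal is correct, and its second half --- passing to the real normal--mode coordinates of Sec.~\ref{Sec1B} so that $H$ splits into $n$ decoupled modes $\tfrac12(\pi_i^2+\omega_i^2\psi_i^2)$, then invoking the single--oscillator Dirac transformation of Eqs.~(\ref{pDirac})--(\ref{qDirac}) for $\omega_i>0$ and the free--particle counterexample for $\omega_i=0$ --- is exactly the route the paper takes; its ``proof'' is literally the one--line reduction of Eq.~(\ref{Harm5}) to Eq.~(\ref{Ham3}) and thence to Eq.~(\ref{Harm1}), with the per--mode treatment left implicit. What you add, and what the paper does not do here, is the direct spectral analysis of the dynamic matrix $D=\Sigma_y\,\mathrm{diag}(\mu,\kappa)$: reducing its eigenproblem to $\kappa\mu\,a=\omega^2 a$, importing the Cholesky argument to get real nonnegative $\omega_i^2$, and in the semidefinite case comparing the geometric multiplicity $k=\dim\ker\kappa$ of the zero eigenvalue against the algebraic multiplicity $2k$ read off from $\det(\omega I-D)=\det(\omega^2 I-\kappa\mu)$. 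This buys you two things the paper's terse argument leaves to the reader: a proof via Proposition~\ref{PPD} that the \emph{full} Hamiltonian genuinely fails to be Diracianly diagonalizable when $\kappa$ is singular (rather than merely an exhibition of a partial diagonalization), and an explicit identification of the invariant subspace on which $D$ is physically diagonalizable, which makes the word ``partially'' precise. The block computations are all sound (the $AD-BC$ determinant identity applies because the lower blocks commute, and $\kappa\mu\sim\mu\kappa$ since $\mu$ is invertible), so I see no gap; your version is simply a more self--contained proof anchored in the paper's central equivalence theorem.
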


The Hamiltonian of such kind is quite important because it is frequently
encountered in various physical problems, particularly, in small
oscillations and field quantization. To handle it, one can first transform
it into Eq. (\ref{Harm1}) and then perform a Dirac diagonalization to the
latter, the total Dirac transformation being the product of the two
successive transformations. That will be easier than handling Eq. (\ref%
{Harm5}) directly.

In the end, it is worth noting that all the conclusions of this section are
valid for the time-polarized commutation relations,%
\begin{equation}
\lbrack p_{i},q_{j}]=i\delta_{ij},\text{ \ }[p_{i},p_{j}]=0,\text{ \ }%
[q_{i},q_{j}]=0,
\end{equation}
which can be seen readily by exchanging the roles of the group of $p_{i}$ ($%
\forall i\in S\subset\{1,2,\cdots,n\}$) and the group of $q_{i}$ ($\forall
i\in S\subset\{1,2,\cdots,n\}$). This kind of abnormal commutation relations
occurs in the quantization of Maxwell field, and will be discussed in Sec. %
\ref{TPP}.

\section{Field Quanta \label{FQ}}

In this section, we intend to examine field quanta, including Klein-Gordon
field, phonon field, and Dirac field. In references, such problems are less
addressed by BV or Dirac transformation. We find that they are pretty good
tools for those problems.

\subsection{Klein-Gordon field \label{KGF}}

Let us begin with the neutral Klein-Gordon field $\phi (\mathbf{x})$ \cite%
{Greiner,Mandl,Berestetskii}. Its Hamiltonian reads as follows,
\begin{equation}
H=\int \mathrm{d}\mathbf{x}\frac{1}{2}\left\{ \pi ^{2}(\mathbf{x})+\left[
\nabla \phi (\mathbf{x})\right] ^{2}+m^{2}\phi ^{2}(\mathbf{x})\right\} ,
\end{equation}%
where $m>0$ is the mass of the field. The $\pi (\mathbf{x})$ is the momentum
density conjugate to the field $\phi (\mathbf{x})$, they satisfy the
canonical commutation rules,%
\begin{eqnarray}
\left[ \pi (\mathbf{x}),\phi (\mathbf{x}^{\prime })\right] &=&-i\delta (%
\mathbf{x-x}^{\prime }), \\
\left[ \pi (\mathbf{x}),\pi (\mathbf{x}^{\prime })\right] &=&0, \\
\left[ \phi (\mathbf{x}),\phi (\mathbf{x}^{\prime })\right] &=&0,
\end{eqnarray}%
where $\delta (\mathbf{x})$ denotes Dirac delta function.

As usual, we would expand $\phi (\mathbf{x})$ and $\pi (\mathbf{x})$ into
plane waves,%
\begin{eqnarray}
\phi (\mathbf{x}) &=&\frac{1}{\left( 2\pi \right) ^{3/2}}\int \mathrm{d}%
\mathbf{p\,}\phi (\mathbf{p})\mathrm{e}^{i\mathbf{p}\cdot \mathbf{x}},
\label{Phix} \\
\pi (\mathbf{x}) &=&\frac{1}{\left( 2\pi \right) ^{3/2}}\int \mathrm{d}%
\mathbf{p\,}\pi (\mathbf{p})\mathrm{e}^{-i\mathbf{p}\cdot \mathbf{x}},
\label{Pix}
\end{eqnarray}%
where we used the duality between $\phi (\mathbf{x})$ and $\pi (\mathbf{x})$%
. Physically, the $\phi (\mathbf{p})$ and $\pi (\mathbf{p})$ represent the
complex collective coordinates and momenta of the system, respectively. As $%
\phi (\mathbf{x})$ and $\pi (\mathbf{x})$ are both real-valued fields,
\begin{eqnarray}
\phi ^{\dag }(\mathbf{p}) &=&\phi (-\mathbf{p}), \\
\pi ^{\dag }(\mathbf{p}) &=&\pi (-\mathbf{p}).
\end{eqnarray}%
In terms of $\phi (\mathbf{p})$ and $\pi (\mathbf{p})$, the Hamiltonian and
canonical commutation rules can be expressed as follows,%
\begin{equation}
H=\int \mathrm{d}\mathbf{p}\frac{1}{2}\left[ \pi ^{\dag }(\mathbf{p})\pi (%
\mathbf{p})+\left( m^{2}+\mathbf{p}^{2}\right) \phi ^{\dag }(\mathbf{p})\phi
(\mathbf{p})\right] ,  \label{HamHarm1}
\end{equation}%
\begin{eqnarray}
\left[ \pi (\mathbf{p}),\phi (\mathbf{p}^{\prime })\right] &=&-i\delta (%
\mathbf{p-p}^{\prime }),  \label{CCom1} \\
\left[ \pi (\mathbf{p}),\pi (\mathbf{p}^{\prime })\right] &=&0,
\label{CCom2} \\
\left[ \phi (\mathbf{p}),\phi (\mathbf{p}^{\prime })\right] &=&0.
\label{CCom3}
\end{eqnarray}

First of all, let us take a look at the equations of motion for $\phi (%
\mathbf{p})$ and $\pi (\mathbf{p})$, which can be derived from Eq. (\ref%
{HamHarm1}) and Eqs. (\ref{CCom1})--(\ref{CCom3}),%
\begin{eqnarray}
i\frac{\mathrm{d}}{\mathrm{d}t}\phi (\mathbf{p}) &=&i\pi ^{\dag }(\mathbf{p}%
), \\
i\frac{\mathrm{d}}{\mathrm{d}t}\pi ^{\dag }(\mathbf{p}) &=&-i\left( m^{2}+%
\mathbf{p}^{2}\right) \phi (\mathbf{p}).
\end{eqnarray}%
They can be combined as%
\begin{equation}
i\frac{\mathrm{d}}{\mathrm{d}t}\varphi (\mathbf{p})=D(\mathbf{p})\varphi (%
\mathbf{p}),  \label{KGHEq}
\end{equation}%
where $\varphi (\mathbf{p})$ is the field operator,%
\begin{equation}
\varphi (\mathbf{p})=\left[
\begin{array}{c}
\pi ^{\dag }(\mathbf{p}) \\
\phi (\mathbf{p})%
\end{array}%
\right] ,
\end{equation}%
and $D(\mathbf{p})$ the the dynamic matrix,%
\begin{equation}
D(\mathbf{p})=\left[
\begin{array}{cc}
0 & -i\left( m^{2}+\mathbf{p}^{2}\right) \\
i & 0%
\end{array}%
\right] .
\end{equation}%
Obviously, the commutation rule for $\varphi (\mathbf{p})$ is%
\begin{equation}
\varphi (\mathbf{p})\cdot \varphi ^{\dag }(\mathbf{p}^{\prime })=\delta (%
\mathbf{p-p}^{\prime })\Sigma _{y},
\end{equation}%
where $\Sigma _{y}$ is defined in Eq. (\ref{Sigmay}).

The form of Eq. (\ref{KGHEq}) hints us that the Hamiltonian $H$\ should be
formulated with the field $\varphi(\mathbf{p})$,%
\begin{equation}
H=\int\mathrm{d}\mathbf{p}\frac{1}{2}\varphi^{\dag}(\mathbf{p})M(\mathbf{p}%
)\varphi(\mathbf{p}),
\end{equation}
where $M(\mathbf{p})$ is the coefficient matrix,%
\begin{equation}
M(\mathbf{p})=\left[
\begin{array}{cc}
1 & 0 \\
0 & m^{2}+\mathbf{p}^{2}%
\end{array}
\right] .
\end{equation}

It is evident that%
\begin{eqnarray}
\varphi (-\mathbf{p}) &=&\widetilde{\left[ \varphi ^{\dag }(\mathbf{p})%
\right] }, \\
D(\mathbf{p}) &=&\Sigma _{y}M(\mathbf{p}).
\end{eqnarray}%
With those relations on hand, the following conclusions become self-evident.

\begin{enumerate}
\item The Hamiltonian of Eq. (\ref{HamHarm1}) is Diracianly diagonalizable
if and only if $D(\mathbf{p})$ is physically diagonalizable.

\item If $D(\mathbf{p})$ is physically diagonalizable, the dynamic mode pair
takes the form of $(\omega,v(\omega))$ and $(\omega,v(-\omega))$ where $%
v(-\omega)=v^{\ast}(\omega)$.

\item If $D(\mathbf{p})$ is physically diagonalizable, there exists an
orthonormal basis with respect to the metric $\Sigma _{y}$. It generates a
Dirac transformation as follows,%
\begin{gather}
\varphi (\mathbf{p})=T_{d}\psi (\mathbf{p}), \\
T_{d}=\left[
\begin{array}{cc}
v(\omega ), & v(-\omega )%
\end{array}%
\right] ,
\end{gather}%
where
\begin{eqnarray}
v^{\dag }(\omega )\Sigma _{y}v(\omega ) &=&1, \\
v^{\dag }(-\omega )\Sigma _{y}v(-\omega ) &=&-1.
\end{eqnarray}%
This implies that%
\begin{equation}
T_{d}^{\dag }\Sigma _{y}T_{d}=I_{-},
\end{equation}%
and that%
\begin{equation}
\psi (\mathbf{p})\cdot \psi ^{\dag }(\mathbf{p}^{\prime })=\delta (\mathbf{%
p-p}^{\prime })I_{-}.  \label{PPDI}
\end{equation}

\item The new field $\psi (\mathbf{p})$ has the involution symmetry,
\begin{equation}
\psi (-\mathbf{p})=\left( \widetilde{\Sigma _{x}\psi (\mathbf{p})}\right)
^{\dag }.
\end{equation}%
This together with Eq. (\ref{PPDI}) means that $\psi (\mathbf{p})$ is a
standard bosonic field, and assumes the following form,%
\begin{equation}
\psi (\mathbf{p})=\left[
\begin{array}{c}
a(\mathbf{p}) \\
a^{\dag }(-\mathbf{p})%
\end{array}%
\right] ,
\end{equation}%
where%
\begin{eqnarray}
\lbrack a(\mathbf{p}),a^{\dag }(\mathbf{p}^{\prime })] &=&\delta (\mathbf{p-p%
}^{\prime }), \\
\lbrack a(\mathbf{p}),a(\mathbf{p}^{\prime })] &=&0, \\
\lbrack a^{\dag }(\mathbf{p}),a^{\dag }(\mathbf{p}^{\prime })] &=&0.
\end{eqnarray}
\end{enumerate}

Those discussions demonstrate that the theory of Dirac diagonalization is
also suitable for the complex collective coordinates and momenta.

It is easy to show that $D(\mathbf{p})$ has a pair of real eigenvalues,%
\begin{equation}
\omega =\pm \varepsilon (\mathbf{p})=\pm \sqrt{m^{2}+\mathbf{p}^{2}}.
\end{equation}%
Since $D(\mathbf{p})$ is a square matrix of size 2, this means that $D(%
\mathbf{p})$ is physically diagonalizable. According to the above statement
1, the neutral Klein-Gordon field is Diracianly diagonalizable.

According to the above statements 2 and 3, the corresponding orthonormal
eigenvectors can be chosen as%
\begin{eqnarray}
v(\varepsilon (\mathbf{p})) &=&\frac{1}{\sqrt{2\varepsilon (\mathbf{p})}}%
\left[
\begin{array}{c}
-i\varepsilon (\mathbf{p}) \\
1%
\end{array}%
\right] , \\
v(-\varepsilon (\mathbf{p})) &=&\frac{1}{\sqrt{2\varepsilon (\mathbf{p})}}%
\left[
\begin{array}{c}
i\varepsilon (\mathbf{p}) \\
1%
\end{array}%
\right] ,
\end{eqnarray}%
with the norms being%
\begin{eqnarray}
v^{\dag }(\varepsilon (\mathbf{p}))\Sigma _{y}v(\varepsilon (\mathbf{p}))
&=&1, \\
v^{\dag }(-\varepsilon (\mathbf{p}))\Sigma _{y}v(-\varepsilon (\mathbf{p}))
&=&-1.
\end{eqnarray}

Finally, according to the above statements 3 and 4, the Dirac transformation
can be constructed as follows,
\begin{equation}
\left[
\begin{array}{c}
\pi ^{\dag }(\mathbf{p}) \\
\phi (\mathbf{p})%
\end{array}%
\right] =\left[
\begin{array}{cc}
v_{1}(\varepsilon (\mathbf{p})), & v_{2}(-\varepsilon (\mathbf{p})%
\end{array}%
\right] \left[
\begin{array}{c}
a(\mathbf{p}) \\
a^{\dag }(-\mathbf{p})%
\end{array}%
\right] .
\end{equation}%
As a result, we have%
\begin{eqnarray}
\phi (\mathbf{p}) &=&\sqrt{\frac{1}{2\varepsilon (\mathbf{p})}}\left[ a(%
\mathbf{p})+a^{\dag }(-\mathbf{p})\right] ,  \label{Phip} \\
\pi (\mathbf{p}) &=&-i\sqrt{\frac{\varepsilon (\mathbf{p})}{2}}\left[ a(-%
\mathbf{p})-a^{\dag }(\mathbf{p})\right] ,  \label{Pip}
\end{eqnarray}%
and%
\begin{equation}
H=\int \mathrm{d}\mathbf{p}\left[ \varepsilon (\mathbf{p})a^{\dag }(\mathbf{p%
})a(\mathbf{p})+\frac{1}{2}\varepsilon (\mathbf{p})\right] .
\end{equation}

Substituting Eqs. (\ref{Phip}) and (\ref{Pip}) into Eqs. (\ref{Phix}) and (%
\ref{Pix}), and complementing them with the variable of time, we have%
\begin{eqnarray}
\phi (\mathbf{x},t) &=&\int \mathrm{d}\mathbf{p}\sqrt{\frac{1}{2\left( 2\pi
\right) ^{3}\varepsilon (\mathbf{p})}}%
\bigg\{%
a(\mathbf{p})\mathrm{e}^{i\left[ \mathbf{p}\cdot \mathbf{x-}\varepsilon (%
\mathbf{p})t\right] }  \notag \\
&&+a^{\dag }(\mathbf{p})\mathrm{e}^{-i\left[ \mathbf{p}\cdot \mathbf{x-}%
\varepsilon (\mathbf{p})t\right] }%
\bigg\}%
, \\
\pi (\mathbf{x},t) &=&-i\int \mathrm{d}\mathbf{p}\sqrt{\frac{\varepsilon (%
\mathbf{p})}{2\left( 2\pi \right) ^{3}}}%
\bigg\{%
a(\mathbf{p})\mathrm{e}^{i\left[ \mathbf{p}\cdot \mathbf{x-}\varepsilon (%
\mathbf{p})t\right] }  \notag \\
&&-a^{\dag }(\mathbf{p})\mathrm{e}^{-i\left[ \mathbf{p}\cdot \mathbf{x-}%
\varepsilon (\mathbf{p})t\right] }%
\bigg\}%
.
\end{eqnarray}%
They are explicitly Lorentz covariant.

Evidently, all those results are the same as Refs. \cite%
{Greiner,Mandl,Berestetskii}. The charged Klein-Gordon field can be handled
similarly.

Besides, we note that other complete sets of orthonormal functions, e.g.,
spherical waves, can be used instead of the plane waves to expand the fields
if necessary.

For the neutral Klein-Gordon field, one can deal with the real collective
coordinates and momenta as in Sec. \ref{GBVT} if he expands the fields $\phi
(\mathbf{x})$ and $\pi (\mathbf{x})$ with the real plane waves, i.e., $%
\{\sin (\mathbf{p}\cdot \mathbf{x})\mathbf{,}\cos (\mathbf{p}\cdot \mathbf{x}%
)\}$. He can transform back to the complex representation at the end of the
calculation. That is rather tedious. In quantum mechanics and quantum field
theory, complex waves and fields are unavoidable. That is the reason why we
generalize the diagonalization theory given in Sec. \ref{GBVT} to the case
of complex collective coordinates and momenta.

\subsection{Phonon field}

For the neutral Klein-Gordon field, Eq. (\ref{HamHarm1}) shows that it
belongs to the case that the matrices $\mu $ and $\kappa $ in Eq. (\ref%
{Harm5}) are both positive definite. Physically, that is because it is
massive, i.e., $m>0$. There are also fields that belong to the other case
where $\mu $ is positive definite but $\kappa $ is nonnegative definite. A
familiar example is the phonon field.

For the sake of brevity, we shall consider a simple three-dimensional
lattice. As usual, we take the harmonic approximation \cite%
{Born,Maradudin,Callaway}, under which the Hamiltonian of the system becomes%
\begin{eqnarray}
H &=&\frac{1}{2m}\sum_{\mathbf{l},\alpha }p^{\alpha }(\mathbf{l})p^{\alpha }(%
\mathbf{l})  \notag \\
&&+\frac{1}{2}\sum_{\mathbf{l},\alpha }\sum_{\mathbf{l}^{\prime },\beta
}\Phi _{\alpha \beta }(\mathbf{l}-\mathbf{l}^{\prime })u^{\alpha }(\mathbf{l}%
)u^{\beta }(\mathbf{l}^{\prime }),
\end{eqnarray}%
where $m$ is the mass of the atoms or ions, $\alpha $ and $\beta $ denotes
the $x$-, or $y$-, or $z$-component, $\mathbf{l}$ is shortened for the
lattice vector. The $\mathbf{u}(\mathbf{l})$ represents the phonon field,
and $\mathbf{p}(\mathbf{l})$ the conjugate momentum field. They satisfy the
canonical commutation rules,%
\begin{eqnarray}
\lbrack p^{\alpha }(\mathbf{l}),u^{\beta }(\mathbf{l}^{\prime })]
&=&-i\delta _{\mathbf{ll}^{\prime }}\delta _{\alpha \beta }, \\
\lbrack p^{\alpha }(\mathbf{l}),p^{\beta }(\mathbf{l}^{\prime })] &=&0, \\
\lbrack u^{\alpha }(\mathbf{l}),u^{\beta }(\mathbf{l}^{\prime })] &=&0.
\end{eqnarray}%
The matrix $\Phi _{\alpha \beta }(\mathbf{l}-\mathbf{l}^{\prime })$ stands
for the interaction between the atoms or ions,%
\begin{equation}
\Phi _{\alpha \beta }(\mathbf{l}-\mathbf{l}^{\prime })=\left( \frac{\partial
^{2}\Phi }{\partial u_{\alpha }(\mathbf{l})\partial u_{\beta }(\mathbf{l}%
^{\prime })}\right) _{0}=\Phi _{\beta \alpha }(\mathbf{l}^{\prime }-\mathbf{l%
}),
\end{equation}%
where $\Phi $ is the elastic potential of the lattice, and the subfix $0$
denotes the equilibrium configuration of the atoms or ions. Since the
equilibrium configuration corresponds to the minimum of the potential, the
matrix $\Phi _{\alpha \beta }(\mathbf{l}-\mathbf{l}^{\prime })$ is
nonnegative definite. Moreover, it fulfils the condition,%
\begin{equation}
\sum_{\mathbf{l}}\Phi _{\alpha \beta }(\mathbf{l}-\mathbf{l}^{\prime })=0,
\label{TranlSym}
\end{equation}%
due to the translational symmetry of the lattice.

Using the collective coordinates $u^{\alpha }(\mathbf{k})$ and momenta $%
p^{\alpha }(\mathbf{k})$,%
\begin{eqnarray}
u^{\alpha }(\mathbf{k}) &=&\frac{1}{\sqrt{N}}\sum_{\mathbf{l}}u^{\alpha }(%
\mathbf{l})e^{-i\mathbf{k}\cdot \mathbf{l}}, \\
p^{\alpha }(\mathbf{k}) &=&\frac{1}{\sqrt{N}}\sum_{\mathbf{l}}p^{\alpha }(%
\mathbf{l})e^{i\mathbf{k}\cdot \mathbf{l}},
\end{eqnarray}%
we obtain%
\begin{eqnarray}
H &=&\frac{1}{2m}\sum_{\mathbf{k},\alpha }\left[ p^{\alpha }(\mathbf{k})%
\right] ^{\dag }p^{\alpha }(\mathbf{k})  \notag \\
&&+\frac{1}{2}\sum_{\mathbf{k},\alpha ,\beta }\Phi _{\alpha \beta }(\mathbf{k%
})\left[ u^{\alpha }(\mathbf{k})\right] ^{\dag }u^{\beta }(\mathbf{k}),
\label{PhotonHarm1}
\end{eqnarray}%
where%
\begin{eqnarray}
\lbrack p^{\alpha }(\mathbf{k}),u^{\beta }(\mathbf{k}^{\prime })]
&=&-i\delta _{\mathbf{kk}^{\prime }}\delta _{\alpha \beta }, \\
\lbrack p^{\alpha }(\mathbf{k}),p^{\beta }(\mathbf{k}^{\prime })] &=&0, \\
\lbrack u^{\alpha }(\mathbf{k}),u^{\beta }(\mathbf{k}^{\prime })] &=&0,
\end{eqnarray}%
and%
\begin{equation}
\Phi _{\alpha \beta }(\mathbf{k})=\sum_{\mathbf{l}}\Phi _{\alpha \beta }(%
\mathbf{l})e^{-i\mathbf{k}\cdot \mathbf{l}}.  \label{TranlSym1}
\end{equation}%
Here and hereafter in this subsection, all the vectors $\mathbf{k}$ belong
to the first Brillouin zone.

Since a simple lattice always has inversion symmetry, $\Phi _{\alpha \beta }(%
\mathbf{k})$ gets the following properties,
\begin{eqnarray}
\Phi _{\alpha \beta }(\mathbf{k}) &=&\Phi _{\alpha \beta }^{\ast }(\mathbf{k}%
), \\
\Phi _{\alpha \beta }(\mathbf{k}) &=&\Phi _{\beta \alpha }(\mathbf{k}), \\
\Phi _{\alpha \beta }(-\mathbf{k}) &=&\Phi _{\alpha \beta }(\mathbf{k}).
\end{eqnarray}%
The first two properties indicate that $\Phi _{\alpha \beta }(\mathbf{k})$
is a real symmetric matrix. Besides, $\Phi _{\alpha \beta }(\mathbf{k})$ is
also a nonnegative matrix, just as $\Phi _{\alpha \beta }(\mathbf{l})$.

The Hamiltonian of Eq. (\ref{PhotonHarm1}) can be handled using the field,%
\begin{equation}
\varphi (\mathbf{k})=\left[
\begin{array}{c}
p^{1}(\mathbf{k}) \\
p^{2}(\mathbf{k}) \\
p^{3}(\mathbf{k}) \\
\left[ u^{1}(\mathbf{k})\right] ^{\dag } \\
\left[ u^{2}(\mathbf{k})\right] ^{\dag } \\
\left[ u^{3}(\mathbf{k})\right] ^{\dag }%
\end{array}%
\right] ,
\end{equation}%
as in the proceeding subsection. However, it will be more convenient to
follow the propositions \ref{PPS0} and \ref{PPSPD}. We therefore perform,
first, a linear transformation that will make $H$ diagonalized with regard
to the new collective coordinates and momenta. As in Sec. \ref{Sec1B}, this
transformation can be produced by the equations of motion of the coordinates
$u_{\alpha }(\mathbf{k})$, exactly speaking, the following eigenvalue
equation,%
\begin{equation}
\omega ^{2}u^{\alpha }(\mathbf{k})=\frac{1}{m}\sum_{\beta }\Phi _{\alpha
\beta }(\mathbf{k})u^{\beta }(\mathbf{k}).
\end{equation}%
As $\Phi _{\alpha \beta }(\mathbf{k})/m$ is a real, symmetric, and
nonnegative matrix, it has three eigenvalues,%
\begin{equation}
\omega _{\sigma }^{2}(\mathbf{k})\geq 0,\text{ \ }\sigma =1,2,3,
\end{equation}%
and a complete set of three orthonormal eigenvectors,%
\begin{eqnarray}
\sum_{\alpha }e_{\sigma }^{\alpha }(\mathbf{k})e_{\sigma ^{\prime }}^{\alpha
}(\mathbf{k}) &=&\delta _{\sigma \sigma ^{\prime }}, \\
\sum_{\sigma }e_{\sigma }^{\alpha }(\mathbf{k})e_{\sigma }^{\beta }(\mathbf{k%
}) &=&\delta _{\alpha \beta }.
\end{eqnarray}%
They are the so-called polarization vectors. One can further adjust these
eigenvectors such that%
\begin{equation}
\mathbf{e}_{\sigma }(\mathbf{k})=\mathbf{e}_{\sigma }(-\mathbf{k}),
\end{equation}%
that is because $\Phi _{\alpha \beta }(-\mathbf{k})=\Phi _{\alpha \beta }(%
\mathbf{k})$. According to Eqs. (\ref{Tf1}) and (\ref{Tf2}), $u^{\alpha }(%
\mathbf{k})$ and $p^{\alpha }(\mathbf{k})$ should be expanded as%
\begin{eqnarray}
u^{\alpha }(\mathbf{k}) &=&\sum_{\sigma }\phi _{\sigma }(\mathbf{k})\frac{1}{%
\sqrt{m}}e_{\sigma }^{\alpha }(\mathbf{k}), \\
p^{\alpha }(\mathbf{k}) &=&\sum_{\sigma }\pi _{\sigma }(\mathbf{k})\sqrt{m}%
e_{\sigma }^{\alpha }(\mathbf{k}),
\end{eqnarray}%
where $\phi _{\sigma }(\mathbf{k})$ and $\pi _{\sigma }(\mathbf{k})$ are the
new collective coordinates and momenta of the system. In terms of these new
collective coordinates and momenta, Eq. (\ref{PhotonHarm1}) can be expressed
as%
\begin{equation}
H=\sum_{\mathbf{k},\sigma }\left[ \frac{1}{2}\pi _{\sigma }^{\dag }(\mathbf{k%
})\pi _{\sigma }(\mathbf{k})+\frac{1}{2}\omega _{\sigma }^{2}(\mathbf{k}%
)\phi _{\sigma }^{\dag }(\mathbf{k})\phi _{\sigma }(\mathbf{k})\right] ,
\label{PhotonHarm2}
\end{equation}%
where%
\begin{eqnarray}
\lbrack \pi _{\sigma }(\mathbf{k}),\phi _{\sigma ^{\prime }}(\mathbf{k}%
^{\prime })] &=&-i\delta _{\mathbf{kk}^{\prime }}\delta _{\sigma \sigma
^{\prime }}, \\
\lbrack \pi _{\sigma }(\mathbf{k}),\pi _{\sigma ^{\prime }}(\mathbf{k}%
^{\prime })] &=&0, \\
\lbrack \phi _{\sigma }(\mathbf{k}),\phi _{\sigma ^{\prime }}(\mathbf{k}%
^{\prime })] &=&0,
\end{eqnarray}%
and%
\begin{equation}
\phi _{\sigma }^{\dag }(\mathbf{k})=\phi _{\sigma }(-\mathbf{k}),\text{ \ }%
\pi _{\sigma }^{\dag }(\mathbf{k})=\pi _{\sigma }(-\mathbf{k}).
\end{equation}

The Hamiltonian of Eq. (\ref{PhotonHarm2}) is the same in form as that of
Eq. (\ref{HamHarm1}), it can thus be treated as the latter. However, the
photon field is massless,%
\begin{equation}
\omega _{\sigma }(\mathbf{k})\rightarrow 0,\text{ \ }\mathbf{k}\rightarrow 0,
\end{equation}%
which can be readily seen from Eqs. (\ref{TranlSym}) and (\ref{TranlSym1}).
Therefore, the Hamiltonian $H$ is only partially diagonalizable, the
components of $\mathbf{k}=0$ can not be diagonalized. For $\mathbf{k}\neq 0$%
, the Dirac transformation is the same as Eqs. (\ref{Phip}) and (\ref{Pip}),%
\begin{eqnarray}
\phi _{\sigma }(\mathbf{k}) &=&\sqrt{\frac{1}{2\omega _{\sigma }(\mathbf{k})}%
}\left[ a_{\sigma }(\mathbf{k})+a_{\sigma }^{\dag }(-\mathbf{k})\right] , \\
\pi _{\sigma }(\mathbf{k}) &=&-i\sqrt{\frac{\omega _{\sigma }(\mathbf{k})}{2}%
}\left[ a_{\sigma }(-\mathbf{k})-a_{\sigma }^{\dag }(\mathbf{k})\right] .
\end{eqnarray}%
Under this transformation, the Hamiltonian becomes
\begin{eqnarray}
H &=&\frac{1}{2}\sum_{\sigma }\pi _{\sigma }^{2}(\mathbf{0})  \notag \\
&&+\sum_{\mathbf{k}\neq 0}\sum_{\sigma }\left[ \omega _{\sigma }(\mathbf{k}%
)a_{\mathbf{k}\sigma }^{\dag }a_{\mathbf{k}\sigma }+\frac{1}{2}\omega
_{\sigma }(\mathbf{k})\right] .
\end{eqnarray}%
That is the partially diagonalized form for the Hamiltonian of the photon
field.

It is easy to show that $\pi_{\sigma}(\mathbf{0})$ ($\sigma=1,2,3$)
represent physically the momenta of the center of mass of the system. The
partial diagonalization is thus not difficult to understand because the
center of mass of the system behaviors as a free particle.

\begin{remark}
\label{Remark0} Usually, the components of $\mathbf{k}=0$ are regarded to be
diagonalizable as those of $\mathbf{k}\neq 0$. That is to say, the Dirac
transformation,%
\begin{eqnarray}
\phi _{\sigma }(\mathbf{k}) &=&\sqrt{\frac{1}{2\omega _{\sigma }(\mathbf{k})}%
}\left[ a_{\sigma }(\mathbf{k})+a_{\sigma }^{\dag }(-\mathbf{k})\right] , \\
\pi _{\sigma }(\mathbf{k}) &=&-i\sqrt{\frac{\omega _{\sigma }(\mathbf{k})}{2}%
}\left[ a_{\sigma }(-\mathbf{k})-a_{\sigma }^{\dag }(\mathbf{k})\right] ,
\end{eqnarray}%
and the Hamiltonian%
\begin{equation}
H=\sum_{\mathbf{k},\sigma }\left[ \omega _{\sigma }(\mathbf{k})a_{\mathbf{k}%
\sigma }^{\dag }a_{\mathbf{k}\sigma }+\frac{1}{2}\omega _{\sigma }(\mathbf{k}%
)\right] .
\end{equation}%
are taken to be appropriate for all $\mathbf{k}$ \cite%
{Born,Maradudin,Callaway}. Strictly speaking, that is not right,
particularly in the discrete case (Notice that $\omega _{\sigma }(\mathbf{k}%
)|_{\mathbf{k}=0}=0$. The transformation becomes meaningless when $\mathbf{k}%
=0$.). Nevertheless, it will cause no problem in the thermodynamic limit.
That is because the state of $\mathbf{k}=0$ has zero measure and contributes
nothing to the integration over $\mathbf{k}$. You can change the values of
an integrand on a null set at your will, that imposes no influence on the
integration. Since one usually needs to take the thermodynamic limit finally
in his calculation, it will be convenient to think that the total
Hamiltonian is diagonalizable in this case. The same thing occurs in the
photon field \cite{Greiner,Mandl,Berestetskii} and in the magnons of
antiferromagnets \cite{Madelung,White}. In the language of mathematics,
those fields can be said to be diagonalizable almost everywhere.
\end{remark}

All the discussions in this subsection are also valid for the
diagonalization of the Maxwell field under Coulomb gauge. The
diagonalization of the Maxwell field under Lorentz gauge will be discussed
in Sec. \ref{TPP}.

\subsection{Dirac field}

Finally, let us consider the Dirac field \cite{Greiner,Mandl,Berestetskii}.
It is a Fermi field, different from the two cases above. The Hamiltonian is%
\begin{equation}
H=\int \mathrm{d}\mathbf{x\,}\psi ^{\dag }(\mathbf{x})\left( -i\mathbf{%
\alpha }\cdot \nabla +\beta m\right) \psi (\mathbf{x}),  \label{HDirac}
\end{equation}%
where $m$ is the mass of the field, and
\begin{equation}
\alpha _{i}=\left[
\begin{array}{cc}
0 & \sigma _{i} \\
\sigma _{i} & 0%
\end{array}%
\right] ,\text{ \ }\beta =\left[
\begin{array}{cc}
I & 0 \\
0 & -I%
\end{array}%
\right]
\end{equation}%
with $\sigma _{i}$ ($i=1,2,3$) being Pauli's $2\times 2$ matrices, and $I$
the $2\times 2$ unit matrix. The spinor fields, $\psi (\mathbf{x})$ and $%
\psi ^{\dag }(\mathbf{x})$, satisfy the anticommuation rules,%
\begin{eqnarray}
\lbrack \psi _{\mu }(\mathbf{x}),\psi _{\nu }^{\dag }(\mathbf{x}^{\prime })]
&=&\delta (\mathbf{x-x}^{\prime })\delta _{\mu \nu }, \\
\lbrack \psi _{\mu }(\mathbf{x}),\psi _{\nu }(\mathbf{x}^{\prime })] &=&0, \\
\lbrack \psi _{\mu }^{\dag }(\mathbf{x}),\psi _{\nu }^{\dag }(\mathbf{x}%
^{\prime })] &=&0.
\end{eqnarray}

As before, we can expand the fields $\psi (\mathbf{x})$ and $\psi ^{\dag }(%
\mathbf{x})$ with plane waves,%
\begin{eqnarray}
\psi (\mathbf{x}) &=&\frac{1}{\left( 2\pi \right) ^{3/2}}\int \mathrm{d}%
\mathbf{p\,}\psi (\mathbf{p})\mathrm{e}^{i\mathbf{p}\cdot \mathbf{x}}, \\
\psi ^{\dag }(\mathbf{x}) &=&\frac{1}{\left( 2\pi \right) ^{3/2}}\int
\mathrm{d}\mathbf{p\,}\psi ^{\dag }(\mathbf{p})\mathrm{e}^{-i\mathbf{p}\cdot
\mathbf{x}},
\end{eqnarray}%
where%
\begin{eqnarray}
\psi (\mathbf{p}) &=&\left[
\begin{array}{c}
c_{1}(\mathbf{p}) \\
c_{2}(\mathbf{p}) \\
c_{3}(\mathbf{p}) \\
c_{4}(\mathbf{p})%
\end{array}%
\right] , \\
\psi ^{\dag }(\mathbf{p}) &=&\left[
\begin{array}{cccc}
c_{1}^{\dag }(\mathbf{p}), & c_{2}^{\dag }(\mathbf{p}), & c_{3}^{\dag }(%
\mathbf{p}), & c_{4}^{\dag }(\mathbf{p})%
\end{array}%
\right] .
\end{eqnarray}%
Substituting them into Eq. (\ref{HDirac}), we obtain%
\begin{eqnarray}
H &=&\int \mathrm{d}\mathbf{p\,}\psi ^{\dag }(\mathbf{p})\left( \mathbf{p}%
\cdot \mathbf{\alpha }+m\beta \right) \psi (\mathbf{p})  \notag \\
&=&\int \mathrm{d}\mathbf{p}\sum_{\mu ,\nu }c_{\mu }^{\dag }(\mathbf{p}%
)\left( \mathbf{p}\cdot \mathbf{\alpha }+m\beta \right) _{\mu \nu }c_{\nu }(%
\mathbf{p}),
\end{eqnarray}%
where%
\begin{eqnarray}
\lbrack c_{\mu }(\mathbf{p}),c_{\nu }^{\dag }(\mathbf{p}^{\prime })]
&=&\delta (\mathbf{p-p}^{\prime })\delta _{\mu \nu }, \\
\lbrack c_{\mu }(\mathbf{p}),c_{\nu }(\mathbf{p}^{\prime })] &=&0, \\
\lbrack c_{\mu }^{\dag }(\mathbf{p}),c_{\nu }^{\dag }(\mathbf{p}^{\prime })]
&=&0.
\end{eqnarray}

Since $\alpha _{i}^{\dag }=\alpha _{i}$ and $\beta ^{\dag }=\beta $, $H$ is
a normal Hamiltonian that has been discussed in Sec. \ref{NHF}. According to
the proposition \ref{PPSNHF}, it can be diagonalized by the unitary
transformation,%
\begin{equation}
\psi (\mathbf{p})=T_{\mathbf{p}}\varphi (\mathbf{p}),
\end{equation}%
where $T_{\mathbf{p}}$ is the unitary matrix,%
\begin{equation}
T_{\mathbf{p}}^{\dag }T_{\mathbf{p}}=T_{\mathbf{p}}T_{\mathbf{p}}^{\dag }=I,
\end{equation}%
\begin{equation}
T_{\mathbf{p}}^{\dag }\left( \mathbf{p}\cdot \mathbf{\alpha }+m\beta \right)
T_{\mathbf{p}}=\mathrm{diag}(\varepsilon (\mathbf{p}),\varepsilon (\mathbf{p}%
),-\varepsilon (\mathbf{p}),-\varepsilon (\mathbf{p})),
\end{equation}%
and $\varphi (\mathbf{p})$ the new field,%
\begin{equation}
\varphi (\mathbf{p})=\left[
\begin{array}{c}
d_{1}(\mathbf{p}) \\
d_{2}(\mathbf{p}) \\
d_{3}^{\dag }(-\mathbf{p}) \\
d_{4}^{\dag }(-\mathbf{p})%
\end{array}%
\right] ,
\end{equation}%
Here, a particle-hole transformation has been performed for the negative
energies.

After the transformation, $H$ becomes%
\begin{equation}
H=\int \mathrm{d}\mathbf{p}\sum_{\mu =1}^{4}\varepsilon (\mathbf{p})d_{\mu
}^{\dag }(\mathbf{p})d_{\mu }(\mathbf{p}),
\end{equation}%
where%
\begin{eqnarray}
\lbrack d_{\mu }(\mathbf{p}),d_{\nu }^{\dag }(\mathbf{p}^{\prime })]
&=&\delta (\mathbf{p-p}^{\prime })\delta _{\mu \nu }, \\
\lbrack d_{\mu }(\mathbf{p}),d_{\nu }(\mathbf{p}^{\prime })] &=&0, \\
\lbrack d_{\mu }^{\dag }(\mathbf{p}),d_{\nu }^{\dag }(\mathbf{p}^{\prime })]
&=&0.
\end{eqnarray}%
As usual, the vacuum energy has been removed from the Hamiltonian \cite%
{Greiner,Mandl,Berestetskii}.

In this section, the theory of transformation has been applied to examine
totally three kinds of field quanta. As has been seen, it operates neatly
and concisely.

\section{Mathematical Essence of Diagonalizability \label{MEBVD}}

Up to now, we regard BV or Dirac diagonalization as a physical consequence
of the Heisenberg equation of motion. Diagonalization represents the normal
modes of motion of a system. Such a view provides us a concrete picture and
intuitive interpretation of diagonalization, and thus makes it easy to
understand. However, there is yet another view, it is abstract but more
fundamental. In this view, the diagonalization in itself is an intrinsic and
invariant property of a Hermitian quadratic form that is equipped with
commutator or Poisson bracket, neither Hamiltonian nor equation of motion
will be needed any more.

\subsection{Heisenberg operator}

For simplicity, we shall take the Hamiltonian of Eq. (\ref{Ham1}) as an
instance to explain this abstract view.

First, we regard Eq. (\ref{Ham1}) purely as a Hermitian form $Q$ that is
quadratic in creation and annihilation operators,%
\begin{equation}
Q\triangleq \sum_{i,j=1}^{n}(\alpha _{ij}c_{i}^{\dag }c_{j}+\frac{1}{2}%
\gamma _{ij}c_{i}^{\dag }c_{j}^{\dag }+\frac{1}{2}\gamma _{ji}^{\ast
}c_{i}c_{j}).
\end{equation}%
And then we introduce a linear operator $f$ to replace the Heisenberg
equation of motion,%
\begin{equation}
f(x)\triangleq i[Q,x],
\end{equation}%
where%
\begin{equation}
x=\sum_{i=1}^{n}(z_{i}c_{i}+s_{i}c_{i}^{\dag }),\text{ \ }z_{i}\in
\mathbb{C}
,\text{ and }s_{i}\in
\mathbb{C}
.
\end{equation}%
For convenience, we shall call $f$ the Heisenberg operator. Obviously, we
can define such a Heisenberg operator for each Hermitian quadratic form.

If one denotes the $\mu$th component of the field $\psi$ defined in Eq. (\ref%
{Psi}) with $\psi_{\mu}$, he obtains%
\begin{equation}
if(\psi_{\mu})=D_{\mu\nu}\psi_{\nu},
\end{equation}
where $D_{\mu\nu}$ are exactly the entries of the dynamic matrix $D$. In
matrix notation, it can be written as%
\begin{equation}
if(\psi)=D\psi,
\end{equation}
which is the counterpart of Eq. (\ref{Heqb3}).

By replacing $H$ with $Q$ and the Heisenberg equation with the Heisenberg
operator $f$, one can readily show that all the lemmas, propositions, and
theorems in Sec. \ref{DTBS} will still hold. Evidently, it is also true for
the other cases, e.g., the Fermi system. Here and now, there is no
Hamiltonian, no time, and no equation of motion. This demonstrates
unambiguously that the BV or Dirac diagonalization is essentially the
algebraic property of a quadratic Hermitian form. From this standpoint, any
physical picture and interpretation are redundant and unnecessary, they have
nothing to do with the self of the BV or Dirac diagonalization and can be
completely removed away.

It follows immediately from this abstract view that all the conclusions of
Sec. \ref{GBVT} hold for classical systems, up to a real constant.

Such sublation enlarges the scope of the objects of diagonalization, it can
be performed to all the Hermitian quadratic forms besides the physical
Hamiltonians. To show this point of view more straightforwardly, let us look
at the following example.

\begin{example}
\begin{equation}
J_{z}=xp_{y}-yp_{x}.
\end{equation}
\end{example}

As well known, $J_{z}$ is the orbital angular moment along the $z$%
-direction. It is quadratic in coordinates, i.e., $x$ and $y$, and momenta,
i.e., $p_{x}$ and $p_{y}$. Since $J_{z}$ is not a Hamiltonian, the
Heisenberg equation of motion becomes meaningless. Nevertheless, it can be
Diracianly diagonalized.

\begin{solution}
The dynamic matrix is%
\begin{equation}
D=\Sigma _{y}M,
\end{equation}%
where $M$ is the coefficient matrix,%
\begin{equation}
M=\left[
\begin{array}{cccc}
0 & 0 & 0 & -1 \\
0 & 0 & 1 & 0 \\
0 & 1 & 0 & 0 \\
-1 & 0 & 0 & 0%
\end{array}%
\right] .
\end{equation}%
There exists a pair of eigenvalues for $D$,%
\begin{equation}
\omega =\pm 1.
\end{equation}%
It is easy to show that each eigenvalue has two linearly independent
eigenvectors, for example,%
\begin{equation}
v_{1}(1)=\left[
\begin{array}{c}
1 \\
i \\
0 \\
0%
\end{array}%
\right] ,\text{ \ }v_{2}(1)=\left[
\begin{array}{c}
0 \\
0 \\
1 \\
i%
\end{array}%
\right] .
\end{equation}%
They can be linearly combined and orthonormalized as follows,%
\begin{equation}
v_{1}(1)=\frac{1}{2}\left[
\begin{array}{c}
1 \\
i \\
i \\
-1%
\end{array}%
\right] ,\text{ \ }v_{2}(1)=\frac{1}{2}\left[
\begin{array}{c}
1 \\
i \\
-i \\
1%
\end{array}%
\right] ,
\end{equation}%
with the norms being%
\begin{equation}
v_{1}^{\dag }(1)\Sigma _{y}v_{1}(1)=1,\text{ \ }v_{2}^{\dag }(1)\Sigma
_{y}v_{2}(1)=-1.
\end{equation}%
Correspondingly, the Dirac transformation will be%
\begin{equation}
\left[
\begin{array}{c}
p_{x} \\
p_{y} \\
x \\
y%
\end{array}%
\right] =\left[ v_{1}(1),v_{2}(-1),v_{1}(-1),v_{2}(1)\right] \left[
\begin{array}{c}
a_{1} \\
a_{2} \\
a_{1}^{\dag } \\
a_{2}^{\dag }%
\end{array}%
\right] ,
\end{equation}%
where
\begin{equation}
v_{1}(-1)=v_{1}^{\ast }(1),\text{ \ }v_{2}(-1)=v_{2}^{\ast }(1).
\end{equation}%
Finally, the diagonalized form of $J_{z}$ is%
\begin{equation}
J_{z}=a_{1}^{\dag }a_{1}-a_{2}^{\dag }a_{2}.
\end{equation}
\end{solution}

This result is quite similar to the coupled boson representation for angular
momentum due to Schwinger \cite{Schwinger,Mattis}. It is rigorous, concise,
and agrees exactly with the familiar result about the orbital angular
momentum, i.e., the eigenvalues of $J_{z}$ consists of all the integers.

There is another interesting example.

\begin{example}
\begin{equation}
H=-\frac{p^{2}}{2m}-\frac{1}{2}m\omega^{2}q^{2},
\end{equation}
where $m>0$ and $\omega>0$.
\end{example}

Formally, it looks like a \textquotedblleft negative harmonic
oscillator\textquotedblright. Of course, it can not represent a real
physical system. Thereby we treat it just as a Hermitian quadratic form. It
is ready to conjecture that this quadratic form would be diagonalized, by
the same Dirac transformation as for the normal harmonic oscillator.

\begin{solution}
The dynamic matrix is%
\begin{equation}
D=\Sigma _{y}M,
\end{equation}%
where $M$ is the coefficient matrix,%
\begin{equation}
M=\left[
\begin{array}{cc}
-\frac{1}{m} & 0 \\
0 & -m\omega ^{2}%
\end{array}%
\right] .
\end{equation}%
The dynamic matrix $D$ has a pair of eigenvalues,%
\begin{equation}
\varepsilon =\pm \omega .
\end{equation}%
and two orthonormal eigenvectors,%
\begin{eqnarray}
v(\omega ) &=&\frac{1}{\sqrt{2m\omega }}\left[
\begin{array}{c}
im\omega \\
1%
\end{array}%
\right] , \\
v(-\omega ) &=&\frac{1}{\sqrt{2m\omega }}\left[
\begin{array}{c}
-im\omega \\
1%
\end{array}%
\right] ,
\end{eqnarray}%
\begin{eqnarray}
v^{\dag }(\omega )\Sigma _{y}v(\omega ) &=&-1, \\
v^{\dag }(-\omega )\Sigma _{y}v(-\omega ) &=&1.
\end{eqnarray}%
They generate a Dirac transformation,%
\begin{eqnarray}
p &=&-i\sqrt{\frac{m\omega }{2}}(a-a^{\dag }), \\
q &=&\frac{1}{\sqrt{2m\omega }}(a+a^{\dag }).
\end{eqnarray}%
They are the same as Eqs. (\ref{pDirac}) and (\ref{qDirac}). The
diagonalized form is%
\begin{equation}
H=-\omega a^{\dag }a-\frac{1}{2}\omega .
\end{equation}%
Those results confirm our conjecture completely.
\end{solution}

This example reminds us that the two propositions \ref{PPS0} and \ref{PPSPD}
also hold for the case where $\mu $ is negative definite and $\kappa $ is
nonpositive definite, no matter whether the commutation rules are standard,
or time-polarized, or mixing. This fact will be useful in the following
subsection.

\subsection{Time-polarized photons \label{TPP}}

In this subsection, we shall concern ourselves with the diagonalization of
the Maxwell field under Lorentz gauge.

As well known, the Maxwell field contains two spurious degrees of freedom in
Lorentz gauge, viz., the longitudinal and time-polarized components. Because
these degrees of freedom are not of physical quantity, the Hamiltonian is
merely a Hermitian quadratic form. The abstract view of diagonalization
finds its way and role here.

In Lorentz gauge, the Hamiltonian of Maxwell field has the form \cite%
{Greiner,Mandl,Berestetskii},%
\begin{equation}
H=-\frac{1}{2}\int \mathrm{d}\mathbf{x}\left[ \pi ^{\mu }(\mathbf{x})\pi
_{\mu }\mathbf{(x)+\nabla }A^{\mu }(\mathbf{x})\cdot \mathbf{\nabla }A_{\mu
}(\mathbf{x})\right] ,
\end{equation}%
where $A^{\mu }(\mathbf{x})$ ($\mu =0,1,2,3$) are the vector potentials, and
$\pi ^{\mu }(\mathbf{x})$ the corresponding canonical conjugate fields. They
satisfy the following commutation rules,%
\begin{eqnarray}
\lbrack A^{\mu }(\mathbf{x}),\pi ^{\nu }(\mathbf{x}^{\prime })] &=&ig^{\mu
\nu }\delta (\mathbf{x-x}^{\prime }),  \label{GF1} \\
\lbrack A^{\mu }(\mathbf{x}),A^{\nu }(\mathbf{x}^{\prime })] &=&0,
\label{GF2} \\
\lbrack \pi ^{\mu }(\mathbf{x}),\pi ^{\nu }(\mathbf{x}^{\prime })] &=&0,
\label{GF3}
\end{eqnarray}%
where $g^{\mu \nu }$ is the metric tensor,%
\begin{equation}
g^{\mu \nu }=\left[
\begin{array}{cc}
\begin{array}{cc}
1 &  \\
& -1%
\end{array}
& \text{{\LARGE 0}} \\
\text{{\LARGE 0}} &
\begin{array}{cc}
-1 &  \\
& -1%
\end{array}%
\end{array}%
\right] .
\end{equation}%
Equations (\ref{GF1})--(\ref{GF3}) indicate that the Hamiltonian $H$ belongs
to the mixing case where there are both the standard and time-polarized
commutation relations simultaneously. Physically, that roots from the
requirement of Lorentz covariance.

Expanding these fields with plane waves,%
\begin{eqnarray}
A^{\mu }(\mathbf{x}) &=&\frac{1}{\left( 2\pi \right) ^{3/2}}\int \mathrm{d}%
\mathbf{p\,}A^{\mu }(\mathbf{p})\mathrm{e}^{i\mathbf{p}\cdot \mathbf{x}}, \\
\pi ^{\mu }(\mathbf{x}) &=&\frac{1}{\left( 2\pi \right) ^{3/2}}\int \mathrm{d%
}\mathbf{p\,}\pi ^{\mu }(\mathbf{p})\mathrm{e}^{-i\mathbf{p}\cdot \mathbf{x}%
},
\end{eqnarray}%
we obtain%
\begin{equation}
H=-\frac{1}{2}\int \mathrm{d}\mathbf{x}\left\{ \left[ \pi ^{\mu }(\mathbf{p})%
\right] ^{\dag }\pi _{\mu }\mathbf{(p)+p}^{2}\left[ A^{\mu }(\mathbf{p})%
\right] ^{\dag }A_{\mu }(\mathbf{p})\right\} ,
\end{equation}%
where%
\begin{eqnarray}
\lbrack A^{\mu }(\mathbf{p}),\pi ^{\nu }(\mathbf{p}^{\prime })] &=&ig^{\mu
\nu }\delta (\mathbf{p-p}^{\prime }), \\
\lbrack A^{\mu }(\mathbf{p}),A^{\nu }(\mathbf{p}^{\prime })] &=&0, \\
\lbrack \pi ^{\mu }(\mathbf{p}),\pi ^{\nu }(\mathbf{p}^{\prime })] &=&0,
\end{eqnarray}%
and%
\begin{eqnarray}
\left[ A^{\mu }(\mathbf{p})\right] ^{\dag } &=&A^{\mu }(-\mathbf{p}), \\
\left[ \pi ^{\mu }(\mathbf{p})\right] ^{\dag } &=&\pi ^{\mu }(-\mathbf{p}).
\end{eqnarray}

Just like the phonon field, the polarization vectors can be obtained from
the eigenvalue equation for the fields $A^{\mu }(\mathbf{p})$,%
\begin{equation}
\omega ^{2}\phi (\mathbf{p})=D(\mathbf{p})\phi (\mathbf{p}),
\end{equation}%
where%
\begin{equation}
\phi (\mathbf{p})=\left[
\begin{array}{c}
A^{0}(\mathbf{p}) \\
A^{1}(\mathbf{p}) \\
A^{2}(\mathbf{p}) \\
A^{3}(\mathbf{p})%
\end{array}%
\right] ,
\end{equation}%
and%
\begin{equation}
D(\mathbf{p})=\left[
\begin{array}{cc}
\begin{array}{cc}
\mathbf{p}^{2} &  \\
& \mathbf{p}^{2}%
\end{array}
& \text{{\LARGE 0}} \\
\text{{\LARGE 0}} &
\begin{array}{cc}
\mathbf{p}^{2} &  \\
& \mathbf{p}^{2}%
\end{array}%
\end{array}%
\right] .
\end{equation}%
Paying attention to the fact that $D(\mathbf{p})=\mathbf{p}^{2}I$, the
orthonormal basis for the polarization vectors can be chosen, except $%
\mathbf{p}=0$, as follows,%
\begin{eqnarray}
\epsilon (\mathbf{p},0) &=&(1,0,0,0), \\
\epsilon (\mathbf{p},i) &=&(0,\mathbf{e}_{i}(\mathbf{p})),
\end{eqnarray}%
where%
\begin{gather}
\mathbf{e}(\mathbf{p},3)=\frac{\mathbf{p}}{\left\vert \mathbf{p}\right\vert }%
, \\
\mathbf{e}_{i}(\mathbf{p})\cdot \mathbf{e}_{j}(\mathbf{p})=\delta _{ij},%
\text{ \ }i=1,2,3.
\end{gather}%
That is to say, $\epsilon (\mathbf{p},0)$ is the time-like polarization
vector, $\epsilon (\mathbf{p},i)$ ($i=1,2$) the space-like transverse
polarization vectors, and $\epsilon (\mathbf{p},3)$ the space-like
longitudinal polarization vector. Obviously,%
\begin{equation}
\epsilon ^{\mu }(\mathbf{p},\lambda )\epsilon _{\mu }(\mathbf{p},\lambda
^{\prime })=g_{\lambda \lambda ^{\prime }},
\end{equation}%
i.e., the polarization vectors form a four-dimensional orthonormal system.
Using this new basis, the Hamiltonian can be expressed as%
\begin{eqnarray}
H &=&\frac{1}{2}\int \mathrm{d}\mathbf{p}%
\Bigg\{%
\sum_{\lambda =1}^{3}\left( \left[ \pi _{\lambda }(\mathbf{p})\right] ^{\dag
}\pi _{\lambda }(\mathbf{p})\mathbf{+p}^{2}\left[ A_{\lambda }(\mathbf{p})%
\right] ^{\dag }A_{\lambda }(\mathbf{p})\right)  \notag \\
&&-\left( \left[ \pi _{0}(\mathbf{p})\right] ^{\dag }\pi _{0}(\mathbf{p})%
\mathbf{+p}^{2}\left[ A_{0}(\mathbf{p})\right] ^{\dag }A_{0}(\mathbf{p}%
)\right)
\Bigg\}%
,
\end{eqnarray}%
where%
\begin{eqnarray}
\lbrack A_{\lambda }(\mathbf{p}),\pi _{\lambda ^{\prime }}(\mathbf{p}%
^{\prime })] &=&ig_{\lambda \lambda ^{\prime }}\delta (\mathbf{p-p}^{\prime
}), \\
\lbrack A_{\lambda }(\mathbf{p}),A_{\lambda ^{\prime }}(\mathbf{p}^{\prime
})] &=&0, \\
\lbrack \pi _{\lambda }(\mathbf{p}),\pi _{\lambda ^{\prime }}(\mathbf{p}%
^{\prime })] &=&0,
\end{eqnarray}%
and%
\begin{eqnarray}
\left[ A_{\lambda }(\mathbf{p})\right] ^{\dag } &=&A_{\lambda }(-\mathbf{p}),
\\
\left[ \pi _{\lambda }(\mathbf{p})\right] ^{\dag } &=&\pi _{\lambda }(-%
\mathbf{p}).
\end{eqnarray}%
It is worth noting that the time-like components, $A_{0}(\mathbf{p})$ and $%
\pi _{0}(\mathbf{p})$, consititute a \textquotedblleft negative harmonic
oscillator\textquotedblright .

According to Sec. \ref{KGF}, $H$ can be diagonalized by the following Dirac
transformation,%
\begin{eqnarray}
A_{\lambda }(\mathbf{p}) &=&\sqrt{\frac{1}{2\varepsilon (\mathbf{p})}}\left[
a_{\lambda }(\mathbf{p})+a_{\lambda }^{\dag }(-\mathbf{p})\right] , \\
\pi _{\lambda }(\mathbf{p}) &=&i\sqrt{\frac{\varepsilon (\mathbf{p})}{2}}%
\left[ a_{\lambda }(-\mathbf{p})-a_{\lambda }^{\dag }(\mathbf{p})\right] ,
\end{eqnarray}%
where%
\begin{equation}
\varepsilon (\mathbf{p})=\left\vert \mathbf{p}\right\vert ,
\end{equation}%
and%
\begin{eqnarray}
\lbrack a_{\lambda }(\mathbf{p}),a_{\lambda ^{\prime }}^{\dag }(\mathbf{p}%
^{\prime })] &=&-g_{\lambda \lambda ^{\prime }}\delta (\mathbf{p-p}^{\prime
}), \\
\lbrack a_{\lambda }(\mathbf{p}),a_{\lambda ^{\prime }}(\mathbf{p}^{\prime
})] &=&0, \\
\lbrack a_{\lambda }^{\dag }(\mathbf{p}),a_{\lambda ^{\prime }}^{\dag }(%
\mathbf{p}^{\prime })] &=&0.
\end{eqnarray}%
Here, as usual, the commutation rules for the transverse and longitudinal
polarizations ($\lambda =1,2,3$) are chosen as normal or standard; those for
the time-like polarizations ($\lambda =0$) are chosen as abnormal or
time-polarized. The abnormal bosons here are usually called scalar or
time-polarized photons. That is also the reason why we call the particles
satisfying Eq. (\ref{AbCom}) the time-polarized bosons. The diagonalized
form of the Hamiltonian is%
\begin{equation}
H=\int \mathrm{d}\mathbf{p\,}\varepsilon (\mathbf{p})\left[ \sum_{\lambda
=1}^{3}a_{\lambda }^{\dag }(\mathbf{p})a_{\lambda }(\mathbf{p})-a_{0}^{\dag
}(\mathbf{p})a_{0}(\mathbf{p})\right] ,
\end{equation}%
where the vacuum energy has been removed from the Hamiltonian. This equation
holds in the sense that the Hamiltonian is Diracianly diagonalizable almost
everywhere in momentum space (except $\mathbf{p}=0$).

Finally, the quantized fields can be written as
\begin{eqnarray}
A^{\mu }(x) &=&\int \mathrm{d}\mathbf{p}\sqrt{\frac{1}{2\left( 2\pi \right)
^{3}\varepsilon (\mathbf{p})}}\sum_{\lambda =0}^{3}\epsilon ^{\mu }(\mathbf{p%
},\lambda )  \notag \\
&&\times \left[ a_{\lambda }(\mathbf{p})\mathrm{e}^{ip\cdot x}+a_{\lambda
}^{\dag }(\mathbf{p})\mathrm{e}^{-ip\cdot x}\right] , \\
\pi ^{\mu }(x) &=&i\int \mathrm{d}\mathbf{p}\sqrt{\frac{\varepsilon (\mathbf{%
p})}{2\left( 2\pi \right) ^{3}}}\sum_{\lambda =0}^{3}\epsilon ^{\mu }(%
\mathbf{p},\lambda )  \notag \\
&&\times \left[ a_{\lambda }(\mathbf{p})\mathrm{e}^{ip\cdot x}-a_{\lambda
}^{\dag }(\mathbf{p})\mathrm{e}^{-ip\cdot x}\right] .
\end{eqnarray}%
It can be readily seen that all those results are the same as Refs. \cite%
{Greiner,Mandl,Berestetskii}.

The diagonalization of the Maxwell field under Lorentz gauge is rather
complicated. First, it contains unphysical degrees of freedom. Second, it
needs mixing commutation relations, both for initial and final fields. And
third, it is not diagonalizable everywhere but almost everywhere. One sees
that those problems can be resolved naturally by the diagonalization theory
developed in this review.

\section{Conclusions}

In this review, a theory of transformation is set up for the diagonalization
of the Hermitian quadratic form that is equipped with commutator or Poisson
bracket.

The theory is dynamic matrix oriented.

The dynamic matrix can be derived from the Hermitian quadratic form through
the Heisenberg operator. Each Hermitian quadratic form has a dynamic matrix
of its own.

The Bogoliubov-Valatinian or Diracian diagonalizability of a Hermitian
quadratic form is equivalent to the physical diagonalizability of its
dynamic matrix. That is to say, the diagonalization of a Hermitian quadratic
form is essentially an eigenvalue problem of its dynamic matrix.

The dynamic matrix is always physically diagonalizable for a fermionic form.
It may or may not be physically diagonalizable for a bosonic form.
Accordingly, the diagonalization exists and is unique for a fermionic form,
forever. It exists and is unique for a bosonic form only if the dynamic
matrix is physically diagonalizable.

The dynamic matrix is the generator of the Bogoliubov-Valatin or Dirac
transformation. The transformation required for diagonalization can be
constructed immediately from the complete set of the orthonormal
eigenvectors of the dynamic matrix, according to a standard algebraic
procedure.

In a word, the eigenvalue problem of the dynamic matrix determines the
diagonalizability of a Hermitian quadratic form, definitely and completely.

Finally, it is worth emphasizing that the quadratic Hamiltonian is just
regarded as a Hermitian quadratic form in this review, i.e., only its
mathematical properties are considered here. The physical instability as
well as phase transitions of a system has not been concerned at all. That in
itself is another intriguing problem, please refer to Ref. \cite{Xiao}.

\begin{acknowledgments}
We are deeply grateful to Professor Guo-Xing Ju for his helpful discussions.
\end{acknowledgments}

\bibliography{bvt}
\bibliographystyle{apsrev}

\end{document}